\documentclass[superscriptaddress,amsmath,amssymb,longbibliography,secnumarabic]{revtex4-2}
\usepackage{array}[=2016-10-06]

\usepackage[margin=25mm]{geometry}
\usepackage[T1]{fontenc}
\usepackage{lmodern}
\usepackage{amsmath,amssymb,amsthm,mathtools}
\usepackage{microtype,booktabs,longtable}
\usepackage[colorlinks=true,linkcolor=blue,citecolor=blue,urlcolor=blue]{hyperref}
\usepackage{bookmark}
\newtheorem{theorem}{Theorem}[section]
\newtheorem{lemma}[theorem]{Lemma}
\newtheorem{corollary}[theorem]{Corollary}
\newtheorem{proposition}[theorem]{Proposition}
\theoremstyle{remark}
\newtheorem{remark}[theorem]{Remark}

\numberwithin{equation}{section}
\newcommand{\norm}[1]{\left\lVert#1\right\rVert}
\newcommand{\ket}[1]{\lvert#1\rangle}
\newcommand{\bra}[1]{\langle#1\rvert}
\newcommand{\eps}{\varepsilon}
\newcommand{\ii}{\mathrm i}
\newcommand{\ip}{\ii}
\newcommand{\Reop}{\operatorname{Re}}
\newcommand{\Lip}{\operatorname{Lip}}

\newcommand{\E}{\mathbb E}
\newcommand{\trdist}{D_{\rm tr}}

\makeatletter
\renewcommand{\l@section}{\@dottedtocline{1}{0em}{3em}}
\makeatother

\newcommand{\google}{\affiliation{Google Quantum AI, Venice, CA 90291, United States}
}
\newcommand{\MQ}{\affiliation{School of Mathematical and Physical Sciences,
Macquarie University, Sydney, NSW 2109, Australia}}
\newcommand{\UQ}{\affiliation{School of Electrical Engineering and Computer Science, University of Queensland, Brisbane, QLD 4072, Australia}}

\newcommand{\UofT}{\affiliation{
Department of Computer Science, University of Toronto, Toronto, Ontario, Canada}}

\begin{document}
\title{Root-sparsity scaling for Hamiltonian simulation, differential equations and linear-system solvers}
\author{Dominic W. Berry}\email{dominic.berry@mq.edu.au}
\MQ\UQ
\author{Abhishek Rawat}\MQ
\author{Sophia Simon}\UofT\UQ
\author{Guang Hao Low}\google

\begin{abstract}
We give quantum algorithms for Hamiltonian simulation, linear differential equations, and linear systems with square-root dependence on sparsity. For an $s$-sparse Hermitian matrix with row and column Euclidean norms bounded by $\eta$, evolution for time $t$ to error $\eps$ uses $\mathcal{O}(\sqrt{s}\tau+\sqrt{s\tau\log(1/\eps)}+\log(1/\eps))$ queries, where $\tau=\eta|t|$. The same bound holds for time-dependent Hamiltonians with coherent time-indexed sparse access and certified time discretisation. Our construction combines an exact entrywise representation with Cayley transducers. A joint lower bound nearly matches the time, sparsity, and precision dependence, ruling out a uniformly additive bound in $\sqrt{s}\tau$ and $\log(1/\eps)$. For dissipative linear differential equations we obtain $\mathcal{O}(\bar g[\sqrt{s}(\tau+\sqrt{\tau})+1]\log(C\bar g/\eps))$ queries, where $\tau$ is the evolution time multiplied by a norm bound, $\bar g$ bounds the ratio of combined input and source norms to the final-solution norm, and $C$ is an absolute constant. We sharpen this bound by separating Hamiltonian and dissipative components, and treat general constant generators. Lower bounds capture joint parameter dependence for time-dependent equations and precision hardness for constant positive semidefinite dissipation. For linear systems, a Cayley walk and Chebyshev filter give $\mathcal{O}(\kappa\sqrt{s}\log(1/\eps))$ queries when $\|A\|\le1$ and $\|A^{-1}\|\le\kappa$, and we give an alternative proof of a matching joint lower bound. We also give gate-efficient implementations of time-independent Hamiltonian simulation and the dissipative ODE solver, including Lipschitz time dependence, retaining the query bounds with polylogarithmic overhead in inverse precision given efficient entry decoding, arithmetic, and required source preparations.
\end{abstract}
\maketitle

{\small\tableofcontents}

\section{Introduction}
\label{sec:introduction}

Hamiltonian simulation and the solution of linear equations are basic
components of quantum algorithms for physical simulation and scientific
computation. Hamiltonian simulation provides the time evolution used in
studying quantum dynamics and estimating energies, while quantum
linear-system algorithms prepare states whose amplitudes encode the
solution of a linear system~\cite{Lloyd1996,BACS,HHL}. Related methods can prepare
solutions of differential equations, including equations describing
non-unitary dynamics~\cite{BerryODE,BCOW}. The complexity of these
algorithms depends on how the matrix is provided, as well as its norm,
the evolution time or condition number, and the required accuracy.
Improving these dependences is necessary to determine which problems
can benefit from quantum computation.

An important setting is sparse access, where oracles return matrix
entries and the locations of potentially nonzero entries in each row
and column. We denote the maximum length of these lists by $s$; in
particular, every row and column has at most $s$ nonzero entries.
Efficient simulation in this setting was established by Aharonov and
Ta-Shma~\cite{AharonovTaShma}, with subsequent improvements using product
formulae and quantum walks~\cite{BACS,BerryChilds}. Later algorithms
obtained logarithmic or nearly logarithmic dependence on the inverse
error~\cite{BCK,LowChuang}. Quantum signal processing and qubitisation
give optimal simulation complexity when the cost is expressed in terms
of an appropriate block encoding~\cite{LowChuang,Qubitization}, and
quantum singular value transformation provides a general framework for
implementing functions of block-encoded matrices~\cite{QSVT}.

For time-dependent Hamiltonians, Chen, Gao, Wang, and Zhou used
transducers to obtain optimal query complexity in the coherent
time-indexed block-encoding model~\cite{CGWZ}. Their result includes
the additive bound
$\mathcal{O}(\alpha_{\rm BE}T+\log(1/\eps))$, where
$\alpha_{\rm BE}$ is the block-encoding normalisation, with a sharper
precision dependence given below. Our general simulation framework
recovers this additive-logarithmic bound as a special case and extends
the construction to exploit sparse access directly.

There are many ways of block encoding, and in the sparse-access model standard constructions lead to a normalisation and corresponding cost linear in the sparsity $s$.
In 2012, Berry and Childs \cite{BerryChilds} proposed methods to reduce the complexity scaling to $\sqrt{s}$ for Hamiltonian simulation, but were only able to prove this scaling (up to logarithmic factors) provided the Hamiltonian satisfied a particular condition on the behaviour of the norm when breaking it into parts.
Realistic examples satisfy this condition, but it is possible to adversarially create counterexamples that violate the condition, so do not have $\sqrt{s}$ scaling.
Since then it has been an open problem how to prove the $\sqrt{s}$ complexity scaling in general.
In 2019, Low established square-root dependence for general sparse Hamiltonians up to subpolynomial factors by using an iterated interaction picture approach~\cite{Low}.
However, these subpolynomial factors are very large in practice, making this approach impractical.

Our first result removes those subpolynomial factors from the leading
sparsity and time dependence. If $\eta$ bounds the Euclidean norm of
every row and column of a Hermitian matrix, and
$\tau=\eta|t|$, then evolution for time $t$ can be simulated to error
$\eps$ using
\begin{equation}
 \mathcal{O}\!\left(\sqrt{s}\,\tau+
   \sqrt{s\tau\log(1/\eps)}+\log(1/\eps)\right)
 \label{eq:intro-hs}
\end{equation}
sparse-oracle queries. At constant precision and $\tau\ge1$, this is
$\mathcal{O}(\sqrt{s}\,\tau)$. The same bound holds for time-dependent
Hamiltonians with coherent time-indexed sparse access, uniform sparsity
and row/column norm bounds, and a certified time discretisation. In
that case $\tau=\eta T$, where $T$ is the total duration. Thus time
dependence incurs no additional query overhead in the sparse model
either. We also prove
a nearly matching joint lower bound. It has the same leading
time-sparsity term, and shows in particular
that the time and precision costs cannot be uniformly additive.

At constant precision, the simulation theorem also implements any
$N\times N$ unitary using $\mathcal{O}(\sqrt N)$ queries to its matrix
entries. This matches the search lower bound and answers the question
posed by Berry and Childs~\cite{BerryChilds}. Bravo-Prieto, Harrow, and
Kothari also obtain this optimal bound in their independent
work~\cite[Corollary~3]{BPHK26}. Here it follows directly by simulating
the Hermitian dilation of the unitary, as stated in
Corollary~\ref{hs:cor:unitary}.

We apply the simulation result to dissipative linear differential
equations. For
\begin{equation}
 \frac{d x(t)}{dt}=A(t)x(t)+b(t),\qquad
 -A(t)=D(t)+\ii H(t),\quad D(t)\succeq0,
\end{equation}
let $\eta_D$ and $\eta_H$ bound the row and column Euclidean norms of
the two Hermitian parts, and let $\bar g$ bound the ratio between the
combined input and source norm and the final-solution norm. With
$\ell_g=\log(C\bar g/\eps)$ and
$\Theta=T\eta_H+T\eta_D\ell_g$, the matrix-query complexity is
\begin{equation}
 \mathcal{O}\!\left(\bar g\left[
   \sqrt{s}\,\Theta+\sqrt{s\Theta\ell_g}+\ell_g\right]\right).
 \label{eq:intro-ode}
\end{equation}
The separation of $D$ and $H$ ensures that the improved Hamiltonian
dependence is recovered when $D=0$. We give lower bounds that capture
the joint time, sparsity, and normalisation dependence for the
time-dependent problem, as well as precision-dependent hardness for a
constant positive semidefinite dissipative generator.

For linear systems, the algorithm of Harrow, Hassidim, and
Lloyd~\cite{HHL} initiated a sequence of improvements in the condition
number and precision dependence~\cite{AmbainisVT,CKS,Costa,Dalzell}.
The optimal block-encoding complexity is
$\mathcal{O}(\kappa\log(1/\eps))$, but a standard sparse block encoding introduces
a factor $s$. Low and Su~\cite{LowSu} obtained optimal dependence on
the solution overlap in the number of right-hand-side preparation
queries, while retaining nearly optimal matrix-query complexity.
Recent work of Gao, Ji, and Liu simultaneously achieves
the optimal matrix- and right-hand-side query counts in the block-encoding
model~\cite{GJL26}. We give a sparse-oracle algorithm with complexity
\begin{equation}
 \mathcal{O}\!\left(\sqrt{s}\,\kappa\log(1/\eps)\right)
 \label{eq:intro-qls}
\end{equation}
for $\norm A\le1$ and $\norm{A^{-1}}\le\kappa$. This matches the
joint lower bound in its stated parameter range. The
$\Omega(\kappa\log(1/\eps))$ lower bound had previously been communicated
by Harrow and Kothari, and Costa, An, Babbush, and Berry later supplied
an explicit proof~\cite{CABB25}. The constant-error sparsity bound is due
to Mori \emph{et al.}~\cite{Mori}, while the full joint sparse-access
bound was proved by Patel~\cite{Patel}. An alternative algorithm for linear systems with
the same complexity, together with a matching lower bound, was obtained
independently and concurrently to the work presented here by Bravo-Prieto, Harrow, and
Kothari~\cite{BPHK26}.

Table~\ref{tab:query-bounds} summarises the principal sparse-oracle
query bounds. The ODE row uses a common norm bound on the generator;
Eq.~\eqref{eq:intro-ode} gives the sharper separation of its Hamiltonian
and dissipative parts.

\begin{table}[tb]
\caption{Summary of the principal matrix-query bounds. Here
$L_\eps=\log(1/\eps)$, $h_\eps=\log(e+L_\eps/\tau)$,
$r_\eps=L_\eps/h_\eps$, and $\ell_g=\log(C\bar g/\eps)$ for an
absolute constant $C>1$. The evolution parameter is $\tau=\eta|t|$
for time-independent Hamiltonians and $\tau=\eta T$ for time-dependent
Hamiltonians and ODEs, where $\eta$ bounds the row and column Euclidean
norms. The ODE comparison assumes $\tau\ge4$; its joint lower bound
concerns the time-dependent dissipative class of
Section~\ref{odel:sec:main}, with a known inhomogeneous source.
Input and source preparation calls are counted separately.
Section~\ref{sec:overview} specifies the parameter ranges and output
guarantees. Corollary~\ref{hs:cor:hybrid} further improves the additive
Hamiltonian precision term.}
\label{tab:query-bounds}
\centering
\normalsize
\begin{tabular}{@{}*{3}{>{\raggedright\arraybackslash}p{\dimexpr(\linewidth-4\tabcolsep)/3\relax}}@{}}
\toprule
Problem & Upper bound & Lower bound \\
\midrule
Hamiltonian simulation &
$\mathcal{O}(\sqrt{s}\tau+\sqrt{s\tau L_\eps}+L_\eps)$ &
$\Omega(\sqrt{s}\tau+\sqrt{s\tau r_\eps}+r_\eps)$ \\[7pt]
Dissipative ODE &
$\mathcal{O}(\bar g\sqrt{s}\tau\ell_g)$ &
$\Omega(\bar g\sqrt{s}\tau+\sqrt{s\tau r_\eps}+r_\eps)$ \\[7pt]
Normalised linear system &
$\mathcal{O}(\sqrt{s}\kappa\log(1/\eps))$ &
$\Omega(\sqrt{s}\kappa\log(1/\eps))$ \\
\bottomrule
\end{tabular}
\end{table}

The principal results concern sparse-oracle query complexity, with input
and source preparation calls counted separately. We also give
gate-efficient implementations of time-independent Hamiltonian
simulation and dissipative differential-equation solving, including
Lipschitz time-dependent generators and forcing. Using the
ordered-rotation factorisation of Chen, Gao, Ji, Li, Wang,
and Zhou~\cite{CGJLWZ}, these implementations retain the query bounds
with polylogarithmic overhead in inverse precision, provided entry
decoding, reversible arithmetic on known parameters and the required
source preparations have efficient circuits. These additional
assumptions are stated separately
from those of the query theorems.

The remainder of the paper is organised as follows.
Section~\ref{sec:methods} summarises the main constructions and explains
how they avoid the usual sparse block-encoding normalisation. Common
notation, oracle conventions, and parameter ranges are collected in
Section~\ref{sec:overview}. The Hamiltonian simulation algorithm and its
lower bound are given in Sections~\ref{hs:sec:main} and
\ref{hsl:sec:main}, with the time-dependent simulation extension in
Section~\ref{hs:sec:time}, the unitary-implementation corollary in
Section~\ref{hs:sec:unitary}, and the gate-efficient implementation in
Section~\ref{hs:sec:gates}. Section~\ref{part:ode} gives the dissipative ODE algorithm,
including the time-dependent and general constant-coefficient
extensions. Section~\ref{ode:sec:gates} gives its gate analysis for the
dissipative case, and Section~\ref{odel:sec:main} gives the corresponding lower
bounds. The linear-system algorithm and its matching lower bound are
given in Sections~\ref{qls:sec:main} and~\ref{qlsl:sec:main}, respectively.

\section{Overview of methods}
\label{sec:methods}

The main idea is to represent a sparse matrix by a unitary that is
inexpensive in sparse queries, while retaining the information needed for
the desired computation. The Cayley transform provides this unitary.
For Hamiltonian simulation, a transducer then relates its action on a
larger space to evolution of the system. A further construction turns
that relation into a circuit with no specially prepared auxiliary
input. We explain these steps first, before briefly describing their
use for differential equations and linear systems.

\subsection{An inexpensive unitary representation}

Let $H$ be Hermitian, with at most $s$ nonzero entries in each row and
column, and let $\eta>0$ bound the Euclidean norm of every row and
column. This allows $\norm H$ to be as large as $\eta\sqrt{s}$.
A standard uniform sparse block encoding has normalisation
$s\norm H_{\max}$, where $\norm H_{\max}$ is the largest entry
magnitude. We instead use the row and column norm information directly.
This is important because these norms bound sums of squared entries,
rather than treating every entry as large as the largest one.

We introduce a larger space with two states for each edge $(i,j)$,
corresponding to its row and column orientations. An isometry $J$
spreads a system label uniformly over its row and column lists. An
auxiliary Hermitian operator $K_0$ couples the two orientations with
matrix element $\sqrt{s}H_{ij}/\eta$ and its complex conjugate
in the reverse direction. The resulting identities are
\begin{equation}
 J^\dagger K_0J=\frac{H}{\eta\sqrt{s}},\qquad
 J^\dagger K_0^2J\preceq I,\qquad \norm{K_0}\le\sqrt{s}.
 \label{eq:methods-moments}
\end{equation}
The first identity says that compression through $J$ recovers $H$ at
scale $\eta\sqrt{s}$. The second says that
$\norm{K_0J\psi}\le1$ for a unit input $\psi$. Thus an input entering
through $J$ has a bounded response even when the full operator has
much larger norm. The bound follows from the sums of squared entries
in each row and column. Using this distinction is what allows the
improved sparsity dependence. Normalising $K_0$ by its full norm
bound would lose the gain, and evolving under $K_0$ does not directly
simulate $H$, since the image of $J$ need not be invariant.

For a known $0<\gamma\le1$, put $K_\gamma=\gamma K_0$ and define
\begin{equation}
 D_0=(K_\gamma+\ii I)(K_\gamma-\ii I)^{-1}.
 \label{eq:methods-cayley}
\end{equation}
This is the Cayley transform. It converts a Hermitian operator into a
unitary by sending a real eigenvalue $\lambda$ to
$(\lambda+\ii)/(\lambda-\ii)$, a number of modulus one. For example,
zero maps to $-1$ and one maps to $\ii$. Large $|\lambda|$ maps close
to one. Although the expression contains an inverse, its implementation
here only requires two-dimensional operations: query an entry, apply
the unitary for its edge, and uncompute the entry. Including the row
and column locators gives constant sparse-query cost for $D_0$, its
inverse, and coherent controls. A change to list coordinates makes
$J$ a known preparation without matrix queries.

The useful feature of this transform is the exact identity
\begin{equation}
 \begin{aligned}
 (I-D_0)^{-1}&=\frac{I+\ii K_\gamma}{2},\\
 J^\dagger(I-D_0)^{-1}J&=\frac{I+\ii H/\alpha}{2},
 \qquad \alpha=\frac{\eta\sqrt{s}}\gamma.
 \end{aligned}
 \label{eq:methods-resolvent}
\end{equation}
The inverse on the left describes the response of the auxiliary
space to an input; it is used to analyse amplitudes, not implemented
as a matrix-inversion circuit. Its compression contains $H$ exactly,
and the compressed second-moment bound in Eq.~\eqref{eq:methods-moments} bounds
the squared size of that response. The representation therefore gives
both inexpensive unitary access and control of the amplitudes coupled
to the system. It does not yet give physical evolution under $H$.

\subsection{Transducers and physical time steps}

To obtain that evolution, we use the transducer framework of Belovs,
Jeffery, and Yolcu~\cite{BJY}, following its application to Hamiltonian
simulation in Ref.~\cite{CGWZ}. A transducer is a unitary $S$ acting
on the direct sum of public and private spaces. For a linear map
$\Gamma$ from public to private space, it satisfies
\begin{equation}
 S(\psi\oplus\Gamma\psi)=U\psi\oplus\Gamma\psi,
 \label{eq:methods-transducer}
\end{equation}
for every public input $\psi$, where $U$ is a unitary on the public
space. The public component carries the system input. The private amplitude
$\Gamma\psi$, called the catalyst, is returned unchanged while the
public component undergoes $U$. These are two orthogonal components
of one state, rather than an independently supplied tensor-product
ancilla. The catalyst may depend on the unknown input, and we do not
assume that it can be prepared.

The catalyst can be understood as a balance of amplitudes. Write
$S=\left(\begin{smallmatrix}A&B\\C&D\end{smallmatrix}\right)$ in the
two sectors. Then the private part of
Eq.~\eqref{eq:methods-transducer} is $C\psi+D\Gamma\psi=\Gamma\psi$:
the amplitude received from the public sector and the transformed
private amplitude sum to the original private amplitude. In this
algebraic sense the catalyst is a stationary response to the public
input. This relation identifies $U=A+B\Gamma$, even though starting
with an empty private sector would only give public output $A\psi$.

For our construction, a known rotation couples the public space to
the image of $J$. Applying the private Cayley unitary $D_0$ and then
this rotation gives a transducer whose public action is
\begin{equation}
 U_C=(I-\ii\Delta tH/2)(I+\ii\Delta tH/2)^{-1}
 \label{eq:methods-physical-step}
\end{equation}
for $\Delta t>0$ with $\alpha\Delta t\le1/4$; negative times use the
inverse circuit. For an eigenstate of energy $E$,
the phase is $-2\arctan(\Delta tE/2)$. It approximates $-\Delta tE$
with cubic local error. There are therefore two distinct
uses of the Cayley transform: the transform of the entry operator
gives cheap auxiliary access, and the transducer coupling turns the
compressed identity into a physical time step. The transducer
relation for $U_C$ is exact; the approximation is in replacing
$e^{-\ii\Delta tH}$ by $U_C$.

We compose these time steps using one private sector per physical
time slot. The local catalyst map has squared norm at most
$\alpha\Delta t\nu$, where $\nu=(1+\gamma^2)/2$.
For each input, squared norms of the private components add in the
direct sum, giving a total bound $W=\alpha|t|\nu$. Refining a uniform mesh decreases each
local bound in inverse proportion to the number of slots, leaving
$W$ unchanged. The private operation is applied coherently to all
slots with one $D_0$ call,
followed by the known rotations in chronological order. Consequently
one call to the composed transducer has constant sparse-query cost,
even when many physical time steps are needed for accuracy.

\subsection{Turning the relation into a simulation circuit}

The actual circuit begins with no private amplitude. It uses repeated
interactions to suppress the difference between this initial condition
and the catalyst relation. Prepare one system input in a uniform
superposition of public slot labels, couple the slots successively to
the same private space, and undo the slot preparation. This is a
distribution of amplitudes, not copies of the unknown system state.
The resulting public block differs from $U$ by a term containing the
average $(I+D+\cdots+D^{n-1})/n$ over $n$ uses, where $D$ is
the private block of the composed transducer.

For a scalar phase, such averages suppress phases away from one.
Repeated averaging produces a sharper filter. The private block $D$
of the composed transducer need not be normal, so its eigenvalues
alone do not control the filter norm. The proof instead bounds the
resolvent $(z-D)^{-1}$ using the compressed identities.
A smoothed filter is implemented by a linear combination of unitaries
(LCU), here the reuse circuits, with total coefficient magnitude less
than two. Padding this normalisation to two and applying one round of
oblivious amplitude amplification gives the final simulation unitary. No catalyst is
prepared and no intermediate measurement is required. The error bound
includes amplitude left in the auxiliary registers, so the result can
be used coherently as a subroutine.

Two quantities determine the query cost. The first, $W$, bounds the
total squared catalyst norm. The second, $Q$, controls how close the
spectrum of the local unitary $D_0$ can lie to one. For the sparse
representation they can be chosen as
\begin{equation}
 W=\frac{\tau\sqrt{s}}2(\gamma+\gamma^{-1}),\qquad
 Q=\left\lceil\max\{1,\gamma\sqrt{s}\}\right\rceil,
 \quad \tau=\eta|t|.
 \label{eq:methods-cost-parameters}
\end{equation}
The conversion to a circuit uses
$\mathcal{O}(W+Q\ell)$ transducer calls, where
$\ell=\log(1/\eps)$ and $0<\eps\le1/2$. Each call has constant
sparse-query cost. With $\gamma=1$ this gives
$\mathcal{O}(\sqrt{s}(\tau+\ell))$. Reducing $\gamma$ moves the
auxiliary phases away from one and reduces the precision cost, but
increases the catalyst cost. Balancing the two contributions, with
$s^{-1/2}\le\gamma\le1$, gives
\begin{equation}
 \mathcal{O}\!\left(\sqrt{s}\tau+\sqrt{s\tau\ell}+\ell\right).
 \label{eq:methods-hs-cost}
\end{equation}
Thus the Cayley representation supplies the favourable matrix scale
and moment bounds, while the transducer and its conversion to a circuit
turn those bounds into the simulation complexity. The representation
alone would not establish this result.

For time-dependent Hamiltonians, the sampled time steps occur in
chronological order. The same argument applies without requiring them
to commute, under coherent time-indexed access and certified
discretisation, even when the sparsity supports vary with time. The
query bound is unchanged, with $\tau=\eta T$ for total time $T$.
Known rotations are not free in the gate count, however. For the
time-independent Hamiltonian gate theorem, the dyadic factorisation
of Ref.~\cite{CGJLWZ} compresses their ordered sequence. The resulting
gate bound requires the additional decoding, arithmetic and synthesis
assumptions stated in Section~\ref{hs:sec:gates}.

\subsection{Differential equations, linear systems, and scope}

For dissipative differential equations, write
$\dot x=A(t)x+b(t)$ with $-A(t)=D(t)+\ii H(t)$ and $D(t)\succeq0$.
Here $D(t)$ is the dissipative Hermitian part, not a transducer block.
Linear combination of Hamiltonian simulations (LCHS) represents the
propagator as a weighted combination of unitary evolutions generated
by $H(t)+kD(t)$~\cite{ALL,ACL,LS}. We apply the improved Hamiltonian
simulator to these evolutions. Only the dissipative part acquires the
additional frequency range needed for accuracy. Coherent selection
uses a common sequence of primitive queries, so the number of
quadrature branches does not multiply their cost. A weighted combination of
propagated initial and source states incorporates $b(t)$. Amplification
accounts for the ratio of their combined norms to the final-solution
norm, which can be large because of decay or cancellation. General
constant generators use a separate history-state construction with
explicit stability parameters.

For linear systems, the same Cayley representation is used differently.
The desired solution is embedded in the kernel of an augmented
Hermitian matrix. Its generally dense source column is implemented
through the supplied input-state preparation, rather than treated as
sparse. Combining the auxiliary Cayley unitary with a known
reflection gives a walk whose fixed vectors correspond to that kernel.
The moment bounds control both recovery of the solution and the
separation of the fixed space from nonzero eigenphases. A Chebyshev
filter~\cite{Costa} selects the fixed space. A geometric regularisation
schedule estimates the unknown solution norm and ensures sufficient
initial overlap, with error budgets chosen to avoid an extra
$\log\kappa$. Together these steps give
$\mathcal{O}(\kappa\sqrt{s}\log(1/\eps))$ queries for
$\norm A\le1$, $\norm{A^{-1}}\le\kappa$ and $0<\eps\le1/10$.
This route uses a kernel walk rather than the simulation transducer.
The independent construction of Bravo-Prieto, Harrow, and
Kothari~\cite{BPHK26} reaches the same scaling through an enlarged
linear system.

The lower bounds nearly match the simulation result and match the
linear-system result in their stated ranges. In particular, the mixed
time--precision term in simulation cannot be replaced by a uniformly
additive bound in $\sqrt{s}\tau$ and $\log(1/\eps)$. The necessary
precision dependence for general dissipative evolution remains less
well understood. All comparisons use the specified sparse oracles,
including the supplied inverse location operations; query, gate,
workspace, and state-preparation costs are kept separate.

\section{Common conventions}
\label{sec:overview}
All Hilbert spaces are finite dimensional. The integer $s$ bounds both
row and column sparsity. The norm $\norm{\cdot}$ is the operator norm
on matrices and the Euclidean norm on vectors; $\norm{\cdot}_1$ is
trace norm. Logarithms are natural unless a base is displayed. All
asymptotic constants are absolute. Local auxiliary notation is reset
in each main section; in particular, the row-list map $L$ is distinct from
the precision logarithm, and a scalar $r$ used in a clock construction
is distinct from the precision parameter below.

\paragraph{Temporary ancillas.}
We use the following \emph{full-output error convention}. We suppress
temporary ancillas when comparing a circuit implementation
with its ideal operation. These ancillas are initialised in a known zero
state and ideally returned to that state. Thus, for a circuit
$\mathcal A$ implementing a unitary $U$, the notation
$\norm{\mathcal A-U}\le\eps$ means
\begin{equation}
 \sup_{\norm\psi=1}
 \norm{\mathcal A(\ket\psi\otimes\ket0)
             -(U\ket\psi)\otimes\ket0}\le\eps.
\end{equation}
Operator differences in this convention are maps from the initialised
input subspace to the full output space, so the error includes any
residual ancilla state. The same bound holds with an untouched reference
system. Known input and output changes of encoding are absorbed into
the surrounding query-free operations. Isometries used within the
constructions, such as those in compressed-moment identities, remain
explicit. A register or circuit is called \emph{clean} when its work
qubits are initialised to $\ket0$ and returned to $\ket0$ by the ideal
operation. Later references to the ancilla convention, clean-input
implementations, and full-output error all mean this convention.
Heralded state-preparation guarantees are stated separately.

For the comparison in Table~\ref{tab:query-bounds}, set
\begin{equation}
 L_\eps=\log(1/\eps),\qquad
 h_\eps=\log(e+L_\eps/\tau),\qquad
 r_\eps=L_\eps/h_\eps,\qquad \ell_g=\log(C\bar g/\eps).
\end{equation}
For $0<\eps\le1/2$, asymptotic precision bounds use $\log(1/\eps)$.
Fixed factors inside the logarithm affect only absolute constants; for
example, $\log(1/\eps)\le\log(2/\eps)\le2\log(1/\eps)$.
Explicit error allocations and filter schedules retain constants such
as those in $\log(2/\delta)$ and $\log_2(4096/\eps)$. Logarithms of
internal tolerances are converted using their stated allocations,
retaining any dependence on other parameters.
In the proofs the displayed parameters are abbreviated locally where
unambiguous. In particular, $\ell$ denotes the precision logarithm defined
in the theorem in which it occurs; it is not a single global symbol.
For time-independent Hamiltonian simulation $\tau=\eta|t|$; for
time-dependent Hamiltonians and ODEs $\tau=T\eta$.
Here $\eta$ is a supplied, uniform bound on every row and column Euclidean norm,
not necessarily the operator norm. A supplied spectral-norm bound can
always serve as $\eta$. In the ODE bounds
\begin{equation}
 g=\frac{\norm{x_0}+\int_0^T\norm{b(t)}\,dt}{\norm{x(T)}}\le\bar g.
\end{equation}
The word \emph{normalisation} has two uses: in a block-encoding
statement it denotes the block-encoding scale (such as $\alpha_{\rm BE}$);
in an ODE statement it denotes the displayed solution-norm ratio $g$
or its bound $\bar g$.
The ODE row of Table~\ref{tab:query-bounds} assumes $\tau\ge4$.

\paragraph{Ranges and output guarantees.}
A \emph{heralded} algorithm outputs a classical success flag. Constant
heralded success means that this flag is raised with probability at
least an absolute positive constant, independently of the instance
parameters; repetition can amplify this probability. The stated state
error is conditioned on the success flag, failures may contain an
arbitrary state, and a conditional trace-distance guarantee applies to
the possibly mixed system state after conditioning.

The Hamiltonian upper bound applies to both time-independent and
time-dependent evolution and is an operator-norm guarantee under the
ancilla convention above, including inputs entangled with a reference.
Its lower bound needs only
trace-distance accuracy on one known input and holds for $s\ge32$,
$\tau\ge1$, $\eps\le1/64$. Lemma~\ref{odel:lem:herald} extends it to
constant heralded success. The ODE joint lower bound applies for
$s\ge32$, $\tau\ge4$, $\bar g\ge2$, $\eps\le1/64$ in the full
time-dependent class; its $\bar g$ term uses a known inhomogeneous
source. The constant-coefficient positive-semidefinite (PSD) precision
result is stated
separately in Theorem~\ref{odel:thm:psd}. Neither result proves a full
multiplicative dissipative precision logarithm. The linear-system
lower bound matches its upper bound for $s\ge5$, $\kappa\ge32$,
$\eps\le1/64$, under constant heralded success and conditional trace
distance. All lower bounds permit sufficiently large dimension; they
are not statements for an insufficient fixed dimension.

\paragraph{Refined ODE and general-generator bounds.}
Writing $-A=D+\ii H$ with $D\succeq0$, define
$\Theta=T\eta_H+T\eta_D\ell_g$. The refined cost is
\begin{equation}
 \mathcal{O}\!\left(\bar g[\sqrt{s}\Theta+
               \sqrt{s\Theta\ell_g}+\ell_g]\right).
\end{equation}
It retains the Hamiltonian precision improvement when $D=0$.
For arbitrary constant generators, Theorem~\ref{ode:thm:general}
instead gives
$\widetilde{\mathcal{O}}(\sqrt{s}\,\bar g_{\rm path}\bar C(1+\Lambda T))$,
where $\Lambda$ is a spectral bound, $\bar C$ bounds propagator growth,
and $\bar g_{\rm path}$ bounds the largest intermediate solution norm
divided by the final norm. Its stability parameters and hidden
logarithms remain explicit in the theorem.

\paragraph{Native sparse access.}
A reversible value oracle writes the represented matrix entry. Row
and column locators map injective list indices to matrix indices,
retain the row or column, and have full unitary extensions with
supplied inverses. Their action on valid basis states has specified
phase one. Lists contain every nonzero entry and may have unmatched
zero padding: padding positions may represent zero entries, and the row
and column lists need not use the same padding positions. \emph{Injective}
padding means that distinct list positions address distinct matrix
indices, including positions whose represented entries are zero. A general
out-of-place unsorted location oracle without
inverse-location access is not substituted for this model. Coherent
controls follow with constant query overhead from the construction in
Section~\ref{hs:sec:entry}. For general matrices both locator
directions are supplied; Hermitian symmetry can relate them.
Time-dependent queries include a coherent sampled-time address, with
the representation and regularity assumptions of Theorem~\ref{hs:thm:time}.
One such time-indexed call counts as one query even when the time
address is in superposition.
For ODEs, Theorem~\ref{ode:thm:time} also specifies source-preparation
and quadrature assumptions.

One \emph{primitive sparse-oracle query} is one application of an
entry-value oracle, a row or column locator, or a supplied inverse of
one of these operations. A coherent controlled implementation costs a
constant number of primitive queries, as proved in
Section~\ref{hs:sec:control}. Unless a different resource is explicitly
named, the phrases \emph{native sparse query}, \emph{matrix query},
\emph{sparse query}, and \emph{entry query} refer to this unit; an
entry-only result simply does not use locators. Calls to a complete
transducer are a higher-level count and are converted to primitive
queries where the transducer is constructed. Input and source state
preparation is a separate supplied resource, counted in
source-preparation calls.
We use \emph{value oracle}, \emph{entry oracle}, and
\emph{entry-value oracle} synonymously, and likewise use
\emph{locator}, \emph{location oracle}, and \emph{location list} for
the indexed sparse-address operations and their specified lists.

Known gates, arithmetic, and register routing are free only in the matrix
query count. Here \emph{known}, \emph{public}, and \emph{query-free}
refer to data or operations independent of the unknown matrix entries
and requiring no oracle call; \emph{free} never means that their
elementary-gate cost vanishes in a gate-complexity theorem.
\emph{Public arithmetic} is reversible computation on known parameters,
indices, branch labels, and time addresses. \emph{Private arithmetic}
is reversible computation in matrix-data or Cayley work registers after
oracle values have been loaded. Neither consumes a matrix query, while
both are charged when gates are counted. An opaque unknown preparation
unitary is not assumed to have free coherent control. Matrix queries
act on the exact finite-word matrices represented by the oracles;
rounding another intended matrix needs a separate perturbation budget.
For time-dependent evolution this requirement concerns the finitely
many sampled matrices. The representation error relative to a
continuous path is included in the discretisation certificate.
Throughout, a \emph{certificate} is a classically known a priori bound
on representation, discretisation, or quadrature error that the
algorithm may use to choose its finite mesh and error allocation; it is
not an additional quantum oracle.
For linear systems the promise is $\norm A\le1$ and
$\norm{A^{-1}}\le\kappa$. For unnormalised input, a spectral upper
bound $\Lambda$ yields the parameter $\Lambda\norm{A^{-1}}$.
The hard family in Section~\ref{part:qlsl} also has actual condition
number $\Theta(\kappa)$.

A sparse-list \emph{slot} is a position in a row or column locator,
whereas a \emph{time slot} is an interval or private-clock sector.
\emph{Dyadic rational} refers to finite binary data; \emph{dyadic
factorisation} refers to the power-of-two recursive circuit identity.
A \emph{schedule} is the classically specified collection of mesh
sizes, rescalings, filter degrees, and error tolerances. A common
schedule is one choice used uniformly across coherently selected
branches. The \emph{moment interface} means the displayed algebraic
compression identities and inequalities.
The phrases \emph{Cayley discretisation bias}, \emph{Cayley-product
error}, and \emph{time-discretisation error} refer to the finite-step
approximation error, distinct from any separately certified error in
the represented matrix samples.

\paragraph{Formalisation scope.}
The accompanying Lean sources formalise the time-independent
Hamiltonian simulation result in Theorem~\ref{hs:thm:main}, the
constant-coefficient dissipative differential-equation result in
Theorem~\ref{ode:thm:main}, and the constant-error unitary-implementation
upper bound of Corollary~\ref{hs:cor:unitary} for exact entry data,
using Lean~4.32.1 and Mathlib~v4.32.1.
The sources are publicly available in the
\texttt{sparse-quantum-lean} repository~\cite{BerrySparseQuantumLean2026}.
The version accompanying this manuscript is
\texttt{0.1.0-rc5},
\href{https://github.com/DominicWB/sparse-quantum-lean/tree/8b03876a2e82cabb6971463e1161227e57ca0bbe}
{commit \texttt{8b03876}}.
The full modular project has
compiled successfully, including all 588 axiom-audit assertions, each requiring exactly
the foundational axioms \texttt{propext}, \texttt{Classical.choice},
and \texttt{Quot.sound}.
The Hamiltonian and ODE endpoint declarations are
\texttt{SparseHamiltonian.\allowbreak{}ImprovedSparseHamiltonian.\allowbreak{}simulation\_log\_one}
and \texttt{SparseHamiltonian.\allowbreak{}ODECompleteEndpoint.\allowbreak{}solver}.
The formalisation constructs circuits over the stated primitive oracle
interfaces and bounds their query counts. The complete ODE endpoint
runs at most four trials, bounds the worst-case matrix and
source-preparation calls, and proves heralded success probability at
least $175/256>2/3$. Conditional on success and after discarding the
classical trial labels, the mixed output has trace distance at most
$\eps$ from the normalised actual solution.
The declaration
\texttt{SparseHamiltonian.\allowbreak{}UnitaryImplementation.\allowbreak{}implementation}
gives full-output operator-norm error at most $1/4$ with
$\mathcal{O}(\sqrt N)$ entry-oracle calls.
The separate gate-efficient implementations in
Theorems~\ref{hs:thm:gates} and~\ref{ode:thm:gates}, including their
gate-synthesis and controlled-preparation analyses, are proved here but
are not included in the formalisation. The formalisation also does not
cover the time-dependent Hamiltonian extension in
Theorem~\ref{hs:thm:time}, the refined precision bound in
Corollary~\ref{hs:cor:hybrid}, the general-precision or inexact-entry
unitary-implementation bounds in Corollary~\ref{hs:cor:unitary}, the
time-dependent or general-generator ODE extensions in
Theorems~\ref{ode:thm:time} and~\ref{ode:thm:general}, the linear-system
algorithm in Theorem~\ref{qls:thm:main}, the lower bounds, or gate and
workspace complexity. The chosen circuit is uniform over the unknown input in
a specified promise family; taking this family to contain all
admissible inputs gives the usual worst-case query interpretation.
The Lean statements use explicit isometric embeddings for the
initialised ancillas; the operator notation here suppresses these
according to the convention above.

The Lean model embeds the $s$ slot labels in the matrix index register,
so $s$ does not exceed the dimension, and requires the public value
decoder to be bounded on every word. For the constant-coefficient ODE,
the source mesh uses the coarse public operator-norm bound given by
$s$ times the decoder bound. This affects the free source register,
while the sharper row and column bounds determine the query complexity.
The exact source norms and the supplied source-preparation controls are
specified in Section~\ref{ode:sec:main}.

\section{Improved Hamiltonian simulation}\label{part:hs}
\subsection{Main result and scope}\label{hs:sec:main}

The matrix is supplied through a reversible entry-value oracle and
in-place row and column location oracles with inverses. Location lists
are injective, have $s$ slots, contain every nonzero position, and may
include zero padding. Each locator is a full unitary extension of its
specified phase-one basis action on valid list inputs. An arbitrary
out-of-place locator without inverse location access is not substituted
for this model. All input-independent gates and arithmetic are free in
the query count. Matrix entries are those represented by the oracle;
rounding a different intended matrix requires a separate error budget.

\begin{theorem}\label{hs:thm:main}
Let $H$ be Hermitian, with at most $s\ge1$ nonzero entries per row and
column, and with every row and column Euclidean norm at most a known
$\eta>0$. For $t\in\mathbb R$ and $0<\eps\le1/2$, put
\begin{equation}
 \tau=\eta|t|,\qquad \ell=\log(1/\eps).
\end{equation}
Under the preceding access assumptions, a unitary circuit $\mathcal A$
can be constructed such that
\begin{equation}\label{hs:eq:fullerror}
 \norm{\mathcal A-e^{-\ii tH}}\le\eps
\end{equation}
using
\begin{equation}\label{hs:eq:mainbound}
 \mathcal{O}\!\left(\sqrt{s}\,\tau+\sqrt{s\tau\ell}+\ell\right)
\end{equation}
primitive sparse-oracle queries. The constants are absolute, and
\eqref{hs:eq:fullerror} also holds on inputs entangled with an untouched
reference. A spectral-norm bound $\norm H\le\Lambda$ permits
$\eta=\Lambda$. If $t=0$ or the available bound is $\eta=0$, the identity
circuit suffices.
\end{theorem}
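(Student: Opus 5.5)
The proof assembles the three ingredients of Section~\ref{sec:methods} and then fixes the free parameter $\gamma$. The trivial cases $t=0$ or $\eta=0$ are handled by the identity circuit. For $t<0$ we use the inverse circuit, which has the same query count, so we assume $t>0$.

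\textbf{Step 1: the entrywise representation.} Define the edge space with two orientation states per edge $(i,j)$. Let $J$ be the known isometry that spreads a system label over its row and column lists. Let $K_0$ be the coupling with elements $\sqrt{s}H_{ij}/\eta$. From the list structure and the row/column norm bound we verify the identities in Eq.~\eqref{eq:methods-moments}. The compression $J^\dagger K_0J=H/(\eta\sqrt s)$ is a direct entry computation. The bound $J^\dagger K_0^2J\preceq I$ follows because $\norm{K_0J\psi}^2$ is an average of squared row and column norms divided by $\eta^2$. Next we show that $D_0$ in Eq.~\eqref{eq:methods-cayley} is block diagonal in $2\times2$ edge blocks. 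Hence $D_0$, its inverse and their controlled versions cost $\mathcal O(1)$ primitive queries: query the entry, apply the edge unitary, uncompute the entry, with the locators and their inverses converting to list coordinates. The resolvent identity in Eq.~\eqref{eq:methods-resolvent} is then algebra.

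\textbf{Step 2: the transducer.} Couple the public space to $\operatorname{ran}J$ by a known rotation. For $\alpha\Delta t\le1/4$, solve the private equation $C\psi+D\Gamma\psi=\Gamma\psi$ using $(I-D_0)^{-1}=(I+\ii K_\gamma)/2$. This shows that the public action is the Cayley step $U_C$ in Eq.~\eqref{eq:methods-physical-step}. The same solution bounds the catalyst by $\norm{\Gamma\psi}^2\le\alpha\Delta t\,\nu\norm\psi^2$, using the compressed second moment. Compose $n$ steps with one private sector per time slot, applying $D_0$ to all slots in one call. The catalyst norms add to $W=\alpha t\nu=\tfrac{\tau\sqrt s}{2}(\gamma+\gamma^{-1})$, independent of $n$. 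The per-step Cayley error relative to $e^{-\ii\Delta tH}$ is cubic, so the total error is $\mathcal O(\norm H^3t^3/n^2)$. Since $n$ costs no queries, we choose $n$ large enough that this error is at most $\eps/2$.

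\textbf{Step 3: conversion to a circuit.} This is the main obstacle. We implement the reuse circuits, whose public block is $U$ plus a correction involving averages of powers of the private block $D$. We then form a smoothed filter as an LCU with coefficient sum below $2$, and finish with one round of oblivious amplitude amplification. The error must be bounded by $\eps/2$ using $\mathcal O(W+Q\ell)$ transducer calls, with $Q=\lceil\max\{1,\gamma\sqrt s\}\rceil$. The difficulty is that $D$ is not normal, so eigenvalue arguments fail. Our first attempt is to bound the resolvent $(z-D)^{-1}$ for $|z|$ slightly above $1$ via the compressed identities. We would show that the relevant quantity behaves like $\norm{\Gamma}^2$ plus a spectral-gap term near $1$ controlled by $\norm{K_\gamma}\le\gamma\sqrt s$. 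This works because the eigenphases of $D_0$ lie at distance $\Omega(1/Q)$ from $1$. We then feed this resolvent bound into a contour-integral estimate of a degree-$\mathcal O(W+Q\ell)$ filter polynomial. The estimate must also account for the residual ancilla amplitude required by the full-output convention.

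\textbf{Step 4: optimisation.} The query count is $\mathcal O(\tau\sqrt s(\gamma+\gamma^{-1})+\max\{1,\gamma\sqrt s\}\ell)$. If $\tau\ge\ell$, take $\gamma=1$, giving $\mathcal O(\sqrt s\tau)$. Otherwise take $\gamma=\max\{s^{-1/2},\sqrt{\tau/\ell}\}\le1$. The term $\tau\sqrt s/\gamma$ becomes $\sqrt{s\tau\ell}$ or $s\tau\le\sqrt{s\tau\ell}$. The term $\gamma\sqrt s\,\ell$ becomes $\sqrt{s\tau\ell}$ or $\ell$. Together these give Eq.~\eqref{hs:eq:mainbound}. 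Entangled references are covered because every bound is an operator-norm bound.
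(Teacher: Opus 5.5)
Your Steps 1, 2 and 4 coincide with the paper's argument: the moment identities and the Cayley edge rotation; the transducer with catalyst $\Gamma=\sqrt{2w}\,R_0(1)J(I+\ii wh/2)^{-1}$ and $\Gamma^\dagger\Gamma\preceq w\nu I$; one private slot per time step, so that $\norm{\Gamma_{\rm tot}}^2\le W$ independently of the mesh; and the three-regime choice of $\gamma$. The gap is Step 3, which is the entire content of Lemma~\ref{hs:lem:abstract}. As sketched, it would not give $\mathcal O(W+Q\ell)$.

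First, the resolvent needs no work for $|z|>1$, where contractivity gives $\norm{(z-D)^{-1}}\le1/(|z|-1)$ directly. Take the filter $f=g_b^k$ with $g_b(z)=(1+z+\cdots+z^{b-1})/b$ and $b=32Q$. Since $g_b\ge1$ on $[1,\infty)$, a contour on which $|f|$ is small must cross the real axis to the left of $1$, at distance $\gtrsim1/Q$, i.e.\ inside the unit circle. What has to be proved is that the non-normal contraction $D$ has no spectrum in the disk $|z-1|\le1/(4Q)$, together with a quantitative resolvent bound there. The gap of $D_0$ does not transfer to $D$ by itself. The paper first applies a Woodbury formula to the resolvent $R_{\rm loc}$ of the local block $(I-\chi JJ^\dagger)D_0$. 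This uses $\norm{F(z)}<1$ for $F(z)=J^\dagger D_0(z-D_0)^{-1}J$, together with $\chi\le2/9$ and the compressed second moment. It then solves $(z-D)x=y$ slot by slot. This gives off-diagonal resolvent blocks $-\sigma^2R_{\rm loc}J\,u(z)^{r-k-1}J^\dagger D_0R_{\rm loc}$, where $u(z)=cI-\sigma^2F(z)(I+\chi F(z))^{-1}$ is the forced public transfer.

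Second, the resulting bound is not of the additive form ``$\norm\Gamma^2$ plus a gap term'' that you anticipate. One only has $\norm{u(z)}\le1+3w\nu|z-1|$, and $u$ genuinely exceeds one for real $z<1$: already for $H=0$, $u(1-\epsilon)=1-\chi+\sigma^2/(2-\chi-\epsilon)>1$. Products over the slots therefore give $\norm{(z-D)^{-1}}\le3(Q+W)e^{3W/(4Q)}$, as in \eqref{hs:eq:globalres}. Additivity of $W$ and $Q\ell$ is recovered only by absorbing this factor into the exponent, $k=\lceil4W/Q+\log_2(512/\delta)\rceil$, which costs $(b-1)\cdot4W/Q=\mathcal O(W)$ in degree. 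Shrinking the disk to radius $\sim1/W$ to avoid the exponential would instead force $b\sim\max\{Q,W\}$ and degree $\gtrsim W\ell\ge\sqrt s\,\tau\ell$ for every $\gamma$.

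Third, bounding the outer circle $|z|=1+1/b$ naively, by resolvent $b$ times length $2\pi$, leaves a prefactor proportional to $Q$. That adds $Q\log Q$ to the degree, which at fixed $\eps$ and $\gamma=1$ is $\sqrt s\log s$ and violates the theorem. The paper instead keeps two powers of $|g_b(z)|\le(1+e)/(b|z-1|)$ and uses $\int_{|z|=R}|dz|/|z-1|^2=2\pi R/(R^2-1)$, which makes the outer contribution $\mathcal O(2^{-k})$.

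You also still need two further ingredients. One is the telescoping identity $f(z)G_d(z)=\sum_{n=1}^{2d}\lambda_nG_n(z)$ with $\sum_n|\lambda_n|=2-2a_0<2$. It converts the reuse averages into $U-\widetilde P=\frac Ud\Gamma_{\rm tot}^\dagger f(D)(I-D^d)\Gamma_{\rm tot}$. The other is the amplification expansion, which gives full-output error $2\delta$ without assuming that $\widetilde P$ is a contraction.
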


For fixed $\tau=1$, the cost is $\mathcal{O}(\sqrt{s\ell}+\ell)$. Thus the
precision factor multiplying the sparsity dependence is a square root
of a logarithm, together with an additive logarithm. The total cost is
not claimed to be $\mathcal{O}(\sqrt{\log(1/\eps)})$ as $\eps\to0$ with $s$ fixed.
Corollary~\ref{hsl:cor:power} rules out
$\mathcal{O}(\sqrt{s}\tau+\ell)$.

As a baseline, a simpler, unscaled version of the same construction
already gives
\begin{equation}\label{hs:eq:unscaled}
 \mathcal{O}\!\left(\sqrt{s}\,[\tau+\ell]\right).
\end{equation}
The improvement from \eqref{hs:eq:unscaled} to \eqref{hs:eq:mainbound} comes
from balancing the catalyst norm against the private spectral gap
(the distance $\operatorname{dist}(1,\operatorname{spec}D_0)\ge1/Q$
of Section~\ref{hs:sec:entry}).

We use the following transducer terminology throughout this section. A
\emph{transducer} is a unitary $S$ on
$\mathcal H_{\rm pub}\oplus\mathcal H_{\rm priv}$ for which there is a
linear map $\Gamma:\mathcal H_{\rm pub}\to\mathcal H_{\rm priv}$ such
that
\begin{equation}\label{hs:eq:transducer-definition}
 S(\psi\oplus\Gamma\psi)=U\psi\oplus\Gamma\psi ,
\end{equation}
for all $\psi\in\mathcal H_{\rm pub}$, where $U$ is a unitary on
$\mathcal H_{\rm pub}$. This equation is the \emph{transduction identity}
(also called the catalyst relation), $U$ is the \emph{public action}, and $\Gamma$ is the
\emph{catalyst}. The quantity $\|\Gamma\|^2$, or a displayed upper bound
$W$ on it, is the transduction or catalyst cost. The \emph{private
block} is the compression of $S$ to $\mathcal H_{\rm priv}$. The
\emph{boundary} is the known coupling between the public and private
summands, represented below by the isometry $J$ and its associated
rotations. Here public and private name the two transducer summands;
they are distinct from public and private arithmetic defined in
Section~\ref{sec:overview}.

\subsection{Exact entrywise representation and its rescaling}\label{hs:sec:entry}

Let $H_\eta=H/\eta$ and use an edge space with basis
$\ket{i,j}$. Define
\begin{equation}
 L\ket i=\sum_{j\text{ in row list }i}\ket{i,j},\qquad
 R\ket j=\sum_{i\text{ in column list }j}\ket{i,j},\qquad
 Y=\sum_{i,j}(H_\eta)_{ij}\ket{i,j}\bra{i,j}.
\end{equation}
The row and column lists can use different zero padding. Every nonzero
entry occurs in both relevant lists, so
\begin{equation}
 L^\dagger L=R^\dagger R=sI,\qquad
 L^\dagger YR=H_\eta.
\end{equation}
The orientation bit records whether an edge was reached through its
row list ($L$ orientation) or its column list ($R$ orientation); an
orientation-selected locator applies the corresponding locator
coherently. On these two orientations of the edge space put
\begin{equation}\label{hs:eq:JK}
 J=\frac1{\sqrt{2s}}\binom L R,
 \qquad K_0=\sqrt{s}\begin{pmatrix}0&Y\\Y^\dagger&0\end{pmatrix}.
\end{equation}
Then $J$ is an isometry and
\begin{equation}\label{hs:eq:moments}
 J^\dagger K_0J=\frac{H}{\eta\sqrt{s}},\qquad
 J^\dagger K_0^2J\preceq I,\qquad \norm{K_0}\le\sqrt{s}.
\end{equation}
We call the second inequality in \eqref{hs:eq:moments} the
\emph{compressed second-moment bound}, or equivalently the
\emph{boundary-moment bound}; later uses of either phrase refer to this
same compression through $J$.
Indeed, the second expression is one half the sum of diagonal matrices
containing the row and column squared Euclidean norms of $H_\eta$.
The last bound uses $|(H_\eta)_{ij}|\le1$. These facts use only the
row and column norm promise, not $\norm{H_\eta}\le1$.

For $0<\gamma\le1$, define
\begin{equation}\label{hs:eq:scaled}
 K_\gamma=\gamma K_0,\qquad
 D_0=(K_\gamma+\ii I)(K_\gamma-\ii I)^{-1},\qquad
 \alpha=\frac{\eta\sqrt{s}}{\gamma},\qquad h=H/\alpha.
\end{equation}
The exact inverse at one is
\begin{equation}\label{hs:eq:Ridentity}
 R_0(1):=(I-D_0)^{-1}=\frac{I+\ii K_\gamma}{2}.
\end{equation}
With
\begin{equation}\label{hs:eq:params}
 \nu=\frac{1+\gamma^2}{2},\qquad
 Q=\left\lceil\max\{1,\gamma\sqrt{s}\}\right\rceil,
\end{equation}
we therefore have
\begin{equation}\label{hs:eq:interface}
 \begin{aligned}
 J^\dagger R_0(1)J&=\frac{I+\ii h}{2},
 &\norm h&\le1,\\
 J^\dagger R_0(1)^\dagger R_0(1)J&\preceq\frac\nu2 I,
 &\norm{R_0(1)}&\le Q,
 &0<\nu&\le Q.
 \end{aligned}
\end{equation}
Here $\norm h\le\norm{K_\gamma J}\le\gamma$, and
$\norm{R_0(1)}=\sqrt{1+\norm{K_\gamma}^2}/2$.
In particular, $\operatorname{dist}(1,\operatorname{spec}D_0)\ge1/Q$.

\paragraph{Query implementation.}
On an edge, set $z=\gamma\sqrt{s}\,(H_\eta)_{ij}$. In the
orientation basis,
\begin{equation}\label{hs:eq:edge}
 D_{\rm edge}
 =\frac{(|z|^2-1)I+2\ii
      \begin{pmatrix}0&z\\\overline z&0\end{pmatrix}}{1+|z|^2}.
\end{equation}
A value query, a known two-dimensional rotation conditioned on that
value, and an inverse value query implement this operation. At a zero
entry it is $-I$, so unmatched padding causes no problem.

Write $J=FJ_{\rm free}$, where $F$ is the orientation-selected full row
or column locator, with known register routing. The isometry
$J_{\rm free}$ prepares a uniform list index and orientation without
matrix queries. Conjugate the private coordinates by $F^\dagger$.
Then the boundary isometry is known, and
$D_0=F^\dagger D_{\rm edge}F$ uses a constant number of native sparse
queries, for every $\gamma$. We use these coordinates below.

\paragraph{Controlled access.}\label{hs:sec:control}
Coherent controls do not introduce a logarithmic query overhead.
A phase-fixed valid locator input $\ket{x_0}$ is mapped to an unknown
basis state $\ket{y_0}$. One query prepares a reference in $\ket{y_0}$.
A conditional swap with a dummy in $\ket{x_0}$, an unconditional locator
call, and the inverse conditional swap apply the desired locator on
the active branch. On the inactive branch, a known basis transposition
using the reference label resets the dummy from $y_0$ to $x_0$.
One inverse locator call cleans the reference. Thus three queries
control the locator; taking adjoints controls its inverse. A controlled
value operation uses compute, conditional copy, and uncompute.
Consequently a controlled $D_0$ or $D_0^\dagger$ also has constant
native query cost.

This representation introduces no approximate amplitude multiplication
and no precision-dependent encoding query cost. Numerical gate
approximation is discussed at the end.

\subsection{A general boundary-resolvent simulation lemma}

The following lemma abstracts the properties actually needed in the
chronological direct-sum proof. Its parameter $\nu$ need not be an integer or a
query count, and its compressed second-moment hypothesis is an inequality.

\begin{lemma}\label{hs:lem:abstract}
Suppose $D_0$ is a unitary, $J$ is a known isometry, $h=h^\dagger$ with
$\norm h\le1$, and \eqref{hs:eq:interface} holds for an integer $Q\ge1$
and a real $0<\nu\le Q$. Suppose $D_0,D_0^\dagger$ and their coherent
controls are available. For a known $\alpha>0$, let $H=\alpha h$ and
\begin{equation}
 W=\alpha|t|\nu.
\end{equation}
Then, for $0<\eps\le1/2$, simulation of $e^{-\ii tH}$ to operator-norm error $\eps$
uses
\begin{equation}\label{hs:eq:abstractcost}
 \mathcal{O}\!\left(W+Q\log(1/\eps)\right)
\end{equation}
calls to these unitaries. The number of time-discretisation slots does
not multiply the query count.
\end{lemma}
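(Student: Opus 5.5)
The plan follows the overview in Section~\ref{sec:methods}, in four steps.

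\emph{Step 1: the local Cayley transducer.} Take $\Delta t>0$ with $\alpha\Delta t\le1/4$. Let $G$ be a known rotation coupling the public space with $J\mathcal H$, with mixing angle $\theta$ chosen so that $\sin^2\theta$ is of order $\alpha\Delta t$. Define $S=G(I\oplus D_0)$.

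To identify the public action, solve the private stationarity equation $C\psi+D\Gamma\psi=\Gamma\psi$. Its solution involves $(I-D_0)^{-1}$, compressed through $J$. Using the first line of \eqref{hs:eq:interface}, $J^\dagger R_0(1)J=(I+\ii h)/2$, the public block $A+B\Gamma$ becomes a M\"obius function of $h$. Tuning $\theta$ then gives exactly
\[
U_C=(I-\ii\Delta tH/2)(I+\ii\Delta tH/2)^{-1}.
\]
The second line of \eqref{hs:eq:interface} bounds $\norm{\Gamma}^2$ by a constant times $\alpha\Delta t\nu$. Negative $t$ is handled with adjoints.

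\emph{Step 2: composing the steps.} Choose $m$ time slots with $m\Delta t=|t|$ and $m$ large enough that the Cayley-product error $m\cdot\mathcal O((\Delta t\norm H)^3)$ is at most $\eps/2$. Note that $\norm H\le\alpha$. Give each slot its own private copy. Apply $D_0^{\otimes}$ to all slots coherently, which is a single call of $\bigoplus D_0$. Then apply the known rotations in chronological order.

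The composed map is again a transducer, with public action $U_C^m$. Its catalyst is the direct sum of the local catalysts, so the squared norms add to $W=\alpha|t|\nu$. This holds uniformly in $m$, which is why the slot count does not enter the query cost.

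\emph{Step 3: turning the transducer into a circuit (the main obstacle).} The circuit starts with an empty private sector. It runs the $n$-fold reuse circuit: spread the input over $n$ public slot labels, apply $S$ on each slot in turn against the same private register, and undo the spreading. A telescoping calculation using the transduction identity shows that the public block equals $U$ minus a term of the form
\[
B\,\frac{I+D+\cdots+D^{n-1}}{n}\,\Gamma.
\]
Here $D$ is the private block of the composed transducer.

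Since $D$ is not normal, the eigenvalue picture alone does not control this term. I would instead bound the resolvent $(z-D)^{-1}$ for $z$ on a circle slightly outside the unit disk near $1$. This follows from $\norm{R_0(1)}\le Q$ via a Schur-complement or Neumann expansion of the rotated block, giving a bound of order $Q$ on a region of width of order $1/Q$.

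I would then replace the plain average by a smoothed polynomial filter $p(D)$: a Fej\'er/Chebyshev-type polynomial of degree $\mathcal O(W+Q\log(1/\eps))$ with $p(1)=1$, coefficients of total magnitude $<2$, and $|p|$ tiny away from a $1/Q$-neighbourhood of $1$. Using a Cauchy integral, the error is controlled by $\norm{\Gamma}\le\sqrt W$ times the filter norm. The degree needs of order $W$ terms to absorb the catalyst, and of order $Q\log(1/\eps)$ terms to suppress the part near $1$ below $\eps$. This is the delicate estimate.

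\emph{Step 4: assembling the unitary.} Implement $p(D)$ as an LCU of reuse circuits with normalisation padded to $2$. One round of oblivious amplitude amplification, valid because the target $U_C^m$ is close to unitary, then yields the full-output error $\le\eps$ including ancilla residue. Controlled $D_0$ gives the controlled reuse circuits at constant overhead. The total call count is $\mathcal O(W+Q\log(1/\eps))$.
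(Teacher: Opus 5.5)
Your architecture is the paper's: Cayley transducer, chronological direct sum with one coherent $D_0$ call, reuse circuit with public block $U-BG_n(D)\Gamma_{\rm tot}$, contour-integral filter, LCU padded to two, and one amplification round. The gap is in the step you flag as delicate: the resolvent bound you state for the composed private block is false.

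\textbf{The global resolvent.} You claim $(z-D)^{-1}=\mathcal{O}(Q)$ on a $1/Q$-neighbourhood of $1$, via a Schur complement from $\norm{R_0(1)}\le Q$. That holds for the local private block $(I-\chi JJ^\dagger)D_0$, where the paper gets $\norm{R_{\rm loc}}\le2Q$. It does not hold for the private block $D$ of the composed transducer. The shared public mode couples slot $k$ to every later slot $r$, so $(z-D)^{-1}$ has lower-triangular blocks $-\sigma^2R_{\rm loc}J\,u(z)^{r-k-1}J^\dagger D_0R_{\rm loc}$, each of norm $\sim w\nu$.

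Take the block-encoding instance $D_0=\ii O$, $O^2=I$, with $h=0$, so that $Q=\nu=1$ and $u(1)=I$. These blocks are then identical and add like an all-ones lower-triangular matrix, so $(I-D)^{-1}$ has norm of order $W$ while $Q=1$. For $z\neq1$ the forced transfer only obeys $\norm{u(z)}\le1+3w\nu|z-1|$, so products over many slots are controlled only by $e^{3W/(4Q)}$.

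The correct estimate, $\norm{(z-D)^{-1}}\le3(Q+W)e^{3W/(4Q)}$ for $|z-1|\le1/(4Q)$, comes from solving $(z-D)x=y$ slot by slot. That recurrence is also what shows this disk is spectrum-free uniformly in $N_{\rm slot}$; the disk contains points with $|z|<1$, where contractivity gives nothing. Consequently $W$ enters the degree in two ways: through $d\ge2W$ for the catalyst, and through taking $k\ge4W/Q+\log_2(512/\delta)$ factors of $g_b$ so that $2^{-k}$ beats $(1+W/Q)e^{3W/(4Q)}$. With $b=32Q$, the degree $d=k(b-1)$ is still $\mathcal{O}(W+Q\log(1/\eps))$.

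\textbf{Avoiding stray logarithms.} Three further points are needed to avoid logarithms that the statement excludes.
\begin{itemize}
\item On the outer circle $|z|=1+1/b$ the resolvent bound is only $b=32Q$. A uniform bound on $|f|$ over that length-$2\pi$ arc gives a prefactor $\propto Q$, hence an extra $Q\log Q$ in the degree. The paper instead keeps two powers of $|g_b(z)|\le(1+e)/(b|z-1|)$ and uses $\int_{|z|=R}|dz|/|z-1|^2=2\pi R/(R^2-1)$, which makes this contribution at most $64\cdot2^{-k}$.
\item The LCU is over reuse circuits, so the quantity that must stay below two is $\sum_n|\lambda_n|$ in $\sum_n\lambda_nG_n$. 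The monomial coefficients of a generic Fej\'er or Chebyshev polynomial are not the relevant quantity. The paper gets $\sum_n|\lambda_n|=2-2a_0$ from the identity $f\,G_d=\sum_n\lambda_nG_n$ with $f=g_b^k$ nonnegative and palindromic.
\item The same identity turns the error into $\frac1dU\Gamma_{\rm tot}^\dagger f(D)(I-D^d)\Gamma_{\rm tot}$, so $\norm{f(D)}\le\delta$ with $d\ge2W$ suffices. Bounding the error by $\norm{\Gamma_{\rm tot}}$ times the filter norm, as you propose, would need filter error $\eps/\sqrt W$ and would add a $Q\log W$ term.
\end{itemize}
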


The following resolvent, filter, and synthesis arguments prove the lemma. The transducer formalism~\cite{BJY} and the Cayley finite-reuse
construction follow the framework of
Chen--Gao--Wang--Zhou~\cite{CGWZ}; the boundary resolvent and exponential
filter analysis below replace the involution-specific factorial bound.

\subsubsection{Cayley transduction and composition}

Assume $t>0$; negative times use the inverse circuit. Choose an integer
$N_{\rm slot}$ and $w=\alpha t/N_{\rm slot}\le1/4$. Put
\begin{equation}
 c=\frac{1-w/2}{1+w/2},\qquad
 \sigma=\frac{\sqrt{2w}}{1+w/2},\qquad
 \chi=1-c=\frac{w}{1+w/2}.
\end{equation}
Define a known unitary on public and private spaces by
\begin{equation}
 G=\begin{pmatrix}
 cI&-\sigma J^\dagger\\
 \sigma J&I-\chi JJ^\dagger
 \end{pmatrix},\qquad S=G(I\oplus D_0).
\end{equation}
Using the first identity of \eqref{hs:eq:interface} gives
\begin{equation}\label{hs:eq:transduction}
 S\binom I\Gamma=\binom{U_C}\Gamma,
 \quad
 U_C=(I-\ii wh/2)(I+\ii wh/2)^{-1},
 \quad
 \Gamma=\sqrt{2w}\,R_0(1)J(I+\ii wh/2)^{-1}.
\end{equation}
The compressed second moment implies
$\Gamma^\dagger\Gamma\preceq w\nu I$.

For the chronological composition, the private space is a direct sum
of $N_{\rm slot}$ copies. Apply $D_0$ simultaneously to these copies,
then the boundary rotations successively, each acting on the common
public space and its own slot. This is one oracle call, with
query-free boundary operations. The resulting transducer has public
action $U=U_C^{N_{\rm slot}}$ and catalyst
\begin{equation}
 \Gamma_{\rm tot}\psi
 =\bigoplus_{r=0}^{N_{\rm slot}-1}\Gamma U_C^r\psi,
 \qquad \norm{\Gamma_{\rm tot}}^2\le N_{\rm slot}w\nu=W.
\end{equation}
Write its private block as $D$. It is a contraction.
One \emph{transducer call} below means one application of this complete
chronological unitary $S_{\rm tot}$ or its inverse. In the sparse
instantiation it contains one coherent application of the private
$D_0$ operation and only query-free boundary operations, so it costs a
constant number of primitive sparse-oracle queries.

The scalar arctangent bound gives
\begin{equation}\label{hs:eq:disc}
 \norm{U_C^{N_{\rm slot}}-e^{-\ii tH}}
 \le\frac{t^3\norm H^3}{12N_{\rm slot}^2}
 \le\frac{(\alpha t)^3}{12N_{\rm slot}^2}.
\end{equation}
Taking
\begin{equation}
 N_{\rm slot}\ge\max\left\{1,4\alpha t,
             \frac{(\alpha t)^{3/2}}{\sqrt{3\eps}}\right\}
\end{equation}
makes this at most $\eps/4$. Increasing $N_{\rm slot}$ does not change
$W$ or $Q$. In the sparse construction, $wh=tH/N_{\rm slot}$, so the
actual Cayley discretisation bias is independent of $\gamma$.

\subsubsection{A resolvent disk for the global private block}

For $|z-1|\le r_0:=1/(4Q)$, put
\begin{equation}
 R_0(z)=(z-D_0)^{-1},\qquad F(z)=J^\dagger D_0R_0(z)J.
\end{equation}
The resolvent identity and normality of $D_0$ imply
\begin{equation}\label{hs:eq:localends}
 \begin{aligned}
 \norm{R_0(z)}&\le\tfrac43Q,\\
 \norm{R_0(z)J},\ \norm{J^\dagger D_0R_0(z)}
     &\le\tfrac43\sqrt{\nu/2},\\
 \norm{F'(z)}&\le\tfrac89\nu.
 \end{aligned}
\end{equation}
For clarity, the two endpoint norms agree: normality and unitarity give
\begin{equation}
 (J^\dagger D_0R_0)(J^\dagger D_0R_0)^\dagger
       =J^\dagger R_0^\dagger R_0J.
\end{equation}
The derivative bound then follows by factoring
$F'(z)=-J^\dagger D_0R_0(z)^2J$.
Moreover,
\begin{equation}
 F(1)=\frac{-I+\ii h}{2},\qquad
 \norm{F(z)}\le\frac1{\sqrt2}+\frac{2\nu}{9Q}<1.
\end{equation}

The local private block is $(I-\chi JJ^\dagger)D_0$. Its resolvent is
\begin{equation}
 R_{\rm loc}
 =R_0-\chi R_0J(I+\chi F)^{-1}J^\dagger D_0R_0.
\end{equation}
Since $\chi\le2/9$, the inverse here has norm at most $9/7$.
Using \eqref{hs:eq:localends} and $\nu\le Q$ gives
\begin{equation}\label{hs:eq:locbounds}
 \begin{aligned}
 \norm{R_{\rm loc}}&\le2Q,\\
 \norm{R_{\rm loc}J},\ \norm{J^\dagger D_0R_{\rm loc}}
      &\le\tfrac{12}{7}\sqrt{\nu/2}.
 \end{aligned}
\end{equation}
Define the forced public transfer
\begin{equation}
 u(z)=cI-\sigma^2F(z)(I+\chi F(z))^{-1}.
\end{equation}
At one, $u(1)=U_C$ is unitary. The resolvent difference identity gives
\begin{equation}
 u(z)-u(1)
 =-\sigma^2(I+\chi F(z))^{-1}
    [F(z)-F(1)](I+\chi F(1))^{-1}.
\end{equation}
Consequently
\begin{equation}\label{hs:eq:transfer}
 \norm{u(z)}\le1+3w\nu|z-1|,\qquad
 \norm{u(z)^a}\le e^{3aw\nu|z-1|}\quad(a\in\mathbb Z_{\ge0}).
\end{equation}

Solving $(z-D)x=y$ in chronological order gives, with $v_{-1}=0$,
\begin{equation}
 x_r=R_{\rm loc}y_r+\sigma R_{\rm loc}Jv_{r-1},\qquad
 v_r=u(z)v_{r-1}-\sigma J^\dagger D_0R_{\rm loc}y_r.
\end{equation}
Thus the diagonal blocks of $(z-D)^{-1}$ are $R_{\rm loc}$, and the
block with $r>k$ is
\begin{equation}
 -\sigma^2R_{\rm loc}J\,u(z)^{r-k-1}
                       J^\dagger D_0R_{\rm loc}.
\end{equation}
Its norm is at most
$(144/49)w\nu e^{3W/(4Q)}$. The block row/column Schur bound proves
\begin{equation}\label{hs:eq:globalres}
 \norm{(z-D)^{-1}}
       \le3(Q+W)e^{3W/(4Q)}
       \quad\text{for }|z-1|\le\frac1{4Q}.
\end{equation}
The recurrence also proves that this whole disk is spectrum-free.
No normality of $D$ is assumed.

\subsection{Sharper contour estimate: removing the extraneous logarithm}

Set $b=32Q$, and for an integer $k\ge2$ define
\begin{equation}
 g_b(z)=\frac{1+z+\cdots+z^{b-1}}b,
 \qquad f(z)=g_b(z)^k,\qquad d=k(b-1).
\end{equation}
Use the positively oriented boundary of
\begin{equation}
 \Omega=\{z:|z|<R,\ |z-1|>r_0\},
 \qquad R=1+1/b,\quad r_0=1/(4Q).
\end{equation}
Contractivity of $D$ and \eqref{hs:eq:globalres} put all its spectrum in
$\Omega$. The contour consists of an outer circle arc and an inner arc
around one. On either arc,
\begin{equation}\label{hs:eq:gpoint}
 |g_b(z)|\le\frac{1+e}{b|z-1|}\le\frac{1+e}{8}<\frac12.
\end{equation}

Bounding the full outer arc by a constant length would introduce a
prefactor proportional to $Q$. Instead, retain two powers of the
denominator in \eqref{hs:eq:gpoint}. Since $k\ge2$,
\begin{equation}
 |f(z)|\le2^{-(k-2)}\frac{(1+e)^2}{b^2|z-1|^2}
 \quad\text{on the outer arc}.
\end{equation}
The resolvent there has norm at most $1/(R-1)=b$, and
\begin{equation}\label{hs:eq:circleintegral}
 \int_{|z|=R}\frac{|dz|}{|z-1|^2}
       =\frac{2\pi R}{R^2-1}.
\end{equation}
Extending this nonnegative integral to the whole circle, the outer
contribution to the Cauchy formula for $f(D)$ is therefore at most
\begin{equation}
 2^{-(k-2)}(1+e)^2\frac{R}{b(R^2-1)}
 =2^{-(k-2)}(1+e)^2\frac{b+1}{2b+1}
 \le64\,2^{-k}.
\end{equation}
Here $e<3$ suffices. Identity \eqref{hs:eq:circleintegral} follows directly
by integrating $R/(R^2+1-2R\cos\theta)$.

The inner arc has length at most $2\pi r_0=\pi/(2Q)$. Together with
\eqref{hs:eq:globalres} and \eqref{hs:eq:gpoint}, its contribution is at most
\begin{equation}
 \frac34(1+W/Q)e^{3W/(4Q)}2^{-k}.
\end{equation}
Writing $x=W/Q$ gives the improved bound
\begin{equation}\label{hs:eq:filterbound}
 \norm{f(D)}\le65(1+x)e^{3x/4}2^{-k}.
\end{equation}
In particular, for $0<\delta<1$, it is sufficient to set
\begin{equation}\label{hs:eq:kchoice}
 k=\left\lceil4W/Q+\log_2(512/\delta)\right\rceil.
\end{equation}
Indeed $(1+x)\le e^x$ and $4\log2>7/4$ make the right side of
\eqref{hs:eq:filterbound} at most $65\delta/512<\delta$.
This choice has $k\ge2$, $d\ge2W$, and
\begin{equation}\label{hs:eq:degree}
 d=\mathcal{O}\!\left(W+Q\log(2/\delta)\right).
\end{equation}
There is no $\log Q$ or $\log W$ in this degree bound.

\subsection{Filter synthesis and full output error}

Write the composed transducer and its catalyst relation as
\begin{equation}
 S_{\rm tot}=\begin{pmatrix}A&B\\C&D\end{pmatrix},\qquad
 S_{\rm tot}\binom I{\Gamma_{\rm tot}}
        =\binom U{\Gamma_{\rm tot}}.
\end{equation}
Unitarity gives
$B=U\Gamma_{\rm tot}^\dagger(I-D)$.
The finite-reuse circuit feeds $n$ orthogonal public slots, initially
in a uniform superposition with input $\psi$, successively through the
same private register, initially zero. It then unprepares the uniform
public-slot state. Its successful public block is
\begin{equation}\label{hs:eq:reuse}
 P_n=U-BG_n(D)\Gamma_{\rm tot},\qquad
 G_n(z)=\frac{1+z+\cdots+z^{n-1}}n.
\end{equation}
For example, before public slot $r$ the private vector is
$(I-D^r)\Gamma_{\rm tot}\psi/\sqrt n$; averaging the public outputs
proves \eqref{hs:eq:reuse}. This uses $n$ transducer calls.

The coefficients in $f(z)=\sum_{j=0}^d a_jz^j$ are nonnegative,
palindromic, and sum to one. With out-of-range coefficients defined as
zero, the exact identity
\begin{equation}\label{hs:eq:smooth}
 f(z)G_d(z)=\sum_{n=1}^{2d}\lambda_nG_n(z),\qquad
 \lambda_n=\frac nd(a_{n-d}-a_n)
\end{equation}
follows by telescoping coefficients. Symmetry gives
\begin{equation}
 \sum_n\lambda_n=1,\qquad
 \sum_n|\lambda_n|=2-2a_0<2.
\end{equation}
An LCU of the reuse circuits, padded to normalisation two, consequently
has successful block $\widetilde P/2$, where
$\widetilde P=\sum_n\lambda_nP_n$. Coherently selected shorter branches
idle, so at most $2d$ transducer calls are required. Equations
\eqref{hs:eq:reuse}-\eqref{hs:eq:smooth} imply
\begin{equation}
 U-\widetilde P
  =\frac Ud\Gamma_{\rm tot}^\dagger f(D)(I-D^d)\Gamma_{\rm tot},
 \qquad
 \norm{U-\widetilde P}\le\frac{2W}{d}\norm{f(D)}\le\delta.
\end{equation}

One round of oblivious amplitude amplification uses three calls to this
LCU circuit or its inverse, hence at most $6d$ transducer calls. Its
public block is
\begin{equation}
 F_{\rm amp}=\frac32\widetilde P
                  -\frac12\widetilde P\widetilde P^\dagger\widetilde P.
\end{equation}
To include leakage, put $E=U^\dagger\widetilde P-I$ and let $\mathcal A$
be the amplified unitary. With temporary ancillas suppressed according
to Section~\ref{sec:overview}, direct expansion gives
\begin{align}
 (\mathcal A-U)^\dagger(\mathcal A-U)
 &=2I-U^\dagger F_{\rm amp}-F_{\rm amp}^\dagger U\\
 &=\Reop(E^2)+EE^\dagger+E^\dagger E+\Reop(EE^\dagger E).
\end{align}
Here $\Reop X=(X+X^\dagger)/2$. Since $\norm E\le\delta\le1$, its norm
is at most $3\delta^2+\delta^3\le4\delta^2$. Thus the operator-norm error
is at most $2\delta$. This does not assume that $\widetilde P$ is a
contraction.

Take $\delta=\eps/8$ in \eqref{hs:eq:kchoice}. Together with the
$\eps/4$ discretisation error, this gives ideal-gate error at most
$\eps/2$, leaving $\eps/2$ for gate approximation if desired.
The call count is bounded by
\begin{equation}\label{hs:eq:explicitcount}
 6(32Q-1)
 \left\lceil4W/Q+\log_2(4096/\eps)\right\rceil.
\end{equation}
Every call has constant native sparse query cost for the construction
of Section~\ref{hs:sec:entry}. This proves Lemma~\ref{hs:lem:abstract}.

\subsection{Optimising the rescaling}

For the exact sparse representation, \eqref{hs:eq:params} and
Lemma~\ref{hs:lem:abstract} give
\begin{equation}\label{hs:eq:tradeoff}
 \mathcal{O}\!\left(
 \frac{\tau\sqrt{s}}2(\gamma+\gamma^{-1})
 +\left\lceil\max\{1,\gamma\sqrt{s}\}\right\rceil\ell
 \right),\qquad 0<\gamma\le1.
\end{equation}
The first term is the catalyst cost; the second is the filtering cost.
The rescaling leaves the sparse-query cost per Cayley operation constant.

For $\tau>0$, choose
\begin{equation}\label{hs:eq:gammaopt}
 \gamma=\min\left\{1,\max\left\{s^{-1/2},
                             \sqrt{\tau/\ell}\right\}\right\}.
\end{equation}
Since $\gamma\sqrt{s}\ge1$, we have $Q\le2\gamma\sqrt{s}$.
There are three regimes:
\begin{enumerate}
 \item If $\tau\ge\ell$, then $\gamma=1$ and
 $W+Q\ell\le3\sqrt{s}\tau$.
 \item If $\ell/s\le\tau\le\ell$, then
 $\gamma=\sqrt{\tau/\ell}$,
 \begin{equation}
 W=\frac{\sqrt{s\tau\ell}}2(1+\tau/\ell)
       \le\sqrt{s\tau\ell},\qquad
 Q\ell\le2\sqrt{s\tau\ell}.
 \end{equation}
 \item If $\tau\le\ell/s$, then $\gamma=s^{-1/2}$ and $Q=1$, so
 \begin{equation}
 W=\frac{\tau(s+1)}2\le\tau s\le\ell,
 \qquad W+Q\ell\le2\ell.
 \end{equation}
\end{enumerate}
These bounds imply \eqref{hs:eq:mainbound} and complete the proof of
Theorem~\ref{hs:thm:main}. There is no division by zero when $\tau=0$:
that case uses the identity circuit.

\subsection{Time-dependent Hamiltonians}\label{hs:sec:time}

The same query bound holds for time-dependent Hamiltonians. The access
assumption is now coherent access to sampled times: for a known mesh
$0=t_0<\cdots<t_M=T$, each primitive oracle retains a register
$\ket r$ and applies the corresponding value or location operation for
the represented sample $\widetilde H_r$. For example, the row locator has
the specified phase-one action
\begin{equation}\label{hs:eq:timeaccess}
\ket{r,i,l}\longmapsto\ket{r,i,\operatorname{row}_r(i,l)}
 \qquad(0\le r<M,\ 0\le l<s),
\end{equation}
with a full unitary extension and a supplied inverse. One invocation
of this coherent operation counts as one query, even when $r$ is in
superposition. Separate opaque fixed-time oracles do not by themselves
supply this access. The location lists, their order, and their zero
padding may vary with $r$; their union need not be sparse.

Exact finite-word values refer to the sampled matrices
$\widetilde H_r$, rather than to all real times simultaneously. We
allow a certified representation error relative to the intended path
$H(u)$. This makes the discretisation and value-precision assumptions
separate from the query analysis.

\begin{theorem}\label{hs:thm:time}
Let $T\ge0$, let $0<\eps\le1/2$, and let
$H(u)=H(u)^\dagger$, $0\le u\le T$, have row and column sparsity at
most $s\ge1$ and row and column Euclidean norms at most a known
$\eta>0$. Assume a known operator-norm Lipschitz bound
\begin{equation}
 \norm{H(u)-H(v)}\le\beta|u-v|.
\end{equation}
For the finite mesh chosen below, suppose the coherent sampled-time
oracles supply Hermitian $\widetilde H_r$ satisfying the same sparsity
and norm promises, with a known bound
\begin{equation}
 \norm{\widetilde H_r-H(t_r)}\le\delta_{\rm val},
 \qquad T\delta_{\rm val}\le\eps/12.
\end{equation}
Put $\tau=\eta T$ and $\ell=\log(1/\eps)$.
A unitary circuit $\mathcal A$ can be constructed such that
\begin{equation}\label{hs:eq:timeerror}
 \norm{\mathcal A-U_H(T,0)}\le\eps,
 \qquad U_H(T,0)=\mathcal T
          \exp\!\left(-\ii\int_0^T H(u)\,du\right),
\end{equation}
using
\begin{equation}\label{hs:eq:timebound}
 \mathcal{O}\!\left(\sqrt{s}\,\tau+
                    \sqrt{s\tau\ell}+\ell\right)
\end{equation}
primitive sparse-oracle queries, including inverse calls. The
constants are absolute. The error bound also holds with an untouched
reference. Neither the mesh size nor $\beta$ multiplies the query
count. Zero time or a known zero Hamiltonian requires no matrix queries.
\end{theorem}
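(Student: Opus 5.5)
The proof reuses the chronological direct-sum construction of Lemma~\ref{hs:lem:abstract}, replacing the single Hamiltonian by the sampled matrices. I will first carry out the error split, then reduce the filter analysis to the time-independent case, and finally reuse the rescaling of Theorem~\ref{hs:thm:main}.

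\textbf{Error split.} Choose a uniform mesh with $M=N_{\rm slot}$ steps of size $\Delta t=T/M$. Let $U_C^{(r)}=(I-\ii\Delta t\widetilde H_r/2)(I+\ii\Delta t\widetilde H_r/2)^{-1}$, and write $U=U_C^{(M-1)}\cdots U_C^{(0)}$. I bound $\norm{U-U_H(T,0)}$ by three contributions.
\begin{enumerate}
\item The value error gives at most $T\delta_{\rm val}\le\eps/12$, by telescoping the unitary products and using the fact that each Cayley map is Lipschitz.
\item The Cayley-versus-exponential error is $\sum_r(\Delta t\,\eta\sqrt s)^3/12$. This holds because $\norm{\widetilde H_r}\le\eta\sqrt s$.
\item The left-endpoint quadrature error of the time-ordered exponential is $\mathcal{O}(\beta T\Delta t)$, using the Lipschitz bound and the standard Duhamel estimate.
\end{enumerate}
Taking $M$ polynomially large in $\alpha T,\beta T,1/\eps$ makes the last two contributions total at most $\eps/6$. $M$ only sets the number of private slots and, as before, does not enter the query count.

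\textbf{Transducer construction.} Each slot $r$ carries its own private copy. The private operation is $\bigoplus_r D_0^{(r)}$, where $D_0^{(r)}$ is the Cayley unitary of Section~\ref{hs:sec:entry} built from $\widetilde H_r$. It is implemented by one coherent time-indexed call with the slot label as the time address, which costs a constant number of primitive queries.

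The boundary isometry needs care. In the locator coordinates $J=F_rJ_{\rm free}$, so after conjugating by $F_r^\dagger$ the boundary is the known $J_{\rm free}$ for every $r$. This is why varying supports are harmless.

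The boundary rotations $G$ are applied chronologically, and they are query-free. For each $r$, the identities \eqref{hs:eq:interface} hold with the same $\gamma,\nu,Q,\alpha$, because the sparsity and row/column norm promises are uniform. The per-slot transduction \eqref{hs:eq:transduction} then gives catalyst $\Gamma^{(r)}$ with $\Gamma^{(r)\dagger}\Gamma^{(r)}\preceq w\nu I$. The composed catalyst is $\bigoplus_r\Gamma^{(r)}U_C^{(r-1)}\cdots U_C^{(0)}$, with total squared norm at most $W=\alpha T\nu$.

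\textbf{Main obstacle: the resolvent bound.} The step that needs checking is that the global resolvent bound \eqref{hs:eq:globalres} survives non-identical slots. I will redo the chronological recurrence with slot-dependent $R_{\rm loc}^{(r)}$, $F^{(r)}$ and $u_r(z)$. The bounds \eqref{hs:eq:localends}, \eqref{hs:eq:locbounds} and \eqref{hs:eq:transfer} use only the interface inequalities, so they hold uniformly in $r$.

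The off-diagonal block $(r,k)$ then contains the ordered product $u_{r-1}(z)\cdots u_{k+1}(z)$ instead of a power $u(z)^{r-k-1}$. Submultiplicativity bounds this product by $e^{3(r-k-1)w\nu|z-1|}$, which is the same bound as before. Commutativity was never used, so the Schur bound gives the identical estimate.

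With that in hand, the remaining steps carry over verbatim, since they use only contractivity of $D$ and that estimate:
\begin{itemize}
\item the contour estimate \eqref{hs:eq:filterbound};
\item the reuse identity \eqref{hs:eq:reuse};
\item the LCU smoothing \eqref{hs:eq:smooth};
\item the oblivious-amplification error analysis.
\end{itemize}
Together these give $\mathcal{O}(W+Q\ell)$ transducer calls with ideal error at most $\eps/2$. Each call costs a constant number of primitive time-indexed queries. Finally, the $\gamma$ choice \eqref{hs:eq:gammaopt} with $\tau=\eta T$ yields \eqref{hs:eq:timebound}. The cases $T=0$ and $H\equiv0$ are handled by the identity circuit.
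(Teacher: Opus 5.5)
Your proposal is correct and follows the paper's proof essentially step for step. The shared steps are conjugating each sample by its own locator so that every slot sees the same known boundary $J_{\rm free}$, using uniform interface bounds with common $\gamma,\nu,Q,\alpha$, bounding the ordered products $u_{r-1}(z)\cdots u_{k+1}(z)$ by submultiplicativity in the chronological recurrence, and splitting the error into value, Cayley-versus-exponential and Lipschitz-freezing terms. The differences are minor: the paper allows nonuniform steps and so uses a rank-one domination of the off-diagonal block norms where you use the uniform Schur bound, and it notes explicitly that the controlled time-indexed calls needed by the filter follow from the dummy-input construction applied to the whole query register, time address included. You should state that point, together with the mesh conditions $w=\alpha T/M\le1/4$ and $M\ge6\beta T^2/\eps$ (rather than a condition on $\beta T$).
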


\begin{proof}
We first prove the chronological extension of the moment-interface
argument, and then certify a mesh. Assume $T,\eta>0$. Fix
$0<\gamma\le1$ and use the common parameters
\begin{equation}\label{hs:eq:timeparams}
 \alpha=\frac{\eta\sqrt{s}}\gamma,\qquad
 \nu=\frac{1+\gamma^2}{2},\qquad
 Q=\left\lceil\max\{1,\gamma\sqrt{s}\}\right\rceil,
 \qquad W=\alpha T\nu.
\end{equation}
At each sample, Section~\ref{hs:sec:entry} constructs an edge Cayley
unitary and an orientation-selected locator $V_r$. In the private
coordinates defined by $V_r^\dagger$, set
\begin{equation}
 D_{0,r}=V_r^\dagger D_{{\rm edge},r}V_r,
 \qquad h_r=\widetilde H_r/\alpha.
\end{equation}
The boundary is the same known isometry $J=J_{\rm free}$ for every
$r$, and \eqref{hs:eq:interface} holds with $D_{0,r},h_r$ and the common
$Q,\nu$. This uses only the samplewise norm promises. In particular,
no regularity assumption is needed for the location lists or for their
unitary extensions.

Write $\Delta_r=t_{r+1}-t_r$ and $w_r=\alpha\Delta_r$, and choose the
mesh so that $w_r\le1/4$. Define $c_r,\sigma_r,\chi_r$ and $G_r$ by
the earlier boundary-rotation formulae with $w$ replaced by $w_r$.
The local transduction identity \eqref{hs:eq:transduction} becomes
\begin{equation}\label{hs:eq:timecayley}
 \begin{aligned}
 C_r&=(I-\ii\Delta_r\widetilde H_r/2)
                    (I+\ii\Delta_r\widetilde H_r/2)^{-1},\\
 \Gamma_r&=\sqrt{2w_r}\,(I-D_{0,r})^{-1}J
                         (I+\ii w_rh_r/2)^{-1},
 &\Gamma_r^\dagger\Gamma_r&\preceq w_r\nu I.
 \end{aligned}
\end{equation}
The chronological transducer is
\begin{equation}
 S_{\rm tot}=G_{M-1}\cdots G_0
       \left(I_{\rm pub}\oplus\bigoplus_{r=0}^{M-1}D_{0,r}\right),
 \qquad U_C=C_{M-1}\cdots C_0.
\end{equation}
Here $G_r$ acts only on the common public space and private slot $r$.
With an empty prefix interpreted as the identity, its catalyst is
\begin{equation}\label{hs:eq:timecatalyst}
 \Gamma_{\rm tot}\psi
    =\bigoplus_{r=0}^{M-1}\Gamma_r C_{r-1}\cdots C_0\psi,
 \qquad
 \norm{\Gamma_{\rm tot}}^2\le\nu\sum_rw_r=W.
\end{equation}
Indeed, each local transduction leaves its own private component
unchanged, and every public prefix is unitary. Thus
$S_{\rm tot}(I,\Gamma_{\rm tot})^{\mathsf T}
=(U_C,\Gamma_{\rm tot})^{\mathsf T}$.

The direct-sum operation uses a constant number of coherent
sampled-time queries. Its inverse has the same cost. The additional
controls required by the filter construction also have constant cost:
apply the dummy-input construction of Section~\ref{hs:sec:control} to
the complete query register, including its time address. A known valid
input at one sampled time supplies the required phase-fixed reference.
All boundary rotations are query-free. Consequently a call to
$S_{\rm tot}$, its inverse, or their controls has constant primitive
query cost, independent of $M$.

It remains to check the global resolvent for different local
unitaries. For $|z-1|\le1/(4Q)$, put
\begin{equation}
 R_{0,r}(z)=(z-D_{0,r})^{-1},\qquad
 \Phi_r(z)=J^\dagger D_{0,r}R_{0,r}(z)J,
 \qquad
 R_r(z)=\bigl[z-(I-\chi_rJJ^\dagger)D_{0,r}\bigr]^{-1}.
\end{equation}
The local arguments leading to \eqref{hs:eq:locbounds} give uniformly
\begin{equation}\label{hs:eq:timelocal}
 \norm{R_r(z)}\le2Q,\qquad
 \norm{R_r(z)J},\ \norm{J^\dagger D_{0,r}R_r(z)}
       \le\frac{12}{7}\sqrt{\nu/2}.
\end{equation}
The forced public transfer is
\begin{equation}
 u_r(z)=c_rI-\sigma_r^2\Phi_r(z)
                         (I+\chi_r\Phi_r(z))^{-1}.
\end{equation}
Since $u_r(1)=C_r$ is unitary, the same difference identity as in
\eqref{hs:eq:transfer} implies
\begin{equation}\label{hs:eq:timeproducts}
 \norm{u_r(z)}\le1+3\nu w_r|z-1|,
 \qquad
 \norm{u_b(z)\cdots u_a(z)}
       \le\exp\!\left(3\nu|z-1|\sum_{r=a}^bw_r\right).
\end{equation}
This estimate uses submultiplicativity, not commutation.

Let $D_{\rm priv}$ be the private block of $S_{\rm tot}$; unitarity
makes it a contraction. Solving $(z-D_{\rm priv})x=y$ in chronological
order gives, with $v_{-1}=0$,
\begin{equation}
 x_r=R_ry_r+\sigma_rR_rJv_{r-1},\qquad
 v_r=u_rv_{r-1}-\sigma_rJ^\dagger D_{0,r}R_ry_r.
\end{equation}
The inverse therefore has diagonal blocks $R_r$, zero blocks above
the diagonal, and, for $r>j$, the blocks
\begin{equation}\label{hs:eq:timeoffdiag}
 -\sigma_r\sigma_j R_rJ\,
       u_{r-1}\cdots u_{j+1}\,J^\dagger D_{0,j}R_j.
\end{equation}
This finite recurrence proves invertibility throughout the disk.
Using $\sigma_r\le\sqrt{2w_r}$ and
\eqref{hs:eq:timelocal}-\eqref{hs:eq:timeproducts}, each block in
\eqref{hs:eq:timeoffdiag} has norm at most
\begin{equation}
 \frac{144}{49}\nu\sqrt{w_rw_j}\,e^{3W/(4Q)}.
\end{equation}
For a vector of block norms, this strictly lower-triangular part is
componentwise dominated by the nonnegative rank-one matrix
$(144/49)\nu e^{3W/(4Q)}(\sqrt{w_rw_j})_{r,j}$.
The latter has operator norm $(144/49)We^{3W/(4Q)}$.
Combining this with the diagonal bound gives
\begin{equation}\label{hs:eq:timeresolvent}
 \norm{(z-D_{\rm priv})^{-1}}
 \le2Q+\frac{144}{49}We^{3W/(4Q)}
 \le3(Q+W)e^{3W/(4Q)}.
\end{equation}
Thus the same resolvent disk and bound as in
\eqref{hs:eq:globalres} hold for nonuniform time steps and varying
Hamiltonians.

The contour estimate uses only this bound and contractivity. Hence
the filter obeys \eqref{hs:eq:filterbound}, and the finite-reuse, LCU,
and amplification arguments apply with target $U_C$ and catalyst
\eqref{hs:eq:timecatalyst}. Taking the filter tolerance to be
$\eps/8$ gives operator-norm error at most $\eps/4$ relative
to $U_C$, with at most the number of transducer calls in
\eqref{hs:eq:explicitcount}. In particular the cost is
$\mathcal{O}(W+Q\ell)$. This proves the chronological extension of the
moment-interface argument independently of how the samples are chosen.

To compare with the intended continuous evolution, set
$\Lambda=\sqrt{s}\eta$, which bounds both $\norm{H(u)}$ and
$\norm{\widetilde H_r}$. On each interval, the unitary Duhamel formula
bounds the difference between the exact propagator and
$e^{-\ii\Delta_r\widetilde H_r}$ by
$\beta\Delta_r^2/2+\delta_{\rm val}\Delta_r$.
Functional calculus and
$|y-\arctan y|\le|y|^3/3$ bound the additional Cayley error by
$\Lambda^3\Delta_r^3/12$. Telescoping products of unitaries therefore
gives
\begin{equation}\label{hs:eq:timedisc}
 \norm{U_C-U_H(T,0)}
 \le\sum_{r=0}^{M-1}\left[
       \frac{\Lambda^3\Delta_r^3}{12}
       +\frac{\beta\Delta_r^2}{2}
       +\delta_{\rm val}\Delta_r\right].
\end{equation}
For example, a uniform mesh with integer
\begin{equation}\label{hs:eq:timegrid}
 M\ge\max\left\{1,\ 4\alpha T,\
           \frac{6\beta T^2}{\eps},\
           \frac{(\Lambda T)^{3/2}}{\sqrt\eps}\right\}
\end{equation}
ensures $w_r\le1/4$ and makes each of the three summed contributions
at most $\eps/12$. The total ideal-gate error is then at most
$\eps/2$, leaving $\eps/2$ for approximating the known gates. Neither
refining the mesh nor improving the represented values changes $W$ or
$Q$. Optimising $\gamma$ by \eqref{hs:eq:gammaopt}, now with
$\tau=\eta T$, gives \eqref{hs:eq:timebound}. Finally, tensoring the
isometry difference with an identity preserves its operator norm, so
the guarantee includes arbitrary untouched references.
\end{proof}

\paragraph{Regularity and coherent parameters.}
The same proof covers piecewise Lipschitz paths when all breakpoints
are supplied: include them in the mesh, take the appropriate one-sided
sample on each interval, and use its Lipschitz bound in
\eqref{hs:eq:timedisc}. More generally, for an integrable Hermitian path
it suffices to have a certified
bound on the sum of the integrated sample errors
$\sum_r\int_{t_r}^{t_{r+1}}\norm{H(u)-\widetilde H_r}\,du$.
All uses of regularity are confined to this classical error certificate.
The proof also permits coherent selection of known durations, start
times, or other public parameters, provided the corresponding coherent
sampled-time oracle is supplied and a common schedule is fixed for the
largest $W$ and uniform moment bounds. Smaller $W$ preserves both the
filter estimate and $d\ge2W$. Time-address arithmetic and the
parameter-dependent boundary rotations are query-free.

The additional time slots can increase gate and workspace costs, so
\eqref{hs:eq:timebound} is a query bound. The lower bound in
Theorem~\ref{hsl:thm:main} also applies to this class, since its
time-independent hard instances satisfy the new assumptions with
$\beta=\delta_{\rm val}=0$.

\paragraph{Relation to time-indexed block encodings.}
The chronological argument also contains the additive-logarithmic
block-encoding bound as a special case. Suppose coherent and controlled
access is supplied to Hermitian unitaries $O_r$ with
$J^\dagger O_rJ=\widetilde H_r/\alpha_{\rm BE}$ for a known isometry
$J$ and a normalisation $\alpha_{\rm BE}>0$, as in the time-indexed block-encoding model of
Ref.~\cite{CGWZ}. Set $D_{0,r}=\ii O_r$. Since $O_r^2=I$,
\begin{equation}\label{hs:eq:blockspecial}
 R_{0,r}(1)=\frac{I+\ii O_r}{2},\qquad
 J^\dagger R_{0,r}(1)J
     =\frac{I+\ii\widetilde H_r/\alpha_{\rm BE}}2,
 \qquad R_{0,r}(1)^\dagger R_{0,r}(1)=\frac I2.
\end{equation}
Thus \eqref{hs:eq:interface} holds with $Q=\nu=1$ and
$\alpha=\alpha_{\rm BE}$. The chronological proof gives
\begin{equation}\label{hs:eq:blockadditive}
 \mathcal{O}\!\left(\alpha_{\rm BE}T+\log(1/\eps)\right)
\end{equation}
queries to the supplied block encoding. In this precise sense the
general moment-interface formulation subsumes the additive-logarithmic
scaling of the earlier time-dependent transducer method.

For $\alpha_{\rm BE}T>0$, Chen, Gao, Wang, and Zhou obtain the sharper bound~\cite{CGWZ}
\begin{equation}\label{hs:eq:prioroptimal}
 \mathcal{O}\!\left(\alpha_{\rm BE}T+
       \frac{L}{\log(e+L/(\alpha_{\rm BE}T))}\right),
 \qquad L=\log(1/\eps),
\end{equation}
using factorial error decay specific to their involution construction.
The general exponential filter used here gives
\eqref{hs:eq:blockadditive}; it does not supply the denominator in
\eqref{hs:eq:prioroptimal}. In the involution special case one can
retain their filter to recover the sharper bound.

The sparse-query improvement concerns the cost of representing the
Hamiltonian. A standard sparse block encoding with the supplied entry
bound $|H_{ij}(u)|\le\eta$ has normalisation
$\alpha_{\rm BE}=s\eta$, giving a leading $s\tau$ term when used in
\eqref{hs:eq:prioroptimal}. The exact sparse Cayley representation
instead gives the leading $\sqrt{s}\tau$ term in
\eqref{hs:eq:timebound}, together with its mixed time-precision term.
These different normalisations explain why the additive block-encoding
bound is consistent with our sparse-oracle lower bound. The earlier
block-encoding route can still be preferable at very high precision.

\begin{corollary}[Improved additive precision dependence]\label{hs:cor:hybrid}
Under the assumptions of Theorem~\ref{hs:thm:main} or
Theorem~\ref{hs:thm:time}, put $\ell=\log(1/\eps)$ and assume
$\tau>0$. The corresponding operator-norm guarantee can be
achieved using
\begin{equation}\label{hs:eq:hybridbound}
 \mathcal{O}\!\left(
 \sqrt{s}\,\tau+\sqrt{s\tau\ell}
 +\frac{\ell}{\log(e+\ell/(s\tau))}
 \right)
\end{equation}
primitive sparse-oracle queries. For $\tau=0$ the identity circuit
uses no queries. The time-dependent case retains the coherent access
and certified-discretisation assumptions of Theorem~\ref{hs:thm:time}.
\end{corollary}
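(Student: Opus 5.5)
The plan is to split into two regimes according to the size of $\ell$ relative to $s\tau$, and in each regime invoke an already available algorithm. In the first regime, $\ell\le s\tau$, we have $\log(e+\ell/(s\tau))\le\log(e+1)$, so the denominator in \eqref{hs:eq:hybridbound} is bounded above and below by absolute constants. The additive term $\ell$ of \eqref{hs:eq:mainbound} is then within a constant factor of $\ell/\log(e+\ell/(s\tau))$. Hence Theorem~\ref{hs:thm:main}, or Theorem~\ref{hs:thm:time} in the time-dependent case, already gives \eqref{hs:eq:hybridbound}, and nothing further is needed here.

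In the second regime, $\ell>s\tau$, we abandon the Cayley representation and use the standard sparse block encoding instead. Each entry satisfies $|H_{ij}|\le\eta$, since it is bounded by its row's Euclidean norm. Loading the entry with the value oracle, applying a conditioned rotation and uncomputing therefore gives a block encoding with normalisation $\alpha_{\rm BE}=s\eta$, using a constant number of primitive queries. We would take the symmetric, walk-type construction so that the resulting unitary $O_r$ is Hermitian, i.e.\ an involution, with $J^\dagger O_rJ=\widetilde H_r/(s\eta)$ for a known isometry $J$. This places us exactly in the time-indexed involution model of Ref.~\cite{CGWZ}, with the time index carried coherently as in \eqref{hs:eq:timeaccess}. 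Controlled versions cost $\mathcal{O}(1)$ queries by the construction of Section~\ref{hs:sec:control}.

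Applying \eqref{hs:eq:prioroptimal} with $\alpha_{\rm BE}T=s\tau$ then yields
\begin{equation*}
 \mathcal{O}\!\left(s\tau+\frac{\ell}{\log(e+\ell/(s\tau))}\right)
\end{equation*}
block-encoding calls, each costing constant primitive queries. Because $\ell>s\tau$, we have $s\tau=\sqrt{s\tau\cdot s\tau}\le\sqrt{s\tau\ell}$. The bound is therefore dominated by \eqref{hs:eq:hybridbound}. The algorithm can choose between the two regimes classically, since $s$, $\tau$ and $\ell$ are all known.

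For the time-dependent case, we would apply the block-encoding algorithm to the piecewise-constant path built from the samples $\widetilde H_r$ on the certified mesh of Theorem~\ref{hs:thm:time}. The Duhamel and value-error terms of \eqref{hs:eq:timedisc}, without the Cayley term, bound the distance to $U_H(T,0)$ by $\eps/2$. We would then run the CGWZ simulation of the piecewise-constant path to precision $\eps/2$, which only affects constants inside $\ell$. The error convention needs a small check. The CGWZ guarantee should be an operator-norm bound including ancilla leakage; if it is stated only for the block, one round of oblivious amplitude amplification, as in the leakage computation of the filter-synthesis subsection, converts it at constant cost.

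I expect the main obstacle to be verifying that the sparse block encoding genuinely satisfies the hypotheses of the CGWZ sharper bound. That bound requires a Hermitian unitary (involution) encoding with coherent time indexing, and I must confirm that their factorial-decay filter applies to a piecewise-constant, sampled path rather than to a continuous one. My first attempt is the symmetric Childs-walk encoding, $O=\mathrm{SWAP}$ conjugated by state preparations in which the entry amplitudes are split as $\sqrt{H_{ij}}$ on each side; this is Hermitian with $O^2=I$. If the continuous-path formulation of Ref.~\cite{CGWZ} is an issue, I would instead note that a piecewise-constant Hamiltonian with known breakpoints is a legitimate time-dependent input to their model.
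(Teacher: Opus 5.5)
Your decomposition matches the paper's proof. You split at $\ell$ versus $\rho_0=s\tau$, use the Cayley-transducer bound when $\ell\le\rho_0$, and otherwise use the exact normalisation-$s\eta$ sparse block encoding with the sharper block-encoding bound, absorbing $\rho_0\le\sqrt{\rho_0\ell}$. For the time-independent case the paper cites qubitisation with the joint dependence of Ref.~\cite[Corollary~60]{QSVT}; your use of Ref.~\cite{CGWZ} on a constant path, with the involutive walk encoding, also works.

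\textbf{The gap is the time-dependent step you flagged yourself.} Theorem~9 of Ref.~\cite{CGWZ} assumes Lipschitz continuity, so the piecewise-constant sampled path is not an admissible input to it. Saying it is ``a legitimate time-dependent input to their model'' is exactly the claim that needs proof. The paper supplies it in two steps. First, it goes below that theorem: their Propositions~6--7 and Lemma~8 need only a finite sequence of Hermitian block encodings, and give a query bound independent of the number of time steps. Second, to handle the jumps explicitly, it re-samples $\widetilde H$ on a uniform power-of-two mesh of $N_{\rm ref}$ cells. The refined time-indexed oracle, including its controls, is implemented at constant query cost by a public map from fine cell labels to original sample labels. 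Only cells containing one of the $M-1$ original breakpoints contribute, so $\int_0^T\norm{H_{\rm ref}(u)-\widetilde H(u)}\,du\le2\rho_0(M-1)/N_{\rm ref}$. The Cayley-product error is at most $\rho_0^3/(12N_{\rm ref}^2)$. Taking $N_{\rm ref}$ large, beyond the filter degree, makes both errors small without changing the query count.

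Two smaller points also need repair.
\begin{itemize}
\item \emph{Leakage.} Oblivious amplitude amplification is not what converts a block guarantee into the full-output one, and no amplification round is needed. If the zero-ancilla block $F$ of a unitary $\mathcal U$ satisfies $\norm{F-U}\le\delta$, then on the initialised subspace $(\mathcal U-U)^\dagger(\mathcal U-U)=2I-U^\dagger F-F^\dagger U$. This gives $\norm{\mathcal U-U}^2\le2\delta$ directly. So run the block-encoding simulation at $\delta=\eps^2/8$, which changes $\ell$ only by an absolute factor.
\item \emph{Small $\rho_0$.} As $\rho_0\to0$ at fixed $\eps$, the right side of \eqref{hs:eq:hybridbound} tends to zero. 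An algorithm that always makes at least one call therefore cannot satisfy it with absolute constants. The paper outputs the identity when $\rho_0\le\eps$, valid since $\norm{U-I}\le\rho_0$. Otherwise $\ell/\log(e+\ell/\rho_0)=\Omega(1)$, which absorbs the integer rounding of the block-encoding count.
\end{itemize}
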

\begin{proof}
Put $\rho_0=s\tau$. The row-norm promise implies $|H_{ij}|\le\eta$, so
the standard sparse construction gives an exact block encoding with
normalisation $s\eta$ using a constant number of primitive queries.
Optimal block-encoding simulation~\cite{LowChuang,Qubitization}, with
the joint time-precision dependence of Ref.~\cite[Corollary~60]{QSVT},
or \eqref{hs:eq:prioroptimal} in the time-dependent case, gives
\begin{equation}\label{hs:eq:ordinarybound}
 \mathcal{O}\!\left(\rho_0+\frac{\ell}{\log(e+\ell/\rho_0)}\right).
\end{equation}

For the time-dependent case, apply the finite-sequence construction of
Ref.~\cite{CGWZ} to the sampled path
$\widetilde H(u)=\widetilde H_r$ on $[t_r,t_{r+1})$.
Although their Theorem~9 assumes Lipschitz continuity, their
Propositions~6--7 and Lemma~8 require only a finite sequence of
Hermitian block encodings and give a query bound independent of the
number of time steps. To handle the jumps explicitly, sample
$\widetilde H$ on a uniform power-of-two mesh of $N_{\rm ref}$ intervals,
obtaining the piecewise-constant path $H_{\rm ref}$. Only intervals containing one of
the $M-1$ original breakpoints contribute to
\begin{equation}
 \int_0^T\norm{H_{\rm ref}(u)-\widetilde H(u)}\,du
 \le \frac{2\rho_0(M-1)}{N_{\rm ref}}.
\end{equation}
The corresponding Cayley-product error is at most
$\rho_0^3/(12N_{\rm ref}^2)$.
The known map from each new time label to its original sample label
implements the required time-indexed oracle, including its controls,
with constant primitive query cost. Thus increasing $N_{\rm ref}$ beyond the
filter degree reduces both errors without changing the query bound.
Together with the original sample-error certificate, this gives the
time-discretisation budget used below.

This also gives the required operator-norm guarantee, including the
temporary ancillas. If the block $F$ between zero-ancilla states of a
unitary simulation circuit $\mathcal U$ satisfies
$\norm{F-U}\le\delta$ for the target unitary $U$, then
\begin{equation}
 \norm{\mathcal U-U}^2\le2\delta.
\end{equation}
Taking $\delta=\eps^2/8$, and allocating at most $\eps/2$ to time
discretisation when needed, changes the logarithmic
precision parameter only by an absolute factor. Integer rounding
initially adds one query to~\eqref{hs:eq:ordinarybound}. If
$\rho_0\le\eps$, use the identity instead, since
$\norm{U-I}\le\rho_0$; otherwise
$\ell/\log(e+\ell/\rho_0)=\Omega(1)$ absorbs the rounding.

If $\ell\le\rho_0$, use the Cayley-transducer bound and absorb its
additive $\ell$ into $\sqrt{\rho_0\ell}$. If $\ell>\rho_0$, use
\eqref{hs:eq:ordinarybound} and absorb $\rho_0$ into
$\sqrt{\rho_0\ell}$.
Both choices satisfy~\eqref{hs:eq:hybridbound}. The choice depends
only on known parameters and preserves the guarantee on inputs
entangled with an untouched reference.
\end{proof}

\subsection{Black-box unitary implementation}\label{hs:sec:unitary}

The Hermitian-dilation reduction of Berry and Childs~\cite[Section~8]{BerryChilds}
immediately turns the simulation bound into an optimal constant-error
bound for implementing an arbitrary unitary from its matrix entries.
Bravo-Prieto, Harrow, and Kothari also obtain the
$\mathcal{O}(\sqrt N)$ constant-error bound, with general-precision cost
$\mathcal{O}(\sqrt N\log(1/\eps))$, through their block-encoding
construction~\cite[Corollary~3]{BPHK26}.

\begin{corollary}[Black-box unitary implementation]\label{hs:cor:unitary}
Let $U\in\mathbb C^{N\times N}$ be unitary, let $N\ge2$, and let
$0<\eps\le1/2$. Suppose a reversible entry-value oracle supplies its
entries exactly, or with entrywise absolute error at most
$\eps/(\pi N)$. A unitary circuit $\mathcal A_U$ can be constructed such that
\begin{equation}\label{hs:eq:unitaryerror}
 \norm{\mathcal A_U-U}\le\eps,
\end{equation}
using
\begin{equation}\label{hs:eq:unitarybound}
 \mathcal{O}\!\left(
   \sqrt{N\ell}
   +\frac{\ell}{\log(e+\ell/N)}
 \right),\qquad \ell=\log(1/\eps),
\end{equation}
entry-oracle queries, including inverse calls. No location oracle is
required. The guarantee includes inputs entangled with an untouched
reference. At fixed constant error, the worst-case query complexity is
$\Theta(\sqrt N)$.
\end{corollary}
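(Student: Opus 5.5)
We reduce to Hamiltonian simulation through the Hermitian dilation of Berry and Childs. On $\mathbb C^2\otimes\mathbb C^N$ put
\begin{equation}
 H_U=\begin{pmatrix}0&U\\U^\dagger&0\end{pmatrix},
\end{equation}
which is Hermitian and unitary with $H_U^2=I$, so $e^{-\ii tH_U}=\cos t\,I-\ii\sin t\,H_U$. At $t=\pi/2$ this equals $-\ii H_U$. Applying it to $\ket1\otimes\psi$ gives $-\ii\ket0\otimes U\psi$. A known $X$ on the qubit and a known phase $\ii$ then yield $U$, with the qubit acting as a clean ancilla in the convention of Section~\ref{sec:overview}.

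The access model is supplied without location oracles. Every row and column of $H_U$ has exactly $N$ potential entries, and their positions are known: row $(0,i)$ addresses $(1,j)$ for $j<N$, and similarly for the other block. The locators and their inverses are therefore query-free known permutations, and the entries of $H_U$ are read by one call to the $U$ entry oracle, or its conjugate, with known routing. For the parameters, $s=N$, and every row and column Euclidean norm equals one because $U$ is unitary, so $\eta=1$ and $\tau=\pi/2$.

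For exact entries we apply Corollary~\ref{hs:cor:hybrid} with these values. It gives
\begin{equation}
 \mathcal{O}\!\left(\sqrt N+\sqrt{N\ell}+\frac{\ell}{\log(e+\ell/(\pi N/2))}\right).
\end{equation}
The constant $\pi/2$ inside the logarithm changes the denominator only by an absolute factor. With the error budget set to $\eps/2$, this gives \eqref{hs:eq:unitarybound}, and the reference-system guarantee is inherited.

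For inexact entries, the oracle instead represents $\widetilde H$ with entrywise error at most $\eps/(\pi N)$ in each block. The main obstacle is that $\widetilde H$ no longer satisfies $\eta=1$ exactly, and the simulated operator differs from the ideal one. We handle this in four steps.
\begin{enumerate}
 \item Bound the perturbation. The Frobenius norm of $E=\widetilde H-H_U$ satisfies $\norm E\le N\cdot\eps/(\pi N)=\eps/\pi$.
 \item Bound the dynamics error. Duhamel's formula gives $\norm{e^{-\ii t\widetilde H}-e^{-\ii tH_U}}\le t\norm E\le\eps/2$ at $t=\pi/2$.
 \item Repair the norm promise. Each row norm of $\widetilde H$ is at most $1+\eps/\pi\le 2$, so we run the simulation with the known bound $\eta=2$. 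This only changes constants.
 \item Allocate the error. The simulation tolerance is set to $\eps/2$, so the total is at most $\eps$.
\end{enumerate}

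For the constant-error statement, the upper bound $\mathcal{O}(\sqrt N)$ is the above with $\ell=\mathcal{O}(1)$. For the matching lower bound, let $U$ be a permutation matrix that differs from the identity only by swapping two indices: the marked index $x$ with a known index, say $0$. Implementing $U$ to error $1/4$ and applying it to $\ket0$ followed by a computational-basis measurement returns $x$ with constant probability. Each entry query to $U$ can be simulated by $\mathcal{O}(1)$ queries to a search oracle marking $x$. Search lower bounds then give $\Omega(\sqrt N)$, and hence $\Theta(\sqrt N)$.
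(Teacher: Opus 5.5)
Your proposal is correct and follows the paper's proof essentially step for step. You use the Hermitian dilation evolved for time $\pi/2$ with a flag flip and phase correction, Corollary~\ref{hs:cor:hybrid} with $s=N$ and $\eta=1$, a Duhamel perturbation bound together with $\eta=2$ for inexact entries, and a reduction from unstructured search; your transposition family works just as well as the paper's $(U_m)_{jk}=\mathbf 1_{j+k=m\ ({\rm mod}\ N)}$. There is one small slip in step 1: the Frobenius norm of the full dilation $E$ is $\sqrt2\,\norm{\widetilde U-U}_F$, which would overrun your $\eps/2+\eps/2$ budget, so you should argue as the paper does, $\norm E=\norm{\widetilde U-U}\le\norm{\widetilde U-U}_F\le N\cdot\eps/(\pi N)$.
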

\begin{proof}
First suppose the represented entries are exact. Define
\begin{equation}\label{hs:eq:unitarydilation}
 H_U=\begin{pmatrix}0&U\\ U^\dagger&0\end{pmatrix}.
\end{equation}
Then $H_U^2=I$, every row and column has Euclidean norm one, and
there are at most $N$ nonzero entries in each row and column.
The location lists enumerate all $N$ indices in the opposite block;
their in-place locators and inverses are known permutations. Each
entry query to $H_U$ uses a constant number of queries to the entry
oracle for $U$, with index swapping, complex conjugation, and the
known zero diagonal blocks handled by query-free operations. Thus
the simulation theorem applies with $s=N$, $\eta=1$, and $t=\pi/2$.
Since
\begin{equation}\label{hs:eq:unitarytransfer}
 e^{-\ii(\pi/2)H_U}=-\ii H_U,\qquad
 e^{-\ii(\pi/2)H_U}\bigl(\ket1\otimes\ket\psi\bigr)
     =-\ii\ket0\otimes U\ket\psi,
\end{equation}
a known flag flip and phase correction give the desired clean
implementation. Corollary~\ref{hs:cor:hybrid}, with $\tau=\pi/2$,
gives~\eqref{hs:eq:unitarybound}, since $\ell\ge\log2>0$ implies
$\sqrt N\le(\log2)^{-1/2}\sqrt{N\ell}$, allowing the first term
to be absorbed into $\sqrt{N\ell}$.

For finite-precision entries, let $\widetilde U$ be the represented
matrix and let $\xi=\max_{i,j}|\widetilde U_{ij}-U_{ij}|$.
Construct $\widetilde H_U$ by the same Hermitian dilation, so that
\begin{equation}
 \norm{\widetilde H_U-H_U}
   =\norm{\widetilde U-U}\le N\xi\le\eps/\pi,
 \qquad \norm{\widetilde H_U}\le2.
\end{equation}
It remains $N$-sparse and satisfies the row and column norm promises
with $\eta=2$. Duhamel's bound gives
\begin{equation}
 \norm{e^{-\ii(\pi/2)\widetilde H_U}
              -e^{-\ii(\pi/2)H_U}}
       \le (\pi/2)N\xi\le\eps/2.
\end{equation}
Simulating $\widetilde H_U$ to error $\eps/2$ therefore proves
\eqref{hs:eq:unitaryerror}, with the same asymptotic query count.
The operator-norm bound is unchanged by an untouched reference.

For the lower bound, use the search family from
Ref.~\cite[Section~9]{BerryChilds}. If $m$ is the unique marked
element of $\mathbb Z_N$, the matrix
$(U_m)_{jk}=\mathbf 1_{j+k=m\ ({\rm mod}\ N)}$ is a permutation
matrix, its entry oracle uses one query to the search oracle, and
$U_m\ket0=\ket m$. A constant-error implementation would therefore
solve unstructured search. Its $\Omega(\sqrt N)$ lower bound matches
\eqref{hs:eq:unitarybound} at fixed $\eps$.
\end{proof}

\subsection{Direct implementation and comparison with the lower bound}

The exact finite-word value model allows the entry-controlled Cayley
rotations to be treated as known operations after the value is queried.
With a discrete universal gate set, fix the finite circuit first and
approximate its non-oracle operations so that their total norm error is
at most $\eps/2$. Telescoping unitary errors then preserves
\eqref{hs:eq:fullerror}, without extra matrix queries. In particular, this
does not reintroduce the former precision-dependent amplitude-multiplication
query cost.

In the direct implementation, the chronological boundary rotations
are applied successively to the shared public register. It uses at most
$6dN_{\rm slot}$ such rotations, before gate synthesis. Its gate count
can therefore still be polynomial in $1/\eps$. The time slots form a
direct sum, so their register costs $\mathcal{O}(\log N_{\rm slot})$ qubits.
Theorem~\ref{hs:thm:gates} below compresses the ordered boundary
network and gives a gate-efficient implementation under additional
decoding and arithmetic assumptions. The query theorem does not
require these additional implementation assumptions.

The argument is uniform over known durations: choose one rescaling and
one query/filter schedule from a maximum value of $\tau$, and use
duration-controlled boundary rotations. The inequalities are monotone
in $W$, so smaller durations obey the same error bound. This observation
does not assert that a coherent superposition of durations can use a
separately optimised variable query count.

Theorem~\ref{hsl:thm:main} gives the joint lower bound
$\Omega(\sqrt{s}\tau+\sqrt{s\tau r}+r)$, with
$r=L/\log(e+L/\tau)$ and $L=\log(1/\eps)$, in its stated range.
The time term matches, while the mixed terms differ by
$\sqrt{\log(e+L/\tau)}$. Corollary~\ref{hs:cor:hybrid} improves the
additive precision term to $L/\log(e+L/(s\tau))$, up to absolute
constants, by choosing between the transducer and block-encoding
algorithms. In particular it recovers the optimal
$\mathcal{O}(L/\log L)$ dependence at fixed positive time and fixed
sparsity. The remaining mixed-term gap is a square root of a logarithm
of a logarithm at fixed time in a sparsity-dominated precision regime.
Exact joint optimality is not asserted.

The three ingredients in the improved bound are specific: exact entrywise Cayley access removes encoding-error queries;
the contour integral removes a spurious size-dependent filter prefactor;
and rescaling balances the remaining time and precision costs. No
multivariate quantum-signal-processing (QSP) synthesis theorem is needed.

\subsection{Gate-efficient implementation}\label{hs:sec:gates}

The query bound of Theorem~\ref{hs:thm:main} does not require efficient
circuits for the public entry decoder. Under the additional computational
assumptions stated below, its direct implementation can be replaced by
one with efficient gate complexity. We use the dyadic factorisation of
ordered boundary rotations due to Chen, Gao, Ji, Li, Wang, and
Zhou~\cite{CGJLWZ}. The adaptation here allows the non-involutory private
Cayley operation in \eqref{hs:eq:edge} and implements the averaging filter
and coherent reuse from the preceding proof. The Lean formalisation
covers Theorem~\ref{hs:thm:main}; the following gate analysis is not
included in that formalisation.

Write $N$ for the matrix dimension and $T=|t|$. In the query-free
coordinates of Section~\ref{hs:sec:entry}, suppose
$J\psi=U_J(\ket{0^a}\otimes\psi)$, where a known unitary $U_J$ has
$C_J$ elementary gates. Initially, arbitrary known single-qubit rotations
count as one elementary gate. For $T>0$, choose $\gamma$ as in
\eqref{hs:eq:gammaopt}, and use $\alpha,\nu,Q$ from
\eqref{hs:eq:scaled} and \eqref{hs:eq:params}. Set
\begin{equation}\label{hs:eq:gate-schedule}
 \begin{aligned}
 W&=\alpha T\nu,& b&=32Q,\\
 k&=\left\lceil4W/Q+\log_2(4096/\eps)\right\rceil,
 &d&=k(b-1).
 \end{aligned}
\end{equation}
Let $M$ be the least power of two at least
\begin{equation}\label{hs:eq:gate-clocks}
 \max\left\{2,\ 4\alpha T,
             \frac{(\alpha T)^{3/2}}{\sqrt{3\eps}}\right\},
 \qquad m=\log_2M,\qquad r=\lceil\log_2(2d+2)\rceil.
\end{equation}
Here $M$ is the power-of-two choice of the earlier quantity
$N_{\rm slot}$, the number of time slots, and $r$ is the size of a
separate reuse register. The reuse count therefore imposes no lower
bound on $M$.

For a finite gate set, let $G_C(\xi)$ and $A_C(\xi)$ bound the additional
gates and clean work qubits for a controlled private Cayley operation,
with error at most $\xi$ in the ancilla convention of
Section~\ref{sec:overview}. Its inverse is the adjoint of the actual
circuit. These bounds include entry decoding, rotation arithmetic,
validity tests and register routing, but exclude gates internal to the
supplied sparse oracles. The primitive-oracle cost of each such operation
is a fixed constant, independent of $\xi$.

\begin{theorem}[Gate-efficient sparse simulation]\label{hs:thm:gates}
Under the assumptions of Theorem~\ref{hs:thm:main}, with the implementations
specified above, the simulation can use at most $6d$ private Cayley
operations and
\begin{equation}\label{hs:eq:gate-known}
 \mathcal{O}\!\left(d(C_J+a+m+r)\right)
\end{equation}
other elementary gates before finite-set synthesis. Define
\begin{equation}\label{hs:eq:gate-Gzero}
 \xi=\frac{\eps}{24d},\qquad
 G_0=C_0\bigl[d(C_J+a+m+r)+A_C(\xi)\bigr],
\end{equation}
where $C_0$ is a sufficiently large absolute constant. With a gate set
admitting single-qubit synthesis of length
$\mathcal{O}(\log(1/\zeta))$ at accuracy $\zeta$, the additional gate and
auxiliary-qubit counts are respectively
\begin{equation}\label{hs:eq:gate-total}
 \mathcal{O}\!\left(dG_C(\xi)+G_0\log\frac{G_0+1}{\eps}\right),
 \qquad
 \mathcal{O}\!\left(A_C(\xi)+a+m+r\right).
\end{equation}
The full output error is at most $\eps$, including an untouched reference,
and the primitive query count remains
$\mathcal{O}(\sqrt{s}\tau+\sqrt{s\tau\ell}+\ell)$.

In particular, for standard efficiently decoded binary or fixed-point
entries and efficiently computable public parameters, put
$q=\sqrt{s}\tau+\sqrt{s\tau\ell}+\ell$. If their bit lengths are bounded
by $b_{\rm data}$, and the private arithmetic has polynomial gate and
workspace costs in $b_{\rm data}$, $\log(N+1)$ and $\log(1/\xi)$, then
\begin{equation}\label{hs:eq:gate-binary}
 G_{\rm additional}
 =q\,\operatorname{poly}\!\left(
       b_{\rm data},\log(N+1),\log\frac{q+1}{\eps}\right).
\end{equation}
These claims concern time-independent Hamiltonians. At $t=0$ the
identity circuit suffices.
\end{theorem}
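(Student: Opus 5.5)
The plan is to keep the query structure of the proof of Lemma~\ref{hs:lem:abstract} unchanged and only reorganise the known operations. The circuit is one round of oblivious amplitude amplification (three calls to the LCU of reuse circuits). Each reuse circuit consists of at most $2d$ calls to the chronological transducer $S_{\rm tot}=G_{M-1}\cdots G_0(I\oplus D_0^{\oplus M})$. Counting gives at most $6d$ transducer calls, hence at most $6d$ private Cayley operations; the claimed primitive query count then follows from Theorem~\ref{hs:thm:main}. The only expensive known part is the ordered product $G_{M-1}\cdots G_0$ of boundary rotations. Implemented directly it costs $\Theta(M)$ gates per call, and $M$ can be polynomial in $1/\eps$. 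The main work is to show that one call costs $\mathcal{O}(C_J+a+m+r)$ gates.

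For this I would use the dyadic factorisation of Ref.~\cite{CGJLWZ}. All $G_r$ are the same rotation: each couples the public space to slot $r$ through $J$, with the common angles $c,\sigma,\chi$ of the uniform mesh $w=\alpha T/M$. Conjugating by $U_J$ reduces each $G_r$ to a two-level rotation between the public register and the flagged state $\ket{0^a}$ of slot $r$, controlled on that flag. Along the public register, the ordered product is then a sequential scattering of one amplitude through a clock register of $m$ qubits. The factorisation writes this product recursively: the product over $2^{j+1}$ slots is expressed through the product over $2^j$ slots, one known single-qubit rotation on the $j$-th clock qubit, and controlled increments. This gives $\mathcal{O}(m)$ known gates plus $\mathcal{O}(1)$ uses of $U_J$ and $a$-qubit flag tests per call. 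The step I expect to be the main obstacle is checking that this identity, proved in Ref.~\cite{CGJLWZ} for involutory private operations, does not depend on the private operation. The rotations $G_r$ are all applied after the simultaneous $D_0^{\oplus M}$, so the factorisation concerns only the known layer. I would verify this by checking that both sides agree on $\mathcal H_{\rm pub}\oplus\bigoplus_r\mathcal H_{\rm priv}$ for an arbitrary private input. The non-involutory $D_{\rm edge}$ of \eqref{hs:eq:edge} then enters only as the single coherent Cayley call, controlled where the filter requires it.

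Next I would add the remaining known costs. The reuse register of $r$ qubits needs uniform preparation and unpreparation, slot routing with $\mathcal{O}(r)$ gates per step, and the LCU coefficient preparation for $\lambda_n$ from \eqref{hs:eq:smooth}. Shorter branches are made to idle by a comparator on the reuse counter, costing $\mathcal{O}(r)$ per call. Summing over the $\mathcal{O}(d)$ calls gives \eqref{hs:eq:gate-known}. The prepared coefficient state is a known state on $\mathcal{O}(r)$ qubits, with cost $\mathcal{O}(d)$ gates since there are $2d$ coefficients, which is absorbed.

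For the error, implement each controlled Cayley operation to error $\xi=\eps/24d$; by telescoping, the $6d$ uses contribute at most $\eps/4$. Synthesise the remaining $\mathcal{O}(G_0)$ known rotations to accuracy $\zeta=\eps/(4(G_0+1))$ each, giving length $\mathcal{O}(\log((G_0+1)/\eps))$ per rotation and total error at most $\eps/4$. This accounts for the $A_C(\xi)$ term in $G_0$, taken with a generous $C_0$. Together with the ideal error $\eps/2$ from Lemma~\ref{hs:lem:abstract}, the full output error is at most $\eps$. Tensoring with an untouched reference preserves this bound. Workspace consists of the clock, reuse and flag registers plus the Cayley workspace, which is reused across calls. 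Finally, for \eqref{hs:eq:gate-binary}, substitute $d=\mathcal{O}(q)$ from \eqref{hs:eq:degree} and the regime analysis of \eqref{hs:eq:gammaopt}. Use $m=\mathcal{O}(\log(q/\eps))$ because $\alpha T\le\mathcal{O}(q)$, take $C_J=\operatorname{poly}\log(N+1)$ for the uniform list preparation, and assume polynomial private arithmetic.
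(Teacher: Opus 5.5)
Your route is the paper's. You write $S_{\rm tot}=\mathcal G_M\bigl(I\oplus(I_M\otimes D_0)\bigr)$, compress the known layer $\mathcal G_M$ with the ordered-rotation identity of Ref.~\cite{CGJLWZ}, add the reuse and coefficient costs, and allocate $\xi=\eps/(24d)$ per Cayley call and roughly $\eps/(4G_0)$ per synthesised rotation. You are right that the identity constrains only the known layer, so no involution condition on $D_0$ arises.

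The gap is in the step you yourself call the main work. The identity is not a recursion with ``controlled increments''. It is
$\mathcal G_M=(1\oplus U_{M,\rm out})R_{(p,0)}(c^M)(1\oplus U_{M,\rm in})^\dagger$,
with $U_{1,*}=I$ and $U_{2h,*}=(U_{h,*}\oplus U_{h,*})B_{*,(0,h)}(c^h)$. It is proved by reducing the product of the two half-block rotations to three modes. So the whole ordered network is a single public--private rotation conjugated by private-only basis changes.

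On the binary slot register, $U_{M,*}$ is a sequence of rotations $B_*(c^{2^j})$ on bit $j$, for $j=m-1,\ldots,0$. Each is conditioned on all lower bits being zero. The direct sum $U_{h,*}\oplus U_{h,*}$ is just $U_{h,*}$ on the lower bits, so no shift of slot labels is needed. Implemented one at a time, these multiply-controlled rotations give an $\mathcal{O}(m^2)$ count per call, not $\mathcal{O}(m)$. The same holds for $m$ controlled increments of an $m$-bit counter. Either way you would not reach \eqref{hs:eq:gate-known}. The $\mathcal{O}(m)$ count needs a chain of clean prefix flags $f_{j+1}=f_j\wedge[b_j=0]$. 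Compute the chain once, and erase each flag immediately after it has controlled its rotation, before any bit it depends on changes.

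There are also two smaller points.
\begin{enumerate}
\item The $A_C(\xi)$ term in $G_0$ does not come from synthesis. It is the cost of the oblivious-amplification success reflections. These must also test the Cayley arithmetic work qubits, together with the boundary, slot, reuse and coefficient registers. That way residual work-register amplitude is counted in the full-output error.
\item Telescoping through the inverse LCU call uses adjoints of the approximate Cayley circuits. Their clean-input error is also at most $\xi$, because the ideal operation preserves the clean-work subspace. For the implemented unitary $\mathcal A_C$,
$\norm{\mathcal A_C^\dagger(\phi\otimes\ket0)-(D_0^\dagger\phi)\otimes\ket0}=\norm{\phi\otimes\ket0-\mathcal A_C((D_0^\dagger\phi)\otimes\ket0)}\le\xi$.
\end{enumerate}

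You should also give the mechanism behind ``slot routing''. For example, apply the boundary network conditioned on $R_{\rm reuse}=j$ and increment $R_{\rm reuse}$ only on the private branch. Likewise say how $\ket{u_n}$ is prepared coherently for non-power-of-two $n$; looping over the $2d$ labels costs $\mathcal{O}(dr)$. Both fit within the stated budget.
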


\begin{proof}
We first give exact circuits with arbitrary known rotations and then
allocate their approximation errors. It is enough to consider $t=T>0$;
negative times use the inverse circuit.

\paragraph{The boundary network.}
Put
\begin{equation}
 w=\alpha T/M,\qquad
 c=\frac{1-w/2}{1+w/2},\qquad
 \sigma=\sqrt{1-c^2}.
\end{equation}
Conjugate the private coordinates by $U_J^\dagger$. On the boundary
subspace the chronological network consists of the public mode $p$
and private modes $0,\ldots,M-1$. Its factor $G_j$ rotates $(p,j)$ by
\begin{equation}
 R(c)=\begin{pmatrix}c&-\sqrt{1-c^2}\\
                     \sqrt{1-c^2}&c\end{pmatrix},
 \qquad \mathcal G_M=G_{M-1}\cdots G_0.
\end{equation}
All factors act identically on the system register and fix the private
subspace orthogonal to the boundary. For real $z$, define
\begin{equation}
 B_{\rm in}(z)=\frac1{\sqrt{1+z^2}}
       \begin{pmatrix}z&-1\\1&z\end{pmatrix},\qquad
 B_{\rm out}(z)=\frac1{\sqrt{1+z^2}}
       \begin{pmatrix}1&-z\\z&1\end{pmatrix}.
\end{equation}
Starting from $U_{1,*}=I$, let
\begin{equation}
 U_{2h,*}=(U_{h,*}\oplus U_{h,*})B_{*,(0,h)}(c^h),
 \qquad *\in\{{\rm in},{\rm out}\},
\end{equation}
where the last factor mixes the roots of the two half-blocks. Then
\begin{equation}\label{hs:eq:gate-dyadic}
 \mathcal G_M=(1\oplus U_{M,{\rm out}})
        R_{(p,0)}(c^M)(1\oplus U_{M,{\rm in}})^\dagger.
\end{equation}
For completeness, the induction on $M$ reduces to three modes. With
$z=c^h$ and $v=\sqrt{1-z^2}$, the two half-block rotations have product
\begin{equation}
 \begin{pmatrix}z^2&-zv&-v\\v&z&0\\zv&-v^2&z\end{pmatrix}.
\end{equation}
Conjugating this matrix on the left by $1\oplus B_{\rm out}(z)^\dagger$
and on the right by $1\oplus B_{\rm in}(z)$ gives
$R(z^2)\oplus1$. This proves \eqref{hs:eq:gate-dyadic}, starting with
$M=1$. This is the ordered-rotation identity of
Ref.~\cite{CGJLWZ}; it places no involution condition on $D_0$.

On the binary slot register, $U_{M,*}$ is implemented by one rotation
$B_*(c^{2^j})$ on bit $j$, in descending order
$j=m-1,\ldots,0$, conditioned on all lower bits being zero.
Compute the lower-prefix predicates in a chain of clean flags. After a
predicate has controlled its rotation, it can be erased before changing
its lower bits. Computing and erasing the whole chain costs
$\mathcal{O}(m)$ gates and work qubits. Boundary, private-sector and
external-enable predicates add $\mathcal{O}(a+m)$ gates; the central
rotation in \eqref{hs:eq:gate-dyadic} has the same bound. Conjugating the
whole network by the private-controlled $U_J$ and $U_J^\dagger$ uses
them only twice. Thus one controlled boundary network has
$\mathcal{O}(C_J+a+m)$ gates.

Returning to the original private coordinates, the chronological
transducer is
\begin{equation}\label{hs:eq:gate-transducer}
 S_{\rm tot}=\mathcal G_M^{\rm original}
       \bigl(I_{\rm public}\oplus(I_M\otimes D_0)\bigr).
\end{equation}
A private operation in one slot commutes with every boundary rotation
belonging to another slot, which proves this factorisation directly
from the chronological product. Thus one coherent private Cayley call
and the compact known network implement $S_{\rm tot}$. All inverses
are obtained by reversing the actual circuits.

\paragraph{The filter and reuse registers.}
For $f(z)=\sum_{j=0}^d a_jz^j$ and $\lambda_n$ from
\eqref{hs:eq:smooth}, prepare the known state
\begin{equation}\label{hs:eq:gate-filter-state}
 \sum_{n=1}^{2d}\sqrt{|\lambda_n|/2}\ket n
       +\sqrt{a_0}\ket\perp.
\end{equation}
Its norm is one because $\sum_n|\lambda_n|=2-2a_0$.
There are $\mathcal{O}(d)$ amplitudes, so a standard uniformly controlled
rotation preparation uses $\mathcal{O}(d)$ rotations and CNOTs and
$\mathcal{O}(\log d)$ qubits~\cite{MottonenStatePrep}.
The coefficients $a_j$ are rational with denominator $b^k$ and can be computed
by finite convolution. Their classical table has $\mathcal{O}(d)$
entries, independently of $M$; this classical preprocessing is not a
quantum-gate cost.

Encode $\perp$ as zero. For an ordinary label $n$, select the reuse
circuit with the known sign of $\lambda_n$. For the dummy label, flip
the public/private flag, giving zero successful public block, and skip
the reuse circuit. Preparing and unpreparing
\eqref{hs:eq:gate-filter-state} then has successful block
$\widetilde P/2$, as required in the original proof.

To count the reuse circuit, introduce an $r$-qubit register
$R_{\rm reuse}$ separate
from the time-slot register. Conditional on the public branch and label
$n$, prepare
\begin{equation}
 \ket{u_n}=\frac1{\sqrt n}\sum_{j=0}^{n-1}\ket j.
\end{equation}
A fixed such preparation costs $\mathcal{O}(\log(n+1))$ gates. For
example, prepare the uniform state on $2^{p_n}$ words, where
$p_n=\lceil\log_2 n\rceil$, and an attenuation flag with amplitude
$\sqrt{2^{p_n}}/(2\sqrt n)$. The joint amplitude of a valid word and an
accepted flag is exactly $1/2$. One amplitude-amplification step makes
that amplitude one, after which the flag is reset. Comparisons and
reflections use $\mathcal{O}(p_n+1)$ gates and clean work bits. Looping
over the $2d$ known labels with equality controls therefore prepares
$\ket{u_n}$ coherently in $\mathcal{O}(dr)$ gates and $\mathcal{O}(r)$
reusable work qubits. Its inverse has the same costs.

For $j=0,\ldots,2d-1$, conditional on an ordinary label $n>j$, apply:
\begin{enumerate}
 \item the private Cayley operation, coherently across both clocks;
 \item the compact boundary network, conditional on $R_{\rm reuse}=j$;
 \item increment $R_{\rm reuse}$ modulo $2^r$ on the private branch.
\end{enumerate}
Before active step $j$ the private amplitude has $R_{\rm reuse}=j$, and
afterwards it has $R_{\rm reuse}=j+1$. Completed public outputs retain
their earlier values of $R_{\rm reuse}$, while future public inputs
retain theirs. Thus the circuit realises
successive reuse of the same private state. The comparisons and
increments cost $\mathcal{O}(r)$ gates per step.

More explicitly, write
\begin{equation}
 S_{\rm tot}=\begin{pmatrix}A&B_0\\ C&D\end{pmatrix},\qquad
 S_{\rm tot}\binom I\Gamma=\binom U\Gamma,
 \qquad \Gamma=\Gamma_{\rm tot}.
\end{equation}
Before slot $j$ the private vector is
$(I-D^j)\Gamma\psi/\sqrt n$, initially zero. The public output is
$(U-B_0D^j\Gamma)\psi/\sqrt n$, and the new private vector is
$(I-D^{j+1})\Gamma\psi/\sqrt n$. Unpreparing $\ket{u_n}$ consequently
gives exactly
\begin{equation}\label{hs:eq:gate-reuse-block}
 P_n=U-B_0\left(\frac1n\sum_{j=0}^{n-1}D^j\right)\Gamma.
\end{equation}
This argument is linear, so it also holds on superpositions of $n$ and
with an untouched reference. One LCU circuit uses at most $2d$ private
Cayley calls and $\mathcal{O}(d(C_J+a+m+r))$ known gates. All auxiliary
registers are included in the success test.

\paragraph{Native sparse circuits and padding.}
In the query-free coordinates, a private operation has the form
\begin{equation}
 D_0=F^\dagger O_{\rm val}^\dagger R_{\rm val}O_{\rm val}F,
\end{equation}
with $R_{\rm val}$ given by \eqref{hs:eq:edge}. An external enable bit
can control only the middle rotation: the compute and uncompute cancel
on the disabled branch. Thus coherent control costs a constant number
of native queries, without assuming that arbitrary opaque unitaries
can be controlled for free. The decoder and rotation arithmetic are
precisely the operations charged to $G_C(\xi)$.

For the uniform native row/column locator frames,
$a,C_J=\mathcal{O}(\log(N+1))$; the
uniform-range preparation just described handles list lengths $s$ that
are not powers of two. Any attenuation work register is included in
the private space, with $D_0$ extended by its tensor product with the
identity. Thus $(I-D_0\otimes I)^{-1}=R_0(1)\otimes I$, and the
private spectrum and every compressed-resolvent bound are unchanged;
$J$ uses the zero-work subspace.

Binary index registers can contain unused words. Before the value
query, compute $v=[i<N]\wedge[j<N]$, and erase it after the inverse
query. For $v=0$, select the zero-entry middle rotation $-I$, independently
of the returned word. After conjugating by $U_J^\dagger$, control the
boundary factors on zero boundary ancillas and a valid system index;
the remaining public states are inactive. These predicates use
$\mathcal{O}(\log(N+1))$ gates. The frame identities then use only valid
list inputs, and no assumption about the arbitrary full locator
extensions is added. On valid index pairs the value oracle returns the
represented entries, including zero entries outside the supports.

\paragraph{Amplification and finite precision.}
One oblivious amplification round uses the LCU circuit, its inverse,
and the circuit again. Equation~\eqref{hs:eq:gate-reuse-block} identifies
its successful block with the one already analysed. The choice of $M$
in \eqref{hs:eq:gate-clocks} makes the Cayley discretisation error at
most $\eps/4$, and \eqref{hs:eq:gate-schedule} gives filter error
$\delta=\eps/8$. The full output error after amplification is therefore
at most $\eps/2$ before gate approximation. There are at most $6d$
private Cayley calls. The constant number of success reflections also
test the arithmetic work qubits, adding
$\mathcal{O}(A_C(\xi)+a+m+r)$ gates; this accounts for the last term in
\eqref{hs:eq:gate-Gzero}.

Give each private call error $\xi=\eps/(24d)$. Their total contribution
is at most $\eps/4$ by telescoping. The same error bound applies to the
adjoint circuit: if the ideal operation preserves the clean-work
subspace, multiplication by the actual unitary converts its adjoint
error on that subspace into its forward error there. Telescoping may
therefore be performed on ideal intermediate states with clean work;
it does not assume that imperfect work registers are exactly zero.
Approximate the remaining known rotations with total operator error at
most $\eps/4$. There are at most $G_0$ elementary operations, and
accuracy $\eps/(4G_0)$ per rotation suffices. Logarithmic-accuracy
single-qubit synthesis, for example the construction with at most two
clean ancillas of Ref.~\cite{KMM2013}, gives
\eqref{hs:eq:gate-total}. Controlled known gates are decomposed before
synthesis; comparisons, routing and reflections can use exact reversible
gates. Synthesis work qubits are reused, and their residual error is
included in the full-output norm. No bound on classical compilation
time is asserted. All work registers are reused, which proves
the stated space bound. The final full-output error is at most $\eps$.

Finally, \eqref{hs:eq:gammaopt} gives $d=\mathcal{O}(q)$.
Since $\nu\ge1/2$ and $d\ge2W$, we have $\alpha T\le d$, whence
\begin{equation}
 m+r=\mathcal{O}\!\left(\log\frac{d+1}{\eps}\right).
\end{equation}
The native frame costs and the assumed arithmetic bounds now imply
\eqref{hs:eq:gate-binary}. An arbitrary finite-word decoder can instead
have a large gate cost, in which case the explicit $G_C$ and $A_C$
bounds in \eqref{hs:eq:gate-total} apply.
\end{proof}

\section{Hamiltonian simulation lower bounds}\label{part:hsl}
\subsection{Statement, normalisation, and comparison}\label{hsl:sec:main}

We use the same native sparse access as in Section~\ref{part:hs}:
reversible matrix-value queries and in-place row and column locators,
with inverses and coherent controls. The lower bound remains valid if
all location operations are supplied for free. All known gates and
state preparation are free. The dimension is allowed to grow with the
parameters; no lower bound is asserted at a fixed insufficient dimension.

The theorem is stated in normalised units, $\norm H=1$, with evolution
time $\tau$. It therefore also applies to the row/column Euclidean-norm
promise $\eta=1$. In the notation of the upper bound, $\tau=\eta|t|$.
All logarithms below are natural.

We only require a simulator to produce, on a particular known input,
a system state $\rho_X$ satisfying
\begin{equation}\label{hsl:eq:accuracy}
 \tfrac12\norm{\rho_X-\ket{\psi_X}\bra{\psi_X}}_1\le\eps,
 \qquad \ket{\psi_X}=e^{-\ii\tau H_X}\ket{\psi_{\rm in}}.
\end{equation}
Here $\norm{\cdot}_1$ is trace norm. An operator-norm error at
most $\eps$ implies \eqref{hsl:eq:accuracy}. The theorem below initially
assumes no postselection; Lemma~\ref{odel:lem:herald} extends the bound
to constant heralded success. In the initial statement,
query counts are bounded in the worst case.

\begin{theorem}\label{hsl:thm:main}
Let $\tau\ge1$, let $s\ge32$ be an integer, and let
$0<\eps\le1/64$. Set
\begin{equation}\label{hsl:eq:parameters}
 L=\log(1/\eps),\qquad h=\log(e+L/\tau),\qquad r=L/h.
\end{equation}
The worst-case sparse-oracle query complexity of simulation obeys
\begin{equation}\label{hsl:eq:main}
 Q\ge c_0\left(\sqrt{s}\,\tau+\sqrt{s\tau r}+r\right),
 \qquad c_0=\frac{\log2}{4096\pi}.
\end{equation}
Hard instances can be chosen real symmetric, with operator norm and
maximum row and column Euclidean norms all exactly one, maximum row
and column sparsity exactly $s$, and exact finite-word dyadic entries. Each hard
family has input-independent location lists and full in-place locator
extensions. A dimension $\mathcal{O}(s(\tau+r+1))$ suffices, with further known
padding allowed.
\end{theorem}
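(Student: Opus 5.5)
We prove the three terms of \eqref{hsl:eq:main} separately and then combine them. Since $\max\{a,b,c\}\ge(a+b+c)/3$, it suffices to give three hard families, one for each of $\Omega(\sqrt s\,\tau)$, $\Omega(r)$ and $\Omega(\sqrt{s\tau r})$. Each family is real symmetric, $s$-sparse, has unit operator and row norms, and has dyadic entries. Each embeds a known query-lower-bound problem so that one sparse-oracle query costs $\mathcal O(1)$ queries to the underlying problem oracle. The explicit constant $c_0$ absorbs the losses from the reductions and from the three-way maximum.

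\emph{Time--sparsity term.} We follow Berry--Childs: take a path (line) graph of length $\Theta(\tau)$ whose edges are $s$-fold ``fat'' complete bipartite blocks. Each block has hidden entries encoding an unstructured search or parity instance on $s$ items, normalised so that the row norms are one. Evolution for time $\tau$ transports a known input state along the line only if every marked item is correctly placed, and this gives an $\Omega(\tau)$-composed instance of search on $\sqrt s$-hard blocks. Using the composition theorem for the adversary bound, or equivalently a direct reduction to computing $\mathrm{PARITY}_{\Theta(\tau)}\circ\mathrm{OR}_s$, we obtain $\Omega(\tau\sqrt s)$.

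\emph{Precision term.} We use the standard construction behind the Berry--Childs--Cleve--Kothari--Somma bound. Encode an $n$-bit parity into a path Hamiltonian with entries $\sqrt{j(n-j)}/n$-type weights, rescaled to unit norm. Evolving for time $\tau$ gives amplitude at the far end of order $(\tau/n)^n/n!$. Choosing $n\approx r=L/\log(e+L/\tau)$ makes this amplitude exceed $\sqrt\eps$ by a constant factor, so $\eps$-accurate simulation distinguishes parity with constant bias. This forces $\Omega(n)=\Omega(r)$ queries; the sparsity is padded to $s$ with zero entries.

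\emph{Mixed term, which is the main obstacle.} Here we combine the two ideas. We build a path of $n=\Theta(r)$ fat edges, each an OR on $m$ items with $m=\Theta(s\tau/r)\le s$. Row-norm normalisation then makes the effective hopping strength on a marked edge $\Theta(\sqrt{s/m})$ relative to a unit-norm edge, so the effective path evolution time is $\tau'=\Theta(\tau\sqrt{s/m})$. The point is to tune $m$ so that this path is still in the precision regime: the end-to-end amplitude $\approx(\tau'/n)^n/n!$ is comparable to $\sqrt\eps$. Parity of $n$ ORs on $m$ items then costs $\Omega(n\sqrt m)=\Omega(\sqrt{s\tau r})$ by adversary composition.

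The hard step is choosing $m$ and the weights consistently. We need the operator norm and row norms to equal one and the entries to be dyadic. We also need the small-amplitude signal to survive a trace-distance $\eps$ error on a single known input, which requires converting amplitude $\sqrt{\Theta(\eps)}$ into constant distinguishing bias via a quadratic relation. I would first fix the weights via exact $\mathfrak{su}(2)$ spin-chain dynamics so that the amplitude is closed-form, and then verify monotonicity in $m$. Finally, the dimension is $\mathcal O(s(\tau+r+1))$ because each fat edge has $\mathcal O(s)$ vertices and there are $\mathcal O(\tau+r)$ edges.
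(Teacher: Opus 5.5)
Your three-family decomposition and your time term agree with the paper in outline. The paper uses dense one-bit search gadgets on a spin chain, rescaled to perfect transfer so that the bias is constant. The mixed term, which you rightly flag as the crux, has a genuine gap, for two reasons.

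\emph{The boosted hopping is not available.} The theorem requires hard instances with $\norm H=1$, and an effective path hopping $\Theta(\sqrt{s/m})>1$ would force $\norm H>1$. The endpoint amplitude is therefore governed by time $\tau$, whatever $m$ is.

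\emph{No single-copy readout can have constant bias.} The parity sits in an exponentially small endpoint amplitude $a$. If each entry depends on a single hidden bit, expand $e^{\mathrm i\tau H_X}Me^{-\mathrm i\tau H_X}$: every term of degree below $n$ misses some block, so it is uncorrelated with the $n$-block parity. Hence any correlation is at most $\sum_{p\ge n}(2\tau)^p/p!+2\eps$, and your ``quadratic relation'' cannot exist. The usable readout interferes the endpoint with an isolated zero-energy reference. This gives expectation $\sigma(X)a$, linear in $a$, so the threshold is $a\gtrsim\eps$ rather than $\sqrt\eps$. The resulting correlation is still only $\Theta(a)$.

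At such bias, bounded-error adversary composition for $\mathrm{PARITY}_n\circ\mathrm{OR}_m$ gives nothing, and the claimed $\Omega(n\sqrt m)$ is false. About $\sqrt m\,\beta^{1/(2n)}$ Grover iterations per block find its marked item with probability about $\beta^{1/n}$. Over $n$ independent blocks this gives total correlation $\beta$ with $\mathcal O(n\sqrt m\,\beta^{1/(2n)}+n)$ queries.

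The missing ingredient is a matching small-correlation lower bound. The paper proves one by a multiplicative (weight) adversary in the style of Ambainis and \v{S}palek: correlation $\beta_0$ requires $q\ge\frac{\log2}{24}n\sqrt m\,\beta_0^{1/(2n)}$ whenever $m\ge4\beta_0^{-1/n}$. The weights $\sqrt{(k+1)(n-k)}/n$ give amplitude $\sin(\tau/n)^n\ge(\tau/2n)^n$, not $(\tau/n)^n/n!$. Hence $\beta_0^{1/(2n)}\gtrsim\sqrt{\tau/n}$ and the bound becomes $\sqrt{mn\tau}$. The right choice is therefore $m\approx s$ with $n=\lfloor r/4\rfloor$, which gives $\sqrt{s\tau r}$. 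By the Grover upper bound above, your choice $m\approx s\tau/r$ can certify at most $\mathcal O(\sqrt s\,\tau+r)$.

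The lemma's condition $m\gtrsim n/\tau$ also forces a case split. When $s<64n/\tau$, the precision family must supply the $\sqrt{s\tau r}$ term instead.

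Your precision term makes the same constant-bias error, but there it is harmless. The degree argument, $2q\ge n$ for any positive correlation with $n$-bit parity, needs only positive bias.
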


The three terms can be witnessed by different families. The theorem
is a worst-case statement: the maximum of three lower bounds is at
least one third of their sum. Their dimensions can be padded to a common
known size without affecting their hardness.

The common upper bound of Theorems~\ref{hs:thm:main} and~\ref{hs:thm:time} is
\begin{equation}\label{hsl:eq:upper}
 \mathcal{O}(\sqrt{s}\tau+\sqrt{s\tau L}+L).
\end{equation}
The time term matches and the mixed term is within $\sqrt h$.
The additive $L$ in~\eqref{hsl:eq:upper} can be improved to
$L/\log(e+L/(s\tau))$ by Corollary~\ref{hs:cor:hybrid}, for both
time-independent and time-dependent inputs. This yields
$\mathcal{O}(L/\log L)$ at fixed sparsity and positive time, so an
exact matching lower bound for the unmodified additive $L$ cannot
hold uniformly. We do not prove that the mixed upper bound
$\sqrt{s\tau L}$ is necessary without its logarithmic gap.

\subsection{The small-correlation search-parity lemma}

Consider $n$ independent one-hot strings of length $m$, with $m$ even.
Let their hidden marked positions be $j_1,\ldots,j_n$, and define
$\sigma(X)=(-1)^{j_1+\cdots+j_n}$. A Boolean membership query is
\begin{equation}
 O_X\ket{k,a,z}=\ket{k,a,z\mathbin\oplus[a=j_k]}.
\end{equation}
For a binary output $Y\in\{-1,1\}$, define correlation by
$\beta=\E_X[\sigma(X)\E(Y\mid X)]$, with hidden positions uniform.
Success probability is $(1+\beta)/2$ on average.

\begin{lemma}\label{hsl:lem:search}
If a $q$-query Boolean algorithm has correlation at least $\beta_0>0$
and $m\ge4\beta_0^{-1/n}$, then
\begin{equation}\label{hsl:eq:searchbound}
 q\ge\frac{\log2}{24}\,n\sqrt m\,\beta_0^{1/(2n)}.
\end{equation}
This holds for every integer $n\ge1$.
\end{lemma}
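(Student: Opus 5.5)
The plan is the polynomial method combined with an XOR lemma for dual witnesses. The heuristic target is a per-block bias of at most $\mathcal{O}(t_k^2/m)$ when block $k$ receives $t_k$ queries. These biases multiply across independent blocks, and then AM--GM over $\sum_k t_k\le q$ gives \eqref{hsl:eq:searchbound}.

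First, I would replace the one-hot promise by a form in which degree is meaningful block by block. Symmetrising within each block over permutations of even positions and, separately, of odd positions preserves $\sigma$ and the uniform distribution on $X$. Since the oracle is a membership oracle, after $q$ queries the acceptance probability is a polynomial of degree at most $2q$ in the indicator variables $x_{k,a}=[a=j_k]$. I would lift each block to a larger domain on which the symmetrised polynomial becomes a polynomial in Hamming weights $(w^{\rm e}_k,w^{\rm o}_k)$ with $w^{\rm e}_k+w^{\rm o}_k\le m$. On this domain the polynomial keeps degree at most $2q$. The distinction between even and odd marked positions then becomes the task of separating weight profiles $(1,0)$ from $(0,1)$ near the all-zero point, which is where the Chebyshev and Paturi bounds for search apply.

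Second, for a single block I would construct an explicit dual witness $\psi_1$. It should be supported on a small number of weight profiles, have $\norm{\psi_1}_1=1$, and be orthogonal to every polynomial of degree below $d=\lfloor c\sqrt{\beta m}\rfloor$. It should also have correlation greater than $\beta$ with the block parity, after restriction back to one-hot inputs, which are the only inputs where the algorithm's output is constrained. This is the dual form of the standard fact that a polynomial distinguishing the single marked item in $m$ positions with bias $\beta$ needs degree $\Omega(\sqrt{\beta m})$. I would obtain it from the extremal Chebyshev-type polynomial via linear-programming duality, keeping explicit constants; these constants are what produce $\log 2/24$. The hypothesis $m\ge 4\beta_0^{-1/n}$ is exactly what guarantees $d\ge1$ at the per-block bias $\beta_0^{1/n}$.

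Third, I would take the tensor product $\psi=\bigotimes_k\psi_{1}^{(k)}$ with block degrees $d_k$. A monomial has nonzero inner product with $\psi$ only if its degree in every block $k$ is at least $d_k$, so its total degree is at least $\sum_k d_k$. The correlation of $\psi$ with $\sigma$ is the product of the per-block correlations, and $\norm{\psi}_1=1$. Hence any polynomial of degree $2q<\sum_k d_k$ that is bounded on the domain has correlation with $\sigma$ at most $\prod_k\beta_k$ plus the contribution from the symmetrisation slack. Optimising the split of per-block biases subject to $\prod_k\beta_k=\beta_0$ gives $q\gtrsim\sum_k\sqrt{\beta_k m}\ge n\sqrt m\,\beta_0^{1/(2n)}$.

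The main obstacle is the first step. On the raw one-hot domain each block supports only degree-one monomials, so a naive tensorisation gives nothing beyond $q\ge n/2$. The lifting must also keep the algorithm's polynomial bounded off the one-hot promise, because the dual witness is tested against that polynomial there. I would first try to keep the duals supported on weights at most $\mathcal{O}(1)$ per block, and use the fact that a membership-oracle algorithm's acceptance probability is a genuine probability on every Boolean input of bounded weight. This is the standard resolution of the promise issue for search-type problems. If that fails, I would fall back to a hybrid (BBBV-style) argument applied block by block, combined with a multiplicative adversary to obtain the product of per-block biases.
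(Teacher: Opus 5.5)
\textbf{Where the proof breaks.} The gap is your third step, and that step carries the whole content of the lemma. Take a tensor product $\psi=\bigotimes_k\psi^{(k)}$ of duals with $\norm{\psi^{(k)}}_1=1$, orthogonal to degree below $d_k$, and with $\langle\psi^{(k)},\sigma_k\rangle>\beta_k$. Since $\langle\psi,\sigma\rangle=\langle\psi,\sigma-p\rangle$, such a $\psi$ certifies only that no polynomial $p$ of degree below $\sum_kd_k$ satisfies $|p-\sigma|\le\prod_k\beta_k$ on its support. That is a statement about \emph{high-accuracy} approximation. An algorithm with correlation $\beta_0$ gives a bounded $p$ with $\E_X[\sigma p]\ge\beta_0$, that is, error as large as $1-\beta_0$. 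Your certificate does not exclude this, so your ``hence'' does not follow.

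The dual form of `bias $\beta$ needs degree $\Omega(\sqrt{\beta m})$' is different. It is a splitting $\mu_k\sigma_k=\phi_k+e_k$, with $\mu_k$ uniform on one-hot strings, $\phi_k$ orthogonal to low degree and $\norm{e_k}_1<\beta_k$. In worst-case form it is a dual with correlation above $1-\beta_k$. Neither form tensorises. Products of $1-\beta_k$ decay. In $\prod_k(\phi_k+e_k)-\prod_ke_k$, the cross term $\phi_1e_2\cdots e_n$ pairs nontrivially with $\chi_1x_{a_2}\cdots x_{a_n}$ where $\deg\chi_1=d_1$. This happens because $\langle e_k,x_a\rangle=(-1)^a/m\ne0$ for $d_k\ge2$, even though that monomial has total degree $d_1+n-1<\sum_kd_k$.

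Your inference also proves too much, because it never uses how the $\beta_k$ are split. With $\Theta(m^{1/4})$ Grover iterations and a verified guess per block, one gets per-block bias $b=\Theta(m^{-1/2})$. This gives correlation $b^n$ with $\mathcal{O}(nm^{1/4})$ queries. Yet the split $\beta_k=1/2$ on half the blocks and $\beta_k=b^2$ on the rest has product below $b^n$, and would force $q=\Omega(n\sqrt m)$.

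Separately, a dual orthogonal to all polynomials of degree below a growing $d$ in the weight variables cannot be supported on $\mathcal{O}(1)$ weight profiles. Recovering the $\sqrt m$ scaling needs support out to weights of order $m$.

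\textbf{What the paper does instead.} The paper avoids any XOR lemma for polynomials by using a multiplicative adversary whose product structure is exact. On one hidden register let $u$ be uniform, $v=Zu$, $R=I-P_u-P_v$, and $w=P_u+a^2P_v+aR$. Because $Z$ interchanges $u$ and $v$ and preserves $R$, one has $w^{-1}Zw^{-1}=a^{-2}Z$. With $W=w^{\otimes n}$ this gives $\beta_0\le\bra\Psi(Z^{\otimes n}\otimes M)\ket\Psi\le a^{-2n}\norm{W\Psi}^2$, with no cross terms.

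In the Hadamard basis of the answer qubit, a membership query is a direct sum of identities and reflections $D_j=I-2\ket j\bra j$, each acting on a single register, whatever superposition of strings is queried. For $1\le a\le\sqrt m$ these satisfy $\norm{wD_jw^{-1}}\le1+12a/\sqrt m$. Since $W\Psi_0=\Psi_0$ and query-free steps commute with $W$, it follows that $\norm{W\Psi}^2\le e^{24aq/\sqrt m}$. No per-block query counts are needed. Taking $a=2\beta_0^{-1/(2n)}$, which is admissible exactly because $m\ge4\beta_0^{-1/n}$, gives $24aq/\sqrt m\ge2n\log2$, the stated bound.

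Your fallback mentions a multiplicative adversary. The missing idea that replaces your third step is this particular weight, under which the parity observable rescales by exactly $a^{-2}$ per register.
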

\begin{proof}
The following explicit multiplicative-adversary calculation uses the
weight method of
Ambainis and \v{S}palek~\cite{Ambainis,Spalek}. On the hidden-position
space define
\begin{equation}
 u=m^{-1/2}\sum_j\ket j,\quad Z\ket j=(-1)^j\ket j,
 \quad v=Zu,\quad R=I-P_u-P_v.
\end{equation}
For $1\le a\le\sqrt m$, use
\begin{equation}
 w=P_u+a^2P_v+aR,\qquad W=w^{\otimes n}.
\end{equation}
Since $Z$ interchanges $u,v$ and preserves $R$,
$w^{-1}Zw^{-1}=a^{-2}Z$.
For $D_j=I-2\ket j\bra j$, set
$e=(w/a-I)\ket j$ and $f=(aw^{-1}-I)\ket j$.
Their norms are at most $\sqrt2a/\sqrt m$, and
\begin{equation}
 wD_jw^{-1}=D_j-2(e\bra j+\ket j f^\dagger+ef^\dagger).
\end{equation}
It follows that
\begin{equation}
 \norm{wD_jw^{-1}}
 \le1+4\sqrt2a/\sqrt m+4a^2/m\le1+12a/\sqrt m.
\end{equation}
In the answer qubit's Hadamard basis, membership queries are direct
sums of identity and such $D_j$ blocks, including superpositions over
which input string is queried.

Purify the algorithm and attach a uniform hidden-input register. Its
initial weighted norm is one. Input-independent operations commute
with $W$, so after $q$ queries its weighted norm squared is at most
$e^{24aq/\sqrt m}$. A final Hermitian contraction $M$ representing the
binary output then satisfies
\begin{equation}
 \beta_0\le
 \bra\Psi(Z^{\otimes n}\otimes M)\ket\Psi
 \le a^{-2n}\norm{W\Psi}^2
 \le a^{-2n}e^{24aq/\sqrt m}.
\end{equation}
Choosing $a=2\beta_0^{-1/(2n)}$ is allowed by the hypothesis.
Taking logarithms gives $24aq/\sqrt m\ge2n\log2$, which proves
\eqref{hsl:eq:searchbound}.
\end{proof}

\subsection{A spin chain with fixed-support search gadgets}

For a one-hot string $x$ marked at $j$, let $S$ be the cyclic shift on
$\mathbb C^m$ and define
\begin{equation}\label{hsl:eq:gadget}
 U_x=S^{j+1}(2P_u-I),\qquad
 (U_x)_{ab}=2/m-x_{a-b-1\bmod m}.
\end{equation}
This is a sign-adjusted form of Patel's unitary gadget
\cite[Lemma 4]{Patel}. It is unitary, is dense for $m\ge4$, and each
entry depends on just one membership bit. On the known alternating
vector $v$, it obeys $U_xv=(-1)^jv$.

For $n\ge1$, define the public spin-chain weights and Hamiltonian
\begin{equation}\label{hsl:eq:spin}
 b_k=\frac{\sqrt{(k+1)(n-k)}}n,\qquad
 B_n=\sum_{k=0}^{n-1}b_k
       (\ket{k+1}\bra k+\ket k\bra{k+1}).
\end{equation}
On the symmetric subspace of $n$ qubits, $B_n$ is the restriction of
$n^{-1}\sum_{j=1}^n X_j$. Hence
\begin{equation}\label{hsl:eq:spinamp}
 \norm{B_n}=1,\qquad
 \bra n e^{-\ii\tau B_n}\ket0
       =(-\ii)^n\sin(\tau/n)^n.
\end{equation}
To see the amplitude directly, the commuting single-qubit evolutions
each contribute $-\ii\sin(\tau/n)$ between $\ket0$ and $\ket1$.

Decorate the $k$th forward edge with $U_{x^{(k+1)}}$ and the reverse
edge with its adjoint:
\begin{equation}\label{hsl:eq:dressed}
 H_X^{\rm path}=\sum_{k=0}^{n-1}b_k
 \left(U_{x^{(k+1)}}\otimes\ket{k+1}\bra k
       +U_{x^{(k+1)}}^\dagger\otimes\ket k\bra{k+1}\right).
\end{equation}
Set $V_0=I$, $V_{k+1}=U_{x^{(k+1)}}V_k$, and
$G=\sum_k V_k\otimes\ket k\bra k$. Then
\begin{equation}
 G^\dagger H_X^{\rm path}G=I_m\otimes B_n.
\end{equation}
This is an analysis-only gauge change, never a circuit implemented by
the reduction. Thus the norm is one, sparsity is at most $2m$, and
\begin{equation}\label{hsl:eq:parityamp}
 \bra{v,n}e^{-\ii\tau H_X^{\rm path}}\ket{v,0}
   =\sigma(X)(-\ii)^n A,
 \qquad A=\sin(\tau/n)^n.
\end{equation}

\subsubsection{Exact dyadic entries}

Whenever the ideal amplitude $A$ is positive, replace every positive
public edge weight $b_k$ by a positive dyadic $b'_k\le b_k$ with
\begin{equation}\label{hsl:eq:round}
 0\le b_k-b'_k\le\delta,
 \qquad
 \delta\le\min\left\{\tfrac12\min_k b_k,\frac{A}{16\tau}\right\}.
\end{equation}
Let $B'_n$ be the rounded path. Its norm is at most one: for every
vector $z$, entrywise nonnegativity implies
\begin{equation}
 |z^\dagger B'_n z|
 \le |z|^{\mathsf T}B'_n|z|
 \le |z|^{\mathsf T}B_n|z|
 \le\norm z^2.
\end{equation}
Also $\norm{B_n-B'_n}\le2\delta$ by row sums. Duhamel's inequality gives
\begin{equation}\label{hsl:eq:roundevolution}
 \norm{e^{-\ii\tau B_n}-e^{-\ii\tau B'_n}}
       \le2\tau\delta\le A/8.
\end{equation}
The same holds for the decorated chains by the common gauge $G$.
All edges remain nonzero. If $m$ is a power of two, their entries
$b'_k(2/m-x_a)$ are exact dyadic numbers. No unknown normalisation or
global property of the input is evaluated by a value query.

Append an isolated zero-energy reference vertex $r_0$ and an isolated
energy-one vertex. The full Hamiltonian has norm exactly one and
maximum row/column Euclidean norm exactly one. Location lists and
their in-place permutation extensions are public, since each genuine
edge block is dense with input-independent support. Inverse and
controlled location calls also reveal no hidden information. A value
call, including its inverse or controlled form, needs at most two
Boolean queries: compute the relevant membership bit, write the
appropriate known value, and uncompute the bit.

\subsubsection{Reading the amplitude as a correlation}

Use the known input and observable
\begin{equation}\label{hsl:eq:readout}
 \ket{\psi_{\rm in}}=\frac{\ket{r_0}+\ket{v,0}}{\sqrt2},\qquad
 M=\ii^n\ket{r_0}\bra{v,n}
       +(-\ii)^n\ket{v,n}\bra{r_0}.
\end{equation}
The observable has norm one. Before rounding its expectation is
$\sigma(X)A$; after rounding the parity-corrected expectation is at
least $7A/8$. This follows directly from
\eqref{hsl:eq:roundevolution} applied to the endpoint matrix element.
An $\eps$-accurate output in trace distance changes the expectation by
at most $2\eps$. Measuring $M$ and returning a fair sign on its zero
eigenspace gives a binary output with that expectation. No controlled
simulation or global-phase convention for the simulator is needed.

\begin{lemma}[Mixed lower bound]\label{hsl:lem:mixed}
Let $n\ge2$, $0<\tau\le n$, and let $m$ be a power of two such that
$m\ge16n/\tau$. If
\begin{equation}\label{hsl:eq:mixepsilon}
 \eps\le\frac18\left(\frac{\tau}{2n}\right)^n,
\end{equation}
the preceding family requires
\begin{equation}\label{hsl:eq:mixedlower}
 Q\ge\frac{\log2}{96}\sqrt{mn\tau}
\end{equation}
native sparse-oracle queries.
\end{lemma}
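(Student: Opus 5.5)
We reduce the parity problem of Lemma~\ref{hsl:lem:search} to simulation of the decorated, rounded spin chain with the readout \eqref{hsl:eq:readout}, and then apply \eqref{hsl:eq:searchbound} with $\beta_0$ of order $A$.

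\textbf{Step 1: the amplitude.} Since $\tau/n\le1$ and $\sin y\ge y(1-y^2/6)\ge y/2$ on $[0,1]$, we have
\begin{equation}
 A=\sin(\tau/n)^n\ge(\tau/(2n))^n .
\end{equation}
In particular $A>0$, so the dyadic rounding \eqref{hsl:eq:round} applies and the parity-corrected expectation of $M$ on the exact evolved state is at least $7A/8$. The $\eps$-accurate simulator in trace distance changes this expectation by at most $2\eps$. Hypothesis \eqref{hsl:eq:mixepsilon} gives $2\eps\le A/4$, so the measured binary output has correlation
\begin{equation}
 \beta\ge 7A/8-A/4\ge A/2
\end{equation}
with $\sigma(X)$, averaged over uniform hidden positions.

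\textbf{Step 2: query count.} Each native value query, including its inverse and controlled forms, costs at most two Boolean membership queries. Location calls are public and free. Hence a $Q$-query simulator gives a $2Q$-query Boolean algorithm with correlation at least
\begin{equation}
 \beta_0=\tfrac12(\tau/(2n))^n .
\end{equation}
This $\beta_0$ satisfies
\begin{equation}
 \beta_0^{-1/n}=2^{1/n}\,\frac{2n}{\tau}\le\frac{2\sqrt2\,n}{\tau}
\end{equation}
for $n\ge2$. The hypothesis of Lemma~\ref{hsl:lem:search} is therefore implied by $m\ge 8\sqrt2\,n/\tau$, and the assumption $m\ge16n/\tau$ is enough. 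The lemma then gives
\begin{equation}
 2Q\ge\frac{\log2}{24}\,n\sqrt m\,\beta_0^{1/(2n)} .
\end{equation}
For the last factor we compute
\begin{equation}
 \beta_0^{1/(2n)}=2^{-1/(2n)}\sqrt{\tau/(2n)}\ge2^{-1/4}\sqrt{\tau/(2n)} .
\end{equation}
Therefore
\begin{equation}
 Q\ge\frac{\log2}{48}\,2^{-3/4}\sqrt{mn\tau},
\end{equation}
and since $2^{-3/4}\ge1/2$ this yields \eqref{hsl:eq:mixedlower}.

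\textbf{Checks and main obstacle.} The instance must satisfy the promises of the lower-bound model. The gauge $G$ gives norm one and sparsity at most $2m$. The entries are exact dyadic numbers because $m$ is a power of two. The isolated reference vertices fix the norms at exactly one. The main thing to verify is the correlation bookkeeping. The readout must produce a $\pm1$ output whose conditional mean is the expectation of $M$; this holds by measuring $M$ and flipping a fair coin on its kernel. The $2\eps$ perturbation must hold for trace-distance error on this single known input, which is exactly what \eqref{hsl:eq:accuracy} provides. The rest is constant tracking.
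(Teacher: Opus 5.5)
Your proposal is correct and follows essentially the same route as the paper: you get correlation at least $A/2$ from the rounded expectation $7A/8$ minus the $2\eps$ trace-distance loss, use the $q\le 2Q$ Boolean simulation, and apply Lemma~\ref{hsl:lem:search}. The only difference is cosmetic: you take $\beta_0=\frac12(\tau/(2n))^n$ and use $n\ge2$ in the constants, whereas the paper keeps $\beta_0=A/2$ and applies $\sin(\tau/n)\ge\tau/(2n)$ inside $\beta_0^{-1/n}$ and $\beta_0^{1/(2n)}$.
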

\begin{proof}
Since $0<\tau/n\le1$, we have
$A=\sin(\tau/n)^n\ge(\tau/(2n))^n$. Thus $\eps\le A/8$ and the
readout gives correlation at least $\beta_0=A/2$.
The hypothesis of Lemma~\ref{hsl:lem:search} holds because
\begin{equation}
 4\beta_0^{-1/n}
 =\frac{4\,2^{1/n}}{\sin(\tau/n)}
 \le\frac{16n}{\tau}\le m.
\end{equation}
Moreover
\begin{equation}
 \beta_0^{1/(2n)}
 =2^{-1/(2n)}\sqrt{\sin(\tau/n)}
 \ge\frac12\sqrt{\tau/n}.
\end{equation}
The resulting Boolean algorithm uses $q\le2Q$ queries. Substituting
into \eqref{hsl:eq:searchbound} proves \eqref{hsl:eq:mixedlower}.
\end{proof}

\subsection{A precision-only lower bound}

Use the same spin chain, but replace its edge gadgets by scalar signs
$(-1)^{x_k}$ for $n$ independent Boolean bits. The Hamiltonian is now
two-sparse. A diagonal sign gauge proves the same endpoint formula,
with $\sigma(X)=(-1)^{x_1+\cdots+x_n}$. The rounding and reference
construction above still apply.

\begin{lemma}\label{hsl:lem:precision}
For $n\ge2$, $0<\tau\le n$, and error satisfying
\eqref{hsl:eq:mixepsilon}, the signed scalar chain requires
\begin{equation}\label{hsl:eq:precisionlower}
 Q\ge n/4.
\end{equation}
\end{lemma}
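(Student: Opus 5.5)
We reduce to a parity lower bound. The signed scalar chain is
\[
 H_X=\sum_{k=0}^{n-1}b'_k(-1)^{x_{k+1}}\bigl(\ket{k+1}\bra k+\ket k\bra{k+1}\bigr),
\]
together with the isolated reference vertices.

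\textbf{Step 1: endpoint amplitude.} The diagonal gauge $G=\sum_k(-1)^{x_1+\cdots+x_k}\ket k\bra k$ maps $H_X$ to $B'_n$. Hence the endpoint amplitude is $\sigma(X)$ times the rounded version of \eqref{hsl:eq:spinamp}. Exactly as in the mixed lemma, $0<\tau/n\le1$ gives $A\ge(\tau/(2n))^n$. Then \eqref{hsl:eq:mixepsilon} gives $\eps\le A/8$, and the rounding bound \eqref{hsl:eq:roundevolution} applies.

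\textbf{Step 2: readout.} Measuring the observable $M$ of \eqref{hsl:eq:readout} on the simulator output then yields a binary output. Its correlation with the parity $x_1\oplus\cdots\oplus x_n$ is at least
\[
 \tfrac78A-2\eps\ge\tfrac58A>0
\]
on every input. Hence it computes parity with bounded-away-from-half bias, and in fact with positive correlation for each $X$.

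\textbf{Step 3: query conversion.} Each value query, including inverse or controlled forms, is simulated with at most two Boolean queries to $x$: compute $x_k$, write the known signed entry, and uncompute. Location oracles are public and query-free, since the support is fixed. A $Q$-query simulator therefore gives a $2Q$-query Boolean algorithm with nonzero correlation to the parity of $n$ bits.

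\textbf{Step 4: parity lower bound.} This is the step I expect to carry the weight. Use the polynomial method. After $q$ queries, the acceptance probability is a real multilinear polynomial of degree at most $2q$ in the bits. Its correlation with $\chi(x)=(-1)^{\sum x_k}$ is the Fourier coefficient at the full set, which vanishes unless $2q\ge n$.

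Since the correlation is nonzero, $2\cdot 2Q\ge n$, giving $Q\ge n/4$. The one subtlety is the factor of two from value calls needing two Boolean queries, combined with the degree-$2q$ bound. Both are accounted for, giving exactly $n/4$. No dependence on $\eps$ beyond positivity of the correlation is needed, which is why the bound is clean.
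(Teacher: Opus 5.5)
Your proposal is correct and follows the paper's own argument: positive parity correlation from the gauged endpoint amplitude, at most two Boolean queries per value call, and the polynomial-method fact that a degree-$2q$ expectation has zero correlation with full $n$-bit parity unless $2q\ge n$, which gives $4Q\ge 2q\ge n$. The only slip is cosmetic: the bias $\tfrac58A$ is exponentially small rather than ``bounded away from half,'' but, as you note, only its positivity is used, and purification with deferred measurement covers adaptive or measuring simulators.
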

\begin{proof}
The readout has strictly positive correlation with $n$-bit parity.
After $q$ Boolean queries, each final amplitude is a multilinear
polynomial of degree at most $q$ in the input bits, after reducing
$x_k^2=x_k$. Hence the expectation of any input-independent bounded
observable has degree at most $2q$. Uniform correlation with full
parity is zero for every polynomial of degree less than $n$: every
monomial omits a bit whose parity average vanishes. Therefore $2q\ge n$.
As before, $q\le2Q$, proving \eqref{hsl:eq:precisionlower}. Randomised or
adaptive algorithms are covered by purification and deferred measurement.
\end{proof}

This is the usual factorial-precision mechanism in Hamiltonian
simulation lower bounds; see also the precision relation in
Low--Chuang~\cite[Eq. (18)]{LowChuang}. It uses arbitrarily small
positive correlation, not bounded-error computation of parity.

\subsection{A time lower bound with fixed supports}

The $\sqrt{s}\tau$ dependence at constant error is consistent with
Low's sparse-simulation lower bound~\cite{Low}. Here it can be proved
using the same gadgets, keeping all support and precision conventions
explicit.

Let $m$ be the largest power of two at most $s/2$. Since $s\ge32$,
\begin{equation}\label{hsl:eq:mrange}
 m\ge16,\qquad s/4<m\le s/2.
\end{equation}
Choose
\begin{equation}
 n=\max\{1,\lfloor2\tau/\pi\rfloor\},\qquad
 c=\min\{1,\pi n/(2\tau)\}.
\end{equation}
Use the search-decorated chain with weights $cb_k$.
For $\tau\ge\pi/2$, its endpoint amplitude is exactly one because
$c\tau/n=\pi/2$. For $1\le\tau<\pi/2$, we have $n=1$, $c=1$, and
the amplitude is $\sin\tau\ge\sin1>1/2$.
Round the positive scaled weights down as in \eqref{hsl:eq:round}, using
this positive amplitude. Their norm remains at most one and the
readout correlation after rounding and simulation error is at least
\begin{equation}
 \frac78\cdot\frac12-\frac2{64}
 =\frac{13}{32}>\frac14.
\end{equation}
Lemma~\ref{hsl:lem:search} with $\beta_0=1/4$ applies because
$4\beta_0^{-1/n}\le16\le m$. Using
$\beta_0^{1/(2n)}\ge1/2$ and $q\le2Q$ gives
\begin{equation}
 Q\ge\frac{\log2}{96}n\sqrt m.
\end{equation}
Finally $n\ge\tau/\pi$ and \eqref{hsl:eq:mrange} imply
\begin{equation}\label{hsl:eq:timelower}
 Q\ge c_T\sqrt{s}\tau,
 \qquad c_T=\frac{\log2}{192\pi}.
\end{equation}

\subsection{Uniform parameter choice and proof of the main lower bound}

First suppose $L\ge64\tau$. Set
\begin{equation}\label{hsl:eq:nchoice}
 n=\lfloor r/4\rfloor.
\end{equation}
For $z=L/\tau\ge64$, elementary calculus gives
$\log(e+z)\le z/8$: it holds at $64$, and the derivative of the left
side is at most $1/8$ thereafter. Thus $r\ge8\tau$, whence
\begin{equation}\label{hsl:eq:nrange}
 n\ge2,\qquad n\ge r/8\ge\tau,\qquad n\le r/4.
\end{equation}
Since $h\ge1$ and $2n/\tau\le z/(2h)\le e+z$,
\begin{equation}\label{hsl:eq:precisionchoice}
 n\log(2n/\tau)\le nh\le L/4.
\end{equation}
It follows that
\begin{equation}
 \left(\frac{\tau}{2n}\right)^n\ge e^{-L/4}\ge8e^{-L}=8\eps,
\end{equation}
where the last inequality follows from $L\ge64$. Hence both
Lemmas~\ref{hsl:lem:mixed} and \ref{hsl:lem:precision} have their required
precision whenever their other hypotheses hold.

The precision family gives
\begin{equation}\label{hsl:eq:rbound}
 Q\ge n/4\ge r/32.
\end{equation}
For the mixed family, choose $m$ as in \eqref{hsl:eq:mrange}.
If $s\ge64n/\tau$, then $m>s/4\ge16n/\tau$, so
\begin{equation}\label{hsl:eq:mixrbound}
 Q\ge\frac{\log2}{96}\sqrt{mn\tau}
 \ge\frac{\log2}{384\sqrt2}\sqrt{s\tau r}.
\end{equation}
If instead $s<64n/\tau$, then $r\le8n$ gives
\begin{equation}
 \sqrt{s\tau r}<\sqrt{512}\,n=16\sqrt2\,n.
\end{equation}
In that case \eqref{hsl:eq:precisionlower} supplies
\begin{equation}\label{hsl:eq:failedmix}
 Q\ge\frac1{64\sqrt2}\sqrt{s\tau r}.
\end{equation}
Thus a mixed lower bound holds even when a separate search-decorated
family is unnecessary.

Both mixed constants in \eqref{hsl:eq:mixrbound} and \eqref{hsl:eq:failedmix},
as well as $1/32$ in \eqref{hsl:eq:rbound}, are larger than $c_T$.
Combining the three worst-case families therefore gives
\begin{equation}
 Q\ge\frac{c_T}{3}
       (\sqrt{s}\tau+\sqrt{s\tau r}+r)
 \quad\text{when }L\ge64\tau.
\end{equation}

If $L<64\tau$, then $r\le L<64\tau$. Writing
$A_T=\sqrt{s}\tau$, one has
\begin{equation}
 \sqrt{s\tau r}\le8A_T,\qquad
 r\le64\tau\le12A_T\quad(s\ge32).
\end{equation}
The sum in \eqref{hsl:eq:main} is at most $21A_T$, and
\eqref{hsl:eq:timelower} gives a coefficient $c_T/21$.
The constant $c_0=\log2/(4096\pi)$ is smaller than both $c_T/3$ and
$c_T/21$. This proves the numerical bound in Theorem~\ref{hsl:thm:main}.

\paragraph{Dimension and exact sparsity.}
The search families have dimension $(n+1)m+2$, and the scalar family
has dimension $n+3$. Their clock lengths are $\mathcal{O}(\tau+r+1)$ and
$m=\mathcal{O}(s)$. If maximum sparsity exactly $s$ is desired, append the known
block $2^{-\lceil\log_2s\rceil}\mathbf1\mathbf1^{\mathsf T}$ of size
$s\times s$. It has norm at most one, exact dyadic entries, and exactly
$s$ nonzeros per row and column. It costs no Boolean queries and is
never occupied by the input state. The isolated energy-one vertex
keeps the full norm and maximum row/column Euclidean norm exactly one.
Finally append known zeros to equalise dimensions across families or
to obtain a qubit register. All padding supports and locator extensions
remain public.

\subsection{What this establishes, and what remains open}

At $\tau=1$ the theorem becomes
\begin{equation}\label{hsl:eq:fixedtime}
 \Omega\!\left(
 \sqrt{s}+\sqrt{\frac{sL}{\log(e+L)}}
             +\frac{L}{\log(e+L)}\right).
\end{equation}
Where $s\ge L$, comparison with the new upper bound leaves at most a
factor $\sqrt{\log(e+L)}$ in the dominant precision-dependent term.
This is a factor $\sqrt{\log\log(1/\eps)}$ as $\eps\to0$.

\begin{corollary}\label{hsl:cor:power}
For any fixed $p<1/2$, a uniform bound
$\mathcal{O}(\sqrt{s}\,L^p+L)$ at $\tau=1$ is impossible.
The same is true with any fixed extra polylogarithmic factor in $s$.
\end{corollary}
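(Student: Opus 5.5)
The plan is to argue by contradiction, using the lower bound of Theorem~\ref{hsl:thm:main} at $\tau=1$, i.e.\ \eqref{hsl:eq:fixedtime}. Suppose some algorithm uses at most $C(\sqrt{s}\,L^p+L)$ queries for all $s\ge32$ and $0<\eps\le1/64$, with $p<1/2$ fixed. The bound $\Omega(\sqrt{sL/\log(e+L)})$ must hold for this algorithm, so for all admissible $s$ and $L$
\begin{equation}
 c_0\sqrt{\frac{sL}{\log(e+L)}}\le C\left(\sqrt{s}\,L^p+L\right).
\end{equation}
I will pick a sequence of parameters that makes the left side much larger than the right side.

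Take $L\to\infty$, so $\eps=e^{-L}\le1/64$ once $L\ge\log64$. Choose $s=s(L)$ so large that the additive term is negligible, for instance $s\ge L^2$; the mixed family needs $s$ large compared with $r$, and $s\ge L^2$ gives this. Then $L\le\sqrt{s}\,L^{1/2}$ up to constants. Dividing both sides by $\sqrt{s}$ gives
\begin{equation}
 c_0\sqrt{L/\log(e+L)}\le C\left(L^p+L/\sqrt{s}\right)\le C\left(L^p+1\right).
\end{equation}
Because $p<1/2$, the ratio $L^{1/2-p}/\sqrt{\log(e+L)}$ tends to infinity, which is a contradiction. Here $\tau=1$ satisfies $\tau\ge1$, and $s$ can be taken to be an integer at least $32$, so all hypotheses of Theorem~\ref{hsl:thm:main} hold.

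For the extension with an extra factor $\mathrm{polylog}(s)$, assume a bound $C(\log s)^a(\sqrt{s}\,L^p+L)$. Taking $s$ huge would make $\log s$ grow, so I will instead tie $s$ to $L$ polynomially, choosing $s=\lceil L^2\rceil$. Then $(\log s)^a=O((\log L)^a)$, and the inequality becomes
\begin{equation}
 \sqrt{L/\log(e+L)}\lesssim(\log L)^a\left(L^p+L/\sqrt{s}\right)=O\!\left((\log L)^aL^p\right),
\end{equation}
since $L/\sqrt{s}\le1\le L^p$ (taking $p\ge0$ without loss). This again fails as $L\to\infty$ because a positive power of $L$ dominates any power of $\log L$.

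The main thing to check is that the chosen scaling $s=\Theta(L^2)$ stays within the regime where the theorem's mixed term is actually witnessed. That requires $L\ge64\tau$ and the size condition $s\ge64n/\tau$ with $n\approx r/4$, and both hold for large $L$ when $s\sim L^2$. In any case Theorem~\ref{hsl:thm:main} is stated uniformly, so I only need to invoke it. The remaining points are routine: $\log(e+L/\tau)=\log(e+L)$ at $\tau=1$, and the case of negative $p$, which is covered by monotonicity in $p$.
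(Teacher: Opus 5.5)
Your proposal is correct and follows the paper's proof: take $L\to\infty$ with $s=\lceil L^2\rceil$, so the mixed term of Theorem~\ref{hsl:thm:main} at $\tau=1$ grows like $L^{3/2}/\sqrt{\log L}$ while the hypothesised bound is $\mathcal{O}(L^{1+p}+L)$, and since $\log s=\Theta(\log L)$ no fixed polylogarithmic factor in $s$ can close this polynomial gap. Dividing through by $\sqrt{s}$ is only a rescaling of the same comparison, and, as you note, no regime check is needed because the theorem is stated uniformly for $s\ge32$, $\tau\ge1$ and $\eps\le1/64$.
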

\begin{proof}
Take $L\to\infty$ and $s=\lceil L^2\rceil$. The mixed lower bound is
$\Omega(L^{3/2}/\sqrt{\log L})$, whereas the proposed expression is
$\mathcal{O}(L^{1+p}+L)$. Their ratio diverges. A fixed power of
$\log s=\Theta(\log L)$ cannot eliminate this polynomial separation.
\end{proof}

Thus exponent $1/2$ is necessary when precision dependence is described
by fixed powers of $\log(1/\eps)$. However, the stronger statement
$\Omega(\sqrt{s\tau L})$ has not been obtained. Replacing $L$ by
$L/\log(e+L/\tau)$ inside the upper bound's mixed term remains a
possible improvement.

There is a structural limitation to the present parity-amplitude
reduction. Suppose $\norm{H_X}\le1$ and each entry of $H_X$ is linear in
the membership bits of independent input blocks. Under the preceding
independent uniform input distribution,
\begin{equation}
 \E_X[\sigma(X)H_X^k]=0\quad(k<n).
\end{equation}
Every degree-$k$ monomial misses some block when $k<n$, and its
correlation with that block's balanced parity vanishes. Consequently
\begin{equation}
 \left\|\E_X[\sigma(X)e^{-\ii\tau H_X}]\right\|
 \le\sum_{k\ge n}\frac{\tau^k}{k!}
 \le e^\tau\frac{\tau^n}{n!}.
\end{equation}
Thus factorial suppression is intrinsic to this class of linear-entry
parity-amplitude encodings at short normalised time. This observation
does not exclude different reductions or different readouts, and it
is not a proof that the remaining logarithmic gap is unavoidable.

\section{Improved linear differential-equation algorithms}\label{part:ode}
\subsection{Problem, access model, and results}\label{ode:sec:main}

Consider
\begin{equation}\label{ode:eq:ode}
 \dot x(t)=A(t)x(t)+b(t),\qquad x(0)=x_0,\qquad 0\le t\le T.
\end{equation}
Every row and column of $A(t)$ has at most $s\ge1$ nonzero entries.
The input comprises a reversible entry-value oracle and in-place
row and column location oracles, including their inverses. Each list
has $s$ injective slots containing every nonzero and possibly zero
padding. Each locator is a full unitary extension of its specified
phase-one basis action on valid list inputs. Coherent controls have
constant query overhead by the controlled-access construction in Section~\ref{part:hs}.
For time-dependent inputs the oracles have a coherent time address.
Known gates, arithmetic, and classical coefficient calculations are
free in the query count. Finite-word entries specify the matrix being
solved; rounding a different intended matrix needs a separate budget.

Input and source state-preparation circuits and their inverses are
supplied, together with controls on known ancilla registers. For the
constant-coefficient result, the initial-state circuit is enabled on
the initial-source label and the forcing circuit on its complement;
both are also controlled on the source-sector flag. Each call to one
of these supplied controlled source-preparation interfaces, or its
inverse, counts as one source-preparation call. The norms
$r_0=\norm{x_0}$ and $r_b=\norm b$
are supplied exactly for this result, together with the bounds used
below. The output is a normalised quantum state, with a
constant-probability classical success flag, and $\norm{x(T)}>0$ is
assumed. It is not the complete classical solution vector.
Euclidean state-vector error is understood after choosing an irrelevant
global phase; the corresponding pure states also have trace distance
at most that error.

For the dissipative result put
\begin{equation}\label{ode:eq:split}
 G(t)=-A(t)=D(t)+\ii H(t),\qquad
 D=(G+G^\dagger)/2\succeq0,\qquad H=(G-G^\dagger)/(2\ii).
\end{equation}
The letter $D$ denotes the dissipative Hermitian part, called $L$
in Low--Somma~\cite{LS}.
Let $\eta_D,\eta_H$ be known uniform upper bounds on all row and column
Euclidean norms of $D,H$, respectively. A uniform row/column norm bound
$\eta$ on $A$ permits $\eta_D=\eta_H=\eta$ by the triangle inequality.

\begin{theorem}[Dissipative constant-coefficient equations]\label{ode:thm:main}
Let $A,b$ be time independent and let \eqref{ode:eq:split} hold. Define
\begin{equation}
 B=\norm{x_0}+T\norm b,\qquad
 g=\frac{B}{\norm{x(T)}},\qquad \bar g\ge g,
 \qquad \ell=\log\frac{10^4\bar g}{\eps},\quad 0<\eps\le\tfrac12,
\end{equation}
and
\begin{equation}\label{ode:eq:theta}
 \tau_H=T\eta_H,\qquad \tau_D=T\eta_D,\qquad
 \Theta=\tau_H+\tau_D\ell.
\end{equation}
The normalised final state can be prepared within Euclidean error
$\eps$, conditional on constant heralded success probability, using
\begin{equation}\label{ode:eq:main}
 \mathcal{O}\!\left(\bar g\left[
 \sqrt{s}\,\Theta+\sqrt{s\Theta\ell}+\ell\right]\right)
\end{equation}
matrix queries and $\mathcal{O}(\bar g)$ input/source preparation calls.
The constants are absolute, and no logarithm in $s$ or $T$ is hidden.
\end{theorem}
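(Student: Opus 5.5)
The plan is to use the linear combination of Hamiltonian simulations (LCHS) representation of Refs.~\cite{ALL,ACL,LS} and to implement every branch with the chronological Cayley transducer of Lemma~\ref{hs:lem:abstract}. For constant $A$ with $-A=D+\ii H$, $D\succeq0$, LCHS gives
\begin{equation*}
 e^{AT}=\int_{\mathbb R}\frac{f(k)}{1-\ii k}\,e^{-\ii T(H+kD)}\,dk,
\end{equation*}
for an admissible kernel $f$ whose tail decays like $e^{-c|k|^\beta}$ for some $\beta<1$. The source term is handled the same way, since $x(T)=e^{AT}x_0+\int_0^Te^{A(T-t)}b\,dt$. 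I would truncate the kernel at $|k|\le K=\mathcal{O}(\ell^{1/\beta})$. A more refined choice, as in Low--Somma, should give effectively $K=\mathcal{O}(\ell)$ with an $L^1$ weight norm $\mathcal{O}(1)$. I would then discretise by Gaussian quadrature with $\mathcal{O}(\ell)$-type node counts, and discretise the source integral into known durations $T-t_j$. The resulting operator is a weighted sum of unitaries $e^{-\ii\theta(H+k_jD)}$ applied to $x_0$ or to $b$, with weights of total magnitude $\mathcal{O}(B)$.

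The core step is a coherent selection over $(k_j,\theta)$ that uses a \emph{common} schedule, so that the quadrature branches do not multiply the query count. The Hamiltonian $H+kD=(-A+A^\dagger)/(2\ii)+k(A+A^\dagger)/(-2)$ is accessed entrywise from the $A$ oracle with the known scalar $k$, via public arithmetic inside the private Cayley rotation \eqref{hs:eq:edge}. I would build a separate edge representation for the $H$ part and the $D$ part, with moment bounds $\eta_H$ and $|k|\eta_D\le K\eta_D$. Summing the compressed second moments then gives a row-norm parameter $\eta_H+K\eta_D$, so the relevant time parameter is $\Theta=\tau_H+\tau_D\ell$. Per the paragraph on coherent parameters after Theorem~\ref{hs:thm:time}, I would fix $\gamma$, $W$, $Q$ and the filter degree from the largest branch. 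Smaller branches then obey the same filter estimate, and by Theorem~\ref{hs:thm:main} the whole selected-unitary circuit costs $\mathcal{O}(\sqrt s\Theta+\sqrt{s\Theta\ell}+\ell)$ queries with branch error $\eps/(C\bar g)$. The $\log(\bar g)$ inside $\ell$ accounts for this tighter tolerance.

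Next comes the LCU. Prepare the weight state over branches and the source-sector flag, with one call to the controlled $x_0$ or $b$ preparation, using the supplied norms $r_0$ and $r_b$. Apply the selected simulation and unprepare. This heralds $x(T)/(\mathcal{O}(B))$ with error $\mathcal{O}(\eps\norm{x(T)})$. The success amplitude is $\Omega(1/\bar g)$, so $\mathcal{O}(\bar g)$ rounds of amplitude amplification reach constant heralded success. Because $\bar g$ is only an upper bound, I would use fixed-point or randomised-round amplification, or the four-trial schedule mentioned in the formalisation scope. Each round costs one simulation and $\mathcal{O}(1)$ source calls, which gives \eqref{ode:eq:main} and $\mathcal{O}(\bar g)$ preparation calls. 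The amplified error scales by $\bar g$, and this is absorbed by the tolerance $\eps/(10^4\bar g)$.

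The main obstacle I expect is the LCHS truncation and quadrature bookkeeping. It must achieve truncation $K=\mathcal{O}(\ell)$ and $\mathcal{O}(1)$ weight $L^1$ norm without any hidden $\log T$ or $\log s$. It must also keep only the $D$ part inflated by $\ell$, so that $\Theta$ takes the stated form. A further requirement is that the error in each branch, including the ancilla leakage, is controlled in operator norm uniformly over the coherent selection.
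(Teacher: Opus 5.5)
Your overall route is the paper's. It uses Low--Somma LCHS with $c=1$, so that $|k_j|\le R=\Theta(\ell)$ and $\sum_j|c_j|<3$. It then runs a single common-schedule Cayley-transducer simulation of all selected $e^{-\ii u(H+k_jD)}$, forms a norm-weighted LCU over the initial branch and the left-endpoint source branches, and applies randomised-round amplification to the actual preparation unitary.

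The step that fails as written is sparse access to the Hermitian generators. You propose separate edge representations of $H$ and $D$ through \eqref{hs:eq:edge}. However, the construction of Section~\ref{hs:sec:entry} needs injective row and column locators for the matrix being represented. Here $H$, $D$ and $H+kD$ are supported on $\operatorname{supp}(A)\cup\operatorname{supp}(A^\dagger)$, and no locator for that union is supplied.

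Concatenating the row list and column list of $A$ for each index is not injective. A position with $A_{ij}\ne0\ne A_{ji}$ appears twice, as does a padding slot that collides with a genuine entry. Then $L^\dagger L\ne 2sI$ and the compression overcounts $K_{ij}$. The obvious workaround that stays on $\operatorname{supp}(A)$ writes $H+kD=\tfrac12[M+M^\dagger]$ with $M=(k-\ii)G$. This does have an exact edge representation with the given locators, but its moment bound is the row norm of $(k-\ii)G$, of order $R(\eta_H+\eta_D)$. That multiplies $\tau_H$ by $\ell$ and destroys exactly the separation that defines $\Theta$, including the cancellation when $D=0$.

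The missing idea is the tagged-support construction of Lemma~\ref{ode:lem:tags}. Add a tag bit so that tag $0$ enumerates $\operatorname{supp}(A)$, with the row locator in the left frame and the column locator in the right frame. Tag $1$ enumerates $\operatorname{supp}(A^\dagger)$ with the roles swapped. On tagged edges place
\[
v_{ija}=\chi_a(i,j)K_{ij}/m_{ij},\qquad m_{ij}=\chi_0(i,j)+\chi_1(i,j).
\]
Then $E_L^\dagger VE_R=K$ exactly, and $\chi_a$ kills padding. Since $\sum_a|v_{ija}|^2=|K_{ij}|^2/m_{ij}$, the compressed second moment is bounded by the row and column norms of $K$. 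This gives \eqref{ode:eq:moments} with $p=2s$ from the original oracles alone.

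Apply it once to $K_k=H+kD$, whose row and column norms are at most $\rho=\eta_H+R\eta_D$ by the triangle inequality. This also removes your step of combining two representations: $k$ enters only through known arithmetic in the private rotation, and $J$ does not depend on $k$.

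Two smaller corrections. First, a Gaussian rule in $k$ with $\mathcal{O}(\ell)$ nodes is not accurate in general, because the integrand oscillates at rate about $T\norm D$. The paper instead uses a uniform grid with spacing $h_k\le\pi/(\mathcal D/2+\log(64e^{3/2}/(15\nu)))$. The node count is irrelevant to the query bound under coherent selection, so this does not affect your conclusion. Second, amplifying the actual circuit does not multiply the error by $\bar g$, since it preserves the direction $y/\norm y$. The factor $g$ enters through normalisation, $\norm{y/\norm y-x(T)/\norm{x(T)}}\le2\norm{y-x(T)}/\norm{x(T)}\le12\nu g$. Your tolerance $\eps/(10^4\bar g)$ already covers this.
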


Since dissipative propagators are contractions, $g\ge1$.
An equivalent upper bound obtained by expanding the square root is
\begin{equation}\label{ode:eq:expanded}
 \mathcal{O}\!\left(\bar g\left[
 \sqrt{s}\tau_H+\sqrt{s\tau_H\ell}
 +\sqrt{s}(\tau_D+\sqrt{\tau_D})\ell+\ell\right]\right).
\end{equation}
In particular, when only the common bound $\eta$ is supplied, with
$\tau=T\eta$, this gives
\begin{equation}\label{ode:eq:simple}
 \mathcal{O}\!\left(\bar g\,[\sqrt{s}(\tau+\sqrt\tau)+1]\ell\right).
\end{equation}
For $\tau\ge1$ this is $\mathcal{O}(\bar g\sqrt{s}\tau\ell)$.
When $\eta_D=0$, \eqref{ode:eq:main} has exactly the improved Hamiltonian
structure, multiplied by the source/normalisation cost $\bar g$.
Zero time or a known zero generator can be handled directly without
matrix queries; these cases need not pay the displayed loose bound.

\subsection{Hamiltonian moment interface used in the proof}

Theorem~\ref{hs:thm:main} simulates a Hermitian matrix with
sparsity $p$ and row/column norm bound $\rho$ using
\begin{equation}\label{ode:eq:hs}
 \mathcal{O}\!\left(\sqrt p\,\vartheta+
 \sqrt{p\vartheta\log(1/\delta)}+\log(1/\delta)\right),
 \qquad \vartheta=\rho t,\quad 0<\delta\le1/2,
\end{equation}
queries for operator-norm error $\delta$.
Theorem~\ref{hs:thm:time} gives the same bound for time-dependent
Hamiltonians with uniform promises and coherent sampled-time access.
We use the more general moment interface in those proofs, not a presumed locator for
the union of two sparse supports.

Specifically, suppose a known-boundary isometry $J$ and a Hermitian
private operator $K_0$ satisfy
\begin{equation}\label{ode:eq:moments}
 J^\dagger K_0J=K/(\rho\sqrt p),\qquad
 J^\dagger K_0^2J\preceq I,\qquad \norm{K_0}\le\sqrt p.
\end{equation}
For $0<\gamma\le1$, set
\begin{equation}\label{ode:eq:interface}
 C_0=(\gamma K_0+\ii I)(\gamma K_0-\ii I)^{-1},\quad
 \alpha=\rho\sqrt p/\gamma,\quad
 \nu_c=(1+\gamma^2)/2,\quad
 Q=\lceil\max\{1,\gamma\sqrt p\}\rceil.
\end{equation}
The exact identity $(I-C_0)^{-1}=(I+\ii\gamma K_0)/2$ yields
\begin{equation}
 J^\dagger(I-C_0)^{-1}J=(I+\ii K/\alpha)/2,
 \quad
 J^\dagger(I-C_0)^{-\dagger}(I-C_0)^{-1}J
 \preceq\nu_c I/2.
\end{equation}
Lemma~\ref{hs:lem:abstract} costs
\begin{equation}\label{ode:eq:tradeoff}
 \mathcal{O}(W+Q\log(2/\delta)),\qquad
 W=t\alpha\nu_c
 =\frac{t\rho\sqrt p}{2}(\gamma+\gamma^{-1}),
\end{equation}
calls to $C_0,C_0^\dagger$ and their controls.
Optimising with
\begin{equation}
 \gamma=\min\{1,\max\{p^{-1/2},
                \sqrt{\vartheta/\log(2/\delta)}\}\}
\end{equation}
gives \eqref{ode:eq:hs}. This optimisation is done once from a maximum
duration and common bounds, not separately with different query counts
on coherent branches. The filter estimate
\begin{equation}\label{ode:eq:filter}
 \norm{f(D_{\rm priv})}
 \le65(1+W/Q)e^{3W/(4Q)}2^{-k}
\end{equation}
is increasing in $W$. A schedule chosen at $W_{\max}$ therefore works
for every smaller $W$, using the same work registers initialised in
their zero state.
Here $D_{\rm priv}$ is the global private contraction, unrelated to
the dissipative matrix $D$ in \eqref{ode:eq:split}.

\subsection{Exact access to Hermitian combinations}

\begin{lemma}[Two tagged copies of the original support]\label{ode:lem:tags}
Let $G=-A$ be supplied through the sparse access above. Let $K$ be
Hermitian with support
contained in $\operatorname{supp}(G)\cup\operatorname{supp}(G^\dagger)$,
whose entries are computable from $G_{ij},G_{ji}$ and known data.
If its row and column Euclidean norms are at most a known $\rho>0$,
then \eqref{ode:eq:moments} holds with $p=2s$. After a private coordinate
change, $J$ is query-free and $C_0$ has constant native sparse-query
cost. No locator for the support union is required.
\end{lemma}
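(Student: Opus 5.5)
The plan is to repeat the construction of Section~\ref{hs:sec:entry}, but on an edge space with two tags. The tag records whether a stored edge came from the sparse lists of $G$ or of $G^\dagger$. The lists of $G^\dagger$ are the lists of $G$ with rows and columns swapped, so no locator for the union of supports is needed.

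Take edge states $\ket{i,j,\theta}$ with tag $\theta\in\{0,1\}$. For $\theta=0$, define row and column lists of length $s$ as the row list of $G$ at $i$ and the column list of $G$ at $j$. For $\theta=1$, use the column list of $G$ at $i$ as the row list of $G^\dagger$, and symmetrically for columns. Concatenating the two tags gives lists $L,R$ of length $p=2s$ with $L^\dagger L=R^\dagger R=2sI$. Every nonzero $K_{ij}$ lies in $\operatorname{supp}(G)\cup\operatorname{supp}(G^\dagger)$, so it appears in at least one tag, and possibly in both. To avoid double counting, split the weight. Put $Y_{(i,j,0)}=\kappa_{ij}K_{ij}/\rho$ and $Y_{(i,j,1)}=(1-\kappa_{ij})K_{ij}/\rho$. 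Take $\kappa_{ij}=1$ if $(i,j)$ is only in $\operatorname{supp}G$, $0$ if it is only in $\operatorname{supp}G^\dagger$, and $1/2$ if it is in both. Membership is decided from whether $G_{ij}$ and $G_{ji}$ are nonzero. Both are fetched by a constant number of value queries at $(i,j)$ and $(j,i)$, so the split is computable entrywise.

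Then $L^\dagger YR=K/\rho$. With $J=(2p)^{-1/2}\binom LR$ and $K_0=\sqrt p\left(\begin{smallmatrix}0&Y\\Y^\dagger&0\end{smallmatrix}\right)$, the first identity of \eqref{ode:eq:moments} follows exactly as in \eqref{hs:eq:moments}.

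The main obstacle is the second-moment bound, because splitting a weight across two tags changes the row sums of squares. Compute $J^\dagger K_0^2J=\tfrac12(L^\dagger YY^\dagger L+R^\dagger Y^\dagger YR)$. The factor $p$ in $K_0$ cancels against the normalisation of $J$. Both terms are diagonal, since each edge state carries its own $(i,j)$. The row-$i$ diagonal entry of the first term is $\sum_j\bigl(|Y_{(i,j,0)}|^2+|Y_{(i,j,1)}|^2\bigr)=\rho^{-2}\sum_j(\kappa_{ij}^2+(1-\kappa_{ij})^2)|K_{ij}|^2\le\rho^{-2}\sum_j|K_{ij}|^2\le1$, because $\kappa^2+(1-\kappa)^2\le1$ for $\kappa\in[0,1]$. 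The column term is bounded the same way. Padding entries of a list contribute zero. Hence $J^\dagger K_0^2J\preceq I$.

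For $\norm{K_0}\le\sqrt p$, note each edge coordinate of $Y$ has modulus at most $|K_{ij}|/\rho\le1$, since a single entry is bounded by its row norm. Because $K_0$ is a direct sum of $2\times2$ blocks with off-diagonal entry $\sqrt p\,Y_e$, its norm is at most $\sqrt p$.

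Implementation follows Section~\ref{hs:sec:entry}. The tag-and-orientation-selected locator $F$ applies the row or column locator of $G$, with register roles swapped when $\theta=1$. This makes $J=FJ_{\rm free}$ with $J_{\rm free}$ query-free. The edge Cayley rotation \eqref{hs:eq:edge} with $z=\gamma\sqrt p\,Y_e$ is then computed reversibly from $G_{ij},G_{ji}$ and the tag, and uncomputed afterwards. This gives $C_0$ a constant number of native queries, and the controls are handled as in Section~\ref{hs:sec:control}.
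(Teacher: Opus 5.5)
Your proposal is correct and is essentially the paper's proof. Your weights $\kappa_{ij}$ and $1-\kappa_{ij}$ are exactly the paper's $\chi_a(i,j)/m_{ij}$, your two-tag lists are its frames $E_L,E_R$, and your bound $\kappa^2+(1-\kappa)^2\le1$ is its identity $\sum_a|v_{ija}|^2=|K_{ij}|^2/m_{ij}$. The only thing to tidy is that $\kappa_{ij}$ is left undefined off both supports; setting it to $0$ there is harmless, since $K_{ij}=0$ on such pairs.
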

\begin{proof}
Write the supplied row lists as $r_i(l)$ and column lists as $c_j(l)$,
$1\le l\le s$. In the edge space with basis $\ket{i,j,a}$, $a\in\{0,1\}$,
define the unnormalised frames
\begin{align}
 E_L\ket i&=\sum_{l=1}^s
       (\ket{i,r_i(l),0}+\ket{i,c_i(l),1}),\\
 E_R\ket j&=\sum_{l=1}^s
       (\ket{c_j(l),j,0}+\ket{r_j(l),j,1}).
\end{align}
Then $E_L^\dagger E_L=E_R^\dagger E_R=pI$, $p=2s$.
For each pair $(i,j)$ set
\begin{equation}
 \chi_0(i,j)=\boldsymbol1_{G_{ij}\ne0},\qquad
 \chi_1(i,j)=\boldsymbol1_{G_{ji}\ne0},\qquad
 m_{ij}=\chi_0(i,j)+\chi_1(i,j).
\end{equation}
Let $V$ be diagonal on edges with value
\begin{equation}\label{ode:eq:tagvalue}
 v_{ija}=\begin{cases}
 \chi_a(i,j)K_{ij}/m_{ij},&m_{ij}>0,\\
 0,&m_{ij}=0.
 \end{cases}
\end{equation}
Every nonzero tagged edge appears in both relevant frames. A zero
padding entry is killed by $\chi_a$, even if it occurs in only one
frame. Thus
\begin{equation}
 E_L^\dagger VE_R=K,\qquad
 \sum_a v_{ija}=K_{ij},\qquad
 \sum_a|v_{ija}|^2=|K_{ij}|^2/m_{ij}\quad(m_{ij}>0).
\end{equation}
In particular $E_L^\dagger VV^\dagger E_L$ and
$E_R^\dagger V^\dagger VE_R$ are diagonal and at most $\rho^2I$.
Also $|v_{ija}|\le\rho$. Set
\begin{equation}
 J=\frac1{\sqrt{2p}}\binom{E_L}{E_R},\qquad
 K_0=\frac{\sqrt p}{\rho}
       \begin{pmatrix}0&V\\V^\dagger&0\end{pmatrix}.
\end{equation}
Direct multiplication proves all three assertions in
\eqref{ode:eq:moments}; Hermiticity of $K$ supplies the first one.

For implementation let $F$ apply the appropriate original locator,
selected by the frame orientation and the tag, with known register
routing. In the $E_L$ orientation, tag zero selects the row locator
and tag one the column locator; in the $E_R$ orientation the choices
are reversed. Since the supplied locators are full unitaries, $F$ is
a unitary on the full private space, not just on valid list inputs.
We have $J=FJ_{\rm free}$, where $J_{\rm free}$ prepares a uniform
orientation, tag, and list index. Conjugating private coordinates by
$F^\dagger$ makes the boundary isometry query-free.

On a tagged edge put $z=\gamma\sqrt p\,v_{ija}/\rho$. The Cayley
unitary in the two frame orientations is exactly
\begin{equation}
 \frac{(|z|^2-1)I+2\ii
       \begin{pmatrix}0&z\\\overline z&0\end{pmatrix}}{1+|z|^2}.
\end{equation}
Compute the two original matrix values, calculate $v_{ija}$, perform
this known entry-controlled rotation, and uncompute. This uses a
constant number of value calls. Conjugation by $F,F^\dagger$ and
coherent control add only a constant number of locator calls.
\end{proof}

Apply the lemma to $K_k=H+kD$. Its entries are
\begin{equation}\label{ode:eq:Kentry}
 (K_k)_{ij}=
 \frac{k(G_{ij}+\overline{G_{ji}})
       +(G_{ij}-\overline{G_{ji}})/\ii}{2}.
\end{equation}
The identity
$H+kD=\tfrac12[(k-\ii)G+((k-\ii)G)^\dagger]$ and its use for
LCHS block encodings appear in Pocrnic--Johnson--Katabarwa--Wiebe
\cite{PJKW25} and Low--Somma~\cite[Section 1.1.5]{LS}.
The tagged-support realisation here uses only the original row and
column locators and retains separate norm bounds for $H$ and $D$.

For $|k|\le R$, a common row/column norm bound is
\begin{equation}\label{ode:eq:rho}
 \rho=\eta_H+R\eta_D.
\end{equation}
The tag construction preserves cancellations in $H$ and $D$, including
$D=0$. The coefficient $k$ controls only known arithmetic and rotations;
the number and types of oracle calls remain fixed.

\subsection{LCHS and the complete state-preparation bound}

Let
\begin{equation}\label{ode:eq:nu}
 \nu=\frac{\eps}{10^4\bar g},\qquad
 \beta^2=1+\log\frac{1+1/(2\pi)}{\nu},\qquad R=2\beta^2.
\end{equation}
Here $\beta$ is Low--Somma's kernel-width parameter; it is distinct
from the transducer rescaling $\gamma$. Their Theorems 2 and 3\cite{LS},
with $c=1$ and both errors equal to $\nu$, give for every duration
$u\in[0,T]$
\begin{equation}\label{ode:eq:lchs}
 \left\|e^{-Gu}-\sum_j c_j e^{-\ii u(H+k_jD)}\right\|\le2\nu,
 \qquad |k_j|\le R,\quad \sum_j|c_j|<3.
\end{equation}
The chosen errors are within their stated validity ranges. One common
uniform grid is obtained by taking $R/h_k$ integer and
\begin{equation}\label{ode:eq:grid}
 h_k\le\frac{\pi}{\mathcal D/2+\log(64e^{3/2}/(15\nu))},
 \qquad \mathcal D\ge\int_0^T\norm{D(v)}\,dv.
\end{equation}
For example $\mathcal D=T\sqrt{2s}\eta_D$ suffices.
Every source subinterval has a no larger integral norm. The kernel
and quadrature coefficients can consequently be chosen once.
Their coefficient bound follows from
\begin{equation}
 \sum_j|c_j|\le e\,\operatorname{erfc}(1/(2\beta))
       +\nu/(1+2\pi)+\nu e^{-(\norm{D}_{L^1}+1)/2}<3.
\end{equation}

Lemma~\ref{ode:lem:tags} and \eqref{ode:eq:hs} simulate all selected evolutions
with operator-norm error at most $\nu$, using one common schedule of
length
\begin{equation}\label{ode:eq:selectedcost}
 \mathcal{O}\!\left(\sqrt{s}\,T\rho+
       \sqrt{sT\rho\log(2/\nu)}+\log(2/\nu)\right).
\end{equation}
Changing the duration, its sign when needed, or the parameter $k_j$
does not change this schedule. The norm of a direct-sum error is the
maximum branch error, so coherent superposition causes no node-count
factor. LCU then increases the propagator error by at most $3\nu$.
Padding the coefficient sum to exactly three with two cancelling
identity branches gives a block encoding $\widetilde U(u)/3$ with
\begin{equation}\label{ode:eq:properror}
 \norm{\widetilde U(u)-e^{Au}}\le5\nu.
\end{equation}
Since $R=\Theta(\ell)$ and $\log(2/\nu)=\Theta(\ell)$,
$T\rho=\Theta(\tau_H+\tau_D\ell)$ whenever this quantity is nonzero.

For constant forcing use Duhamel's formula
\begin{equation}
 x(T)=e^{AT}x_0+\int_0^T e^{A(T-v)}b\,dv.
\end{equation}
An $m$-point left-endpoint rule has vector error at most
\begin{equation}\label{ode:eq:sourcequad}
 \frac{T^2\norm A\norm b}{2m}.
\end{equation}
This follows by differentiating the integrand and using contractivity.
Choose $m$ so that \eqref{ode:eq:sourcequad} is at most $\nu B$; the bound
$\norm A\le\sqrt{s}\eta$ may be used. Neither $m$ nor its inverse
precision dependence multiplies the matrix-query count. If only
$\eta_D,\eta_H$ are supplied, take $\eta=\eta_D+\eta_H$ here.
The source labels
select durations in the same coherent simulation.

Norm-weighted LCU of the initial branch and the $m$ source branches,
using \eqref{ode:eq:properror}, prepares a unitary circuit whose successful
component is $y/(3B)$ and satisfies
\begin{equation}\label{ode:eq:vectorerror}
 \norm{y-x(T)}\le6\nu B.
\end{equation}
The two branch weights are $\norm{x_0}$ and $T\norm b$, and the source
labels have equal weights. Zero initial or source terms are omitted.
The standard normalisation inequality gives
\begin{equation}
 \left\|\frac y{\norm y}-\frac{x(T)}{\norm{x(T)}}\right\|
 \le\frac{2\norm{y-x(T)}}{\norm{x(T)}}
 \le12\nu g<\eps/2.
\end{equation}
Also $\norm y/(3B)\ge1/(4g)\ge1/(4\bar g)$.
This amplitude is at most $(1+6\nu)/3<1/2$.
Amplitude amplification with a randomised iteration count bounded by
$\mathcal{O}(\bar g)$ and a constant number of heralded attempts gives constant
success probability. It uses $\mathcal{O}(\bar g)$ invocations of this preparation
unitary and its inverse, hence $\mathcal{O}(\bar g)$ input/source calls.

There is no extra accuracy factor $\bar g$ from repeatedly comparing
Grover circuits. Amplification is applied to the actual implemented
unitary and preserves its actual successful direction $y/\norm y$.
Its approximation error was already bounded in \eqref{ode:eq:vectorerror}.
This avoids an unnecessary precision logarithm from fixed-point
amplification. Remaining gate errors can be allocated within the unused
error budget. Multiplying \eqref{ode:eq:selectedcost} by $\mathcal{O}(\bar g)$ proves
Theorem~\ref{ode:thm:main}.

\subsection{Time-dependent generators and forcing}\label{ode:sec:time}

\begin{theorem}\label{ode:thm:time}
The bound \eqref{ode:eq:main} also holds for time-dependent dissipative
$A(t)$ under uniform bounds $s,\eta_D,\eta_H$. Replace $B$ by
\begin{equation}
 B=\norm{x_0}+\int_0^T\norm{b(t)}\,dt.
\end{equation}
Assume coherent time-indexed sparse oracles, known regularity bounds
sufficient to certify the time discretisation below, and a certified
source quadrature with coherent norm-weighted source preparation and
total coefficient norm $\mathcal{O}(B)$. These source-preparation calls are
counted separately, as in Theorem~\ref{ode:thm:main}.
\end{theorem}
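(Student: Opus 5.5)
The plan is to repeat the proof of Theorem~\ref{ode:thm:main}, replacing each constant-coefficient ingredient by its time-dependent version. The three replacements are: the time-dependent LCHS identity, the chronological simulation of Theorem~\ref{hs:thm:time}, and a certified source quadrature. The aim is to show that none of them changes the matrix-query schedule.

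\textbf{LCHS for time-ordered propagators.} For $0\le v\le u\le T$, the results of Low--Somma~\cite{LS} apply to time-ordered propagators. With $\mathcal D\ge\int_0^T\norm{D(v)}\,dv$ they give
\[
\Bigl\|\mathcal T e^{-\int_v^u G}-\sum_j c_j\,U_{H+k_jD}(u,v)\Bigr\|\le2\nu .
\]
The nodes $k_j$, coefficients $c_j$, and grid \eqref{ode:eq:grid} are the same as before, with $|k_j|\le R=\Theta(\ell)$ and $\sum_j|c_j|<3$. Each subinterval has integrated dissipation at most $\mathcal D$, so one kernel and quadrature serve the initial branch and every source branch.

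\textbf{Simulating the selected evolutions.} For each $k$, Lemma~\ref{ode:lem:tags} applies samplewise to $K_k(t)=H(t)+kD(t)$ with $p=2s$ and $\rho=\eta_H+R\eta_D$. The time-indexed locators and value oracles give a coherent time-indexed Cayley unitary $C_{0,r}$ of constant query cost; $k$ enters only the known rotation. The interface \eqref{hs:eq:interface} holds uniformly in $r$ and $k$ with common $Q$ and $\nu_c$.

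Theorem~\ref{hs:thm:time}'s chronological argument then gives operator-norm error $\nu$ for every branch using one common schedule of length \eqref{ode:eq:selectedcost}. The schedule is fixed by the maximal duration $T$ and the maximal $|k|=R$. Different start times, durations, and the parameter $k_j$ are coherently selected public parameters, as allowed by the remark on coherent parameters after that theorem. Monotonicity of \eqref{ode:eq:filter} in $W$ shows that shorter branches are also covered.

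The mesh for each branch is certified by \eqref{hs:eq:timedisc}. Here the assumed regularity bounds supply the Lipschitz constant $\beta$ for $K_k$; a natural choice is $\beta_H+R\beta_D$. The representation error $\delta_{\rm val}$ is also taken from those bounds. Both enter only through $M$, not through the query count.

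\textbf{Sources and amplification.} The supplied certified quadrature replaces the left-endpoint rule. It gives
\[
x(T)\approx \mathcal T e^{-\int_0^TG}x_0+\sum_i w_i\,\mathcal T e^{-\int_{t_i}^TG}\,b(t_i)
\]
with error at most $\nu B$ and total weight $\mathcal O(B)$. Norm-weighted LCU over the branches, combined with \eqref{ode:eq:properror}, gives $\norm{y-x(T)}=\mathcal O(\nu B)$ and success amplitude $\Omega(1/\bar g)$. The normalisation inequality and amplitude amplification then finish exactly as in Theorem~\ref{ode:thm:main}, with $\mathcal O(\bar g)$ rounds.

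\textbf{Main obstacle.} The hardest step is justifying the time-dependent LCHS bound with the same grid and coefficient sum. It requires checking that Low--Somma's error analysis depends only on $\int\norm{D}$ and on the kernel, and not on commutation. A second point is that the coherent selection over start times must be realised by the single time-indexed oracle with shifted addresses. I would handle this with the query-free time-address arithmetic, restricting each branch to the sub-mesh $[t_i,T]$ and padding the unused earlier slots with the identity by setting $w_r=0$.
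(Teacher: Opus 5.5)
Your proposal is correct and follows essentially the same route as the paper. The ingredients match: time-ordered Low--Somma LCHS on the common grid, then Lemma~\ref{ode:lem:tags} applied samplewise inside the chronological argument of Theorem~\ref{hs:thm:time}, with one schedule fixed at $W\le T\alpha\nu_c$ and Lipschitz constant $\Lip(H)+R\Lip(D)$, and finally a certified source quadrature with error $\mathcal{O}(\eps B/\bar g)$ followed by the same normalisation and amplitude-amplification step. Your zero-padding of pre-start slots is also a valid way to realise the coherent start times that the paper handles through the uniform-schedule remark of Section~\ref{hs:sec:time}: setting $w_r=0$ makes $G_r=I$ and leaves a decoupled unitary private block with resolvent norm at most $\tfrac43Q$.
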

\begin{proof}
Low--Somma's formula and common grid already allow time ordering on
any source subinterval. We apply the chronological moment-interface
argument proved in Theorem~\ref{hs:thm:time}, using the tagged
construction rather than a location oracle for the union of supports.

For $K_k(u)=H(u)+kD(u)$, $|k|\le R$, the uniform bounds give
\begin{equation}
 \norm{K_k(u)}\le a_*:=\sqrt{2s}\rho,\qquad
 \Lip(K_k)\le\Lip(H)+R\Lip(D).
\end{equation}
With $h_r=t_{r+1}-t_r$, the Cayley-product error from
\eqref{hs:eq:timedisc} is, for exact samples,
\begin{equation}\label{ode:eq:timedisc}
 \left\|C_{n-1}\cdots C_0-
       \mathcal T e^{-\ii\int K_k(u)\,du}\right\|
 \le\sum_r\left[\frac{(a_*h_r)^3}{12}
                    +\frac{\Lip(K_k)h_r^2}{2}\right].
\end{equation}
For represented samples, add the certified term
$\sum_r h_r\norm{\widetilde K_{k,r}-K_k(t_r)}$ as in that equation.
The represented samples must retain the stated uniform norm and moment
promises.
Choose a common slot count to fit this bound within the Hamiltonian
error budget and ensure $w_r=\alpha h_r\le1/4$. Include known
breakpoints for piecewise Lipschitz inputs. Coherent start times and
durations use the supplied time address and query-free arithmetic,
with the uniform schedule described in Section~\ref{hs:sec:time}.

At every sample, Lemma~\ref{ode:lem:tags} gives the same moment bounds
with common $\rho,\gamma,Q,\nu_c$, although the private operator and
locator coordinate change vary. The boundary in those coordinates is
the same known $J_{\rm free}$, uniformly in $k$.
Equations~\eqref{hs:eq:timecatalyst} and~\eqref{hs:eq:timeresolvent}
therefore apply with $W=\nu_c\sum_rw_r\le T\alpha\nu_c$.
The same contour filter and finite-reuse construction give
\eqref{ode:eq:selectedcost}. Thus grid refinement changes free gates,
but neither $W$ nor the query count; no commutation or fixed support
pattern is required.

Finally choose the certified source-quadrature error as
$\mathcal{O}(\eps B/\bar g)$, and apply the same vector normalisation and actual
amplitude-amplification argument. This proves the result without an
additional time-dependence logarithm.
\end{proof}

\subsection{Improved block encoding for the general ODE route}

There is also a useful improvement without dissipativity. The earlier
general ODE construction used a spectral-norm-normalised block encoding
obtained from short-time simulation followed by an arcsine polynomial.
Expanding that polynomial first avoids multiplying its degree by the
cost of a high-precision simulation.

\begin{lemma}[Fourier implementation of the arcsine polynomial]\label{ode:lem:block}
Let $A$ have the original sparse access, with a known spectral bound
$\norm A\le\Lambda$, $\Lambda>0$. For $0<\delta\le1/2$, a block
encoding whose successful block approximates $A/(2\pi\Lambda)$ to
operator error $\delta$ uses
\begin{equation}\label{ode:eq:blockcost}
 \mathcal{O}(\sqrt{s}\log(2/\delta))
\end{equation}
native matrix queries. No dissipativity or normality is required.
\end{lemma}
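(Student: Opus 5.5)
The plan is to reduce to Hermitian simulation through a Hermitian dilation, and to implement the arcsine correction as a \emph{Fourier} series in the simulated evolution rather than as a polynomial in a block encoding.

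\textbf{Dilation.} Put $H_A=\left(\begin{smallmatrix}0&A\\A^\dagger&0\end{smallmatrix}\right)$. This matrix is Hermitian and has at most $s$ nonzeros per row and column. Its native sparse oracles are obtained from those of $A$ with a constant number of queries, using index swapping and conjugation as in the proof of Corollary~\ref{hs:cor:unitary}. Since $\norm{H_A}=\norm A\le\Lambda$, the supplied spectral bound can serve as $\eta=\Lambda$, as Theorem~\ref{hs:thm:main} permits. The compression of $H_A$ from the lower flag block to the upper flag block is $A$. It therefore suffices to block-encode $H_A/(2\pi\Lambda)$ to error $\delta$ and then compress with a known flag projection. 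No normality of $A$ is used anywhere.

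\textbf{Fourier form of the arcsine.} Let $y\in[-1,1]$ denote an eigenvalue of $H_A/\Lambda$ and put $x=y/2$, so $|\sin x|\le\sin(1/2)<1/2$. The target is the identity
\begin{equation}
 x=\arcsin(\sin x)=\sum_{j\ \mathrm{odd}}a_j\sin^jx ,\qquad a_j\ge0 .
\end{equation}
Truncating at degree $d=\mathcal{O}(\log(2/\delta))$ leaves a uniform error $\mathcal{O}(2^{-d})$, because $|\sin x|\le 1/2$ gives geometric decay of the tail. Each power expands as
\begin{equation}
 \sin^jx=(2\ii)^{-j}\sum_{q=0}^j\binom jq(-1)^{j-q}e^{\ii(2q-j)x},
\end{equation}
and the absolute values of these coefficients sum to exactly one. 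Hence the truncated series equals $\sum_{|m|\le d}c_me^{\ii mH_A/(2\Lambda)}$ with $\sum_m|c_m|\le\sum_ja_j=\arcsin(1)=\pi/2$. I will pad this coefficient sum to exactly $\pi$ with two cancelling identity branches. An LCU over $m$ then has successful block approximately $(H_A/(2\Lambda))/\pi=H_A/(2\pi\Lambda)$, which is the required scaling.

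\textbf{Coherent implementation and cost.} The branches $e^{\ii mH_A/(2\Lambda)}$, $|m|\le d$, are implemented by Theorem~\ref{hs:thm:main} with $\tau\le d/2$ and accuracy $\delta'=\delta/(4\pi)$. As in the remarks after Theorem~\ref{hs:thm:time}, I use one common rescaling and filter schedule fixed at the maximal duration $d/2$, with duration- and sign-controlled boundary rotations. These are query-free, and the filter bound is monotone in $W$, so the whole selection costs one schedule of
\begin{equation}
 \mathcal{O}\!\left(\sqrt s\,d+\sqrt{s\,d\log(1/\delta')}+\log(1/\delta')\right)=\mathcal{O}\!\left(\sqrt s\log(2/\delta)\right)
\end{equation}
queries. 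Coherent branch selection costs nothing extra, since the operator-norm error of a direct sum is the maximum branch error. By the full-output convention, this error includes residual ancilla amplitude. The LCU multiplies the simulation error by at most the normalisation $\pi$, and adding the truncation error gives total block error at most $\delta$.

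\textbf{Main obstacle.} The delicate point is the uniformity claim: all $2d+1$ durations, of both signs, must be driven by a single query sequence of length $\mathcal{O}(W_{\max}+Q\ell)$. The cost must not multiply by the number of branches or pick up an extra $\log d$. I would verify that the bound \eqref{ode:eq:filter} and the requirement $d\ge2W$ hold uniformly for every smaller $W$, taking $\gamma$ optimised at $\vartheta=d/2$. Negative durations are handled by controlled inverse circuits whose query counts match those of the forward circuits.
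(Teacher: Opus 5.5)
Your proposal is correct and follows the paper's proof essentially step for step. The shared steps are the Hermitian dilation $H_A$ with $X=H_A/(2\Lambda)$, the nonnegative arcsine series truncated using $\norm{\sin X}\le 1/2$, the binomial Fourier expansion of $\sin^j X$ with unit coefficient one-norm, and padding with two cancelling identity branches; your padding to normalisation $\pi$ is the paper's $p_d(y)=\frac{1}{\pi}\sum_n a_n y^{2n+1}$ padded to one. The paper likewise uses one common simulation schedule fixed at the maximal normalised duration, controlled adjoints for negative durations, and selection of the off-diagonal block at the end; your more conservative error split affects only constants.
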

\begin{proof}
Let
\begin{equation}
 H_A=\begin{pmatrix}0&A\\A^\dagger&0\end{pmatrix},\qquad
 X=H_A/(2\Lambda),\qquad \norm X\le1/2.
\end{equation}
The dilation is $s$-sparse and uses only the original locators; its
row and column norms are at most $\Lambda$. Define
\begin{equation}
 a_n=\frac{\binom{2n}{n}}{4^n(2n+1)},\qquad
 p_d(y)=\frac1\pi\sum_{n=0}^d a_n y^{2n+1}.
\end{equation}
The coefficients are positive and $\sum_{n\ge0}a_n=\pi/2$.
Since $\arcsin(\sin X)=X$, $\norm{\sin X}\le1/2$, and $a_n\le1$,
\begin{equation}\label{ode:eq:arcsintail}
 \norm{p_d(\sin X)-X/\pi}
 \le\frac1\pi\sum_{n=d+1}^\infty 2^{-(2n+1)}
 =\frac{4^{-d}}{6\pi}.
\end{equation}
Each power has the finite Fourier expansion
\begin{equation}\label{ode:eq:fourier}
 \sin^m X=(2\ii)^{-m}
   \sum_{r=0}^m(-1)^r\binom mr e^{\ii(m-2r)X}.
\end{equation}
Its uncollected coefficient one-norm is exactly one. Consequently
$p_d(\sin X)$ is an LCU of evolutions $e^{\ii jX}$,
$|j|\le2d+1$, with total coefficient norm
\begin{equation}
 C_d=\frac1\pi\sum_{n=0}^d a_n\le\tfrac12.
\end{equation}
Add two cancelling identity terms of coefficient magnitudes
$(1-C_d)/2$ to make the LCU normalisation exactly one.

Choose an integer $d\ge1$ of order $\log(2/\delta)$ so that
\eqref{ode:eq:arcsintail} is at most
$\delta/2$. Simulate all selected durations $-j/(2\Lambda)$ with one
common schedule and operator-norm error at most $\delta/2$.
The largest normalised duration is $d+1/2$, and \eqref{ode:eq:hs} gives
\begin{equation}
 \mathcal{O}\!\left(\sqrt{s}\,d+
 \sqrt{sd\log(2/\delta)}+\log(2/\delta)\right)
 =\mathcal{O}(\sqrt{s}\log(2/\delta)).
\end{equation}
Negative durations use controlled adjoints. LCU error is bounded by
the coefficient norm times the branch error, with no additional
factor $d$. The successful block approximates
$H_A/(2\pi\Lambda)$. Selecting its off-diagonal block gives the
claimed encoding of $A$.
\end{proof}

The Fourier implementation avoids multiplying a polynomial degree by
the cost of an independently high-precision simulation. To use this
primitive in another algorithm, its successful-block error must be
controlled by that algorithm's robustness argument; it is not a bound
on the distance between arbitrary unitary extensions. The next result
provides the needed perturbation argument for the general ODE route.

\subsection{General time-independent equations}\label{ode:sec:general}

For constant $A,b$ and a known positive spectral bound $\Lambda$, define
\begin{equation}
 C_A=\max_{0\le t\le T}\norm{e^{At}},\qquad
 g_{\rm path}=\frac{\max_{0\le t\le T}\norm{x(t)}}{\norm{x(T)}},
 \qquad B=\norm{x_0}+T\norm b.
\end{equation}
\begin{theorem}[General constant-coefficient equations]\label{ode:thm:general}
Assume $\norm A\le\Lambda$ and the sparse/input preparation oracles
above. Supply upper bounds $\bar C\ge C_A$, $\bar g_{\rm path}\ge
g_{\rm path}$, and $\bar g_B\ge B/\norm{x(T)}$, together with
$\Lambda,s,T,\norm{x_0},\norm b$. For $0<\eps\le1/2$, the normalised final state can be
prepared to error $\eps$ with constant heralded success probability
using
\begin{equation}\label{ode:eq:general}
 \widetilde{\mathcal{O}}\left(
 \sqrt{s}\,\bar g_{\rm path}\bar C(1+\Lambda T)\right)
\end{equation}
matrix queries. The tilde hides fixed polynomial factors in logarithms
of $2s$, $2\bar C$, $2\bar g_{\rm path}$, $2+\Lambda T$, and
$2+\bar g_B/\eps$.
\end{theorem}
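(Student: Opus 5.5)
The plan is a history-state (linear-system) construction. Lemma~\ref{ode:lem:block} supplies the block encoding of $A/(2\pi\Lambda)$ with $\mathcal{O}(\sqrt{s}\log(2/\delta))$ queries, and the linear-system solver is run on top of it, with the solver's robustness argument controlling the inexact block. First discretise time into $n=\mathcal{O}(\Lambda T\cdot\mathrm{polylog})$ steps of size $h\le1/(2\Lambda)$. On each step approximate $e^{Ah}$ by a truncated Taylor series of degree $K=\mathcal{O}(\log(\bar C\bar g_{\rm path}n/\eps)/\log\log)$, which is the standard Berry--Childs--Ostrander--Wang (BCOW) construction. Encode the resulting recurrence, including the source term $b$, as a block-bidiagonal linear system $\mathcal{L}X=\mathcal{B}$ on a clock register. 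Append $p=\mathcal{O}(n)$ idling steps that copy $x(T)$, so that the final-time block carries a constant fraction of the solution norm. The matrix $\mathcal{L}$ is built from identities, shifts and powers of $hA$. A block encoding of $\mathcal{L}$ with normalisation $\mathcal{O}(1)$ therefore follows by LCU from the encoding of $A/(2\pi\Lambda)$, rescaled by the known constant $2\pi\Lambda h=\mathcal{O}(1)$. The cost is $\mathcal{O}(K)$ uses of that encoding per call, or one coherent use if the Taylor terms are organised as a single polynomial in the block-encoded operator.

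Second, bound the condition number. The argument is the standard one: $\norm{\mathcal{L}}=\mathcal{O}(1)$, and $\norm{\mathcal{L}^{-1}}=\mathcal{O}((n+p)\bar C)$. The latter holds because each block of $\mathcal{L}^{-1}$ is a truncated propagator over a subinterval, bounded by $\bar C$ up to the Taylor-truncation error. So $\kappa=\mathcal{O}(\bar C(1+\Lambda T)\,\mathrm{polylog})$. Feed this into an optimal block-encoding quantum linear-system algorithm (e.g.\ the Costa et al.\ discrete-adiabatic/filter method). It uses $\mathcal{O}(\kappa\log(1/\delta'))$ calls to the block encoding, and each call costs $\mathcal{O}(\sqrt{s}\log(2/\delta))$ matrix queries. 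Measuring the clock in the final-time block succeeds with probability $\Omega(1/\bar g_{\rm path}^2)$, since the history-state norm is at most $\mathcal{O}((n+p)\max_t\norm{x(t)})$ against $p\norm{x(T)}$. Amplitude amplification then costs a factor $\mathcal{O}(\bar g_{\rm path})$. The product gives \eqref{ode:eq:general}, with the logarithmic factors from $K$, $\log(1/\delta)$ and the solver precision collected into the tilde. The parameter $\bar g_B$ appears only in choosing the relative precision of the source quadrature and the solver target.

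The main obstacle is robustness. The block encoding from Lemma~\ref{ode:lem:block} is only approximate in its successful block, so I must show that the linear-system solver applied to the perturbed block encoding of $\widetilde{\mathcal{L}}$ still outputs a state close to the normalised $\mathcal{L}^{-1}\mathcal{B}$. My first approach is to bound $\norm{\widetilde{\mathcal{L}}-\mathcal{L}}\le c\,K\delta$ directly. I then use $\norm{\widetilde{\mathcal{L}}^{-1}-\mathcal{L}^{-1}}\le\kappa^2\norm{\widetilde{\mathcal{L}}-\mathcal{L}}/(1-\kappa\norm{\widetilde{\mathcal{L}}-\mathcal{L}})$, choosing $\delta=\eps/\mathrm{poly}(\kappa,\bar g_{\rm path},K)$ so that this only enters logarithmically. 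This requires the solver guarantee to be stated for the exact matrix that is actually block-encoded, namely $\widetilde{\mathcal{L}}$, with condition number at most $2\kappa$, rather than for arbitrary unitary extensions. For that reason I would treat $\widetilde{\mathcal{L}}$ as the solver's true input and apply the perturbation bound only at the level of solution vectors.
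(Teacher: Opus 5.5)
Your proposal is correct, and its architecture is the paper's: a Taylor-history linear system with $\Theta(1+\Lambda T)$ steps and padded final-time copies, and an LCU block encoding of the history matrix built from the fixed unitary of Lemma~\ref{ode:lem:block}. It then uses inverse norm $\mathcal{O}(m\bar C)$, a near-optimal block-input solver, and $\mathcal{O}(\bar g_{\rm path})$ final-time extraction.

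The genuine difference is where you absorb the inexactness of the block encoding.

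The paper does it at the level of the differential equation. It sets $\widehat A=2\pi\Lambda B_U$ from the actual successful block, with $\norm{\widehat A-A}\le\delta_A\le\min\{\Lambda,1/(2\bar CT),\eps/(32\bar C^2T\bar g_B)\}$. Variation of constants and Gronwall then give $\norm{e^{\widehat At}}\le\bar Ce^{\bar C\delta_At}$ and $\norm{\widehat x(t)-x(t)}\le(\eps/16)\norm{x(T)}$. The paper renames $\widehat A$ as $A$ with constant-factor enlarged bounds. Conditioning, source preparation and extraction are therefore analysed once, for an exactly block-encoded generator in the same problem class. In particular $V$ stays a polynomial in the generator whose propagator is bounded. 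The commuting identity $V^j=e^{Ajh}(I+e^{-Ah}(V-e^{Ah}))^j$ then gives $\norm{V^j}=\mathcal{O}(\bar C)$ from $\norm{V-e^{Ah}}\le1/(2em)$ alone.

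You instead analyse $\mathcal L$ for the true $A$ and transfer to $\widetilde{\mathcal L}$ via $\norm{\widetilde{\mathcal L}^{-1}}\le2\norm{\mathcal L^{-1}}$ and the inverse-perturbation bound. Note that $\widetilde V$ does not commute with $e^{Ah}$, so this transfer is genuinely needed on your route.
\begin{itemize}
\item \emph{What it buys:} it is more generic and needs no propagator bound for $\widehat A$.
\item \emph{What it costs:} it is cruder, since the generic bound pays $\kappa^2=\Theta((m\bar C)^2)$. This loses a polynomial factor in $m$ in the admissible block error relative to the paper's $\delta_A$. The perturbation must also be carried through any $A$-dependent right-hand side and through the final-block probability. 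All of this costs only logarithms.
\item \emph{On $\bar g_B$:} the relative error also involves $\norm{\mathcal B}/\norm{x(T)}=\mathcal{O}(B/\norm{x(T)})$. Integrating the equation gives $B\le(3+\Lambda T)\max_t\norm{x(t)}$, so your choice of $\delta$ is adequate even without $\bar g_B$.
\end{itemize}
Both routes rest on the same observation: the implemented unitary is an exact block encoding of a fixed matrix, so closeness of unitary extensions is never needed.

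Several smaller points need tightening.
\begin{itemize}
\item A degree-$K$ polynomial of a block-encoded operator needs $K$ uses of the encoding, not one. Each solver call therefore costs $\mathcal{O}(K\sqrt s\log(2/\delta))$ queries; this is harmless.
\item The extraction comparison should use squared norms: $p\norm{x(T)}^2$ against $(n+p)\max_t\norm{x(t)}^2$.
\item Making ``bounded by $\bar C$ up to truncation error'' precise by telescoping needs an extra factor $1/\bar C$ in the truncation tolerance; the commuting identity avoids this.
\item No polylogarithmic factor is needed in the step count.
\end{itemize}
Finally, make sure the source enters through the Taylor-truncated term $h\phi_1(Ah)b$. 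This can be done inside the matrix, as in BCOW, or as in the paper, by preparing it with constant heralded success using $\norm{\phi_1(Ah)-I}\le e-2<1$. A low-order rule such as $x_{k+1}=Vx_k+hb$ would not do: its accumulated error is of order $Th\Lambda\bar C\norm b$. That would force the step count, and hence $\kappa$, to grow polynomially in $1/\eps$.
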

The supplied bound on $\bar C$ determines the linear-system condition
number promise; $\bar g_{\rm path}$ determines the final-state extraction
schedule. The bound $\bar g_B$ is used for accuracy choices and appears
only inside logarithms. If the actual values are supplied, they may be
used throughout. This is the propagator-dependent stability
framework of~\cite{Krovi}, combined with the new sparse-access conversion;
block-history formulations also appear in~\cite{BC}.

\begin{proof}
The case $T=0$ is direct state preparation. For $T>0$, let $B_U$ be
the actual successful block supplied by Lemma~\ref{ode:lem:block} and
define $\widehat A=2\pi\Lambda B_U$. Set its block error to at most
$\delta_A/(2\pi\Lambda)$, where
\begin{equation}\label{ode:eq:generalperturb}
 \delta_A\le\min\left\{\Lambda,\frac1{2\bar C T},
               \frac{\eps}{32\bar C^2T\bar g_B}\right\}.
\end{equation}
Then $\norm{\widehat A-A}\le\delta_A$ and $\norm{\widehat A}\le2\Lambda$.
Variation of constants and Gronwall's inequality give, for $0\le t\le T$,
\begin{equation}\label{ode:eq:generalstability}
 \begin{aligned}
 \norm{e^{\widehat A t}}&\le\bar C e^{\bar C\delta_A t},\\
 \norm{\widehat x(t)-x(t)}
   &\le\delta_A\bar C^2T e^{\bar C\delta_A T}B
    \le\frac{\eps}{16}\norm{x(T)}.
 \end{aligned}
\end{equation}
For the second line, first use $\norm{x(t)}\le\bar C B$ in the
inhomogeneous perturbation identity. These estimates preserve the
propagator and path-extraction bounds within absolute factors and
use at most $\eps/8$ of normalised-state error. The implemented
unitary is an exact block encoding of $\widehat A/(2\pi\Lambda)$;
no closeness of full unitary extensions is needed. In the remainder
of this proof, write $A$ for $\widehat A$, enlarge its supplied
spectral and stability bounds by these constant factors, and allocate
the remaining error to the history construction.

Use this fixed implemented block encoding in a block-encoded
Taylor-history linear-system solver. The same fixed unitary supplies
both the history and source-polynomial products; no sparse oracle for
the potentially dense $\widehat A$ is required. Choose
$m=\Theta(1+\Lambda T)$ with $h=T/m$ small enough that the block-encoding
normalisation times $h$ is at most one. Let $V$ be the degree-$K$ Taylor
polynomial of $e^{Ah}$, and use a block lower-bidiagonal history matrix
with diagonal $I$, subdiagonal $-V$, and $m$ extra final-state copies.
For a degree $K$ logarithmic in the stated accuracy and stability
parameters, factorial Taylor convergence and the identity
\begin{equation}
 V^j=e^{Ajh}\left(I+e^{-Ah}(V-e^{Ah})\right)^j
\end{equation}
imply $\norm{V^j}\le2\bar C$ for $0\le j\le m$.
Indeed $\norm{e^{-Ah}}\le e$ and an error
$\norm{V-e^{Ah}}\le1/(2em)$ bounds the final factor uniformly.
The history matrix has norm $\mathcal{O}(1)$ and inverse norm $\mathcal{O}(m\bar C)$:
its inverse has blocks $V^{j-i}$, with identity propagation on the
padded final segment, so a block row/column-sum estimate suffices.
An LCU block encoding of it uses $\mathcal{O}(K)$ calls to the encoding of $A$;
it does not require a sparse locator for a matrix polynomial.

The right-hand side contains $x_0$ and copies of
$h\phi_1(Ah)b$, where $\phi_1(z)=(e^z-1)/z$ has its analytic value
at zero. Its preparation has constant heralded overhead beyond its
logarithmic-degree polynomial, since
$\norm{\phi_1(Ah)-I}\le e-2<1$ for $\norm{Ah}\le1$.
More explicitly,
$\norm{\phi_1(Ah)b}\ge(3-e)\norm b$. The initial and source terms
occupy orthogonal history labels. Prepare the two branches with known
weights proportional to $\norm{x_0}$ and $\sqrt m\,h\norm b$, and use
a uniform superposition of the $m$ source labels in the latter branch.
Applying the source polynomial then has constant success probability even
without knowing $\norm{\phi_1(Ah)b}$. Its approximate polynomial has a
constant coefficient norm, and preparation can be made coherent to the
required accuracy with logarithmic amplification overhead.
The same factorial estimates control the accumulated drive error.
A near-optimal block-input linear-system algorithm~\cite{Costa,Dalzell}
costs $m\bar C$ times
polylogarithms, and final-time extraction from the padded history costs
$\mathcal{O}(\bar g_{\rm path})$. Indeed its $m$ final copies have total squared
norm $m\norm{x(T)}^2$, while the full history has squared norm at most
$(2m+1)g_{\rm path}^2\norm{x(T)}^2$, up to the allocated small errors.
This proves~\eqref{ode:eq:general}; internal matrix and
solution errors are chosen inverse-polynomially small in these displayed
parameters, so Lemma~\ref{ode:lem:block} adds only logarithmic factors.
\end{proof}

The spectral bound $\Lambda$ in~\eqref{ode:eq:general} cannot silently be
replaced by $\eta$. If only row/column norm information is supplied,
$\norm A\le\sqrt{s}\eta$ can be a loose bound. The direct dissipative
result~\eqref{ode:eq:main} retains $\eta$ and avoids this possible loss.

\subsection{Gate-efficient dissipative ODE implementation}\label{ode:sec:gates}

The query theorems above treat known circuits as free. We now count
the gates needed by the complete dissipative solver, including the
coefficient registers, the selected Hamiltonian evolutions, source
assembly and success amplification. The matrix access remains the same.
Efficient entry decoding, public arithmetic and source preparation are
additional implementation assumptions. Theorems~\ref{ode:thm:main}
and~\ref{ode:thm:time} retain their stated assumptions; the gate results
in this subsection are not included in the Lean formalisation.
The general, possibly non-dissipative algorithm of
Theorem~\ref{ode:thm:general} is outside this gate analysis.

Write $N$ for the matrix dimension and $b_{\rm data}$ for the bit
lengths of represented entries and public parameters, including the
words used to specify source labels and their times. Polynomial-cost
arithmetic below means reversible circuits whose gate and work costs
are bounded by a fixed polynomial in these lengths and the requested
number of accuracy bits. Gates internal to the supplied matrix oracles
are excluded; the costs of the chosen source-preparation circuits are
retained explicitly.

\subsubsection{Coherent time and frequency selection}

\begin{lemma}[Gate-efficient coherent selection]\label{ode:lem:gate-select}
Fix $p=2s$, the frequency interval $[-R,R]$, and the common bound
$\rho=\eta_H+R\eta_D>0$. Let $0<\delta\le1/2$, and choose
$\gamma,\alpha,\nu_c,Q$ as in \eqref{ode:eq:interface}, optimising
$\gamma$ once using $\rho T$ and $\log(2/\delta)$. Put
\begin{equation}\label{ode:eq:gate-select-schedule}
 W=\alpha T\nu_c,\qquad
 k=\left\lceil4W/Q+\log_2(2^{18}/\delta)\right\rceil,
 \qquad d=k(32Q-1).
\end{equation}
Suppose a label $z$ specifies $|k_z|\le R$ and $0\le v_z\le T$ by
known reversible arithmetic, and write $u_z=T-v_z$. In the
time-dependent case assume Lipschitz bounds with
$L_*\ge\Lip(H)+R\Lip(D)$; set $L_*=0$ in the constant case.
Let $M$ be a power of two satisfying
\begin{equation}\label{ode:eq:gate-select-mesh}
 M\ge\max\left\{2,\,4\alpha T,\,
        \frac{2(\alpha T)^{3/2}}{\sqrt\delta},\,
        \frac{8L_*T^2}{\delta}\right\}.
\end{equation}
At the addresses $v_z+r u_z/M$, the represented Hermitian samples
must satisfy the common moment bounds, and their certified total
operator-norm error, integrated over the selected interval, is at most
$\delta/16$.

Then the controlled direct sum of propagators
\begin{equation}\label{ode:eq:gate-select-target}
 \sum_z\ket z\!\bra z\otimes
 \mathcal T\exp\left[-\ii\int_{v_z}^T
                 (H(t)+k_zD(t))\,dt\right]
\end{equation}
has a clean-input implementation with full-output error at most
$\delta$, at most $6d$ private Cayley calls, and no factor equal to
the number of source labels or time slots in its gate count.
For efficiently decoded binary entries and polynomial-cost reversible
arithmetic for the public labels, times and parameters, its additional
gate count is
\begin{equation}\label{ode:eq:gate-select-gates}
 G_{\rm sel}(\delta)
 =d\,\operatorname{poly}\!\left(
    b_{\rm data},\log(N+1),\log M,\log\frac{d+1}{\delta}\right),
\end{equation}
where $b_{\rm data}$ includes the public parameter and label word lengths.
Its reusable work space $A_{\rm sel}(\delta)$ is polynomial
in the same word lengths, without the leading factor $d$. Gates internal to
supplied sparse oracles are excluded. The inverse is the adjoint of
the actual circuit and has the same resource bounds.
\end{lemma}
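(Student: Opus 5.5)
The plan is to reuse the chronological transducer of Theorem~\ref{hs:thm:time} with the compact boundary network of Theorem~\ref{hs:thm:gates}, but with a coherently selected generator $K_{k_z}=H+k_zD$ and a selected time window. For each label $z$, the private operation at slot $r$ is the tagged Cayley unitary of Lemma~\ref{ode:lem:tags}, queried at time address $v_z+ru_z/M$ with entry value $(K_{k_z})_{ij}$ computed from \eqref{ode:eq:Kentry}. The boundary rotations use $w_z=\alpha u_z/M\le\alpha T/M\le1/4$.

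\textbf{Common moment interface.} The moment bounds \eqref{ode:eq:moments} hold uniformly in $z$ and $r$ with $p=2s$ and the common $\rho$, because $|k_z|\le R$. Hence every branch has the same $Q,\nu_c$ and catalyst cost $W_z=\nu_c\alpha u_z\le W$. The label enters only through the public arithmetic producing $k_z$, the time address, and the rotation angle $c_z$, so one query pattern serves all branches. Since the filter bound \eqref{ode:eq:filter} is monotone in $W$ and $d\ge2W$, the single schedule \eqref{ode:eq:gate-select-schedule} works for every branch.

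\textbf{Error on each branch.} The controlled operator is block diagonal in $z$, so its error is the maximum branch error. On one branch, I would bound three contributions separately.
\begin{itemize}
\item The filter and amplification error is at most $2\cdot2^{-18}\cdot512\,\delta$, which is comfortably below $\delta/8$ by \eqref{hs:eq:filterbound} and the oblivious-amplification estimate.
\item The Cayley discretisation error is controlled by \eqref{ode:eq:timedisc} with $h_r=u_z/M$. Using $a_*\le\alpha\sqrt2$ (from $\gamma\le1$), the cubic term is at most $(\alpha T)^3/(6M^2)\le\delta/24$, and the Lipschitz term is at most $L_*T^2/(2M)\le\delta/16$.
\item The certified sample error contributes at most $\delta/16$.
\end{itemize}
Finite-precision synthesis takes the remaining budget, allocated per private call as $\xi=\delta/(Cd)$, exactly as in Theorem~\ref{hs:thm:gates}.

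\textbf{Gate count.} Here I would redo the dyadic factorisation \eqref{hs:eq:gate-dyadic} with a label-dependent angle $c_z$. The identity holds for each fixed $c$, so controlling each rotation $B_*(c_z^{2^j})$ on the label register suffices. The required angles come from reversible arithmetic computing $c_z^{2^j}$ by repeated squaring, giving $\log M$ polynomial-cost steps per network. The slot register then contributes only $\operatorname{poly}(\log M)$ gates, and no factor of the number of labels or of $M$ appears. Each of the $O(d)$ private calls costs $\operatorname{poly}(b_{\rm data},\log(N+1),\log M,\log((d+1)/\delta))$, covering time-address arithmetic, the two value queries, $v_{ija}$, and the rotation. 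The reuse and filter registers cost $O(d\log d)$. Summing these gives \eqref{ode:eq:gate-select-gates}, with all work reused, and the inverse is obtained by reversing the circuit.

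\textbf{Main obstacle.} I expect the hardest step to be implementing label-controlled rotation angles $c_z^{2^j}$ and the conditional-on-lower-bits chain to the required accuracy without a $\log$ factor that multiplies $d$ badly. My first attempt is to compute $\arccos(c_z^{2^j})$ into a scratch register, apply controlled rotations by binary digits, and uncompute. The accuracy would be set at $\delta/(\operatorname{poly}\cdot dm)$, so that telescoping over the $O(dm)$ rotations stays within budget, with all cost remaining polylogarithmic.
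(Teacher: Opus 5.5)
Your proposal is correct and follows essentially the same route as the paper. It uses one common schedule so that the coherent error is the maximum branch error, the equal-angle dyadic factorisation on each label with the powers $c_z^{2^j}$ obtained by reversible repeated squaring, tagged Cayley calls at the addresses $v_z+ru_z/M$, and the same mesh-driven error split.

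Two small corrections. First, since $\alpha=\rho\sqrt{2s}/\gamma$ you have $a_*=\gamma\alpha\le\alpha$, so the cubic Cayley term is at most $(\alpha T)^3/(12M^2)\le\delta/48$. With your bound $a_*\le\sqrt2\,\alpha$ it would be $\sqrt2\,\delta/24$, not $\delta/24$, although the budget still closes. Second, in your angle computation you should account, as the paper does, for the $\mathcal{O}(M\zeta)$ growth of rounding error under repeated squaring. You should also account for the square-root sensitivity of the central rotation $R(c_z^M)$ near $u_z=0$. Together these keep the word length at $\mathcal{O}(\log M+\log((d+1)/\delta))$ bits.
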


\begin{proof}
All branches use the same rescaling, filter coefficients, degree $d$,
register sizes and number of calls. Only the known rotation parameters
and the sampled-time addresses depend on $z$. Therefore the norm of
the coherent error is the maximum branch error, also in the presence
of an untouched reference.

For fixed $z$, choose $M$ equal time steps of length $u_z/M$ and set
\begin{equation}
 w_z=\alpha u_z/M,\qquad
 c_z=\frac{1-w_z/2}{1+w_z/2}.
\end{equation}
Every boundary rotation for this branch has the same angle. The exact
identity \eqref{hs:eq:gate-dyadic} consequently applies with $c=c_z$.
It implements the whole ordered boundary network using
$\mathcal{O}(\log M)$ rotations of the forms
$B_{\rm in}(c_z^{2^j})$, $B_{\rm out}(c_z^{2^j})$ and
$R(c_z^M)$, together with the prefix predicates and the two frame
changes described in Theorem~\ref{hs:thm:gates}.
These rotations are computed coherently from the label; they are not
implemented by a table with one entry per source time.

To see explicitly that the arithmetic cost is polynomial, compute
$c_z,c_z^2,c_z^4,\ldots,c_z^M$ by reversible repeated squaring,
keeping all intermediate values in $[0,1]$. With rounding error at
most $\zeta$ at each step, the error in the final power is at most
$\mathcal{O}(M\zeta)$, including the initial error in $c_z$.
The functions defining $B_{\rm in}$ and $B_{\rm out}$ are uniformly
Lipschitz on this interval. For the central rotation, the uniform
bound
\begin{equation}
 \|R(x)-R(y)\|\le C\sqrt{|x-y|},\qquad 0\le x,y\le1,
\end{equation}
for an absolute $C$, handles the endpoint $u_z=0$.
Thus obtaining each power to accuracy
$\mathcal{O}((\delta/[d(1+\log M)])^2)$ is sufficient for the
boundary rotations. Repeated squaring needs only
$\mathcal{O}(\log M+\log((d+1)/\delta))$ fractional bits, up to a
further logarithm in $1+\log M$.
Certified square roots and angle computation, followed by rotations
controlled on the angle bits, have polynomial gate cost in these
word lengths. Uncomputing the arithmetic restores its work registers.
The branch $u_z=0$ can also be disabled explicitly and implements the
identity exactly.

In the query-free coordinates of Lemma~\ref{ode:lem:tags}, one private
operation has the direct-sum form
\begin{equation}
 \bigoplus_{z,r}C_0(k_z,v_z+r u_z/M).
\end{equation}
Compute the time address and then the two original entries in
\eqref{ode:eq:Kentry}; form the tagged value \eqref{ode:eq:tagvalue},
apply its two-dimensional Cayley rotation, and uncompute. This uses
only a constant number of time-indexed locator and value calls, with
polynomial-cost decoding, arithmetic and routing. For a power-of-two source mesh with $v_i=iT/m$, the address has the
explicit form
\begin{equation}
 v_i+r(T-v_i)/M
   =T\frac{iM+(m-i)r}{mM}.
\end{equation}
Thus it is an integer address on a uniform grid of $mM$ points and is
computed by multiplication and addition on
$\mathcal{O}(\log m+\log M)$ bits. There is no source-time table.
For other efficiently computed nodes, any time-address rounding is
included in the stated represented-sample error.
The locator controls are implemented using the phase-fixed dummy
input construction of Section~\ref{hs:sec:control}, on the complete
query register including its time address. This construction uses
only a constant number of native calls and polynomially many known
gates. An external enable bit need control only the middle known
rotation in a compute-rotate-uncompute circuit. Invalid index words
are masked as in Theorem~\ref{hs:thm:gates}. Hence no free control of an
arbitrary opaque unitary is assumed.

A private operation associated with one time slot commutes with every
boundary rotation associated with a different slot. Therefore
\eqref{hs:eq:gate-transducer} remains valid after replacing
$I_M\otimes D_0$ by this slot-dependent direct sum. The coherent reuse
recurrence \eqref{hs:eq:gate-reuse-block} is unchanged: for each label
it reuses the same full chronological transducer, whose time dependence
is already contained in its private direct sum. The filter preparation
and reuse circuits cost $\mathcal{O}(d\log(d+1))$ known gates and at
most $6d$ private calls after the single oblivious amplification round.
Their coefficients do not depend on $z$ or on $M$.

For each branch, the freezing and Cayley errors in
\eqref{ode:eq:timedisc} obey
\begin{equation}
 \frac{L_*u_z^2}{2M}\le\frac\delta{16},\qquad
 \frac{(\alpha u_z)^3}{12M^2}\le\frac\delta{48}.
\end{equation}
The represented-sample contribution is at most $\delta/16$.
For $x=W/Q\ge0$,
$ (1+x)e^{3x/4}2^{-4x}\le1$. Consequently the filter estimate
\eqref{ode:eq:filter} is at most $65\delta/2^{18}<\delta/2048$,
and the full-output finite-reuse error is at most $\delta/1024$
by the amplification analysis in Section~\ref{part:hs}. The sum
of these ideal errors is less than $\delta/4$.
Allocate at most $\delta/4$ in total to approximating the private
Cayley calls and at most $\delta/4$ to all remaining known operations.
Thus accuracy $\delta/(24d)$ per private call suffices; individual
arithmetic and rotation accuracies can be chosen by dividing their
budget by the polynomial number of elementary operations. Logarithmic
single-qubit synthesis changes the cost by a polynomial in the same
word lengths. Telescoping is performed on ideal intermediate states
with clean work registers. It applies equally to the actual adjoint
circuits, as in Theorem~\ref{hs:thm:gates}, without assuming that
imperfect work registers are exactly zero. The sum of all errors is
less than $\delta$. The same argument on the direct sum proves the
coherent guarantee and \eqref{ode:eq:gate-select-gates}.
\end{proof}

\subsubsection{The complete gate bound}

Put $\nu=\eps/(10^4\bar g)$, $\ell=\log(1/\nu)$, and define
\begin{equation}\label{ode:eq:gate-q}
 \Theta=T\eta_H+T\eta_D\ell,\qquad
 q=\sqrt{s}\,\Theta+\sqrt{s\Theta\ell}+\ell.
\end{equation}
For time-dependent forcing use
$B=\norm{x_0}+\int_0^T\norm{b(v)}\,dv$ and
$g=B/\norm{x(T)}\le\bar g$. Let $U(T,v)$ be the dissipative
propagator. Choose a positive source quadrature with nodes $v_i$,
weights $w_i\ge0$, and a known normalisation
\begin{equation}
 B_m=\norm{x_0}+\sum_{i=0}^{m-1}w_i\norm{b(v_i)}.
\end{equation}
The required certificates are
\begin{equation}\label{ode:eq:gate-source-certificates}
 \left\|x(T)-U(T,0)x_0-
       \sum_{i=0}^{m-1}w_iU(T,v_i)b(v_i)\right\|\le\nu B,
 \qquad 0<B_m\le2B.
\end{equation}
The nodes are computed coherently from their labels with polynomial
cost. We require a weight-only preparation $P_w$ with amplitudes
$\sqrt{\norm{x_0}/B_m}$ and
$\sqrt{w_i\norm{b(v_i)}/B_m}$, and the phase-consistent controlled
loading of the corresponding normalised initial and source states.
Zero-weight terms are omitted. Let $G_{\rm src}(\zeta)$ bound the
gates for $P_w$, this loading, and $P_w^\dagger$, with the sum of
their implementation errors at most $\zeta$ in the clean-ancilla
convention. This count includes gates inside the chosen state-loading
circuits. Let $A_{\rm src}(\zeta)$ bound their work qubits. Their
construction uses a constant number of the supplied source-preparation
calls; those calls are also reported separately for comparison with
the query theorem. Zero-register reflections are counted separately.

\begin{theorem}[Gate-efficient dissipative equations]\label{ode:thm:gates}
Consider the constant-coefficient problem of
Theorem~\ref{ode:thm:main}, or the time-dependent dissipative problem
of Theorem~\ref{ode:thm:time} with the source certificates and circuits
specified above. Assume efficient binary entry decoding and
polynomial-cost reversible public arithmetic. For time-dependent
generators, require the Lipschitz bounds and represented-sample
certificates of Lemma~\ref{ode:lem:gate-select}; more generally, a
certificate for the same uniform meshes suffices.

Choose the common degree $d$ and a power-of-two time mesh $M$ from
that lemma with $\delta=\nu$, and set
\begin{equation}\label{ode:eq:gate-word-budget}
 \mathfrak b=b_{\rm data}+\log(N+1)+\log(M+1)+\log(m+1)
                    +\log\frac{q\bar g+1}{\eps}.
\end{equation}
There is a fixed polynomial $P$, determined by the arithmetic and
decoding implementations, such that the complete solver uses
\begin{align}
 G_{\rm ODE}
 &\le C\bar g\left[qP(\mathfrak b)+G_{\rm src}(\nu)
                                     +A_{\rm src}(\nu)\right],
      \label{ode:eq:gate-main}\\
 A_{\rm ODE}
 &\le P(\mathfrak b)+C A_{\rm src}(\nu)
      \label{ode:eq:gate-space}
\end{align}
non-matrix-oracle gates and auxiliary qubits, respectively, for an
absolute constant $C$. It retains $\mathcal{O}(\bar gq)$ primitive
matrix queries and $\mathcal{O}(\bar g)$ source-preparation calls.
The number of calls and gates is bounded in the worst case over its
random choices. Heralded success has probability at least $2/3$, and
the conditional mixed output has trace distance at most $\eps$ from
the normalised solution. In fact, every successful trial has
Euclidean error at most $\eps$, up to a global phase.

If the source gate and work costs are polynomial in $\mathfrak b$,
the total gate bound is $\bar gq\,\operatorname{poly}(\mathfrak b)$.
Thus the inverse-precision overhead is polylogarithmic at fixed input
and regularity parameters. For a constant generator set $L_*=0$.
Zero time or a known zero generator uses direct source assembly and
needs no matrix calls.
\end{theorem}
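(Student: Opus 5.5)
The plan is to assemble the solver from three layers: the coherent-selection primitive of Lemma~\ref{ode:lem:gate-select}, a gate-counted version of the LCHS/source LCU from the proof of Theorem~\ref{ode:thm:main}, and a bounded-depth amplitude amplification with heralding. The error budget is fixed first. Take $\delta=\nu$ in Lemma~\ref{ode:lem:gate-select}, so every selected propagator $\mathcal T\exp[-\ii\int_{v_z}^T(H+k_zD)]$ is implemented with full-output error at most $\nu$. The LCHS quadrature of \eqref{ode:eq:lchs} is applied on the common grid \eqref{ode:eq:grid}, with coefficient sum below three. The source certificate \eqref{ode:eq:gate-source-certificates} adds $\nu B$. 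As in \eqref{ode:eq:properror}--\eqref{ode:eq:vectorerror}, the successful vector $y$ then satisfies $\norm{y-x(T)}=\mathcal{O}(\nu B)$, the amplitude is at least $1/(4\bar g)$ up to $B_m\le 2B$ (so perhaps $1/(8\bar g)$), and the normalised error is at most $\eps/2$ because $\nu=\eps/(10^4\bar g)$.

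For the gate count of one preparation unitary, the LCHS coefficient state over the grid labels $j$ has $\mathcal{O}(R/h_k)$ entries, polynomial in $\ell$ and $\mathcal D$. It is prepared with the Möttönen construction or by arithmetic and rotations at cost $\operatorname{poly}(\mathfrak b)$. I would choose the grid size as a power of two so that labels are $\mathcal{O}(\log)$-bit words, and compute the signs and phases coherently rather than from a table. The source side uses $P_w$, the controlled loading, and $P_w^\dagger$, at cost $G_{\rm src}(\nu)$. The joint label $z=(j,i)$ feeds Lemma~\ref{ode:lem:gate-select}, which costs $G_{\rm sel}(\nu)=d\operatorname{poly}(\cdots)$ gates. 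By \eqref{ode:eq:gate-select-schedule} and the $\gamma$ optimisation, $d=\mathcal{O}(q)$, which gives $qP(\mathfrak b)$. I also need $\log M$ and $\log(d+1)/\nu$ to be $\mathcal{O}(\mathfrak b)$, and $\mathfrak b$ is defined to contain exactly these quantities. The success test reflects on all work registers, at cost $\mathcal{O}(A_{\rm src}+P)$. This is why $A_{\rm src}$ appears inside the bracket of \eqref{ode:eq:gate-main}.

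Amplification uses randomised iteration counts $j\le\mathcal{O}(\bar g)$ in a constant number of heralded trials, as in the query theorem. Each iteration costs two preparation calls and two reflections, which gives the factor $C\bar g$. Work registers are reused across iterations, which gives \eqref{ode:eq:gate-space}. Because amplification acts on the actual implemented unitary, every successful trial outputs exactly $y'/\norm{y'}$, where $y'$ is the actual successful vector. Euclidean error at most $\eps$ for each successful trial, up to phase, follows once the gate-synthesis and arithmetic errors are also absorbed into $y'$. I would do this by making the unitary error per preparation call at most $\nu$, so that $y'$ differs from $y$ by $\mathcal{O}(\nu B)$. Since the amplitude is $\Omega(1/\bar g)$, $\nu\bar g$-level perturbations stay within $\eps/2$. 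The success probability bound of $2/3$ comes from the standard randomised-exponential-search argument with an explicit number of trials.

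I expect the main obstacle to be this last accounting: full-output errors in the clean-ancilla convention must be shown to perturb the heralded successful vector, not just the operator. The perturbation must also remain compatible with the worst-case bounded iteration count, so I would argue that the success amplitude stays above $1/(8\bar g)$ after perturbation. A secondary difficulty is verifying that the LCHS grid labels and time addresses $v_i+ru_i/M$ are computed exactly on a common dyadic grid, so that no time-slot or source table enters the gate count.
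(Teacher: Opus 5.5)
Your architecture matches the paper's: Lemma~\ref{ode:lem:gate-select} at $\delta=\nu$ on the joint label $z=(j,i)$, a normalisation-three LCHS/source LCU closed by $P_w^\dagger$ on the label register, a bounded randomised Grover count on the actual implemented unitary with a constant number of heralded trials, and success reflections on all work registers. That last point is indeed why $A_{\rm src}$ sits inside the bracket of \eqref{ode:eq:gate-main}.

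\textbf{The gap is the gate cost of the LCHS coefficient register.} A M\"ott\"onen preparation and the accompanying diagonal phase circuit on $J_k$ labels use a number of arbitrary rotations linear in $J_k$. After synthesis they cost $\mathcal{O}(J_k\log((J_k+2)/\nu))$ gates. This is not $\operatorname{poly}(\mathfrak b)$, because $J_k$ grows like $\sqrt{s}\,T\eta_D\ell$. Your weaker description, ``polynomial in $\ell$ and $\mathcal D$'', does not suffice either: a grid of order $\mathcal D^2\ell$ points would exceed $q\,\operatorname{poly}(\mathfrak b)$ by a factor of order $\sqrt{s}\,T\eta_D$.

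The missing estimate is that the grid is \emph{linear} in $\mathcal D$. Take $\mathcal D=T\sqrt{2s}\eta_D$ in \eqref{ode:eq:grid} and the coarsest admissible step, $K_k=\max\{1,\lceil R/H_k\rceil\}$, with $H_k$ the right-hand side of \eqref{ode:eq:grid}. Since $R=\Theta(\ell)$ and $1/H_k=\Theta(\mathcal D+\ell)$, this gives $J_k=2K_k+1=\mathcal{O}(\sqrt{s}\,T\eta_D\ell+\ell^2+1)=\mathcal{O}(q\ell)$. Because $\ell=\mathcal{O}(\mathfrak b)$ and $\log(J_k+2)=\mathcal{O}(\mathfrak b)$, the coefficient register then fits within $qP(\mathfrak b)$ gates and $P(\mathfrak b)$ qubits. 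This relies on the quadrature condition imposing only a maximum step size; an arbitrarily refined grid would break the gate bound.

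Your alternative of computing the weights and phases coherently is not supported by the stated assumptions. The paper assumes no coherent evaluator for the Low--Somma coefficients; they are classically precomputed. Loading amplitudes from such an evaluator would also need its own scheme and error analysis. It is unnecessary anyway, since a classical table of $\mathcal{O}(q\ell)$ entries is affordable once the estimate above is established.

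The obstacle you single out is routine. A full-output error $\zeta$ changes the zero-ancilla block by at most $\zeta$, hence $y$ by at most $3B_m\zeta$. Allocating $\nu/32$ to each PREP occurrence and to the phase circuit, and $\nu$ to the source assembly, gives $\norm{y-x(T)}\le19\nu B$ and normalised error at most $38\nu g<\eps$.

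For the amplification you need the two-sided window $1/(8\bar g)\le a\le 1/3+3\nu<1/2$, not only the lower bound. The upper bound keeps $\sin2\theta\ge\sqrt3\,a$. A single scale $M_{\rm amp}\ge16\bar g$ with uniform $j<M_{\rm amp}$ then gives average success at least $1/2-1/(8\sqrt3)>1/4$. Four trials give $175/256>2/3$, and no exponential search is required.
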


\begin{proof}
We construct one finite-gate preparation unitary and then amplify its
actual successful component.

\paragraph{The LCHS coefficient register.}
In \eqref{ode:eq:grid} choose $\mathcal D=T\sqrt{2s}\eta_D$ and let
\begin{equation}
 H_k=\frac{\pi}{\mathcal D/2+
                 \log(64e^{3/2}/(15\nu))},\qquad
 K_k=\max\{1,\lceil R/H_k\rceil\},\qquad h_k=R/K_k.
\end{equation}
There are $J_k=2K_k+1$ grid points.  Since $R=\mathcal{O}(\ell)$,
\begin{equation}\label{ode:eq:gate-grid-size}
 J_k=\mathcal{O}\!\left(\sqrt{s}\,T\eta_D\ell+\ell^2+1\right)
     =\mathcal{O}(q\ell).
\end{equation}
This choice is important for the gate count: the quadrature condition
requires a sufficiently small stepsize, but does not require an
arbitrarily fine grid.

Let $a=\sum_j|c_j|<3$.  Add two identity labels, with coefficients
$(3-a)/2$ and $-(3-a)/2$, and prepare the amplitudes given by the square
roots of their absolute coefficients divided by three.  The coefficient
phases are applied by a known diagonal operation.  These are known
circuits on $J_k+2$ labels.  Generic state preparation~\cite{MottonenStatePrep} and a
diagonal phase circuit use $\mathcal{O}(J_k)$ arbitrary rotations and
\mbox{$\mathcal{O}(\log(J_k+2))$} work qubits.  Approximating the whole
PREP or phase unitary to operator error $\nu/32$ therefore costs
\begin{equation}
 \mathcal{O}\!\left(J_k\log\frac{J_k+2}{\nu}\right)
\end{equation}
gates using the single-qubit synthesis of Ref.~\cite{KMM2013},
as in Theorem~\ref{hs:thm:gates}.
The coefficients and rotation angles are classically precomputed;
no efficiently evaluable quantum oracle for them is assumed.  Applying
PREP, the controlled Hamiltonian simulations, the phases and PREP's
actual inverse gives the propagator divided by three.  A selected
simulation error of at most $\nu$, together with the Low--Somma error
$2\nu$, contributes $5\nu$ after restoring this normalisation.
The two PREP occurrences and the phase operation contribute at most
$3(3\nu/32)<\nu$.  Thus the complete finite-gate propagator block
$\widehat U(v)/3$ obeys
\begin{equation}\label{ode:eq:gate-propagator-error}
 \norm{\widehat U(v)-U(T,v)}\le6\nu
\end{equation}
uniformly over source labels.  Each inverse used below is the adjoint
of the circuit actually implemented.

\paragraph{Source assembly and its cost.}
Use the source quadrature fixed in the statement. Omitting zero-weight labels, let the weight-only circuit
$P_w$ prepare amplitudes
\begin{equation}
 \sqrt{\norm{x_0}/B_m},\qquad
 \sqrt{w_i\norm{b(v_i)}/B_m}.
\end{equation}
A separate circuit loads $x_0/\norm{x_0}$ or
$b(v_i)/\norm{b(v_i)}$, controlled by this label.  After the selected
propagator, apply $P_w^\dagger$ to the label register only.  The source
state-loading circuit is not inverted at this point.  The successful
component of the resulting unitary is $y/(3B_m)$, with all the
auxiliary output registers zero.

For constant forcing, take $m$ to be a power of two large enough for
\eqref{ode:eq:sourcequad} to be at most $\nu B$, and let $w_i=T/m$.
Then $B_m=B$, the label circuit consists of one rotation between the
initial and forcing sectors and $\log_2m$ Hadamard gates, and the
forcing state is loaded by the same controlled circuit on every
forcing label.  Consequently,
\begin{equation}\label{ode:eq:gate-constant-source}
 G_{\rm src}(\nu)=G_{x_0}(c\nu)+
 G_b(c\nu)+
 \operatorname{poly}\!\left(b_{\rm data},\log(m+1),\log(1/\nu)\right),
\end{equation}
where $c>0$ is a sufficiently small absolute constant and
$G_{x_0},G_b$ count the supplied controlled loading circuits.
There are a constant number of calls to each supplied interface.
Here an \emph{admissible mesh} means one satisfying the source
quadrature error and positivity certificate in
\eqref{ode:eq:gate-source-certificates}. A sufficient choice obeys
\begin{equation}
 m\le2\max\left\{1,
   \frac{T^2\sqrt{s}(\eta_D+\eta_H)\norm b}{2\nu B}\right\},
\end{equation}
up to replacing the right-hand side by the next larger integer.
In particular, its inverse-precision dependence costs only logarithmically
many label gates and qubits.

A sufficient positive quadrature for time-dependent forcing can also
be specified explicitly. If $\norm{b(v)}\le b_*$ and
$\Lip(b)\le L_b$, put
\begin{equation}
 \Lambda_A=\sqrt{s}(\eta_D+\eta_H),\qquad
 C_b=\Lambda_A b_*+L_b.
\end{equation}
Contractivity gives Lipschitz constant at most $C_b$ for
$v\mapsto U(T,v)b(v)$. A power-of-two left-endpoint mesh with
$m\ge\max\{1,T^2C_b/(2\nu B)\}$ has vector quadrature error
at most $\nu B$. Since $v\mapsto\norm{b(v)}$ is $L_b$-Lipschitz,
the same quadrature estimate gives
\begin{equation}
 |B_m-B|\le\frac{T^2L_b}{2m}\le\nu B,
 \qquad (1-\nu)B\le B_m\le(1+\nu)B.
\end{equation}
As $B>0$ and $0<\nu<1$, this proves both $B_m>0$ and $B_m\le2B$,
including when $x_0=0$.
This specifies a sufficient mesh; the theorem also allows other
efficiently addressable positive quadratures satisfying the certificates.  For time-dependent forcing, efficient
weight preparation and coherent source loading are additional input
assumptions; their gate cost is retained as $G_{\rm src}$ rather than
being hidden in the matrix-query bound.

Using \eqref{ode:eq:gate-propagator-error} in every branch gives
$\norm{y-x(T)}\le6\nu B_m+\nu B$ before source-circuit approximation.
Approximating the complete source assembly to error $\nu$ changes its
successful vector by at most $\nu$, hence changes $y$ by at most
$3\nu B_m$.  Thus the actual circuit satisfies
\begin{equation}\label{ode:eq:gate-vector-error}
 \norm{y-x(T)}\le9\nu B_m+\nu B\le19\nu B,
 \qquad
 \left\|\frac y{\norm y}-\frac{x(T)}{\norm{x(T)}}\right\|
 \le38\nu g<\eps.
\end{equation}
Contractivity and the same estimates give its actual success amplitude
$a=\norm y/(3B_m)$ in the range
\begin{equation}
 \frac1{8\bar g}\le a\le\frac13+3\nu<\frac12.
\end{equation}
For the lower bound, use $\norm y\ge(1-19\nu g)\norm{x(T)}$
and $B_m\le2B$.

\paragraph{Bounded amplification of the implemented circuit.}
Let $M_{\rm amp}$ be the least power of two at least $16\bar g$.
Choose $j$ uniformly from $0,\ldots,M_{\rm amp}-1$ using fair classical
bits, and perform $j$ Grover iterations on the actual preparation
unitary and its actual inverse.  The reflections test the fixed initial
state and the zero output work registers.  These tests are exactly
implementable using reversible Boolean gates and a phase flip; their
gate cost is linear in the number of tested qubits.  If $a=\sin\theta$,
the average success probability is
\begin{equation}
 \frac1{M_{\rm amp}}\sum_{j=0}^{M_{\rm amp}-1}
          \sin^2((2j+1)\theta)
 =\frac12-\frac{\sin(4M_{\rm amp}\theta)}
                  {4M_{\rm amp}\sin(2\theta)}
 \ge\frac12-\frac1{8\sqrt3}>\frac14.
\end{equation}
Four heralded attempts therefore succeed with probability at least
$1-(3/4)^4=175/256>2/3$.  Each successful attempt has the same data
direction $y/\norm y$, up to an irrelevant global phase.  Conditioning
on success and mixing over iteration counts or attempts therefore
preserves the error bound in \eqref{ode:eq:gate-vector-error}.
The number of preparation-unitary invocations is bounded in the worst
case by $\mathcal{O}(\bar g)$.  In particular, gate approximation is
accounted for once in the implemented preparation unitary: comparing
the amplified circuit with an ideal Grover circuit and paying another
factor $\bar g$ in the synthesis accuracy is unnecessary.

If $G_{\rm sel}(\nu)$ and $A_{\rm sel}(\nu)$ denote the gate and work
costs of the common selected Hamiltonian simulation, these constructions
give the explicit bounds
\begin{equation}
\begin{aligned}
 G_{\rm ODE}
 &=\mathcal{O}\!\Biggl(\bar g\Biggl[
 G_{\rm sel}(\nu)+J_k\log\frac{J_k+2}{\nu}+G_{\rm src}(\nu)
 +A_{\rm sel}(\nu)+A_{\rm src}(\nu)\\
 &\hspace{38mm}{}+\log(N+1)+\log(m+1)+\log(\bar g+1)
 \Biggr]\Biggr),\\
 A_{\rm ODE}
 &=\mathcal{O}\!\left(A_{\rm sel}(\nu)+A_{\rm src}(\nu)
       +\log(J_k+2)+\log(m+1)+\log(\bar g+1)\right).
\end{aligned}
\label{ode:eq:gate-assembly-cost}
\end{equation}
The $A_{\rm sel}$ and $A_{\rm src}$ terms account for reflections
on simulation and source work registers.  By Lemma~\ref{ode:lem:gate-select}, $d=\mathcal{O}(q)$.
Together with \eqref{ode:eq:gate-grid-size}, this gives
\begin{equation}
 \mathcal{O}\!\left(\bar g\left[
 q\,\operatorname{poly}\!\left(b_{\rm data},\log(N+1),
       \log(M+1),\log(m+1),\log\frac{q\bar g+1}{\eps}\right)
 +G_{\rm src}(\nu)+A_{\rm src}(\nu)\right]\right),
\end{equation}
while retaining $\mathcal{O}(\bar g q)$ matrix queries and
$\mathcal{O}(\bar g)$ source-preparation calls.

The work-space estimate and the definition of $\mathfrak b$ give
\eqref{ode:eq:gate-space}, and the last gate estimate gives
\eqref{ode:eq:gate-main}. Source costs polynomial in $\mathfrak b$
are absorbed into $q\,\operatorname{poly}(\mathfrak b)$ since
$q\ge\ell>1$.

\end{proof}

\paragraph{Scope of the time-dependent gate bound.}
For the least admissible power of two in
\eqref{ode:eq:gate-select-mesh}, the inequalities
$\nu_c\ge1/2$ and $d\ge2W$ imply $\alpha T\le d$ and
\begin{equation}
 \log M=\mathcal{O}\!\left(
       \log\frac{d+1}{\nu}+\log(2+L_*T^2)\right).
\end{equation}
The regularity bound therefore affects the gate cost through a
logarithm, while the matrix-query bound is unchanged. A time-dependent
source may still have a large preparation cost; this remains visible
in $G_{\rm src}$ and $A_{\rm src}$. The earlier query theorem permits
more general certified meshes and source preparations. In particular,
an arbitrary nonuniform mesh obtained by inserting breakpoints does
not automatically supply the equal-angle boundary circuit used here.

\subsection{What has improved and what remains open}

The query improvement has three concrete ingredients. Exact
entrywise Cayley access removes the approximate-band encoding cost;
the sharper contour filter removes size-dependent logarithms; and
the scalar rescaling balances time against precision. Low--Somma LCHS
still requires a maximum frequency $R=\Theta(\log(\bar g/\eps))$ in
this construction. Substitution yields one logarithm for a general
dissipative part, while the tagged-support refinement avoids attaching
that logarithm to the Hamiltonian part.

The present proof does not establish that this last logarithm is
necessary. Low--Somma's Section 7 explicitly leaves its general
multiplicative time-precision lower bound unproved; its polynomial
calculations provide evidence rather than that theorem\cite{LS}.
Theorem~\ref{hsl:thm:main} applies to the anti-Hermitian
subclass. Section~\ref{part:odel} adds a joint normalisation-time-sparsity
lower bound and a PSD precision lower bound, while leaving the full
dissipative precision gap open.
Nor may a maximum coherent branch cost be replaced by an average
frequency merely because the LCHS coefficients decay. Such a step
would require a new circuit and error analysis.

A direct nonunitary transducer needs an additional loss output.
Indeed a unitary $S$ satisfying
$S(\psi\oplus\Gamma\psi)=U\psi\oplus\Gamma\psi$ for all $\psi$
necessarily has $U^\dagger U=I$, by conservation of norm.
For the dissipative Cayley map
\begin{equation}
 C_h=(I-hG/2)(I+hG/2)^{-1}
\end{equation}
one instead has
\begin{equation}
 I-C_h^\dagger C_h
 =2h(I+hG^\dagger/2)^{-1}D(I+hG/2)^{-1}.
\end{equation}
A supplied factor $B_D^\dagger B_D=2D$ would implement the missing
output as $\sqrt h\,B_D(I+hG/2)^{-1}$. The recent Lindbladian
transducer construction\cite{Lindblad} exploits access of this type
and gives stronger ODE results under square-root/factor access.
The native sparse oracle for $A$ does not provide such a factor at
the required cost, so it does not currently imply an unconditional
improvement of \eqref{ode:eq:main}.

The query theorems count matrix queries and source-preparation calls.
The separate gate theorem accounts for controlled parameter dependence,
coefficient preparation, source assembly and synthesis under its
additional implementation assumptions. The number of time slots enters
through its binary register and arithmetic, and arbitrary source
preparations can still be expensive. Gate bounds for the general
non-dissipative construction and classical output extraction require
separate analyses.

\section{Differential-equation lower bounds}\label{part:odel}
\subsection{Problem, promises, and comparison}\label{odel:sec:main}

We consider
\begin{equation}\label{odel:eq:ode}
 \dot x(t)=A(t)x(t)+b(t),\qquad x(0)=x_0,\qquad 0\le t\le T,
 \qquad -A(t)=D(t)+\ii H(t),\quad D(t)\succeq0.
\end{equation}
Each row and column has at most $s$ nonzeros and Euclidean norm at
most a known $\eta$. Define
\begin{equation}\label{odel:eq:params}
 \tau=T\eta,\qquad
 B=\norm{x_0}+\int_0^T\norm{b(t)}\,dt,\qquad
 g=B/\norm{x(T)},\qquad g\le\bar g.
\end{equation}
The solution is nonzero. Access is through a reversible value oracle
and in-place row/column locators, their inverses and controls, with a
coherent time address when needed. Locators are full unitary extensions
of their list action. Known gates, arithmetic, and specified known
state preparations are free. Queries have a worst-case bound.

The solver produces a success flag of probability at least $2/3$ on
every input. Conditional on success, its output density operator is
within trace distance $\eps$ of the normalised solution. Euclidean
error $\eps$ for pure output states implies this requirement. The lower
bounds thus apply to Theorems~\ref{ode:thm:main} and~\ref{ode:thm:time}.
The dimension may grow with the parameters. No assertion is made for
an insufficient fixed dimension.

Our hard families have public location lists and public unitary
extensions, so the lower bounds remain true when locators are free.
All unknown information is in exact finite-word entries. Initial and
source preparation contain no hidden input information. Piecewise
constant time dependence with known breakpoints is allowed, as in the
time-dependent Theorem~\ref{ode:thm:time}.

Put
\begin{equation}\label{odel:eq:precision}
 L=\log(1/\eps),\qquad h=\log(e+L/\tau),\qquad r=L/h,
 \qquad \ell=\log(C\bar g/\eps),
\end{equation}
where $C>1$ is an absolute constant and logarithms are natural.

\begin{theorem}[Combined worst-case lower bound]\label{odel:thm:combined}
For $s\ge32$, $\tau\ge4$, $\bar g\ge2$, and
$0<\eps\le1/64$, every solver for the time-dependent class above has
worst-case matrix-query complexity
\begin{equation}\label{odel:eq:combined}
 Q_A=\Omega\!\left(
       \bar g\sqrt{s}\tau+\sqrt{s\tau r}+r\right).
\end{equation}
The constants are absolute. The first term is witnessed by an
inhomogeneous time-dependent family with $D=0$ and
$g=\Theta(\bar g)$. The last two terms can be witnessed by homogeneous
constant-coefficient families with $D=0$ and $g=1$.
\end{theorem}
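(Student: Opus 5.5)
For the last two terms I would reduce directly to Theorem~\ref{hsl:thm:main}. Take $b=0$, $D=0$ and constant $A=-\ii H_X$, with $H_X$ one of the search-decorated or signed-scalar chains of Section~\ref{part:hsl}, using the normalisation $\eta=1$ and duration $T=\tau$. The propagator is then unitary, so $x(T)=e^{-\ii\tau H_X}x_0$ and $g=1\le\bar g$. With the known input $x_0=\ket{\psi_{\rm in}}$ of \eqref{hsl:eq:readout}, the normalised solution is exactly the state $\ket{\psi_X}$ of \eqref{hsl:eq:accuracy}. The solver only succeeds with probability at least $2/3$ and has conditional trace-distance accuracy, so I would invoke Lemma~\ref{odel:lem:herald} to transfer Lemmas~\ref{hsl:lem:mixed} and~\ref{hsl:lem:precision} to heralded algorithms. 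The proof of Theorem~\ref{hsl:thm:main} (the uniform choice $n=\lfloor r/4\rfloor$ and the case split on $s\gtrless 64n/\tau$) then gives $\Omega(\sqrt{s\tau r}+r)$ with the same range $s\ge32$, $\eps\le1/64$ and $\tau\ge4\ge1$.

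For the first term I would build an inhomogeneous family with $D=0$ in which the final solution is a small residual after a known cancellation. Let $M=\Theta(\bar g^2)$ be a power of two and take $M$ orthogonal copies of the search-decorated chain of Section~\ref{part:hsl}, with the time-lower-bound scaling ($n\approx2\tau/\pi$, endpoint amplitude one). Exactly one hidden copy $j^*$ carries genuine gadgets $U_{x}$; the other copies carry the identity-like gadget in which the membership bits are absent. The gadget supports are dense and public, so the locators stay input independent. Choose $x_0$ uniform over the copies on $\ket{v,0}$, so the evolved state is $\frac1{\sqrt M}\sum_j\sigma_j\ket{j,v,n}$ with $\sigma_j=1$ for $j\ne j^*$. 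On a final interval add a known constant source $b$, equivalently a known negative multiple of the uniform vector transported by the free Hamiltonian, which cancels the uniform component. Then $x(T)=\frac{\sigma_{j^*}-1}{\sqrt M}\ket{j^*,v,n}$ up to rounding, giving $g=\Theta(\sqrt M)=\Theta(\bar g)$ whenever $\sigma_{j^*}=-1$. I would then arrange the promise so that the solution always has this form and simply encodes $j^*$, for example by using a separate known parity bit or by fixing the parity of the decorated copy. After checking sparsity, the row-norm bound and time dependence (piecewise constant with known breakpoints), the final measurement of the copy register reveals $j^*$ with constant probability.

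The key step is then a composed lower bound. Finding $j^*$ is search over $M$ blocks, each of whose ``marked'' predicate is the chain problem with cost $\Theta(\sqrt s\tau)$ (equivalently, parity of $n\approx\tau$ one-hot strings of length $m\approx s$). I would redo the weight argument of Lemma~\ref{hsl:lem:search} with a product weight. The outer weight on the copy register is $P_{u}+a'^2P_{\rm marked}$ with $a'\sim\sqrt M$ per query amplification $1+O(a'/\sqrt M)$, and the inner one is $w^{\otimes n}$ on the chosen copy. Each Boolean query then grows the weighted norm by at most $1+O(a/\sqrt m)$ and only on the amplitude already inside copy $j^*$, and that amplitude is itself controlled by the outer weight. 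The goal is $q=\Omega(\sqrt M\, n\sqrt m)=\Omega(\bar g\sqrt s\tau)$. This composition is the main obstacle. The multiplicative adversary does not compose trivially, so if a direct weighted-norm argument fails I would fall back on the general adversary composition theorem. That fallback is available because the reduction only needs bounded-error identification of $j^*$, and the general adversary bound composes exactly for Boolean functions. Combining the three families by the maximum-versus-sum argument, as in Theorem~\ref{hsl:thm:main}, yields \eqref{odel:eq:combined}.
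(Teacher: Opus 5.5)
Your reduction for the last two terms is the paper's: take $A=-\ii H_X$ and $b=0$, so $g=1$, and use Lemma~\ref{odel:lem:herald} to carry the families of Theorem~\ref{hsl:thm:main} over to heralded solvers.

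The gap is in the family for $\bar g\sqrt{s}\tau$. As written, only copy $j^*$ carries membership bits, and the other copies carry a gadget with no hidden data. That is not search composed with parity, and the resulting problem is too easy for the bound you want. If the membership bits are literally absent, the gadget is $(2/m)\mathbf 1\mathbf 1^{\mathsf T}=2P_u$. This is not unitary and it annihilates $v$, so those copies never move $\ket{v,0}$ to $\ket{v,n}$, and $\sigma_j=1$ fails. Suppose instead you place a fixed known gadget there, say $U_{e_0}$ at every stage. Odd parity of copy $j^*$ forces some $j_k\neq0$, i.e.\ $x^{(j^*,k)}_0=0$, while every other copy has $x^{(a,k)}_0=1$. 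Each such bit is read by one value query, so Grover search over these $Mn$ entries finds $j^*$ with $\mathcal{O}(\sqrt{Mn})$ queries, far below $\sqrt M\,n\sqrt m$. Any known default pattern has the same defect: the decorated copy is exposed by a single deviation rather than by its parity. So no $\Omega(\bar g\sqrt{s}\tau)$ bound can be extracted from this family, whatever adversary you use.

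The missing idea is that every candidate must carry independent hidden positions $j_{ak}$, with the promise that exactly one candidate has odd total parity. Candidates then differ only through a parity that costs $\Theta(n\sqrt m)$ to learn. The paper does this with the dilations $S_j$, for which $e^{-\ii(\pi/2)S_j}=-\ii S_j$ holds exactly with dyadic entries. A known source $-(-\ii)^n\ket{u_M,\omega}$ on a unit interval then gives $x(T)\propto\ket{a_*,\omega}$ and $g=\sqrt M$.

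Your spin-chain variant could also be repaired once every copy is decorated, but its scaled weights are not dyadic. After rounding, the clock components $k<n$ carry hidden partial parities that no known source cancels. Their norm must therefore be made small compared with the residual $2/\sqrt M$.

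Even for the corrected problem you leave the composed bound open. The paper proves it in Lemma~\ref{odel:lem:outer} with an explicit positive spectral adversary for the Boolean promise problem of deciding which half contains $a_*$:
\begin{itemize}
\item The inner graph joins tuples differing in one coordinate by a parity change. It has norm $d=nm/2$ and masked norm $\sqrt{m/2}$.
\item The outer matrix moves the odd candidate across the two halves. It has norm $(M/2)d^2$ and masked norm at most $\sqrt{M/2}\,d\sqrt{m/2}$.
\item The adversary ratio is therefore $n\sqrt{Mm}/2$.
\end{itemize}
Falling back on general-adversary composition would require three further steps. You would need to pose a Boolean problem, since identifying $j^*$ is not one. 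You would need to justify composition for these partial functions. And you would need to establish the inner bound $\Omega(n\sqrt m)$ for the parity of $n$ one-hot positions separately.
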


The general dissipative upper bound \eqref{ode:eq:simple} is
\begin{equation}\label{odel:eq:upper}
 Q_A=\mathcal{O}\!\left(\bar g[\sqrt{s}(\tau+\sqrt\tau)+1]\ell\right).
\end{equation}
For $\tau\ge4$, Theorem~\ref{odel:thm:combined} is within one factor
$\ell$ of this bound. This comparison concerns the whole
time-dependent input class. It does not prove the same joint
$\bar g\sqrt{s}\tau$ lower bound for constant-coefficient equations,
nor does it justify multiplying the two precision terms by $\bar g$.

The refined upper bound, with $\tau_D=T\eta_D$ and
$\tau_H=T\eta_H$, is
\begin{equation}\label{odel:eq:refined}
 \mathcal{O}\!\left(\bar g[
 \sqrt{s}\Theta+\sqrt{s\Theta\ell}+\ell]\right),
 \qquad \Theta=\tau_H+\tau_D\ell.
\end{equation}
The first family in Theorem~\ref{odel:thm:combined} has $\tau_D=0$.
It matches the time term of \eqref{odel:eq:refined}; on that subclass,
the whole upper bound matches when $\tau\ge\ell$.

\subsection{A joint search-parity adversary}

Let $M$ be even, let $m\ge4$ be even, and let $n\ge1$.
An input consists of hidden positions $j_{ak}\in\mathbb Z_m$,
for $a\in\{1,\ldots,M\}$ and $k\in\{1,\ldots,n\}$.
The membership oracle is
\begin{equation}
 O_X\ket{a,k,z,c}=\ket{a,k,z,c\mathbin\oplus[z=j_{ak}]}.
\end{equation}
For each candidate define $\sigma_a=(-1)^{\sum_k j_{ak}}$.
Promise that exactly one candidate $a_*$ has $\sigma_{a_*}=-1$.
We ask whether $a_*$ lies in the first or second half of the candidates.

\begin{lemma}\label{odel:lem:outer}
Bounded-error quantum decision of this promise problem requires
$\Omega(n\sqrt{Mm})$ membership queries.
\end{lemma}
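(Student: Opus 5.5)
We treat the problem as a composition: an outer search over $M$ candidates, each of which is an inner parity-of-$n$-searches function. We then prove the lower bound with a single multiplicative (weight-method) adversary that extends the proof of Lemma~\ref{hsl:lem:search}. There are two possible routes. One is to invoke composition theorems for the general adversary bound. Those apply to the relation ``which half contains the unique odd-parity candidate'', with outer adversary $\Omega(\sqrt M)$ and inner adversary $\Omega(n\sqrt m)$, but that route needs care because the inner function is a partial function with non-Boolean inputs. We therefore plan the more explicit route. First reduce to a cleaner outer problem, namely search with a half-decision, which has bounded-error complexity $\Omega(\sqrt M)$ (it contains unstructured search on $M/2$ items after fixing a half). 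Then show that each outer ``bit'' is hard to learn with fewer than order $n\sqrt m$ inner queries, even with small correlation.

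For the main step, we build the inner weight matrix exactly as in Lemma~\ref{hsl:lem:search}. On each hidden-position register use $w=P_u+a^2P_v+aR$, with $a$ a constant (say $a=2$), so that $w^{\otimes n}$ conjugates $Z^{\otimes n}$ to $a^{-2n}Z^{\otimes n}$ while each membership query grows the weighted norm by at most $1+12a/\sqrt m$. For the outer layer, we use the standard adversary matrix for search between inputs whose odd candidate $a_*$ lies in opposite halves. We realise inputs with $\sigma_a=+1$ versus $\sigma_a=-1$ via the $\pm$ parity eigenspaces of $Z^{\otimes n}$ on block $a$. The combined adversary matrix is then the tensor product of the outer search adversary (spectral norm $\Theta(\sqrt M)$, with per-query-index ratio $O(1)$) with, on each block, the inner parity-distinguishing operator. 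Its progress per query is bounded by the product of the outer relative progress $O(1/\sqrt M)$ and the inner relative progress $O(1/(n\sqrt m))$ per unit of parity flip. The inner factor comes from the fact that the multiplicative adversary needs $\Omega(n\sqrt m)$ queries to change the weight by a constant factor $a^{2n}$ with $a$ constant; equivalently, the additive adversary for inner parity-of-searches is $\Theta(n\sqrt m)$, by the composition of the parity adversary ($n$) with search ($\sqrt m$).

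Concretely, the plan is as follows. (i) Form the product input distribution: uniform $a_*$, uniform hidden positions conditioned on the parities. (ii) Define the adversary matrix $\Gamma=\Gamma_{\rm outer}\circ\bigotimes_a\Gamma_{\rm inner}^{(a)}$, using the optimal adversary for parity composed with search on each block. (iii) Apply the general-adversary composition bound $\mathrm{ADV}^{\pm}(f\circ g)\ge\mathrm{ADV}^{\pm}(f)\,\mathrm{ADV}^{\pm}(g)$. Here $g$ is the partial Boolean function ``parity of the $n$ one-hot positions'' with adversary $\Omega(n\sqrt m)$ (the positive-weight bound already follows from Lemma~\ref{hsl:lem:search}-type arguments, or directly from Ambainis' bound with parity relation weights), and $f$ is the half-decision on a one-odd-candidate string with adversary $\Omega(\sqrt M)$. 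The bound requires $m\ge4$ even and $M$ even. We expect the main obstacle to be justifying composition for the partial inner function, since the promise allows only one odd candidate. We would handle this by noting that the composition inequality for the positive-weight adversary holds for partial functions with the composed promise, and by verifying it directly with the explicit product weight: the outer search adversary only pairs inputs differing in one or two blocks' parities, so only inner pairs of opposite parity appear, and there the inner adversary bound is exactly what is needed.
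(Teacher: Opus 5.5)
Your concrete route (ii)--(iii), checked directly with the explicit product weight, is essentially the paper's proof. The paper builds the positive spectral adversary that moves the odd candidate across the two halves, applies a parity-changing single-coordinate move on each of the two affected blocks and the identity on the other $M-2$ blocks, and verifies directly that this matrix has norm $(M/2)d^2$ with $d=nm/2$. Each query mask leaves norm at most $\sqrt{M/2}\,d\sqrt{m/2}$, giving ratio $n\sqrt{Mm}/2$ without any general composition theorem.

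Two things to clean up. First, drop the multiplicative-weight paragraph: $w=P_u+a^2P_v+aR$ is a weight operator, not an adversary matrix, and ``multiplying relative progress'' is not an argument. Second, a literal tensor $\bigotimes_a\Gamma^{(a)}_{\rm inner}$ over all blocks vanishes whenever $M>2$, because $\Gamma_{\rm inner}$ has zero diagonal. You therefore need identities on the unchanged blocks, as your last sentence implicitly assumes.
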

\begin{proof}
We give a direct positive spectral-adversary construction; the
standard spectral-adversary theorem is used as in~\cite{BSS}. On tuples
$\mathbf j\in\mathbb Z_m^n$, join two tuples when exactly one
coordinate changes, and its parity changes. This is a bipartite
regular graph between even and odd tuple parities. Its degree and
adjacency norm are
\begin{equation}
 d=nm/2,\qquad \norm{G}=d.
\end{equation}
The rectangular even-to-odd block also has norm $d$.
For a fixed membership query $(k,z)$, mask an edge unless its two
endpoints give different oracle answers. The remaining graph is a
direct sum of stars with $m/2$ leaves: the other coordinates are
fixed, and coordinate $k$ changes between $z$ and a position of
opposite parity. Consequently
\begin{equation}\label{odel:eq:innermask}
 \norm{G\circ\Delta_{k,z}}=\sqrt{m/2}.
\end{equation}

Define an adversary matrix $\Gamma$ on promised full inputs. Join
$X$ and $Y$ when their unique odd candidates move from $a$ to $b$
across the two halves, candidate blocks $a$ and $b$ each change along
one inner graph edge, and all other blocks agree. This is a symmetric,
nonnegative regular matrix with
\begin{equation}\label{odel:eq:outerdegree}
 \norm\Gamma=(M/2)d^2.
\end{equation}
For a fixed query $(c,k,z)$, a surviving edge must move the odd
candidate to or from $c$, because all other candidate blocks are
unchanged. Thus $\Gamma\circ\Delta_{c,k,z}$ is a block star between
the sector whose unique odd candidate is $c$ and the $M/2$
opposite-half sectors. Each
off-diagonal block is a tensor product of a masked inner block, an
ordinary inner block, and identities. Its norm is at most
$d\sqrt{m/2}$. The norm of a row of $M/2$ operator blocks is at most
the square root of the sum of their squared norms. Hence
\begin{equation}\label{odel:eq:outermask}
 \norm{\Gamma\circ\Delta_{c,k,z}}
 \le\sqrt{M/2}\,d\sqrt{m/2}.
\end{equation}
The adversary ratio is therefore at least
\begin{equation}
 \frac{(M/2)d^2}{\sqrt{M/2}\,d\sqrt{m/2}}
 =\frac{n\sqrt{Mm}}2.
\end{equation}
The bounded-error spectral-adversary theorem proves the claim.
This argument treats the joint promise directly and does not assume
that separate lower bounds multiply.
\end{proof}

\subsection{Encoding the joint problem as an ODE}

Take $m$ to be a power of two. On $\mathbb C^m$ let $S$ be the
cyclic shift and let
\begin{equation}
 u=m^{-1/2}\sum_z\ket z,\qquad
 v=m^{-1/2}\sum_z(-1)^z\ket z.
\end{equation}
For a marked position $j$, define the unitary
\begin{equation}\label{odel:eq:gadget}
 U_j=S^{j+1}(2\ket u\bra u-I)=2\ket u\bra u-S^{j+1},
 \qquad (U_j)_{bc}=2/m-[b-c-1=j].
\end{equation}
This is the sign-adjusted gadget from \eqref{hsl:eq:gadget}, following Patel~\cite[Lemma 4]{Patel}.
Every entry is nonzero for $m\ge4$, and each entry depends on only
one membership bit. Since $Sv=-v$ and $v\perp u$,
$U_jv=U_j^\dagger v=(-1)^jv$.
Define
\begin{equation}\label{odel:eq:dilation}
 S_j=\begin{pmatrix}0&U_j\\U_j^\dagger&0\end{pmatrix},
 \qquad \omega=\ket+\otimes v.
\end{equation}
Then $S_j=S_j^\dagger$, $S_j^2=I$, and
$S_j\omega=(-1)^j\omega$. Each row and column of $S_j$ has exactly
$m$ nonzeros and Euclidean norm one. Its support is public.

For the $k$th interval of length $\pi/2$, set
\begin{equation}\label{odel:eq:stages}
 H(t)=\bigoplus_{a=1}^M S_{j_{ak}},\qquad A(t)=-\ii H(t),
 \qquad b(t)=0,
\end{equation}
and start from $x_0=u_M\otimes\omega$, where
$u_M=M^{-1/2}\sum_a\ket a$. Since
$e^{-\ii(\pi/2)S_j}=-\ii S_j$, after $n$ stages the solution is
\begin{equation}\label{odel:eq:beforeforce}
 x_{\mathrm{comp}}=
 \frac{(-\ii)^n}{\sqrt M}\sum_a\sigma_a\ket{a,\omega}.
\end{equation}
On a final interval of length one, set
\begin{equation}\label{odel:eq:force}
 A(t)=0,\qquad b(t)=-(-\ii)^n\ket{u_M,\omega}.
\end{equation}
This source is known and independent of the hidden positions. It
cancels every even-parity candidate exactly, leaving
\begin{equation}\label{odel:eq:cancel}
 x(T)=-\frac{2(-\ii)^n}{\sqrt M}\ket{a_*,\omega},\qquad
 B=2,\qquad \norm{x(T)}=2/\sqrt M,\qquad g=\sqrt M.
\end{equation}
Thus the normalised final state identifies the unique odd candidate.
The generator is anti-Hermitian at every time, including the final
zero interval, so it is within the dissipative promise with $D=0$.

All matrix entries are exact complex dyadic numbers. The nonzero
locations, their time dependence, and full locator extensions are
known. A matrix value query is implemented using at most two Boolean
membership queries, by computing the relevant bit, writing the known
entry, and uncomputing the bit. The same holds for inverse and
controlled calls. All input and source preparation calls are known
operations and can be simulated without membership queries.

\begin{theorem}[Joint time, sparsity, and normalisation]\label{odel:thm:joint}
For $s\ge4$, $\tau\ge4$, $\bar g\ge2$, and
$0<\eps\le1/64$, the ODE promise $g\le\bar g$ requires
\begin{equation}\label{odel:eq:joint}
 Q_A=\Omega(\bar g\sqrt{s}\tau).
\end{equation}
Hard instances can be chosen as above with $D=0$, $\eta=1$,
and $g\ge\bar g/\sqrt2$. Dimension $\mathcal{O}(\bar g^2s)$ suffices.
\end{theorem}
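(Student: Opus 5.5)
The plan is to reduce the joint promise problem of Lemma~\ref{odel:lem:outer} to the ODE family just constructed, and then choose $M$, $m$, $n$ to match $\bar g$, $s$, $\tau$. Given $\bar g\ge2$, take $M$ to be the largest even integer with $\sqrt M\le\bar g$. Then $M\ge4$, $M\ge\bar g^2/2$, and $g=\sqrt M\ge\bar g/\sqrt2$ by \eqref{odel:eq:cancel}, so the promise $g\le\bar g$ holds. Take $m$ to be the largest power of two at most $s$. Then $m\ge4$ and $m>s/2$, and each row has exactly $m\le s$ nonzeros with Euclidean norm one, so $\eta=1$. Take $n=\lfloor(\tau-1)/(\pi/2)\rfloor$ and pad the final source interval so that the total duration is exactly $T=\tau$. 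Since $\tau\ge4$, this gives $n\ge1$ and $n=\Theta(\tau)$. The padding can be done either by idling with $A=0$ before the forcing interval, or by stretching the forcing interval to length $T-n\pi/2\in[1,1+\pi/2)$ and scaling $b$ so that its integral is unchanged; $B$ and $g$ then change only by constants, since $B=2$ holds with integral norm fixed. The dimension is $2Mm=\mathcal O(\bar g^2s)$.

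For the reduction, suppose a solver outputs, with success probability $2/3$ and conditional trace distance $\eps\le1/64$, the normalised state $\ket{a_*,\omega}$ up to phase. Measuring the candidate register in the computational basis then returns $a_*$ with probability at least $1-2\eps$ conditional on success. Reading off which half $a_*$ lies in, and repeating a constant number of times with majority vote, decides the promise problem with bounded error. Each matrix value query, inverse, or controlled call costs at most two membership queries. Locators, state preparations and the source are public and free. Hence $2Q_A\cdot\mathcal O(1)\ge\Omega(n\sqrt{Mm})$.

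Substituting the parameter choices gives $n\sqrt{Mm}=\Omega(\tau\cdot\bar g\cdot\sqrt s)$, which proves \eqref{odel:eq:joint}. Heralded failure is handled by the repetition above; if Lemma~\ref{odel:lem:herald} (stated later) is needed for an expected rather than worst-case count, I would cite it, but worst-case queries suffice here.

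The main obstacle I expect is bookkeeping rather than substance. First, the evolution time must equal the given $\tau$ exactly while $n$ stays an integer. Second, it must be checked that the final interval with $A=0$ and the known source keeps $g=\Theta(\bar g)$ and the promise $g\le\bar g$. Third, one must confirm that exact sparsity at most $s$ and the $\eta=1$ row norms hold at all times, including the zero interval. I would settle the duration issue first by idling with $A=0$ on a final segment of length $T-n\pi/2-1\ge0$ before the unit forcing interval; this leaves \eqref{odel:eq:cancel} unchanged.
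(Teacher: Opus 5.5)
Your proposal is correct and follows the paper's proof essentially step for step. It uses the same choices $M=2\lfloor\bar g^2/2\rfloor$, $m=2^{\lfloor\log_2 s\rfloor}$ and $n=\lfloor 2(\tau-1)/\pi\rfloor$, the same idle interval before the source to make $T=\tau$ exactly, and the same two-membership-query simulation of each matrix call fed into Lemma~\ref{odel:lem:outer}. The only difference is cosmetic: the paper handles heralded failure with a single run and a fair-bit fallback, giving correctness $\frac23(1-\eps)+\frac16>\frac23$ directly, so neither repetition with majority vote nor any appeal to Lemma~\ref{odel:lem:herald} is needed.
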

\begin{proof}
Choose
\begin{equation}
 M=2\lfloor\bar g^2/2\rfloor,\qquad
 m=2^{\lfloor\log_2s\rfloor},\qquad
 n=\left\lfloor\frac{2(\tau-1)}\pi\right\rfloor.
\end{equation}
Then $\bar g/\sqrt2\le\sqrt M\le\bar g$ and $s/2<m\le s$.
The elementary bound $\lfloor x\rfloor\ge x/2$ for $x\ge1$ gives
\begin{equation}
 n\ge\frac{\tau-1}{\pi}\ge\frac{3\tau}{4\pi}>\frac\tau5,
 \qquad n\le\frac{2\tau}{\pi}\qquad(\tau\ge4).
\end{equation}
Thus $n=\Theta(\tau)$ with an explicit safe lower constant $1/5$.
Insert a known idle interval before the final source if necessary to
make the total duration exactly $T=\tau$. The uniform row/column
norm bound is $\eta=1$.

Run the proposed solver. On success, measure the candidate register
and answer which half it belongs to; on failure, return a fair bit.
Conditional trace distance $\eps$ gives correct probability at least
$1-\eps$ on success. Since success has probability at least $2/3$,
overall correctness is at least
\begin{equation}
 \frac23(1-\eps)+\frac16>\frac23.
\end{equation}
If the solver uses $Q_A$ matrix queries, the resulting decision
algorithm uses at most $2Q_A$ membership queries. Apply
Lemma~\ref{odel:lem:outer} and the parameter choices above.

If maximum sparsity exactly $s$ is wanted, append an unoccupied known
block with Hamiltonian
$2^{-\lceil\log_2s\rceil}\mathbf1\mathbf1^{\mathsf T}$ of size
$s\times s$. Its norm is at most one and its entries are dyadic.
It requires no membership queries and leaves the solution unchanged.
Known zero padding permits any sufficiently large qubit register.
\end{proof}

\begin{remark}[A homogeneous variant]\label{odel:rem:homogeneous}
The source can be removed when $\tau\ge20(1+\log\bar g)$.
Choose $M,m$ as in Theorem~\ref{odel:thm:joint}, set
$n=\lfloor\tau/\pi\rfloor$, and add a flag initially in $\ket+$.
Perform \eqref{odel:eq:stages} controlled on flag one. On the flag,
apply the known Hamiltonian $-\ket1\bra1$ for
$\delta=(n\bmod4)\pi/2$ time, followed by Pauli $Z$ for $\pi/2$
time and Pauli $Y$ for $\pi/4$ time. The first pulse cancels the
relative phase $(-\ii)^n$, and the latter two implement a Hadamard
up to the global phase $-\ii$:
\begin{equation}
 e^{-\ii\pi Y/4}e^{-\ii\pi Z/2}
 =-\frac{\ii}{\sqrt2}\begin{pmatrix}1&1\\1&-1\end{pmatrix}.
\end{equation}
Even candidates therefore have flag zero and the unique odd candidate
has flag one. Each pulse Hamiltonian is one-sparse, has norm one and
complex dyadic entries; their total duration is at most $9\pi/4$.

Apply the known dissipative generator $-P_{\mathrm{reject}}$, where
$P_{\mathrm{reject}}=\ket0\bra0\otimes I$, for
$d=\log(4\sqrt M)$ time. This generator is diagonal and has norm one.
All stages obey the row/column Euclidean norm bound $\eta=1$ and
sparsity bound $m\le s$. Since $b=0$ and $\norm{x_0}=1$, the final
norm and normalisation parameter satisfy
\begin{equation}
 \norm{x(T)}^2=\frac{1+(M-1)e^{-2d}}M,
 \qquad \frac4{\sqrt{17}}\sqrt M\le g\le\sqrt M\le\bar g.
\end{equation}
The normalised solution has probability at least $16/17$ on $a_*$.
Including heralding and trace error, the decision succeeds with
probability at least $\frac23(16/17-\eps)+\frac16>2/3$.

The unknown stages take $n\pi/2\le\tau/2$ time. The known pulses
and dissipation take at most
\begin{equation}
 \frac{9\pi}{4}+d
 \le\frac{9\pi}{4}+\log4+\log\bar g
 <10(1+\log\bar g)\le\frac\tau2.
\end{equation}
Known idle intervals make $T=\tau$ exactly. Moreover
$n\ge\tau/\pi-1\ge\tau/5$ in this range, so the same oracle
simulation and adversary reduction prove \eqref{odel:eq:joint},
with $g=\Theta(\bar g)$. This variant
is homogeneous but still time dependent; it is not a
constant-coefficient reduction.
\end{remark}

\subsection{Hamiltonian precision lower bounds with heralding}

Theorem~\ref{hsl:thm:main} proves, for $\tau\ge1$,
$s\ge32$, and $\eps\le1/64$, the lower bound
\begin{equation}\label{odel:eq:hsbound}
 Q_A=\Omega\!\left(\sqrt{s}\tau+\sqrt{s\tau r}+r\right)
\end{equation}
for producing $e^{-\ii\tau H_X}\ket{\psi_{\mathrm{in}}}$ on a
known input. Its matrices have norm one, dyadic entries, and public
supports. Set $A=-\ii H_X$, $b=0$: this is a constant-coefficient
ODE with $g=1$. There is one output-model point to verify, because
the statement of Theorem~\ref{hsl:thm:main} initially did not allow a success flag.

\begin{lemma}[Heralded transfer]\label{odel:lem:herald}
Bound \eqref{odel:eq:hsbound} remains valid when the simulator succeeds
with probability at least $2/3$ on each input and has conditional
trace error at most $\eps$.
\end{lemma}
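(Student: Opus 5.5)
The plan is to re-run each of the three reductions behind Theorem~\ref{hsl:thm:main} with a heralded simulator, and to show that each keeps a positive correlation of the same order, up to absolute constants. The three reductions are the mixed family of Lemma~\ref{hsl:lem:mixed}, the precision family of Lemma~\ref{hsl:lem:precision}, and the time family leading to \eqref{hsl:eq:timelower}. None of their analyses needs bounded-error success. Lemma~\ref{hsl:lem:search} uses only a lower bound $\beta_0$ on the average correlation $\E_X[\sigma(X)\E(Y\mid X)]$. Lemma~\ref{hsl:lem:precision} uses only that the correlation is nonzero, together with the polynomial degree of the acceptance bias. So it suffices to turn a heralded simulator into a Boolean algorithm with correlation at least a constant fraction of the original one, and with at most a constant factor more queries.

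The reduction is as follows. Run the heralded simulator once on $\ket{\psi_{\rm in}}$ from \eqref{hsl:eq:readout}. If the flag reports success, measure $M$ on the system output, exactly as before, and return the resulting sign. If the flag reports failure, return a fair random sign. Write $p_X\ge2/3$ for the success probability and $\rho_X$ for the conditional output. Then
\begin{equation}
 \E(Y\mid X)=p_X\operatorname{tr}(M\rho_X),
\end{equation}
and the trace-distance guarantee gives
\begin{equation}
 \bigl|\operatorname{tr}(M\rho_X)-\bra{\psi_X}M\ket{\psi_X}\bigr|\le2\eps.
\end{equation}

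The step I expect to be the main obstacle is that $p_X$ may depend on the hidden input. Then $p_X\sigma(X)\bra{\psi_X}M\ket{\psi_X}$ need not average to $\tfrac23$ times the ideal correlation, and in the parity reductions an adversarial $p_X$ could in principle correlate with $\sigma(X)$. I would handle this pointwise. For every $X$, the parity-corrected ideal value $\sigma(X)\bra{\psi_X}M\ket{\psi_X}$ is at least $7A/8$ after rounding, uniformly in $X$. This follows from \eqref{hsl:eq:roundevolution}, since the gauge $G$ makes the endpoint amplitude input independent up to $\sigma(X)$. Hence
\begin{equation}
 \sigma(X)\E(Y\mid X)\ge p_X(7A/8-2\eps)\ge\tfrac23(7A/8-2\eps)
\end{equation}
holds for every $X$, not merely on average. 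With $\eps\le A/8$ in the mixed and precision families, this is at least $\tfrac23\cdot\tfrac58A$, so the average correlation is at least $\beta_0=5A/12$. In the time family we get at least $\tfrac23\cdot\tfrac{13}{32}>1/4$.

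For Lemma~\ref{hsl:lem:precision}, the success flag and the measurement of $M$ form one bounded input-independent observable on the purified final state. Its expectation is therefore still a polynomial of degree at most $2q$. Since the correlation with full parity is strictly positive, the degree argument again forces $2q\ge n$.

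It remains to check that the modified $\beta_0$ still meets the hypothesis $m\ge4\beta_0^{-1/n}$ of Lemma~\ref{hsl:lem:search}. In the mixed case $4(5A/12)^{-1/n}\le4\cdot(12/5)^{1/n}/\sin(\tau/n)$, and for $n\ge2$ this is at most $16n/\tau\cdot\sqrt{12/10}$. So I would enlarge $m$ by an absolute factor, for example by taking $m$ from \eqref{hsl:eq:mrange} and requiring $s\ge128n/\tau$ in the case split. I would also replace the factor $\beta_0^{1/(2n)}\ge\tfrac12\sqrt{\tau/n}$ by an absolute-constant multiple of it. In the time family $\beta_0=1/4$ is unchanged. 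Substituting these into the proof of Theorem~\ref{hsl:thm:main}, and repeating its case analysis with adjusted absolute constants, gives \eqref{odel:eq:hsbound} for heralded simulators.
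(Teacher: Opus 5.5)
Your proposal is correct and follows the paper's proof. The key step is the same pointwise bound $\sigma(X)\E(Y\mid X)\ge p_X(7A/8-2\eps)\ge\frac23(7A/8-2\eps)$, which neutralises input-dependent success probabilities, and you likewise treat the flag together with $M$ as one bounded input-independent observable for the degree argument; the only difference is bookkeeping, since the paper uses the slack $A\ge e^{-L/4}\ge16\eps$ from the parameter choice in Theorem~\ref{hsl:thm:main} to get correlation at least $A/2$ (the original $\beta_0$, so every constant including $c_0$ survives), whereas your $\beta_0=5A/12$ from $\eps\le A/8$ already satisfies $m\ge4\beta_0^{-1/n}$ and $\beta_0^{1/(2n)}\ge\frac12\sqrt{\tau/n}$ under the original $m\ge16n/\tau$, making your enlargement of $m$ and constant adjustments harmless but unnecessary.
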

\begin{proof}
The reductions in Section~\ref{part:hsl} use a known norm-one observable whose
ideal parity-corrected expectation is an amplitude $a>0$.
Dyadic rounding leaves expectation at least $7a/8$, and conditional
trace error changes it by at most $2\eps$. On failure output a fair
sign. If success probability is $p_X\ge2/3$, the unconditional
parity correlation is at least
\begin{equation}\label{odel:eq:heraldcorr}
 \E_X[p_X(7a/8-2\eps)].
\end{equation}
This remains a pointwise lower bound even if $p_X$ depends on the
hidden input; no division by a success probability is involved.

For the time lower-bound family, $a\ge1/2$. Expression
\eqref{odel:eq:heraldcorr} is at least
$(2/3)(13/32)>1/4$, the correlation used in the original proof.
For the high-precision families, $L\ge64\tau$ and
$n=\lfloor r/4\rfloor$ give
\begin{equation}
 a\ge(\tau/(2n))^n\ge e^{-L/4}\ge16\eps.
\end{equation}
Consequently \eqref{odel:eq:heraldcorr} is at least $a/2$, again the
correlation used there. The mixed small-correlation adversary and
the scalar parity polynomial argument apply unchanged. In the
remaining range $L<64\tau$, the time bound absorbs the other terms.
Thus the parameter proof of Theorem~\ref{hsl:thm:main} and its constants survive
heralding; in particular its constant $\log2/(4096\pi)$ is valid.
\end{proof}

\begin{proof}[Proof of Theorem~\ref{odel:thm:combined}]
The promise class $g\le\bar g$ contains both the family in
Theorem~\ref{odel:thm:joint} and the $g=1$ families in
Lemma~\ref{odel:lem:herald}. The maximum of their lower bounds is at
least a fixed fraction of their sum. Pad dimensions to a common
known size if desired. This proves \eqref{odel:eq:combined} as a
worst-case statement. It does not assert that one instance witnesses
all terms, or that independent lower bounds may be multiplied.
\end{proof}

\subsection{Precision hardness from purely positive dissipation}

The previous transfer uses the Hamiltonian subclass. We now show that
the same precision obstruction also occurs with $H=0$ and $D\succeq0$.

\begin{theorem}[Constant-coefficient PSD precision lower bound]
\label{odel:thm:psd}
Let $s\ge32$, $\tau\ge2$, and $L=\log(1/\eps)\ge64\tau$.
For the homogeneous equation $\dot x=-D_Xx$, there are hard families
with
\begin{equation}
 D_X\succeq0,\qquad \norm{D_X}=\eta_D=1,\qquad T=\tau,
 \qquad 1\le g\le\sqrt2,
\end{equation}
public supports and dyadic entries, for which
\begin{equation}\label{odel:eq:psd}
 Q_A=\Omega(\sqrt{s\tau r}+r).
\end{equation}
The statement allows the same heralded output as before.
\end{theorem}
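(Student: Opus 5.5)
The plan is to convert the Hamiltonian hard families of Section~\ref{part:hsl} into imaginary-time (purely dissipative) families. The existing parity-amplitude readout should then go through almost unchanged. For the search-decorated chain $H_X^{\rm path}$ of \eqref{hsl:eq:dressed}, and for its scalar-sign version, I would set
\begin{equation*}
 D_X=\tfrac12\bigl(I-H_X^{\rm path}\bigr)\oplus 0_{r_0}\oplus 1,
\end{equation*}
that is, append an isolated zero-dissipation reference vertex $r_0$ and an isolated vertex with $D=1$. Because $\norm{H_X^{\rm path}}\le1$, we get $D_X\succeq0$ and $\norm{D_X}=1$. Every row and column has Euclidean norm at most $(1+1)/2=1$, the supports are public, and the sparsity is at most $2m+1$; so take $m$ to be the largest power of two with $2m+1\le s$, which still gives $m=\Theta(s)$. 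The gauge $G$ satisfies $G^\dagger D_XG=\tfrac12(I-I_m\otimes B_n)$ on the chain. Hence
\begin{equation*}
 \bra{v,n}e^{-\tau D_X}\ket{v,0}
 =\sigma(X)\,e^{-\tau/2}\bra n e^{\tau B_n/2}\ket0
 =\sigma(X)\,e^{-\tau/2}\sinh\!\bigl(\tau/(2n)\bigr)^n .
\end{equation*}
This follows from the product structure over $n$ qubits in \eqref{hsl:eq:spinamp}, with $\sin$ replaced by $\sinh$ under imaginary time; no phase factor appears.

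\textbf{Normalisation and output.} Use the same input $\ket{\psi_{\rm in}}=(\ket{r_0}+\ket{v,0})/\sqrt2$ and the same readout observable $M$ (without the $\ii^n$ phases). The $r_0$ component is preserved exactly and the chain part is contracted, so $1/\sqrt2\le\norm{x(T)}\le1$. Hence $1\le g\le\sqrt2$.

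\textbf{Rounding and correlation.} Round the chain weights down to dyadics as in \eqref{hsl:eq:round}; the diagonal $1/2$ and the entries $b'_k(2/m-x_a)/2$ are then dyadic. The contraction-semigroup Duhamel bound $\norm{e^{-\tau D}-e^{-\tau D'}}\le\tau\norm{D-D'}$ replaces \eqref{hsl:eq:roundevolution}. The normalised output overlaps the ideal state with constant weight, since $g\le\sqrt2$. Consequently a conditional trace error $\eps$ moves the expectation of $M$ by $O(\eps)$. By the argument of Lemma~\ref{odel:lem:herald}, the parity correlation is at least a constant times $a-O(\eps)$, where $a=e^{-\tau/2}(\tau/(2n))^n$ lower-bounds the amplitude via $\sinh y\ge y$.

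\textbf{Parameter choice and conclusion.} Take $n=\lfloor r/4\rfloor$ exactly as in the proof of Theorem~\ref{hsl:thm:main}, with $\tau$ replaced by $\tau/2$ in the hypotheses of Lemmas~\ref{hsl:lem:mixed} and~\ref{hsl:lem:precision}; this only changes constants. Since $L\ge64\tau$, we have $e^{-\tau/2}\ge e^{-L/128}$. Together with $(\tau/(2n))^n\ge e^{-L/4}$, this gives $a\ge e^{-L/3}\gg\eps$, leaving a correlation of order $a$. Lemma~\ref{hsl:lem:search}, applied through the mixed family, then gives $\Omega(\sqrt{s\tau r})$, with the same case split when $s<64n/\tau$. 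The degree argument of Lemma~\ref{hsl:lem:precision} applies verbatim to the scalar-sign family, because entries remain linear in the Boolean bits, and gives $\Omega(r)$. The maximum of the two families yields \eqref{odel:eq:psd}.

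The step I expect to be the main obstacle is verifying that the $\sinh$ amplitude, after multiplication by the global decay $e^{-\tau/2}$ and after accounting for the normalisation and heralding, still dominates $\eps$ with constants compatible with the hypothesis $m\ge4\beta_0^{-1/n}$ of Lemma~\ref{hsl:lem:search}. Here I would use $\beta_0^{1/n}\ge e^{-\tau/(2n)}\tau/(4n)$ together with $n\ge\tau$, which keeps $e^{-\tau/(2n)}\ge e^{-1/2}$ bounded below.
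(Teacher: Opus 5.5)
Your proposal is correct and follows essentially the same route as the paper. The shared steps are: $D_X=(I-H'_X)/2$ on the rounded search-decorated chain, with an appended zero-dissipation reference vertex and an isolated eigenvalue-one vertex; the amplitude $e^{-\tau/2}\sinh(\tau/(2n))^n$ with $n=\lfloor r/4\rfloor$; contractive Duhamel rounding; the reference-vertex readout with $1\le g\le\sqrt2$; and the mixed search-parity and scalar-sign degree arguments, combined by a case split on $s$. The differences are only in constants. The paper absorbs $e^{-\tau/2}$ using $n\ge\tau/2$ to get $a\ge(\tau/(2en))^n$. Because $m>s/8$ and the time is now $\tau/2$, the case-split threshold must be enlarged, and the paper uses $s\ge512n/\tau$ rather than your $64n/\tau$.
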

\begin{proof}
Let $u=\tau/2$ and $n=\lfloor r/4\rfloor$. With
$h=\log(e+L/\tau)$, the inequality $h\le L/(8\tau)$ for
$L/\tau\ge64$ gives
\begin{equation}\label{odel:eq:nrange}
 n\ge r/8\ge\tau\ge u,\qquad n\ge2,\qquad n\le r/4.
\end{equation}
Use the spin chain
\begin{equation}
 B_n=\sum_{k=0}^{n-1}b_k(\ket{k+1}\bra k+\ket k\bra{k+1}),
 \qquad b_k=\frac{\sqrt{(k+1)(n-k)}}n.
\end{equation}
It is the restriction of $n^{-1}\sum_{j=1}^n X_j$ to the symmetric
subspace, so $\norm{B_n}=1$ and
\begin{equation}\label{odel:eq:imaginaryamp}
 e^{-u}\bra n e^{uB_n}\ket0=e^{-u}\sinh(u/n)^n=:a.
\end{equation}
Decorate consecutive edges with independent gadgets $U_{j_k}$ from
\eqref{odel:eq:gadget}, putting their adjoints on reverse edges. The
resulting Hermitian $H_X$ is unitarily gauge equivalent to
$I_m\otimes B_n$, has sparsity at most $2m$, and its endpoint
amplitude in \eqref{odel:eq:imaginaryamp} is $\sigma(X)a$ on the known
alternating vector $v$, where $\sigma(X)=(-1)^{\sum_kj_k}$.

Since $n\ge u$ and $\sinh z\ge z$,
\begin{equation}\label{odel:eq:alower}
 a\ge(u/(en))^n\ge e^{-L/4}.
\end{equation}
For the last inequality, put $z=L/\tau\ge64$. Then
$en/u\le ez/(2h)\le e+z$ (here $h\ge\log(e+64)>e/2$), so
$n\log(en/u)\le nh\le L/4$. In particular $a\ge8\eps$.

Round every $b_k$ down to a positive dyadic $b'_k$ with error at most
\begin{equation}
 \delta\le\min\{\tfrac12\min_k b_k,\ a/(16u)\}.
\end{equation}
Entrywise comparison for the nonnegative path gives
$\norm{B'_n}\le1$, and $\norm{B_n-B'_n}\le2\delta$.
The decorated rounded $H'_X$ has the same properties by the common
gauge transformation. Set on the path block
\begin{equation}\label{odel:eq:Dpath}
 D_X=(I-H'_X)/2.
\end{equation}
Append an isolated reference of eigenvalue zero and an isolated
eigenvalue one. Then $D_X\succeq0$, $\norm{D_X}=1$, and the maximum
row/column Euclidean norm is exactly one. The sparsity is at most
$2m+1$ and all entries are dyadic. Contractive Duhamel comparison
between the unrounded and rounded PSD generators gives
\begin{equation}
 \norm{e^{-u(I-H_X)}-e^{-u(I-H'_X)}}
 \le2u\delta\le a/8.
\end{equation}
Thus the rounded endpoint amplitude is $\sigma(X)a'$ with
$a'\ge7a/8$.

Start from $(\ket{\mathrm{ref}}+\ket{v,0})/\sqrt2$.
The squared norm $P$ of the solution lies in $[1/2,1]$, giving
$1\le g\le\sqrt2$. For the normalised solution, the observable
\begin{equation}
 M_0=\ket{\mathrm{ref}}\bra{v,n}
       +\ket{v,n}\bra{\mathrm{ref}}
\end{equation}
has parity-corrected expectation $a'/P\ge7a/8$.
Conditional trace error and a fair sign on failure give correlation
at least $(2/3)(7a/8-2\eps)\ge a/4$.

We use the small-correlation search-parity Lemma~\ref{hsl:lem:search}: for $n$ independent one-hot strings of
length $m$, correlation $\beta_0>0$ requires
\begin{equation}\label{odel:eq:smallbias}
 q\ge\frac{\log2}{24}n\sqrt m\,\beta_0^{1/(2n)},
 \qquad m\ge4\beta_0^{-1/n}.
\end{equation}
For $\beta_0=a/4$, condition $m\ge32n/u$ suffices, since
$4\beta_0^{-1/n}\le8en/u<32n/u$.
Also $\beta_0^{1/(2n)}\ge\tfrac14\sqrt{u/n}$.
Each matrix query takes at most two membership queries, so
\begin{equation}\label{odel:eq:psdmixedlocal}
 Q_A\ge\frac{\log2}{192}\sqrt{mnu}.
\end{equation}

Choose $m$ to be the largest power of two at most $(s-1)/2$.
Then $m>s/8$ and $2m+1\le s$. If $s\ge512n/\tau$,
the sufficient condition $m\ge32n/u$ holds, and
\eqref{odel:eq:nrange} and \eqref{odel:eq:psdmixedlocal} yield
\begin{equation}
 Q_A\ge\frac{\log2}{1536\sqrt2}\sqrt{s\tau r}.
\end{equation}

For a separate precision family, replace the edge gadgets by scalar
signs $(-1)^{x_k}$ for independent bits. The resulting PSD matrix is
three-sparse and has the same positive endpoint correlation.
After $q$ Boolean queries, any unconditional observable expectation
has multilinear degree at most $2q$. Positive uniform correlation
with full $n$-bit parity therefore requires $2q\ge n$.
Since $q\le2Q_A$, we obtain $Q_A\ge n/4\ge r/32$.
This proof includes the success flag in the unconditional observable;
it does not apply the polynomial method to a postselected ratio.

If $s<512n/\tau$, then $r\le8n$ implies
$\sqrt{s\tau r}<64n$, so this scalar family also provides
$Q_A>\sqrt{s\tau r}/256$. Taking the maximum of the two families
proves \eqref{odel:eq:psd}. Known positive semidefinite dense padding,
as in Theorem~\ref{odel:thm:joint}, makes maximum sparsity exactly $s$
when required.
\end{proof}

This theorem establishes precision-dependent hardness inside the
purely dissipative, constant-coefficient subclass. It does not prove
a constant-error $\Omega(\sqrt{s}\tau)$ bound for that subclass or
necessity of the factor $\ell$ multiplying $\tau_D$ in
\eqref{odel:eq:refined}. In particular, a linear power of $L$ in the
sparsity-dependent dissipative cost has not been established.

\subsection{Why a uniform full logarithmic match cannot hold}

A possible overstatement would be the lower bound
$\Omega(\bar g\sqrt{s}\tau\ell)$ uniformly in all parameters.
The following auxiliary upper bound is used only to exclude that
assertion as $\eps\to0$ at fixed positive $s,\tau,\bar g$.
It covers the entire time-dependent class under the coherent access,
regularity, and source-preparation assumptions of
Theorem~\ref{ode:thm:time}. Its exponentially large dependence on
$s\tau$ is immaterial for this limited purpose.

\begin{lemma}[Fixed-parameter precision obstruction]\label{odel:lem:dyson}
Suppose coherent time queries, certified time discretisation, and
the source quadrature/preparation access of Theorem~\ref{ode:thm:time} are
available, with coherent matrix access on every finite mesh selected
from the public regularity certificate. Put $z=s\eta T=s\tau>0$ and
$L_g=\log(C\bar g/\eps)$. There is a generic sparse-oracle solver
with
\begin{equation}\label{odel:eq:dyson}
 Q_A=\mathcal{O}\!\left(\bar g e^z
       \left[1+z+\frac{L_g}{\log(e+L_g/z)}\right]\right).
\end{equation}
Only matrix queries are counted. Gate costs and time-register sizes
may depend on the certified discretisation.
\end{lemma}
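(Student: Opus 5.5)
The plan is to avoid the sparse Cayley machinery altogether and use the ordinary sparse block encoding of normalisation $s\eta$, combined with an LCHS representation of the dissipative propagator. The exponentially large prefactor $e^z$ then pays for a crude accuracy tradeoff, while the precision logarithm appears with the optimal denominator of \eqref{hs:eq:prioroptimal}. The row-norm promise gives $|A_{ij}(t)|\le\eta$. Each Hermitian combination $K_k(t)=H(t)+kD(t)$ from \eqref{ode:eq:Kentry} therefore has an exact time-indexed block encoding with normalisation $\mathcal{O}((1+|k|)s\eta)$ at constant primitive query cost, using the two tagged supports of Lemma~\ref{ode:lem:tags} without rescaling.

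Rather than taking the LCHS frequency cutoff $R=\Theta(L_g)$, which would make the time term grow with precision, I would expand the propagator directly as a truncated Dyson series. The $K$th order term of $\mathcal T e^{\int A}$ has norm at most $z^K/K!$ after normalising by $s\eta T=z$, since $\norm{A(t)}\le s\eta$. A Dyson-series LCU in the style of the truncated Taylor series method therefore has coefficient one-norm at most $e^z$. Each selected branch uses at most $K$ block-encoding queries to $A(t)$, with coherent time addresses supplied by the certified mesh; the mesh refinement is free in the query count, as in Theorem~\ref{hs:thm:time}.

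Two error sources must be controlled.
\begin{itemize}
\item The truncation error $\sum_{k>K}z^k/k!$ is below $\nu=\eps/(C\bar g)$ once $K=\mathcal{O}(z+L_g/\log(e+L_g/z))$. This is the standard estimate solving $(ez/K)^K\le\nu$, and it is exactly the bracket in \eqref{odel:eq:dyson}.
\item The time-ordered integrals must be discretised on a certified mesh. Their error is absorbed by choosing the mesh from the regularity certificate, at no query cost.
\end{itemize}

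The resulting block has normalisation $e^z$, not one. The source branches of Theorem~\ref{ode:thm:time} are then assembled exactly as in the dissipative proof, using the same vector-normalisation inequality. The successful amplitude becomes $\norm{x(T)}/(\mathcal{O}(e^z)B)\ge 1/(\mathcal{O}(e^z)\bar g)$. Amplitude amplification, applied to the actual implemented unitary with a randomised iteration count, then costs $\mathcal{O}(e^z\bar g)$ invocations. Each invocation uses $\mathcal{O}(K)$ queries, which gives \eqref{odel:eq:dyson}. I would not attempt oblivious amplification, since the target is not unitary; that is the reason $e^z$ multiplies the whole bound.

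The main obstacle is making the Dyson LCU rigorous for a non-Hermitian, time-dependent generator with only sparse access. Two points need care: the block encoding of $A(t)$ itself, and the approximation error of the time-ordered integrals. For the first, I would encode $A=-D-\ii H$ through the Hermitian dilation $\begin{pmatrix}0&A\\A^\dagger&0\end{pmatrix}$, whose entries are read from the native oracles with constant overhead; this loses only a factor two in normalisation. For the second, I would use the ordered-sample sum over the certified mesh, with error controlled by Lipschitz bounds multiplied by the series one-norm $e^z$. That error is absorbed into $\nu$ by a sufficiently fine mesh.
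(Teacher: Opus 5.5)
Your overall route is the paper's: an ordinary sparse block encoding of the non-Hermitian $A(t)$ at normalisation $s\eta$, a truncated Dyson LCU with positive coefficients $z^k/k!$ and one-norm at most $e^z$, norm-weighted input/source assembly, and non-oblivious amplitude amplification with $\mathcal{O}(\bar g e^z)$ rounds of $\mathcal{O}(K)$ queries. As written, though, one step breaks the stated bound.

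You claim the Hermitian dilation ``loses only a factor two in normalisation''. A block-encoding normalisation $\alpha$ enters the LCU one-norm as $\sum_k(\alpha T)^k/k!\le e^{\alpha T}$. So $\alpha=2s\eta$ gives a prefactor $e^{2z}$, which is not $\mathcal{O}(e^z)$ with absolute constants. A constant-factor loss in normalisation is not harmless here, because it sits in the exponent. The dilation is also unnecessary. Uniform superpositions over the row and column slot lists, together with a rotation controlled on $A_{ij}(t)/\eta$, block-encode $A(t)/(s\eta)$ directly, with no Hermiticity required. Even through $\bigl(\begin{smallmatrix}0&A\\A^\dagger&0\end{smallmatrix}\bigr)$, the off-diagonal block is encoded at the same normalisation $s\eta$. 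For a related reason, delete the opening LCHS branches $K_k=H+kD$ altogether. Any use of frequencies $|k|\le R=\Theta(L_g)$ multiplies $z$ by $R$, reintroducing the multiplicative precision dependence the lemma exists to exclude.

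Three smaller points need tightening.

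First, the mesh-independence of the query count is not ``as in Theorem~\ref{hs:thm:time}'', which rests on the one-call direct-sum transducer. Here it comes from preparing $k$ independent cell labels with cell-length probabilities and sorting a working copy reversibly, so that each factor is a single coherent time-indexed query.

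Second, the paper avoids any $e^z$-amplified discretisation error. It first freezes $A$ to a piecewise-constant $A_h$, where contractivity and Duhamel give $\sup_u\norm{U(T,u)-V(T,u)}\le\int_0^T\norm{A-A_h}\,dt$. The sorted labels then reproduce the Dyson series of $A_h$ exactly, through multinomial weights $k!/\prod_j r_j!$. Your term-by-term comparison also works, since its error is at most $e^z\int_0^T\norm{A-A_h}\,dt$ and refinement is free. This means the integrated certificate suffices and Lipschitz bounds are not needed. However, your ordered-sample sum must then either include coincident cells or bound their omission.

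Third, source branches starting at $u$ need a common normalisation. Use the same weights $z^k/k!$ for every branch, cell labels on $[u,T]$, and the known attenuation $((T-u)/T)^k$.
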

\begin{proof}
Since every entry of $A(t)$ has magnitude at most $\eta$, the
ordinary sparse-matrix construction gives a block encoding of
$A(t)/\alpha$, $\alpha=s\eta$, with a constant number of queries.
For completeness, uniform row/column slot superpositions and a
controlled entry rotation using the known bound $|A_{ij}(t)|\le\eta$
give normalisation $s\eta$. The actual maximum entry need not be
known. This construction needs no Hermiticity.

Fix a discretisation tolerance $\delta>0$. Choose a finite mesh
$0=t_0<\cdots<t_m=T$, including the known breakpoints, from the
supplied regularity certificate, and put $A_h(t)=A(t_j)$ on
$[t_j,t_{j+1})$. If $U$ and $V$ are the propagators of $A$ and $A_h$,
respectively, then both are contractive and Duhamel's formula gives
\begin{equation}
 \sup_{0\le u\le T}\norm{U(T,u)-V(T,u)}
 \le\int_0^T\norm{A(t)-A_h(t)}\,dt\le\delta.
\end{equation}
For example, if $A$ has operator-norm Lipschitz bound $L_j$ on the
$j$th cell, the integral is at most
$\tfrac12\sum_j L_j(t_{j+1}-t_j)^2$. More generally, the assumed
certificate supplies a finite mesh with the displayed error bound.
The mesh and the reversible map from a cell label to its sample time
are public; no matrix table is constructed. One coherent call at the
selected sample time accesses $A_h$. Thus mesh refinement increases
register sizes and known-gate costs, but not the query cost of a
selected matrix factor. This uses coherent access on every certified
refinement, as required by the time-discretisation model.

The Dyson series of $V(T,u)$ has degree-$k$ coefficient at most
$z^k/k!$, and its normalised operator term is an average of
time-ordered products of $k$ block-encoded matrices. Prepare $k$
independent cell labels with their known cell-length probabilities,
sort a working copy reversibly while retaining the original labels,
and apply the selected matrices in chronological order. This uses
$\mathcal{O}(k)$ matrix queries. Fresh encoding ancillas make the compressed
product equal to the product of the compressed matrices; uncomputing
the sorting workspace and label preparation gives the stated average.

For an equal-cell grid this representation handles repeated time
samples exactly: sorting $k$ independent cell labels produces
multinomial weights $k!/\prod_j r_j!$, which, after the factor
$1/k!$, are the Dyson coefficients for $r_j$ occurrences in each
cell. Thus increasing the grid resolution does not multiply the
number of matrix queries. Arbitrary integration endpoints can
equivalently be handled by known cell-length weights.

Truncation at degree $K$ has error at most
$\sum_{k>K}z^k/k!$. For $K+2\ge2z$, the factorial tail obeys
\begin{equation}
 \sum_{k>K}\frac{z^k}{k!}
 \le 2\frac{z^{K+1}}{(K+1)!}
 \le2\left(\frac{ez}{K+1}\right)^{K+1}.
\end{equation}
Choosing $K$ to make this at most a sufficiently small constant times
$\eps/\bar g$ gives
\begin{equation}\label{odel:eq:degree}
 K=\mathcal{O}\!\left(1+z+\frac{L_g}{\log(e+L_g/z)}\right)
\end{equation}
The sum of the retained positive LCU coefficients is at most $e^z$.

For the original forced solution
$U(T,0)x_0+\int_0^T U(T,u)b(u)\,du$, use the supplied coherent
norm-weighted input/source quadrature preparation.
Write $B_q\le c_qB$ for the total input/source coefficient norm,
where $c_q$ is the absolute constant in the source-access promise.
Choose its source quadrature to have vector error at most $\delta B$
against this original solution.
Replacing $U$ by $V$ on all quadrature branches changes their sum by
at most $\delta B_q$. Thus $\delta=\mathcal{O}(\eps/\bar g)$ suffices for the
combined quadrature and time-discretisation error budget.
Use the same degree
bound and coefficient sequence $z^k/k!$ for all branches.
For a branch starting at $u$, the factor $((T-u)/T)^k$ is a known
attenuation. The resulting solution preparation has normalisation
$\mathcal{O}(e^z B)$, including the source quadrature's constant normalisation
overhead, and each LCU selection uses $\mathcal{O}(K)$ matrix queries.
Choose the truncation, discretisation, and quadrature errors so the
unnormalised solution error is at most
$\eps B/(C_2\bar g)$ for a sufficiently large constant $C_2$.
Since $\norm{x(T)}\ge B/\bar g$, the normalised error is at most
$\eps$, and success amplitude is $\Omega(1/(\bar g e^z))$.
Amplitude amplification with a known lower bound on success
probability costs $\mathcal{O}(\bar g e^z)$ and gives constant heralded
success. Combining with \eqref{odel:eq:degree} proves \eqref{odel:eq:dyson}.
\end{proof}

At fixed positive $s,\tau,\bar g$, this gives
\begin{equation}\label{odel:eq:fixed}
 Q_A=\mathcal{O}\!\left(\frac{\log(1/\eps)}{
                               \log\log(1/\eps)}\right)
 \qquad (\eps\longrightarrow0).
\end{equation}
Consequently a uniform worst-case lower bound proportional to
$\bar g\sqrt{s}\tau\log(\bar g/\eps)$ is false under these access
assumptions: fixing any admissible positive $s,\tau,\bar g$ makes
the auxiliary upper bound sublinear in $\log(1/\eps)$.
A constant-coefficient upper bound alone would not exclude such a
worst-case assertion for the larger time-dependent class;
Lemma~\ref{odel:lem:dyson} is included precisely to cover that class.
Its exponential $e^{s\tau}$ cost gives no improved comparison with
\eqref{ode:eq:simple} in a joint large-parameter regime, and no
gate-complexity or time-register-size bound is claimed here.

\subsection{Scope and remaining gap}

The established comparison is
\begin{equation}
 \Omega(\bar g\sqrt{s}\tau+\sqrt{s\tau r}+r)
 \ \le\ Q_A^{\mathrm{worst}}\ \le\
 \mathcal{O}(\bar g\sqrt{s}\tau\ell)
 \quad(s\ge32,\ \tau\ge4,\ \bar g\ge2).
\end{equation}
The lower bound is within one logarithm of the new general upper
bound. The joint time-sparsity-normalisation factor is necessary,
and the PSD construction shows that precision dependence cannot be
attributed only to the Hamiltonian subclass. However, neither the
extra dissipative precision factor nor its multiplication by
$\bar g$ has been shown necessary. Even at constant $g$ and time,
the PSD result gives a square-root precision obstruction (up to a
logarithm of a logarithm), not a linear-logarithm obstruction in the
sparsity term. Low--Somma also distinguish parameter lower bounds
from the unresolved multiplicative time-precision issue in their
discussion~\cite[Section 7]{LS}.

These are query-complexity results with arbitrarily many known gates.
They do not bound the cost of representing all time slots.

\section{An optimal sparse-query linear-system algorithm}\label{part:qls}
\subsection{Statement and access model}\label{qls:sec:main}

\begin{theorem}\label{qls:thm:main}
Let $A\in\mathbb C^{N\times N}$ be invertible, with
\begin{equation}\label{qls:eq:promise}
 \norm A\le1,\qquad \norm{A^{-1}}\le\kappa,\qquad \kappa\ge1.
\end{equation}
Every row and column has at most $s\ge1$ nonzero entries. Suppose the
following coherent oracles and their inverses are available:
an entry-value oracle, in-place row and column location oracles, and a
state-preparation circuit $U_b\ket0=\ket b$ for a normalised nonzero $b$.
Each locator is supplied as a full unitary extension of its specified
phase-one basis action on valid list indices, and a circuit description
of $U_b$ is supplied. Then, for
$0<\eps\le1/10$,
there is a bounded-query algorithm using
\begin{equation}\label{qls:eq:main}
 \mathcal{O}\left(\sqrt{s}\,\kappa\log\frac1\eps\right)
\end{equation}
matrix-oracle calls and at most the same order of calls to $U_b,U_b^\dagger$.
It has a classical success flag with probability at least $2/3$; conditional
on success, its output state $\rho$ satisfies
\begin{equation}
 \tfrac12\norm{\rho-\ket x\bra x}_1\le\eps,
 \qquad \ket x=A^{-1}\ket b/\norm{A^{-1}b}.
\end{equation}
All hidden constants are absolute: there is no suppressed logarithm in
$s$ or $\kappa$, and no extra power of $\log(1/\eps)$.
\end{theorem}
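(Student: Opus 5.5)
The plan follows the route sketched in the overview: embed $A^{-1}\ket b$ in the kernel of an augmented Hermitian matrix, represent that matrix through the exact entrywise Cayley construction of Section~\ref{hs:sec:entry} (via Lemma~\ref{ode:lem:tags} for the non-Hermitian $A$), and select the kernel with a Chebyshev filter applied to a walk. First form the Hermitian dilation $H_A=\bigl(\begin{smallmatrix}0&A\\A^\dagger&0\end{smallmatrix}\bigr)$. It is $s$-sparse, has row and column norms at most one, and has spectral gap $1/\kappa$ around zero. Then augment by one extra \emph{source} direction $\ket{\bot}$ coupled to $\ket b$ with a regularisation weight $\mu$:
\[
 M_\mu=\begin{pmatrix}0&\mu\bra{b,0}\\ \mu\ket{b,0}&H_A\end{pmatrix}
\]
or an equivalent arrangement. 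Its kernel is spanned, up to a known sign pattern, by $\ket\bot-\mu H_A^{-1}\ket{b,0}$. The dense column is never queried entrywise. Instead, the boundary isometry in that sector is the supplied $U_b$ composed with the query-free $J_{\rm free}$, while the sparse sector uses the row/column locators exactly as in Section~\ref{hs:sec:entry}. The aim is to verify the moment interface \eqref{ode:eq:moments} for $M_\mu$ with parameter $p=\mathcal{O}(s)$, so that the private Cayley unitary $D_0$ has constant query cost and $\mathcal{O}(1)$ calls to $U_b,U_b^\dagger$.

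Next, combine $D_0$ with the reflection $2JJ^\dagger-I$ into a walk $W_\mu$. I would prove two facts about it. The first is that its $+1$ eigenspace, restricted to the image of $J$, corresponds exactly to $\ker M_\mu$; this should follow from the identity $(I-D_0)^{-1}=(I+\ii K_\gamma)/2$ together with $J^\dagger K_0J\propto M_\mu$. The second is that every other eigenphase of $W_\mu$ is at least $\Omega(1/(\kappa\sqrt s))$ away from zero, in the relevant sense. For the gap I would mirror the resolvent argument of \eqref{hs:eq:localends}. Compress $(z-W_\mu)^{-1}$ through $J$, use the compressed second-moment bound $J^\dagger K_0^2J\preceq I$ together with $\norm{K_0}\le\sqrt p$ and the spectral gap $1/\kappa$ of $H_A$, and take $\gamma=1$. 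Given the gap, a Chebyshev (Costa-type) fixed-space filter of degree $\mathcal{O}(\kappa\sqrt s\log(1/\eps))$ in $W_\mu$ projects an input $J\ket{\bot}$ onto the kernel direction with error $\eps$. Reading out the $H_A^{-1}$ component, and postselecting on the sector, then yields $\ket x$.

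The overlap of $J\ket\bot$ with the kernel vector is $1/\sqrt{1+\mu^2\norm{A^{-1}b}^2}$, and the success probability of the final extraction is its complement, so $\mu$ should be of order $1/\norm{A^{-1}b}$. That value is unknown and lies in $[1,\kappa]$. I would therefore run a geometric schedule $\mu_j=2^{-j}$, $j\le\log_2\kappa$, and at each stage estimate the overlap to constant accuracy using the same filter at constant precision. The $\mu$-dependence of the gap has to be checked so that small $\mu$ does not shrink it below $1/\kappa$; I expect a bound of order $\min\{1,\mu\}$ added to the $H_A$ gap to keep it at $\Omega(1/\kappa)$ throughout. Costs are kept summing geometrically by spending constant precision on the early stages and $\log(1/\eps)$ only on the final one, which avoids an extra $\log\kappa$. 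Finally, amplitude amplification by a constant factor, or constant repetition with a herald, gives success probability at least $2/3$.

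The main obstacle is the spectral-gap step. $W_\mu$ is a product of a Cayley unitary and a reflection, so it is unitary but its relevant spectrum is not simply that of $M_\mu$. The separation must scale as $1/(\kappa\sqrt s)$ rather than $1/(\kappa s)$, which requires using only the compressed moments and not $\norm{K_0}$. My first attempt would be a Jordan-lemma/Szegedy-type decomposition of $W_\mu$ into two-dimensional invariant blocks associated with singular values of $J^\dagger D_0 J$. On each block I would bound the phase below via the Cayley relation $\arg$-distance $\gtrsim|\lambda|/\sqrt{1+\gamma^2\lambda^2 p}$, applied at $|\lambda|\ge1/\kappa$ and $\gamma=1$. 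If the blocks are not clean, I would fall back on a resolvent bound near $z=1$ on the orthogonal complement of the kernel, as in \eqref{hs:eq:globalres}, and use the contour argument of \eqref{hs:eq:filterbound} in place of an eigenvalue gap.
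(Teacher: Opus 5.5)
\textbf{The gap: the norm-estimation schedule does not cost what you claim.} This step is exactly where the theorem's promise of no hidden $\log\kappa$ is decided. In your construction $\mu$ only rescales the coupling of $\ket\bot$ to $\ket{b,0}$.

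Cauchy interlacing with $H_A$ does show that every nonzero eigenvalue of $M_\mu$ has $|\lambda|\ge1/\kappa$, for every $\mu$. But $\mu$ never enlarges this gap. Any eigenvector of $H_A$ orthogonal to $\ket{b,0}$ remains an eigenvector of $M_\mu$ with the same eigenvalue. For instance, if the singular value $1/\kappa$ is degenerate, its eigenspace contains such a vector. So in the worst case the gap of $M_\mu$ is exactly $1/\kappa$ at every stage, and the additive $\min\{1,\mu\}$ you anticipate does not exist.

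The phase-gap bound available to you is therefore $\Omega(1/(\kappa\sqrt s))$ at every stage, and each overlap test costs $\Theta(\kappa\sqrt s)$ queries per repetition. Lowering the precision of the early tests changes only the factor $\operatorname{arcosh}(1/\delta)$ in the filter degree, not the inverse-gap factor. Scanning $\mu_j=2^{-j}$ over $\log_2\kappa$ stages thus costs $\Omega(\kappa\sqrt s\log\kappa)$ already at constant $\eps$. That is before the confidence repetitions needed to union-bound over equally expensive stages. A binary search over $j$ would still leave a $\log\log\kappa$ factor. Nothing in your scheme makes the stage costs sum geometrically.

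\textbf{The missing idea: regularise the matrix itself so that early stages are cheap.} The paper takes $B$ to be the dilation of $A$ and sets $M_\sigma=\sqrt{1-\sigma^2}\,Z\otimes B+\sigma X\otimes I$. Since $Z$ and $X$ anticommute, $M_\sigma^2=I\otimes((1-\sigma^2)B^2+\sigma^2I)$. Hence $\sigma_{\min}(M_\sigma)\ge\max\{\sigma,1/\kappa\}$, and the walk gap at $\sigma_j=2^{-j}$ is $\Omega(2^{-j}/\sqrt s)$.

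The norms $r_\sigma=\norm{M_\sigma^{-1}\beta}$ satisfy $r_1=1$ and $r_\sigma\le r_{\sigma/2}\le2r_\sigma$. So each stage tests only five guesses $\widehat r_{j-1}2^k$ near the previous estimate. With failure budgets $\zeta_j\propto\eps\,2^{j-m}$, the total is $O(\sqrt s\sum_{j\le m}2^j[\log(1/\eps)+m-j])=O(\sqrt s\,\kappa\log(1/\eps))$. At $\sigma=0$, $M_0=Z\otimes B$ recovers $A^{-1}b$.

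Your bordered augmentation can be kept with $M_\sigma$ in place of $H_A$. One checks $\beta^\dagger M_\sigma^{-1}\beta=0$, so the kernel stays one-dimensional and interlacing gives gap at least $\sigma_{\min}(M_\sigma)$. The essential point is that the ladder must run in $\sigma$, not only in the coupling weight.

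\textbf{Two smaller corrections.}
\begin{enumerate}
 \item With the reflection $2JJ^\dagger-I$, fixed vectors satisfy $(I+D_0)v=2JJ^\dagger v$. Since $D_0=-I$ on $\ker K_0$, every vector of $\ker K_0$ orthogonal to the image of $J$ is fixed. Such vectors generically exist, for example from padding edges or the low-rank source branch. You need $I-2JJ^\dagger$ instead. Then $(I-D_0)^{-1}=(I+\ii K)/2$ gives the bijection $u\mapsto(I-D_0)^{-1}Ju$ from $\ker M_\mu$ onto the whole fixed space; these fixed vectors do not lie in the image of $J$.
 \item Jordan's lemma does not apply, because $D_0$ is not a reflection. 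The paper eliminates the private part to obtain the condition $\ker J^\dagger(K+tI)(I-tK)^{-1}J\ne\{0\}$, with $t=\tan(\theta/2)$. At $t=0$ this operator is the compressed Hermitian matrix. Its increment is sign-definite, with norm at most $|t|(1+c)/(1-\norm K|t|)$ when $J^\dagger K^2J\preceq cI$. Eigenvalue monotonicity then gives $|\tan(\theta/2)|=\Omega(\Delta_H/\sqrt s)$. Because the walk is unitary, no contour argument is needed once this bound holds.
\end{enumerate}
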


The promise~\eqref{qls:eq:promise} is the standard normalised linear-system
promise. For an unscaled matrix and a known spectral upper bound $\Lambda$,
the corresponding parameter is $\Lambda\norm{A^{-1}}$. It agrees with
the condition number to constant factors when $\Lambda$ is comparably tight.

An in-place locator maps a sparse-list index to its matrix index, with
the row or column retained. Lists include all nonzero positions and can
be padded injectively with zero entries. The theorem uses the locator's
inverse as well. It does not identify this model with a weaker out-of-place
location oracle for an arbitrarily ordered list unless efficient inverse
location access is also supplied. For Hermitian input, symmetry lets one
use row access to obtain the analogous column access. For general $A$,
both are assumed. In particular, each locator is unitary on the full index
register and extends its specified phase-one basis action on the valid
list indices. Its action on the orthogonal complement may be arbitrary;
a permutation extension is sufficient but is not required.

Controlled matrix queries need not be assumed separately. For an in-place
locator $F$, choose one fixed valid input basis state $a_0$, and write
$F\ket{a_0}=\ket{b_F}$ with the promised phase exactly one. Prepare a
reference register in $\ket{b_F}$ using one query and a dummy register in
$\ket{a_0}$. Swap the payload and dummy conditionally on the active control,
apply $F$ to the dummy, and undo the swap. The dummy is then $a_0$ on the
active branch and $b_F$ on the inactive branch. On the inactive branch only,
apply the known basis transposition exchanging $a_0$ and the basis label
held by the reference. The dummy is now clean on both branches; one
$F^\dagger$ query cleans the reference. This uses three locator queries and
implements controlled-$F$ exactly, even for an entangled payload and
arbitrary off-promise action of $F$. Taking adjoints gives controlled-$F^\dagger$.
The same control construction was used in Section~\ref{hs:sec:entry}.
A controlled XOR value query uses two ordinary queries: compute the value
in a clean register, copy it to the target conditionally, and uncompute it.
The supplied circuit for $U_b$ can be controlled gate by gate; one invocation
of either form is counted as one state-preparation call. If $U_b$ is instead
an opaque oracle, its controlled form must be supplied as well.

The construction uses exact finite-word matrix entries in the query model.
Controlled rotations and known gates may be approximated to sufficiently
high precision without increasing the number of matrix queries. Their
gate cost and the size of the full working space are not covered by the
query bound.

\subsection{An exact entrywise Cayley representation}

\begin{lemma}\label{qls:lem:entry}
Let $H$ be Hermitian, $s$-sparse in both rows and columns, with $\norm H\le1$.
There are an isometry $J$ and a Hermitian $K$ such that
\begin{equation}\label{qls:eq:moments}
 J^\dagger KJ=H/\sqrt{s},\qquad
 J^\dagger K^2J\preceq I,\qquad \norm K\le\sqrt{s}.
\end{equation}
After a sparse-oracle change of private coordinates, $J$ is implemented
without matrix queries, and the Cayley unitary
\begin{equation}\label{qls:eq:cayley}
 D=(K+\ip I)(K-\ip I)^{-1}
\end{equation}
and its inverse each use $\mathcal{O}(1)$ primitive sparse queries.
The same holds for the Cayley transform of any known scalar multiple of $K$.
\end{lemma}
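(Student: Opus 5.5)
This lemma is the construction of Section~\ref{hs:sec:entry}, specialised to $\eta=1$. The row/column norm bound $\eta=1$ follows from $\norm H\le1$, since every row and column Euclidean norm of $H$ is at most $\norm H$.

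\textbf{Construction of $J$ and $K$.} I would take the edge space with basis $\ket{i,j}$ and an orientation bit. Define the row and column frames $L,R$ from the supplied lists, which satisfy $L^\dagger L=R^\dagger R=sI$ even with unmatched zero padding, because every nonzero entry appears in both lists. Let $Y=\sum_{i,j}H_{ij}\ket{i,j}\bra{i,j}$, and set
\begin{equation}
 J=\frac1{\sqrt{2s}}\binom LR,\qquad
 K=\sqrt s\begin{pmatrix}0&Y\\Y^\dagger&0\end{pmatrix},
\end{equation}
exactly as in~\eqref{hs:eq:JK}.

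\textbf{The three bounds in~\eqref{qls:eq:moments}.} I would check them in order.
\begin{itemize}
 \item $L^\dagger YR=H$ together with Hermiticity gives $J^\dagger KJ=\tfrac12(L^\dagger YR+R^\dagger Y^\dagger L)/\sqrt s=H/\sqrt s$.
 \item $J^\dagger K^2J=\tfrac12(L^\dagger YY^\dagger L+R^\dagger Y^\dagger YR)$. Each term is diagonal, with entries equal to the squared row or column norms of $H$. These are at most $\norm H^2\le1$, which gives the second inequality.
 \item $K$ is, up to orientation, a direct sum of $2\times2$ blocks $\sqrt s\left(\begin{smallmatrix}0&H_{ij}\\ \overline{H_{ij}}&0\end{smallmatrix}\right)$. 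Hence $\norm K=\sqrt s\max|H_{ij}|\le\sqrt s$.
\end{itemize}

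\textbf{Query implementation.} Write $J=FJ_{\rm free}$, where $F$ is the orientation-selected full locator and $J_{\rm free}$ prepares a uniform list index and orientation with no queries. After conjugating the private coordinates by $F^\dagger$, the boundary is query-free. Because $K$ is block diagonal in two-dimensional edge blocks, its Cayley transform acts blockwise by the explicit matrix~\eqref{hs:eq:edge} with $z=\sqrt s\,H_{ij}$. It is therefore realised by one value query, a known rotation controlled on the returned value, and an inverse value query.

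This block equals $-I$ at zero entries, so padding is harmless. Conjugation by $F$ adds a constant number of locator calls, so $D$ costs $\mathcal{O}(1)$ queries. $D^{-1}=D^\dagger$ is obtained by reversing the circuit. For $cK$ with known real $c$, replace $z$ by $cz$ in the known rotation, which leaves the query count unchanged.

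\textbf{Main obstacle.} The only delicate point is that the locators are full unitary extensions with arbitrary off-promise action. I would handle this as in Section~\ref{hs:sec:entry}. The identities only use $F$ on the range of $J_{\rm free}$, that is, on valid list inputs. Since $F$ is unitary, conjugating $D$ by it yields a unitary on the whole private space, and the off-promise action never has to be controlled.
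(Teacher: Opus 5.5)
Your proposal is correct and matches the paper's proof essentially step for step. It uses the same edge-space frames $L,R$ and diagonal entry operator, the same $J$ and $K$, the same check of~\eqref{qls:eq:moments} through diagonal row/column squared norms and $2\times2$ edge blocks, and the same compute--rotate--uncompute Cayley circuit after the coordinate change $J=FJ_{\rm free}$; the paper simply writes this construction out again rather than citing Section~\ref{hs:sec:entry} with $\eta=\gamma=1$.
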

\begin{proof}
On the edge space with basis $\ket{i,j}$, let
\begin{equation}
 L\ket i=\sum_{j\text{ in row list }i}\ket{i,j},\qquad
 R\ket j=\sum_{i\text{ in column list }j}\ket{i,j},\qquad
 M=\sum_{i,j}H_{ij}\ket{i,j}\bra{i,j}.
\end{equation}
Injective padded lists imply $L^\dagger L=R^\dagger R=sI$, and
$L^\dagger MR=H$. This identity does not require the padded row and
column lists to agree: every nonzero $H_{ij}$ occurs in both relevant
lists, while $M$ annihilates every padded zero edge. Define
\begin{equation}\label{qls:eq:JKexplicit}
 J=\frac1{\sqrt{2s}}\binom L R,
 \qquad K=\sqrt{s}
 \begin{pmatrix}0&M\\M^\dagger&0\end{pmatrix}.
\end{equation}
Then
\begin{align}
 J^\dagger KJ&=(H+H^\dagger)/(2\sqrt{s})=H/\sqrt{s},\\
 J^\dagger K^2J
 &=\tfrac12\left(L^\dagger MM^\dagger L+
                         R^\dagger M^\dagger MR\right)\preceq I.
\end{align}
The last matrices are diagonal with the row and column squared Euclidean
norms, respectively. These are at most one because $\norm H\le1$.
Also $\norm K=\sqrt{s}\max_{ij}|H_{ij}|\le\sqrt{s}$.

On a fixed edge, $K$ acts on just the two orientation states, with entries
determined by $H_{ij}$. Query that value, perform the corresponding known
two-dimensional Cayley rotation, and uncompute the value. This implements
$D$ with constant value-query cost. On any zero edge, $K=0$ and
$D=-I$ on both orientations. Thus padded edges need not be shared by the
two lists or assigned matching list indices.

To remove the apparent matrix dependence of $J$, write $J=FJ_{\rm free}$,
where $F$ is the direct sum of the full unitary extensions of the row and
column locators, and
$J_{\rm free}$ prepares a uniform list index and orientation.
Conjugate the private space by $F^\dagger$: replace $K$ by
$F^\dagger KF$ and $D$ by $F^\dagger DF$. The identities are preserved,
$J_{\rm free}$ is known, and each $D$ now costs only a constant number
of value and location queries. We henceforth use $J,K,D$ in these
coordinates.
\end{proof}

There is no amplitude-multiplication approximation in this representation.
In particular the exact resolvent identity is
\begin{equation}\label{qls:eq:R0}
 R_0:=(I-D)^{-1}=\frac{I+\ip K}{2}.
\end{equation}
The largest norm of $K$ can be $\Theta(\sqrt{s})$, while its second moment
compressed through $J$ is at most one. Both properties will be used.

\subsection{The kernel walk and its phase gap}

\begin{lemma}\label{qls:lem:walk}
Suppose $J$ is an isometry, $K$ is Hermitian, and
$\alpha>0$, $\mu\ge0$, $k_0\ge0$ satisfy
\begin{equation}
 J^\dagger KJ=H/\alpha,\qquad
 J^\dagger K^2J\preceq\mu I,\qquad \norm K\le k_0.
\end{equation}
Let $D$ be~\eqref{qls:eq:cayley} and set
\begin{equation}\label{qls:eq:walk}
 W=(I-2JJ^\dagger)D.
\end{equation}
The map $u\mapsto v=R_0Ju$ is a bijection between $\ker H$ and
$\ker(I-W)$. On this kernel,
\begin{equation}\label{qls:eq:lift}
\begin{aligned}
 J^\dagger v&=u/2,\qquad
 \norm v^2=\frac{\norm u^2+\norm{KJu}^2}{4},\\
 \frac{\norm u^2}{4}&\le\norm v^2
       \le\frac{1+\mu}{4}\norm u^2.
\end{aligned}
\end{equation}
If every nonzero eigenvalue of $H$ has absolute value at least
$\Delta_H>0$, then every nonzero principal eigenphase $\theta$ of $W$
with $|\theta|<\pi$ satisfies
\begin{equation}\label{qls:eq:tangap}
 \left|\tan\frac\theta2\right|\ge
 t_*:=\frac{\Delta_H/\alpha}{1+\mu+k_0\Delta_H/\alpha}.
\end{equation}
Consequently the nonzero eigenphase gap obeys
\begin{equation}\label{qls:eq:gap}
 \omega\ge2\arctan t_*.
\end{equation}
The conclusion permits a nontrivial kernel of $H$.
\end{lemma}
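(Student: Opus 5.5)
The plan is to rewrite everything through the Cayley variable and use the exact resolvent identity~\eqref{qls:eq:R0}.

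\emph{Fixed vectors.} For the kernel bijection, I will start from $Wv=v$. This is equivalent to $Dv=(I-2JJ^\dagger)v$, that is, $(I-D)v=2JJ^\dagger v$. Since $K$ is Hermitian, $1\notin\operatorname{spec}D$, so $v=R_0J(2J^\dagger v)$. Conversely, for $v=R_0Ju$ one has $Dv=v-Ju$, and a direct substitution gives
\[
 Wv=v+J\bigl(u-2J^\dagger v\bigr).
\]
Because $J$ is an isometry, $Wv=v$ holds iff $J^\dagger v=u/2$. By the compressed identity $J^\dagger R_0J=(I+\ii H/\alpha)/2$, this is exactly $Hu=0$. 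The map $u\mapsto R_0Ju$ is injective, with inverse $v\mapsto2J^\dagger v$, so it is a bijection onto $\ker(I-W)$.

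\emph{Norm identities.} For \eqref{qls:eq:lift}, expand $\norm{(I+\ii K)Ju}^2/4$. The two cross terms are $\pm\ii\,u^\dagger J^\dagger KJu$, and they cancel since $K$ is Hermitian. This leaves $(\norm u^2+\norm{KJu}^2)/4$. The bounds then follow from $0\le\norm{KJu}^2=u^\dagger J^\dagger K^2Ju\le\mu\norm u^2$.

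\emph{Phase gap.} Suppose $Wv=e^{\ii\theta}v$ with $0<|\theta|<\pi$. Put $x=(K-\ii)^{-1}v$, $c=\cot(\theta/2)$ and $P=JJ^\dagger$. Using $(1+e^{\ii\theta})/(1-e^{\ii\theta})=\ii c$, the eigen-equation becomes
\[
 (K-c)x=\lambda P(K-\ii)x,\qquad \lambda=-\frac{2e^{\ii\theta}}{1-e^{\ii\theta}}.
\]
\begin{itemize}
 \item If $J^\dagger(K-\ii)x=0$, then $x$ is an eigenvector of $K$ with eigenvalue $c$. Hence $|c|\le k_0$, so $|\tan(\theta/2)|\ge1/k_0\ge t_*$.
 \item Otherwise put $y=J^\dagger(K-\ii)x\ne0$. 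If $|c|\le k_0$ we are again done, so assume $|c|>k_0$. Then $x=\lambda(K-c)^{-1}Jy$, and $y$ satisfies the compressed equation
 \[
  y=\lambda J^\dagger(K-\ii)(K-c)^{-1}Jy .
 \]
\end{itemize}

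\emph{Bounding the compressed equation.} Here I expand $(K-c)^{-1}=-\sum_{n\ge0}K^n/c^{n+1}$ and write $t=\tan(\theta/2)=1/c$. The terms up to first order in $K$ compress exactly to $I$ and $H/\alpha$. Every remaining term carries at least two factors of $K$ between $J^\dagger$ and $J$. Factoring these as $(KJ)^\dagger(\cdots)(KJ)$, they are bounded by $\mu\,|t|^2(1-k_0|t|)^{-1}$ and similar expressions, using $\norm{KJ}^2\le\mu$ and $\norm K\le k_0$. After clearing $\lambda$, the equation should reduce to
\[
 \bigl(t\,I-\ii\, tH/\alpha+\text{error}\bigr)y\approx(H/\alpha)\,y
\]
up to bounded factors. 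Decomposing $y$ along $\ker H$ and its complement, the phase $\theta\neq0$ rules out the kernel part. On the complement, $\norm{Hy}\ge\Delta_H\norm y$ forces
\[
 |t|\bigl(1+\mu+k_0\Delta_H/\alpha\bigr)\ge\Delta_H/\alpha ,
\]
which is \eqref{qls:eq:tangap}. Then \eqref{qls:eq:gap} is immediate by monotonicity of $\arctan$.

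The main obstacle is this last estimate: organising the Neumann tail, or an equivalent Schur-complement resolvent identity, so that only $\mu$ and $k_0$ enter, with exactly the displayed denominator. I would first try an exact resolvent form,
\[
 J^\dagger(K-c)^{-1}J=-\frac1c-\frac{H}{\alpha c^2}+\frac1{c^2}(KJ)^\dagger(K-c)^{-1}(KJ),
\]
rather than the full series. This keeps $\norm{(K-c)^{-1}}\le(|c|-k_0)^{-1}$ as the only crude bound. If the constants do not close, I would fall back to a weaker absolute-constant version of $t_*$, which would not change the downstream query bound.
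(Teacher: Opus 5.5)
Your fixed-space bijection and the identities \eqref{qls:eq:lift} are correct. The identity $Wv=v+J(u-2J^\dagger v)$ is a clean route to the converse. Your reduction of the gap question to the compressed equation for $y=J^\dagger v$ when $|c|>k_0$ is also correct.

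The argument breaks at the last step. After bounding the higher Neumann terms by their norms, you assert that $\theta\neq0$ rules out the $\ker H$ component of $y$, and then you use $\norm{Hy}\ge\Delta_H\norm y$. Nothing forces $y\perp\ker H$. Write $y=y_0+y_1$ with $y_0\in\ker H$, and let $E$ be the correction to $H/\alpha$. Norm estimates then only give $(\Delta_H/\alpha)\norm{y_1}\le\norm E\,\norm y$, which says nothing when $y_0$ dominates. Your displayed inequality is what you would get if $y\perp\ker H$.

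In fact no argument using only $\norm E$ can work once $\ker H\neq\{0\}$, which is exactly the case the lemma stresses. For example, $\operatorname{diag}(0,d)+\left(\begin{smallmatrix}b^2/d&b\\ b&0\end{smallmatrix}\right)$ is singular for every $b$, although the perturbation has norm $O(b)$. For the same reason, your fallback to a weaker absolute constant in $t_*$ would not rescue the proof.

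The missing ingredient is that the correction is sign-definite. Put $t=\tan(\theta/2)=1/c$. Then your $\lambda$ equals $1-\ii c$, and the identity $(\ii-t)(K-\ii)=(I-tK)+\ii(K+t)$ shows that your compressed equation reads $y=(I+\ii\mathcal B(t))y$. Here $\mathcal B(t)=J^\dagger(K+t)(I-tK)^{-1}J$ is Hermitian, so the equation is exactly $\mathcal B(t)y=0$. The apparently non-Hermitian first-order terms in your expansion cancel exactly against higher orders.

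Moreover, $\mathcal B(t)-H/\alpha=tJ^\dagger(I+K^2)(I-tK)^{-1}J$. For $k_0|t|<1$, functional calculus gives two facts about this increment:
\begin{itemize}
 \item $\operatorname{sgn}(t)\,[\mathcal B(t)-H/\alpha]\succeq |t|/(1+k_0|t|)\,I$;
 \item its norm is at most $|t|(1+\mu)/(1-k_0|t|)$.
\end{itemize}
If $0<|t|<t_*$, this norm is below $\Delta_H/\alpha$. For $t>0$, Weyl's ordered-eigenvalue inequalities then keep the negative eigenvalues of $H/\alpha$ negative, while the zero and positive ones become strictly positive. The case $t<0$ is symmetric. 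Hence $\mathcal B(t)$ is invertible, a contradiction. This monotonicity argument is the one the paper uses; it handles a nontrivial kernel and produces exactly the denominator $1+\mu+k_0\Delta_H/\alpha$.
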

\begin{proof}
$Wv=v$ is equivalent to $(I-D)v=2JJ^\dagger v$.
Since $I-D$ is invertible, put $u=2J^\dagger v$ and obtain $v=R_0Ju$.
The consistency equation is
\begin{equation}
 u=2J^\dagger R_0Ju=(I+\ip H/\alpha)u,
\end{equation}
which is equivalent to $Hu=0$. Conversely, every such $u$ gives a fixed
vector. Equation~\eqref{qls:eq:R0} gives~\eqref{qls:eq:lift} because
$R_0^\dagger R_0=(I+K^2)/4$.

For the gap, let $z=e^{\ip\theta}$ and locally set
$\tau_\theta=\tan(\theta/2)$.
In the region $|\tau_\theta|k_0<1$, $z-D$ is invertible. Eliminating the
private vector from $Wv=zv$ gives
\begin{equation}\label{qls:eq:secular}
 \ker\mathcal B(\tau_\theta)\ne\{0\},\qquad
 \mathcal B(\tau_\theta)=J^\dagger(K+\tau_\theta I)
                              (I-\tau_\theta K)^{-1}J.
\end{equation}
For example this follows by substituting~\eqref{qls:eq:cayley} into
$I+2J^\dagger D(z-D)^{-1}J$; the result is
$\ip\mathcal B(\tau_\theta)$. At $\tau_\theta=0$,
$\mathcal B(0)=H/\alpha$.
A finite resolvent identity gives
\begin{equation}\label{qls:eq:resolventincrement}
 \mathcal B(\tau_\theta)-\mathcal B(0)
 =\tau_\theta J^\dagger(I+K^2)(I-\tau_\theta K)^{-1}J.
\end{equation}
The factors inside the compression commute. Functional calculus and
the second-moment bound therefore imply
\begin{equation}\label{qls:eq:incrementbounds}
\begin{aligned}
 J^\dagger(I+K^2)(I-\tau_\theta K)^{-1}J
     &\succeq\frac{I}{1+k_0|\tau_\theta|},\\
 \norm{\mathcal B(\tau_\theta)-\mathcal B(0)}
     &\le\frac{|\tau_\theta|(1+\mu)}
                    {1-k_0|\tau_\theta|}.
\end{aligned}
\end{equation}
In particular the increment is strictly positive definite for
$\tau_\theta>0$ and strictly negative definite for $\tau_\theta<0$.

Put $d=\Delta_H/\alpha$. If $0<|\tau_\theta|<t_*$, then
$|\tau_\theta|k_0<1$ and the norm in~\eqref{qls:eq:incrementbounds} is less
than $d$. For $\tau_\theta>0$, ordered eigenvalue perturbation bounds keep
every originally negative eigenvalue of $\mathcal B(0)$ negative.
Strict positivity of the increment makes every originally zero or
positive eigenvalue positive. Thus $\mathcal B(\tau_\theta)$ is invertible.
For $\tau_\theta<0$, the originally positive eigenvalues remain positive and
the others become negative. This excludes the punctured interval
$0<|\tau_\theta|<t_*$ and proves~\eqref{qls:eq:tangap}. Taking
$2\arctan|\tau_\theta|$ gives~\eqref{qls:eq:gap}; a phase at $\pi$ already
satisfies this bound. No invertibility assumption on $H$ or ordinary
block-encoding qubitisation is used.
\end{proof}

\subsection{Augmenting the linear system}

Let
\begin{equation}\label{qls:eq:regularization}
 B=\begin{pmatrix}0&A\\A^\dagger&0\end{pmatrix},\qquad
 \beta_0=\ket0\ket b,\qquad \beta=\ket0\beta_0,
 \qquad
 M_\sigma=\sqrt{1-\sigma^2}\,Z\otimes B+\sigma X\otimes I,
 \quad 0\le\sigma\le1.
\end{equation}
The matrices are Hermitian and
\begin{equation}\label{qls:eq:regularizedsquare}
 M_\sigma^2=I\otimes\big((1-\sigma^2)B^2+\sigma^2 I\big).
\end{equation}
Therefore $\norm{M_\sigma}\le1$ and its smallest singular value is at least
\begin{equation}
 g_\sigma=\sqrt{\sigma^2+(1-\sigma^2)/\kappa^2}
 \ge\max\{\sigma,1/\kappa\}.
\end{equation}

For a positive norm guess $r$, set
\begin{equation}\label{qls:eq:augmented}
 C_{\sigma,r}=[M_\sigma,-\beta/r],\qquad
 H_{\sigma,r}=\begin{pmatrix}0&C_{\sigma,r}\\C_{\sigma,r}^\dagger&0\end{pmatrix}.
\end{equation}
Write $e$ for the appended coordinate in the right-hand sector. Since
$M_\sigma$ is invertible, $H_{\sigma,r}$ has a one-dimensional kernel,
spanned by the normalised vector
\begin{equation}\label{qls:eq:u}
 u_{\sigma,r}
 =\frac{(0;\ M_\sigma^{-1}\beta/r,\ 1)}{\sqrt{1+(r_\sigma/r)^2}},
 \qquad r_\sigma=\norm{M_\sigma^{-1}\beta}.
\end{equation}
Also $C_{\sigma,r}C_{\sigma,r}^\dagger=M_\sigma^2+\beta\beta^\dagger/r^2$,
so the nonzero eigenvalue gap of $H_{\sigma,r}$ is at least $g_\sigma$.

The appended column is generally dense. It is implemented through $U_b$,
not by pretending that $H_{\sigma,r}$ itself is sparse. Here is an explicit
constant-branch construction. Write
\begin{equation}
 H_{\sigma,r}=aH_0+\sigma T_1+r^{-1}T_2,
 \qquad a=\sqrt{1-\sigma^2},
\end{equation}
where $H_0$ is the sparse off-diagonal lift of $Z\otimes B$,
$T_1$ is the corresponding known lift of $X\otimes I$, and
\begin{equation}
 T_2=-\ket{\mathrm{upper},\beta}\bra e
       -\ket e\bra{\mathrm{upper},\beta}.
\end{equation}
$H_0$ is $s$-sparse with norm at most one. The operators $T_1,T_2$ are
Hermitian partial involutions of norm one. A known identity on unused
physical padding, denoted $P_{\rm pad}$, can be added to eliminate padding
zero modes; it leaves the valid problem invariant. Here $H_0,T_1,T_2$
vanish on the padding, while $P_{\rm pad}$ vanishes on the valid space.

Apply Lemma~\ref{qls:lem:entry} to $H_0$, obtaining $(J_0,K_0)$.
Use four branches, each of weight $1/4$, and define
\begin{equation}\label{qls:eq:composition}
 Ju=\tfrac12(J_0u\oplus u\oplus u\oplus u),\qquad
 K=\frac a8K_0\ \oplus\ 
   \frac{\sigma}{8\sqrt{s}}T_1\ \oplus\ 
   \frac1{8r\sqrt{s}}T_2\ \oplus\ 
   \frac1{8\sqrt{s}}P_{\rm pad}.
\end{equation}
Then, for $r\ge1/8$,
\begin{equation}\label{qls:eq:augmoments}
\begin{aligned}
 J^\dagger KJ&=(H_{\sigma,r}+P_{\rm pad})/(32\sqrt{s}),\\
 J^\dagger K^2J
 &\preceq\frac{sa^2+\sigma^2+r^{-2}+1}{256s}I
 \preceq\frac{66}{256}I\preceq I,\\
 \norm K&\le\sqrt{s}.
\end{aligned}
\end{equation}
Here the numerator estimate uses
$sa^2+\sigma^2=s-\sigma^2(s-1)\le s$, so that
$(sa^2+\sigma^2+r^{-2}+1)/(256s)\le(s+65)/(256s)\le66/256$.
On the $T_2$ branch, a controlled preparation of $\beta$ maps its two
active states to known basis states; its scaled Cayley transform is then
a known two-dimensional rotation, with phase $-1$ on the complement.
Undoing the preparation uses $\mathcal{O}(1)$ calls to $U_b,U_b^\dagger$.
The other known branches have explicit free Cayley transforms, and the
sparse branch is covered by Lemma~\ref{qls:lem:entry}. Thus each walk
\begin{equation}
 W_{\sigma,r}=(I-2JJ^\dagger)(K+\ip I)(K-\ip I)^{-1}
\end{equation}
uses $\mathcal{O}(1)$ matrix and state-preparation queries. Its nonzero phase gap is
\begin{equation}\label{qls:eq:auggap}
 \Omega(g_\sigma/\sqrt{s}).
\end{equation}
More explicitly, Lemma~\ref{qls:lem:walk} applies with
$\alpha=32\sqrt{s}$, $\mu=66/256$, $k_0=\sqrt{s}$, and
$\Delta_H=g_\sigma\le1$. The padding eigenvalues are one and hence do
not reduce the gap. Every nonzero phase with $|\theta|<\pi$ obeys
\begin{equation}\label{qls:eq:augtangent}
 \left|\tan\frac\theta2\right|
 \ge\frac{4g_\sigma}{(161+4g_\sigma)\sqrt{s}}
 \ge\chi_\sigma:=\frac{4g_\sigma}{165\sqrt{s}}.
\end{equation}
It follows that all nonzero phases, including $\theta=\pi$, satisfy
\begin{equation}\label{qls:eq:augsine}
 \left|\sin\frac\theta2\right|\ge
 b_\sigma:=\frac{\chi_\sigma}{\sqrt{1+\chi_\sigma^2}},
 \qquad
 \operatorname{artanh}b_\sigma=\operatorname{arsinh}\chi_\sigma.
\end{equation}
These bounds hold uniformly for every tested $r\ge1/8$.

Let $P_{\sigma,r}$ be the exact eigenvalue-one projector of this walk.
For the seed $J\ket e$, write
\begin{equation}
 p_\sigma(r)=\norm{P_{\sigma,r}J\ket e}^2,
 \qquad q=r_\sigma/r.
\end{equation}
Equation~\eqref{qls:eq:lift}, using $\mu\le1$, implies the uniform bounds
\begin{equation}\label{qls:eq:probability}
 \frac1{2(1+q^2)}\le p_\sigma(r)\le\frac1{1+q^2}.
\end{equation}
They require no knowledge of the matrix-dependent norm of the lifted
kernel vector.

\subsection{Coherent forward and inverse walk steps}

The Cayley transform in the augmented construction has the form
\begin{equation}\label{qls:eq:commonconjugator}
 D=V^\dagger C V.
\end{equation}
On the sparse branch, $V$ applies the locator map and computes the
entry value in a clean work register, and $C$ applies the known
entry-controlled Cayley rotation. This rotation leaves the edge label
and stored value unchanged, so the value can be uncomputed. On the
$T_2$ branch, $V$ includes the supplied preparation of $\beta$, which
maps the active states to known coordinates. The other branches require
only known transformations. Taking their direct sum gives
\eqref{qls:eq:commonconjugator} on the clean work subspace.
Branch and orientation controls inside $V$ are included in its constant
primitive-query cost. In these coordinates $J$ and the reflection
$R=I-2JJ^\dagger$ are known.

\begin{lemma}[Coherent choice of walk direction]\label{qls:lem:signedwalk}
For the preceding implementation, a coherent choice between an active
$W=RD$ step, an active $W^\dagger$ step, and an inactive step uses the
same matrix-oracle and state-preparation calls as an active forward
step. The additional controls require only known gates.
\end{lemma}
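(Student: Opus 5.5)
The approach is to write both walk directions through the common conjugated form~\eqref{qls:eq:commonconjugator}. Since $D=V^\dagger CV$ with $V$ carrying all oracle and state-preparation calls and $C$ a known entry-controlled rotation, we have
\begin{equation}
 W=RV^\dagger CV,\qquad W^\dagger=V^\dagger C^\dagger VR .
\end{equation}
The reflection $R=I-2JJ^\dagger$ is known in these coordinates and Hermitian. The step circuit will therefore be the fixed sequence $R^{c_2}\,V^\dagger\,\widetilde C\,V\,R^{c_1}$. Here a three-valued control register selects $(c_1,c_2,\widetilde C)$ as $(0,1,C)$ for forward, $(1,0,C^\dagger)$ for inverse, and $(0,0,I)$ for inactive. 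Conditioning $R$ on the control is a known gate, and so is conditioning the middle rotation. The latter is exactly the ``enable only the middle rotation'' device already used in the proof of Theorem~\ref{hs:thm:gates}: on the inactive branch $V^\dagger V=I$ on the clean work subspace. Consequently every branch issues the same unconditioned calls to $V$ and $V^\dagger$, and no controlled oracle is needed at all.

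The key steps, in order, are as follows. First, check that $V$ is a genuine unitary on the full private space, including padding, the $T_1$, $T_2$ and $P_{\rm pad}$ branches and invalid words. It should return its value work register to clean after $V^\dagger$, so that $V^\dagger \widetilde C V$ equals $D$, $D^\dagger$ or $I$ on the clean-work subspace. This uses the full unitary extensions of the locators and the fact that $C$ leaves the edge label and stored value unchanged. Second, verify $C^\dagger$ branchwise. On a sparse edge, the adjoint of the two-dimensional Cayley rotation~\eqref{hs:eq:edge} is again a known value-controlled rotation (replace $z$ by $-z$). On the $T_2$ branch it is the adjoint of the known rotation, and on the free branches it is their known adjoints. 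Third, check that the order of $R$ relative to the oracle block is immaterial for call counts. The forward step puts $R$ after and the inverse puts $R$ before, and both are query-free. Fourth, count the calls: one $V$ and one $V^\dagger$ per step, identical to an active forward step, with the branch selection in $V$ already included in its constant cost.

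I expect the main obstacle to be the $T_2$ branch and the clean-work bookkeeping. There, $V$ contains the supplied preparation of $\beta$, whose controls are only on known registers (as assumed for $U_b$). We must make sure the direction control never has to control $U_b$ itself. The plan is to leave $V$ completely unconditioned by the direction register, and to argue that the middle rotation identity on the inactive branch restores the state exactly, not merely on valid inputs. If the full-space identity $V^\dagger V=I$ is delicate because the work register is in a superposition of computed values, I would instead prove the claim on the clean-work subspace and invoke the full-output error convention of Section~\ref{sec:overview}.
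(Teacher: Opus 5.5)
Your circuit $R^{c_2}V^\dagger\widetilde C\,VR^{c_1}$ is exactly the paper's construction $R_+V^\dagger\bigl(\ket+\bra+\otimes C+\ket-\bra-\otimes C^\dagger\bigr)VR_-$, including the inactive branch: there both reflections and the middle rotation are switched off, so the shared, unconditioned calls in $V$ and $V^\dagger$ cancel and no extra query is needed. Your anticipated obstacle does not arise, because $V$ is unitary and so $V^\dagger V=I$ holds on the full space; only the identities $V^\dagger CV=D$ and $V^\dagger C^\dagger V=D^\dagger$ need the clean-work restriction.
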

\begin{proof}
Let a direction register have basis states $\ket+$ and $\ket-$.
Write $R_+$ for $R$ controlled on the $+$ branch and $R_-$ for $R$
controlled on the $-$ branch. Then
\begin{equation}\label{qls:eq:signedstep}
 R_+ V^\dagger
 \bigl(\ket+\bra+\otimes C+\ket-\bra-\otimes C^\dagger\bigr)
 V R_-
\end{equation}
acts as $RV^\dagger CV=W$ in the forward branch and as
$V^\dagger C^\dagger VR=W^\dagger$ in the reverse branch.
Here $V$ and $V^\dagger$ act trivially on the direction register.
On an inactive branch, set both conditional reflections and the middle
rotation to the identity; the remaining $V^\dagger V$ cancels.
The oracle calls in $V,V^\dagger$ are shared by all branches, so neither
control requires an additional query. The identities hold for
entangled controls and payloads as well.
\end{proof}

This implementation uses the internal form
\eqref{qls:eq:commonconjugator}, rather than assuming free direction
control of an opaque walk. The supplied circuit for $U_b$ provides its
required controls as specified in the access model.

\subsection{Positive Chebyshev filtering}

Dolph--Chebyshev LCU filtering and simultaneous positive and negative
walk powers were used in Ref.~\cite{Costa}, Section~V. That construction
selects phases zero and $\pi$ of a qubitisation walk. Here we use an
affinely shifted Chebyshev polynomial to select only phase zero, and
Lemma~\ref{qls:lem:signedwalk} supplies its sparse-query implementation.

\begin{proposition}\label{qls:prop:chebyshev}
Let $W$ be unitary, and let $P$ project onto its eigenvalue-one
subspace. Suppose $0<b<1$ and every nonzero eigenphase satisfies
$|\sin(\theta/2)|\ge b$. For $0<\delta<1$, put
\begin{equation}\label{qls:eq:chebdegree}
 \lambda=\frac{b^2}{1-b^2},\qquad
 d=\left\lceil\frac{\operatorname{arcosh}(1/\delta)}
                         {2\operatorname{artanh}b}\right\rceil.
\end{equation}
With $T_d$ the Chebyshev polynomial of the first kind, define
\begin{equation}\label{qls:eq:chebfilter}
 F_{b,d}(e^{\ip\theta})=
 \frac{T_d((1+\lambda)\cos\theta+\lambda)}{T_d(1+2\lambda)}.
\end{equation}
This is a Laurent polynomial on powers $-d,\ldots,d$, with
nonnegative coefficients summing to one. It satisfies
\begin{equation}\label{qls:eq:cheberror}
 F_{b,d}(1)=1,\qquad \norm{F_{b,d}(W)-P}\le\delta.
\end{equation}
Its normalisation-one LCU circuit uses at most $d$ signed walk steps.
\end{proposition}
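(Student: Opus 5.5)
Since $W$ is unitary, hence normal, I will reduce everything to a scalar statement on the unit circle through the spectral theorem. With $x=\cos\theta$, the filter is the polynomial $T_d(y)/T_d(1+2\lambda)$ evaluated at the affine argument $y=(1+\lambda)x+\lambda$. The map $x\mapsto y$ sends $[-1,1]$ to $[-1,1+2\lambda]$ and sends $x=1$ to $1+2\lambda$, so $F_{b,d}(1)=1$. For the remaining eigenphases, $|\sin(\theta/2)|\ge b$ means $\cos\theta=1-2\sin^2(\theta/2)\le 1-2b^2$. Then
\[
y\le(1+\lambda)(1-2b^2)+\lambda=1+2\lambda-2b^2(1+\lambda),
\]
and $b^2(1+\lambda)=\lambda$, so $y\le1$. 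Also $y\ge-1$. Hence $|T_d(y)|\le1$ at every nonzero eigenphase, and $\norm{F_{b,d}(W)-P}\le 1/T_d(1+2\lambda)$ because the eigenspaces are orthogonal. It then remains to show that $T_d(1+2\lambda)\ge1/\delta$.

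I will use $T_d(\cosh\phi)=\cosh(d\phi)$. First I identify $\operatorname{arcosh}(1+2\lambda)$. Note $1+2\lambda=(1+b^2)/(1-b^2)$. With $b=\tanh\psi$ this equals $\cosh2\psi$. So $\operatorname{arcosh}(1+2\lambda)=2\operatorname{artanh}b$, and the choice of $d$ in \eqref{qls:eq:chebdegree} gives $\cosh(2d\operatorname{artanh}b)\ge1/\delta$. This establishes \eqref{qls:eq:cheberror}.

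For the coefficient claims, I expand $\cos\theta=(z+z^{-1})/2$. This makes $F_{b,d}$ a Laurent polynomial in $z$ with powers from $-d$ to $d$. Its value at $z=1$ is one, so the coefficients sum to one once they are shown nonnegative. Nonnegativity is the step I expect to need care. I write
\[
T_d(y)=T_d\bigl(1+(y-1)\bigr)
\]
and expand it as a polynomial in $(y-1)$. These coefficients are all nonnegative, since $T_d(1+u)=\sum_k\frac{d}{d+k}\binom{d+k}{2k}2^ku^k$ (equivalently, all derivatives of $T_d$ at $1$ are positive because its roots lie in $(-1,1)$). Next,
\[
y-1=(1+\lambda)\frac{z+z^{-1}}{2}+\lambda-1,
\]
and the constant term $\lambda-1$ may be negative, so a direct argument fails. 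Instead I will use the alternative expansion around $-1$:
\[
y+1=(1+\lambda)\Bigl(\frac{z+z^{-1}}{2}+1\Bigr)=\frac{1+\lambda}{2}\,z^{-1}(1+z)^2,
\]
which has nonnegative coefficients in $z$. The difficulty is that $T_d(-1+v)=(-1)^dT_d(1-v)$ has alternating coefficients in $v$, so this route also does not close by itself.

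My first plan is therefore to use the product form. The roots of $T_d$ lie in $(-1,1)$, so
\[
T_d(y)=2^{d-1}\prod_j(y-c_j),\qquad |c_j|<1.
\]
Each factor is
\[
y-c_j=\frac{1+\lambda}{2}(z+z^{-1})+(\lambda-c_j),
\]
which has nonnegative Laurent coefficients whenever $\lambda\ge c_j$. This can fail when $\lambda$ is small. To handle that case I will pair each root $c$ with $-c$, since the roots are symmetric, and check that the product
\[
(y-c)(y+c)=y^2-c^2
\]
has nonnegative coefficients. Writing $y=a w+\lambda$ with $w=(z+z^{-1})/2$, we have $w^2=(z^2+2+z^{-2})/4$, so $y^2$ has nonnegative coefficients in $z$. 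The constant term is $a^2/2+\lambda^2-c^2$, and this needs $c^2\le a^2/2+\lambda^2$, where $a=1+\lambda$. Since $c^2<1$ and $a^2/2+\lambda^2\ge1/2$, this is not automatic. If the pairing fails as well, I will instead cite the known positivity of the Dolph--Chebyshev coefficients for this shifted argument, following Ref.~\cite{Costa}, Section~V.

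Finally, the implementation: given nonnegative coefficients summing to one, prepare their square roots, and on each branch $j$ apply $W^{|j|}$ forward or backward. This uses $d$ signed walk steps via Lemma~\ref{qls:lem:signedwalk}, with shorter branches idle.
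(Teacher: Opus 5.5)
Your error bound, your identity $\operatorname{arcosh}(1+2\lambda)=2\operatorname{artanh}b$, and your signed-power LCU circuit are correct and follow the paper. There is, however, a genuine gap: you never prove that the Laurent coefficients are nonnegative.

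\textbf{Why your routes fail.} Each of your attempts expands $T_d$ about a fixed point: $1$, $-1$, or a root $c_j$. This always leaves a constant shift such as $\lambda-1$ or $\lambda-c_j$, which is negative when $\lambda$ is small. Your pairing does fail. The constant coefficient of $y^2-c^2$ is $\tfrac12+\lambda+\tfrac32\lambda^2-c^2$, and this is negative for the largest root $c=\cos(\pi/2d)$ once $d\ge3$ and $\lambda$ is small. Small $\lambda$ is exactly the regime $b=b_\sigma=\Theta(g_\sigma/\sqrt s)$ in which the proposition is applied.

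\textbf{Why the citation does not rescue it.} The fallback to Ref.~\cite{Costa}, Section~V, does not close the gap as cited. That construction targets phases $0$ and $\pi$ of a qubitisation walk, not this affinely shifted polynomial, so it does not directly give positivity here.

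\textbf{Why it matters.} The value $F_{b,d}(1)=1$ fixes only the \emph{signed} sum of the coefficients, whatever their signs. Without nonnegativity, the LCU normalisation is $\sum_j|p_j|$, which could exceed one. So the ``normalisation-one'' claim, which the paper later relies on to avoid amplification, remains unproved.

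\textbf{The missing idea.} Taylor-expand about the variable point $x=\cos\theta$ itself rather than about a fixed point. Since $y=x+\lambda(x+1)$, the expansion is exact:
\[
T_d(y)=\sum_{k=0}^{d}\frac{\lambda^k}{k!}\,(x+1)^k\,T_d^{(k)}(x).
\]
Now check each factor.
\begin{itemize}
\item The increment is $\lambda(x+1)=\tfrac{\lambda}{2}z^{-1}(1+z)^2$, a nonnegative Laurent polynomial, as you already observed.
\item Every derivative $T_d^{(k)}$ is a nonnegative combination of Chebyshev polynomials. This follows from $T_n'=nU_{n-1}$ together with $U_{2j}=T_0+2(T_2+\cdots+T_{2j})$ and $U_{2j+1}=2(T_1+T_3+\cdots+T_{2j+1})$, iterated by linearity.
\item Since $T_j(\cos\theta)=(z^j+z^{-j})/2$, nonnegative Chebyshev coefficients are nonnegative Laurent coefficients.
\item Products of Laurent polynomials with nonnegative coefficients again have nonnegative coefficients.
\end{itemize}
Hence every summand, and therefore $F_{b,d}$, has nonnegative coefficients, and they sum to $F_{b,d}(1)=1$.

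Your remark that all derivatives of $T_d$ are positive at $1$ is the pointwise shadow of this fact. The argument needs the stronger, basis-level positivity of the derivatives.
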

\begin{proof}
First consider the coefficients. Taylor expansion gives the exact
polynomial identity
\begin{equation}\label{qls:eq:chebpositive}
 T_d(x+\lambda(x+1))
 =\sum_{k=0}^d\frac{\lambda^k}{k!}(x+1)^kT_d^{(k)}(x).
\end{equation}
Differentiation preserves nonnegative coefficients in the Chebyshev
basis, since $T_n'=nU_{n-1}$ and
\begin{equation}
\begin{aligned}
 U_{2j}&=T_0+2(T_2+\cdots+T_{2j}),\\
 U_{2j+1}&=2(T_1+T_3+\cdots+T_{2j+1}).
\end{aligned}
\end{equation}
Multiplication by $x$ also preserves them:
$xT_0=T_1$ and $xT_n=(T_{n+1}+T_{n-1})/2$ for $n\ge1$.
The same holds for multiplication by $x+1$. Every summand in
\eqref{qls:eq:chebpositive} therefore has nonnegative Chebyshev
coefficients. Substituting $x=\cos\theta$ and
$T_j(\cos\theta)=\cos(j\theta)$ gives nonnegative Laurent
coefficients. Their sum is $T_d(1+2\lambda)$ before normalisation,
as follows by setting $\theta=0$.

For a nonzero phase, $\cos\theta\in[-1,1-2b^2]$, so the affine
argument in~\eqref{qls:eq:chebfilter} lies in $[-1,1]$ and the
numerator has absolute value at most one. Moreover,
\begin{equation}
 1+2\lambda=\frac{1+b^2}{1-b^2}
             =\cosh(2\operatorname{artanh}b),\qquad
 T_d(1+2\lambda)=\cosh(2d\operatorname{artanh}b).
\end{equation}
The residual amplitude is thus at most
$\operatorname{sech}(2d\operatorname{artanh}b)\le\delta$.
The value at phase zero is one, and the spectral theorem proves
\eqref{qls:eq:cheberror}.

Write $F_{b,d}(W)=\sum_{j=-d}^d p_jW^j$, where
$p_j\ge0$ and $\sum_jp_j=1$. Prepare
$\sum_j\sqrt{p_j}\ket j$. For each $l=1,\ldots,d$, apply a signed
walk step when $|j|\ge l$, with direction given by the sign of $j$.
The $j$ branch receives $W^j$. Unpreparing the coefficient register
and selecting its initial state gives $F_{b,d}(W)$ as the success
block. All coefficients are known from $b,d$, and their preparation
uses no matrix queries. Lemma~\ref{qls:lem:signedwalk} gives the cost;
there is no additional LCU normalisation or amplitude amplification.
\end{proof}

For the augmented walk, take $b=b_\sigma$ from
\eqref{qls:eq:augsine}. The resulting degree is
\begin{equation}\label{qls:eq:filtercost}
 d_\sigma(\delta)=
 \left\lceil
 \frac{\operatorname{arcosh}(1/\delta)}
 {2\operatorname{arsinh}(4g_\sigma/(165\sqrt{s}))}
 \right\rceil
 =\mathcal{O}\!\left(\frac{\sqrt{s}}{g_\sigma}
                          \log\frac2\delta\right).
\end{equation}
This degree depends only on known parameters and is valid for every
candidate $r\ge1/8$, including candidates reached on failed estimation
transcripts. The circuit has a bounded number of steps without
postselection-dependent repetition.

\begin{remark}[An averaging alternative]\label{qls:rem:averaging}
The elementary filter $f_N(W)=N^{-1}\sum_{j=0}^{N-1}W^j$ also has
nonnegative coefficients summing to one and satisfies
$\norm{f_N(W)-P}\le1/(Nb_\sigma)$. Choosing
$N=\lceil100/b_\sigma\rceil$ gives error at most $1/100$, and taking
its $k$th power gives error at most $100^{-k}$. A positive polynomial
of degree $D$ can be centred directly as
$W^{-\lfloor D/2\rfloor}f(W)$ and implemented in
$\lceil D/2\rceil$ signed steps. This preserves its success
probability and error relative to $P$, because $W$ is unitary and
$W^{-\lfloor D/2\rfloor}P=P$. The Chebyshev filter gives smaller
constants for the same asymptotic dependence.
\end{remark}

\subsection{Constant-accuracy probability tests and finite gates}

For any of the ideal filters above, $F(1)=1$ and
$\norm{F(W)-P}\le\delta$. Spectral orthogonality gives, for a
normalised seed $\psi$,
\begin{equation}\label{qls:eq:filterprobability}
 \norm{F(W)\psi}^2-\norm{P\psi}^2
 =\norm{(F(W)-P)\psi}^2\in[0,\delta^2].
\end{equation}
Use $\delta=1/100$ for probability tests on $J\ket e$, and
$\delta=\eps/1000$ for the final projection. The latter is one
direct Chebyshev filter; its degree already includes the full
precision dependence in~\eqref{qls:eq:filtercost}.

We allow finite gate synthesis while keeping the same error budgets.
For either target $\delta$, use
\begin{equation}\label{qls:eq:finitedegree}
 d_\sigma^{\rm fin}(\delta)=d_\sigma(\delta)+1,\qquad
 \delta_0=\operatorname{sech}\!\left(
 2d_\sigma^{\rm fin}(\delta)\operatorname{artanh}b_\sigma\right)
 <\delta.
\end{equation}
First regard this as an ideal circuit. Let $e$ bound the change in its
unnormalised success vector due to all known-gate approximations in a
test or final attempt. In a probability test, choose
$e\le(\delta^2-\delta_0^2)/2$. Both success vectors have norm at most
one, so their squared norms differ by at most $2e$. The implemented
success probability $p_F$ therefore obeys the two-sided bound
\begin{equation}\label{qls:eq:bernoulli}
 |p_F-p_\sigma(r)|\le\delta_0^2+2e\le10^{-4}.
\end{equation}
Unlike the ideal filter, a gate approximation need not preserve the
kernel exactly; the absolute bound is what the threshold argument uses.
For the final attempt choose $e\le\delta-\delta_0$, including seed
preparation and output extraction in this error budget. The total
unnormalised-vector error relative to exact projection is then at most
$\delta$.

These error allocations require no extra matrix queries. For example,
prepare the coefficient state within Euclidean distance $\eta_c$ and
unprepare with its actual circuit adjoint. The coefficient probability
distribution changes by at most $2\eta_c$ in total absolute difference.
If each implemented signed step differs from the ideal one by at most
$\rho_c$ in operator norm, telescoping bounds its contribution by
$d_\sigma^{\rm fin}(\delta)\rho_c$. Allocate these terms and any seed
and extraction errors within $e$. Reverse rotations use the actual
adjoints of the forward circuits. Implement each reflection by a
preparation circuit, an exact computational-basis reflection, and the
actual adjoint preparation, so it remains a Hermitian involution even
when the prepared subspace is approximate. Thus the two directions
remain inverses of one implemented unitary. Gate and coefficient
preparation costs are separate from the query count.

Independent repetitions and Hoeffding's inequality give an empirical
frequency $\widehat p$ with $|\widehat p-p_F|\le0.0099$, except with
probability $\zeta$, after
\begin{equation}\label{qls:eq:hoeffding}
 n_{\rm rep}=\left\lceil\frac{\log(2/\zeta)}{2(0.0099)^2}\right\rceil
 =\mathcal{O}(\log(2/\zeta))
\end{equation}
repetitions. Equation~\eqref{qls:eq:bernoulli} then gives
$|\widehat p-p_\sigma(r)|\le0.01$. Each repetition costs
$\mathcal{O}(\sqrt{s}/g_\sigma)$ matrix queries and at most the same
order of source-preparation calls. The filtering accuracy is a fixed
constant, independently of $\zeta$; confidence is increased by classical
repetitions.

\subsection{A geometric solution-norm estimator}

The regularisation~\eqref{qls:eq:regularization} has
\begin{equation}\label{qls:eq:normladder}
 r_1=1,\qquad
 r_\sigma\le r_{\sigma/2}\le2r_\sigma,\qquad
 r_0/r_\sigma\le\sqrt{1+\kappa^2\sigma^2}.
\end{equation}
To verify these inequalities, diagonalise $B^2$: the squared inverse norm
is a weighted sum of
$[\sigma^2+(1-\sigma^2)\lambda^2]^{-1}$ with
$\lambda^2\in[1/\kappa^2,1]$. Every denominator decreases by at most a
factor four upon halving $\sigma$.

Let $m=\lceil\log_2\kappa\rceil$, $\sigma_j=2^{-j}$ for $0\le j\le m$,
and initialise $\widehat r_0=1$. At stage $j$, suppose the previous
estimate satisfies
\begin{equation}
 r_{\sigma_{j-1}}\le\widehat r_{j-1}\le4r_{\sigma_{j-1}}.
\end{equation}
Test the five candidate guesses
\begin{equation}\label{qls:eq:candidates}
 r=\widehat r_{j-1}2^k,\qquad k=-3,-2,-1,0,1,
\end{equation}
in increasing order. Estimate $p_{\sigma_j}(r)$ to error $0.01$ and
accept the first estimate at least $0.22$. Set $\widehat r_j=2r$.
If no candidate is accepted, use any fallback value in $[1,4\kappa]$.
Always clip the stored estimate to $[1,4\kappa]$. Consequently every
candidate satisfies $r\ge1/8$, even on failed estimation branches.

On the event that all five probability estimates are accurate, the first
candidate has $r\le r_{\sigma_j}/2$, hence $p\le0.2$ and is rejected.
The last has $r\ge r_{\sigma_j}$, hence $p\ge0.25$ and is accepted.
For the first accepted candidate, $p\ge0.21$, so~\eqref{qls:eq:probability}
gives
\begin{equation}
 r\ge\sqrt{\frac{21}{79}}\,r_{\sigma_j}.
\end{equation}
Its predecessor $r/2$ was rejected, so its exact probability is less than
$0.23$. The lower bound in~\eqref{qls:eq:probability} gives
\begin{equation}
 r<2\sqrt{\frac{23}{27}}\,r_{\sigma_j}.
\end{equation}
Thus
\begin{equation}\label{qls:eq:estimatebound}
 r_{\sigma_j}\le\widehat r_j\le4r_{\sigma_j}.
\end{equation}
The clipping does not alter this guarantee since $1\le r_\sigma\le\kappa$.

Give the five tests at stage $j$ a combined failure budget
\begin{equation}
 \zeta_j=\frac{\eps}{1000}\,2^{j-m}.
\end{equation}
Their sum is less than $\eps/500$. Union bounds apply also to the
adaptive classical transcript, using the conditional failure guarantee
of every individual test. Since $g_{\sigma_j}\ge2^{-j}$, the total norm
estimation cost is
\begin{equation}\label{qls:eq:geometriccost}
\begin{aligned}
 \mathcal{O}\left(\sqrt{s}\sum_{j=1}^m
    2^j\left[\log\frac1\eps+m-j\right]\right)
 &=\mathcal{O}\left(\sqrt{s}\,2^m\log\frac1\eps\right)\\
 &=\mathcal{O}\left(\sqrt{s}\,\kappa\log\frac1\eps\right).
\end{aligned}
\end{equation}
This is the step that removes a possible extra $\log\kappa$.
When $\kappa=1$, $m=0$ and no norm-estimation stages are needed.

\subsection{Final solution preparation and error guarantee}

Use $\sigma=0$ and $r=\widehat r_m$ for the final walk. On the event that
the norm estimation succeeds, $\sigma_m\le1/\kappa$ and
\eqref{qls:eq:normladder}-\eqref{qls:eq:estimatebound} give
\begin{equation}\label{qls:eq:qrange}
 \tfrac14\le q=r_0/\widehat r_m\le\sqrt2.
\end{equation}
At $\sigma=0$, $M_0=Z\otimes B$ and
$M_0^{-1}\beta=\ket0\ket1\,A^{-1}b$, so the required solution is
obtained by discarding known labels after selecting these coordinates.

Prepare $J\ket e$, project onto the walk's phase-zero space, apply the
inverse isometry $J^\dagger$ with its success flag, and select the solution
coordinates of $u_{0,r}$. Let
$a_u=\norm{KJu_{0,r}}^2\in[0,1]$. The ideal combined success probability is
\begin{equation}\label{qls:eq:finalprob}
 p_{\rm sol}=\frac{q^2}{(1+q^2)^2(1+a_u)^2}
 \ge\frac4{289}.
\end{equation}
This follows directly from $v=R_0Ju$, $\norm v^2=(1+a_u)/4$, and
$J^\dagger v=u/2$. Conditional on all these projections, the ideal output
is exactly $\ket x$.

Replace the phase-zero projection by the Chebyshev filter with
degree~\eqref{qls:eq:finitedegree} and target $\delta=\eps/1000$.
All subsequent extraction maps are contractions, and the finite-gate
budget above includes their implementation, so the unnormalised output
vector changes by at most $\delta$. Its ideal norm is at least $2/17$. Therefore its normalised
vector error is at most $2\delta/(2/17-\delta)\le\eps/50$ for
$\eps\le1/10$. This also upper-bounds the trace distance of the
corresponding pure states. Its success probability is at least $1/289$.
Use a fixed number, for example $1000$, of attempts and herald failure if
none succeeds. On good estimation transcripts, their success probability
is at least $1-e^{-1000/289}>0.96$. This is a bounded-query construction;
there is no unbounded repeat-until-success cost on bad norm estimates.

The total probability of bad estimation transcripts is below
$\eps/500$. Overall success is at least
$0.96(1-\eps/500)>2/3$, so their weight after conditioning on success
is at most $3\eps/1000$. Classical transcripts produce a mixture,
hence their contribution to trace distance is bounded by that weight,
without a square-root loss. Convexity of trace distance gives the final
error bound $\eps/50+3\eps/1000<\eps$, with success
probability at least $2/3$.

The final walk gap is $\Omega(1/(\kappa\sqrt{s}))$, so each attempt costs
$\mathcal{O}(\kappa\sqrt{s}\log(1/\eps))$ queries. Combining with
\eqref{qls:eq:geometriccost} proves Theorem~\ref{qls:thm:main}.

\begin{remark}[Explicit filtering constants]\label{qls:rem:constants}
The improved filter substantially reduces the prescribed walk budget.
Put $y=g_\sigma/\sqrt{s}\in(0,1]$ and $h_0=\operatorname{arcosh}100$.
Using $\operatorname{arsinh}u\ge u/\sqrt{1+u^2}$ gives
\begin{equation}\label{qls:eq:testconstant}
 d_\sigma^{\rm fin}(1/100)
 \le\frac{\frac{h_0}{2}\sqrt{(165/4)^2+1}+2}{y}
 <\frac{112}{y}.
\end{equation}
This includes integer rounding and the extra degree for finite gates.
For comparison, an averaging filter with the conservative angle bound
$\omega=g_\sigma/(512\sqrt{s})$ and
$N=\lceil100\pi/\omega\rceil$ uses
$q_{\rm av}=\lceil51200\pi/y\rceil-1$ forward steps. Since
$q_{\rm av}\ge(51200\pi-1)/y$, the Chebyshev schedule reduces this
budget by a factor greater than $1400$.

The comparison also holds for the final filter. For
$k=\lceil\log(1/\delta)/\log100\rceil$, where $\delta\le1/100$,
$\cosh(ka)\ge\cosh(a)^k$ implies
$\operatorname{arcosh}(1/\delta)\le k\operatorname{arcosh}100$.
Including ceilings and the extra degree gives
$d_\sigma^{\rm fin}(\delta)\le k d_\sigma^{\rm fin}(1/100)$,
whereas the repeated averaging schedule costs $kq_{\rm av}$.
With the same walk implementation, Lemma~\ref{qls:lem:signedwalk}
transfers these comparisons term by term to the prescribed filtering
matrix-query budgets, holding the maximum numbers of tests and attempts
fixed. They do not compare optimised gate costs or individual adaptive
execution times. The remaining repetition count of about
$5102\log(10/\zeta_j)$ per stage test and the allowance of $1000$
final attempts are still chosen for proof convenience; these are not
practical resource estimates.
\end{remark}

\subsection{Relation to existing algorithms and scope}

Optimal block-input linear-system algorithms achieve
$\mathcal{O}(\kappa\log(1/\eps))$ calls to a suitably normalised block encoding
\cite{Costa,Dalzell}. The norm-estimation and augmented-kernel viewpoint
is particularly close in spirit to~\cite{Dalzell}. The new reduction here
supplies an exact sparse-query walk and proves its phase gap; it does not
replace each block call by a separately precision-amplified sparse encoding.
The Chebyshev filter and signed-power construction adapt earlier LCU
filtering~\cite{Costa} to this walk. The reduction and gap proof do not
use the enlarged-system block-encoding result of Bravo-Prieto, Harrow,
and Kothari~\cite{BPHK26}, whose concurrent algorithm achieves the same
asymptotic matrix-query complexity by that alternative route.

Recent lower-bound work establishes square-root sparsity dependence
\cite{Mori}; Patel~\cite{Patel} proves the joint bound
$\Omega(\sqrt{s}\,\kappa\log(1/\eps))$ for
$\kappa\ge31$, $s\ge5$, and $0<\eps\le1/64$ (its Theorem~1).
Its Eq.~(3) specifies an out-of-place XOR position oracle. Lemma~5 and its
proof, particularly Eq.~(57), give input-independent row and column
supports, predetermined zero padding, and a two-Boolean-query value oracle
including uncomputation. These properties imply that its lists can also
be padded injectively and extended to known permutations: their in-place
locators and inverses cost zero queries to the hidden Boolean input.
Thus its reduction extends to the stronger location model here; this last
extension is an inference from that construction, not a stated theorem
of~\cite{Patel}. The independent lower-bound proof in Section~\ref{part:qlsl} works directly
in our access model and with our heralded conditional trace-distance output
convention.

The state-preparation circuit and inverse-access assumptions are real
inputs; controlled matrix calls follow from the constructions above. No entry oracle for the generally dense vector $b$ is used.
The output is an amplitude-encoded quantum state, not a classical listing
of $A^{-1}b$. The theorem makes no claim of optimal query dependence on the
separate initial-state-preparation oracle or of a matching optimised gate
complexity. Its precision guarantee is conditional trace distance with
constant heralded success, not a deterministic pure-state isometry.

\section{A matching linear-system query lower bound}\label{part:qlsl}
\subsection{Statement and scope}\label{qlsl:sec:main}

\begin{theorem}\label{qlsl:thm:main}
Let $\kappa\ge32$, let $s\ge5$ be an integer, and let
$0<\eps\le1/64$. Consider algorithms which, for every promised input,
use at most $Q$ matrix-oracle queries and produce a success flag of
probability at least $2/3$. Conditional on success, the output state
$\rho_A$ must obey
\begin{equation}\label{qlsl:eq:target}
 \trdist(\rho_A,\ket{x_A}\bra{x_A})
 :=\tfrac12\norm{\rho_A-\ket{x_A}\bra{x_A}}_1\le\eps,
 \qquad
 \ket{x_A}=\frac{A^{-1}\ket b}{\norm{A^{-1}b}}.
\end{equation}
The promised matrices satisfy
\begin{equation}
 \norm A\le1,\qquad \norm{A^{-1}}\le\kappa,
 \qquad
 \max\{\text{row sparsity},\text{column sparsity}\}\le s.
\end{equation}
There is a family of real symmetric, indefinite matrices with dimension
$N=\Theta(s\kappa\ln(1/\eps))$, and input-independent right-hand sides,
for which
\begin{equation}\label{qlsl:eq:main}
 Q=\Omega\!\left(\sqrt{s}\,\kappa\ln\frac1\eps\right).
\end{equation}
The proof gives, conservatively,
$Q>\kappa\sqrt{s}\ln(1/\eps)/3072$.
The lower bound holds even when in-place row and column location oracles,
their inverses, and their controlled versions are supplied. Indeed,
these location operations are input-independent in the constructed family.
Only entry-value queries convey the hidden information.
\end{theorem}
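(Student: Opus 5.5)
The plan is to reduce a joint search--parity problem to the linear system, reusing the machinery of Section~\ref{part:hsl}: the sign-adjusted dense gadgets $U_j$ of~\eqref{hsl:eq:gadget}, the small-correlation search--parity Lemma~\ref{hsl:lem:search}, and the heralded-correlation bookkeeping of Lemma~\ref{odel:lem:herald}. The target is a hard instance whose normalised solution has a known overlap, of size $a>0$, with a fixed known vector. This overlap should carry the sign $\sigma(X)$ of a parity of $n\approx c\ln(1/\eps)$ one-hot strings of length $m=\Theta(s)$. The instance must also be repeated, or scaled, so that $\kappa$ enters multiplicatively.

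For the construction, I would take a decorated path in the spirit of~\eqref{hsl:eq:dressed}. The path has length $n\kappa$, and every vertex is a block $\mathbb C^m$. Edge $k$ is decorated with $U_{x^{(k)}}$, and only every $\kappa$-th edge carries a hidden gadget while the remaining edges use a known permutation. The matrix is $A=I-\theta P$, where $P$ is the decorated forward shift, symmetrised by the dilation $\left(\begin{smallmatrix}0&A\\A^\dagger&0\end{smallmatrix}\right)$ so that the result is real symmetric and indefinite. Choosing $\theta=1-\Theta(1/\kappa)$ and rescaling by $1/2$ gives $\norm A\le1$ and $\norm{A^{-1}}\le\kappa$. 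By the gauge argument of~\eqref{hsl:eq:dressed}, $A$ is unitarily equivalent to $I_m\otimes(I-\theta S_{\rm path})$, so the inverse has a Neumann series with explicit entries $\theta^{\ell}$ along the path. With $b=\ket{v,0}$, the component of $A^{-1}b$ at distance $\ell$ is $\theta^\ell$ times the gadget product applied to $v$, which equals $\sigma_{\le\ell}(X)\,v$. Reading off the final block after $n\kappa$ steps gives amplitude $\theta^{n\kappa}=e^{-\Theta(n)}$ relative to the solution norm $\Theta(\sqrt\kappa)$. Choosing $n=c\ln(1/\eps)$ with a small constant $c$ keeps $a\ge16\eps$, so a trace error of $\eps$ cannot destroy the correlation.

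For the readout, I would attach an isolated reference vertex to $b$, as in~\eqref{hsl:eq:readout}, and measure the norm-one observable coupling the reference to $\ket{v,\text{end}}$. This yields correlation $\Omega(a)$ with $\sigma(X)$ even under heralding, by the pointwise argument of Lemma~\ref{odel:lem:herald}. Lemma~\ref{hsl:lem:search} then gives $q=\Omega(n\sqrt m\,\beta_0^{1/(2n)})$. The factor $\beta_0^{1/(2n)}$ is a constant because $a=e^{-\Theta(n)}$, so this step yields only $\Omega(\sqrt s\ln(1/\eps))$.

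The main obstacle is obtaining the extra factor $\kappa$ multiplicatively. A plain path gives parity over only $n$ gadgets, and spacing the gadgets out costs conditioning without adding hidden queries. I would first try letting every edge carry a gadget, giving $n\kappa$ one-hot strings in total, and rely on the amplitude $\theta^{n\kappa}$ with $\theta=1-1/\kappa$. This gives $\beta_0^{1/(2n\kappa)}\ge e^{-O(1/\kappa)\cdot\kappa/\kappa}$, which is again a constant, so $q=\Omega(n\kappa\sqrt m)$. That expression is exactly the claimed $\sqrt s\,\kappa\ln(1/\eps)$. The precondition $m\ge4\beta_0^{-1/(n\kappa)}$ becomes $m\ge4e^{O(1)}$, which holds once $s\ge5$ after padding, as in~\eqref{hsl:eq:mrange}. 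It remains to check that $\theta^{n\kappa}\ge e^{-c'n}\ge16\eps$, that $\norm{A^{-1}}\le\kappa$ holds exactly and not merely up to constants (I would shrink $\theta$ slightly to ensure this), that dyadic rounding of $\theta$ preserves the amplitude as in~\eqref{hsl:eq:round}, and that the dimension is $(n\kappa+1)m=\Theta(s\kappa\ln(1/\eps))$. Finally, the explicit constant $1/3072$ should follow from the constant $\log2/24$ in Lemma~\ref{hsl:lem:search}, the factor two between entry and Boolean queries, and $m>s/4$.
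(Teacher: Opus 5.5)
There is a genuine gap in your readout. The parity signal is diluted by the $\Theta(\kappa)$ weight of the solution, and your check $\theta^{n\kappa}\ge16\eps$ leaves this factor out.

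Take $N=n\kappa$ gadgets, $A=(I-\theta P)/2$ and input $\alpha\ket{r_0}+\beta\ket{v,0}$. The solution is proportional to $\alpha\ket{r_0}+\beta\sum_{\ell=0}^{N}\theta^{\ell}\sigma_{\le\ell}(X)\ket{v,\ell}$. The reference--endpoint observable therefore has expectation $2\alpha\beta\,\sigma(X)\theta^{N}/(\alpha^2+\beta^2S)$, where $S=\sum_{\ell\le N}\theta^{2\ell}=\Theta(\kappa)$. For the equal weights of \eqref{hsl:eq:readout} this is $\Theta(\theta^N/\kappa)$. By AM--GM, no choice of weights exceeds $\theta^N/\sqrt S=O(\theta^N/\sqrt\kappa)$.

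A conditional trace error $\eps$ may shift the expectation by $2\eps$, so the reduction needs $\theta^N/\sqrt\kappa\gtrsim\eps$. The theorem imposes no relation between $\kappa$ and $\eps$. At $\eps=1/64$ and large $\kappa$ the ideal signal is $O(\kappa^{-1/2})\ll2\eps$, and an admissible solver can erase it. No positive correlation can then be guaranteed for Lemma~\ref{hsl:lem:search}. Where the reduction does work, $n$ must be at most of order $\ln(1/\eps)-\tfrac12\ln\kappa$. You would obtain the full bound only when $\ln(1/\eps)\gtrsim\ln\kappa$.

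The missing ingredient is a known idle tail. The paper's open clock has $n=\Theta(\kappa\ln(1/\eps))$ gadget transitions followed by $n-1$ identity transitions, with $C_X=(I-rT_X)/2$ and $r=1-1/R$. Each of the final $n$ clock values then carries the full parity. Their total probability in the normalised solution is $p=r^{2n}/(1+r^{2n})$, which is independent of $\kappa$. The choice $n=\lfloor\frac{R-1}{2}\ln\frac{1}{16\eps}\rfloor$ gives $p\ge\frac{64}{5}\eps$.

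The paper also encodes the parity as a logical bit, via $U_x\ket{z_L}=\ket{(z\oplus(j\bmod2))_L}$. Measuring the clock and that bit then gives bias at least $\frac23(p-2\eps)\ge\eps$, with no reference vertex. Lemma~\ref{qlsl:lem:xor} with $\beta=\eps$ and $n\ge2L$ then yields $2Q>n\sqrt m/24$.

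Your sign encoding can also be rescued by such a tail. You would weight the reference to match the known path norm and interfere it with the normalised tail vector, giving a signal of about $\sqrt p$.

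Two smaller points remain.
\begin{itemize}
\item The hypothesis $m\ge4\beta_0^{-1/n}$ of Lemma~\ref{hsl:lem:search} fails for every $\beta_0<1$ when $m=4$. That value is forced at $s=5$, because your sparsity is $m+1$. You would need to rerun that proof with weight $a=2$, or use Lemma~\ref{qlsl:lem:xor}, which needs only even $m$.
\item The constant $1/3072$ comes from the paper's specific parameters and XOR estimate. It does not follow simply from $\log2/24$.
\end{itemize}
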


The dimension is allowed to grow with the parameters. One may pad the
construction with an identity block to fit any larger dimension:
$A_X\oplus I$ still has norm at most one and obeys the same inverse-norm
bound, and the right-hand side is extended by zero.
This is a worst-case statement in a nondegenerate parameter range, not a
claim that every matrix requires these queries, or that the same expression
holds at $s=1$ or $\kappa=1$. The hard matrices are indefinite; the theorem
does not establish this bound for positive-definite inputs alone.
The query count is a bounded worst-case count, matching Theorem~\ref{qls:thm:main}.

The joint lower bound is also established in the recent preprint
\cite{Patel}. We use its position-parity idea and a sign-adjusted version
of its dense unitary gadget, but give an open-clock construction and prove
the required XOR estimate directly. Its lineage is the weight-growth
argument of Ambainis for the strong direct product theorem for search
\cite{Ambainis}, generalised by \v{S}palek to the multiplicative quantum
adversary \cite{Spalek}. The XOR lemma of Lee and Roland
\cite{LeeRoland} is also based on that method. Here the explicit weight
$(2I-P)^{\otimes n}$ and the parity observable make the needed
specialisation short enough to prove in full.

\subsection{An elementary XOR lower bound}

Let $m$ be even. A string $x\in\{0,1\}^m$ is promised to have exactly one
nonzero bit, at position $j\in\mathbb Z_m$. For $n$ such strings,
$X=(x^{(1)},\ldots,x^{(n)})$, write their marked positions as
$\boldsymbol j=(j_1,\ldots,j_n)$ and define
\begin{equation}
 F(\boldsymbol j)=\bigoplus_{a=1}^n(j_a\bmod2).
\end{equation}
The Boolean oracle maps
$\ket{a,q,z}\mapsto\ket{a,q,z\oplus x^{(a)}_q}$.

\begin{lemma}\label{qlsl:lem:xor}
Suppose a quantum algorithm uses $t$ Boolean queries and computes $F$
with success probability at least $(1+\beta)/2$ on every promised input,
where $0<\beta\le1$. Then
\begin{equation}\label{qlsl:eq:xor}
 \beta\le 2^{-n}e^{4t/\sqrt m},
 \qquad
 t\ge\frac{\sqrt m}{4}(n\ln2+\ln\beta).
\end{equation}
\end{lemma}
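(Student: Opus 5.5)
The plan is to run the same multiplicative-adversary (weight-growth) argument as in Lemma~\ref{hsl:lem:search}, with the simpler weight $w=2I-P$, where $P=\ket u\bra u$ and $u=m^{-1/2}\sum_j\ket j$ is the uniform vector on the hidden-position space. Set $W=w^{\otimes n}$.

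First, purify the algorithm and attach a hidden-input register initialised in $u^{\otimes n}$, so the initial weighted norm is $\norm{W\Psi_0}=1$. Because success probability is at least $(1+\beta)/2$ on every promised input, it is also at least that on average. So, with $Z\ket j=(-1)^j\ket j$ and a final Hermitian contraction $M$ encoding the $\pm1$ answer, I get
\begin{equation}
 \beta\le\bra{\Psi_t}(Z^{\otimes n}\otimes M)\ket{\Psi_t}.
\end{equation}

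Second, I will show the single-copy form inequality $\norm{w^{-1}Zw^{-1}}\le1/2$. Since $Z$ swaps $u$ and $v=Zu$ and preserves the orthogonal complement $R$ of $\mathrm{span}\{u,v\}$, write a vector as $\phi=au+bv+r$. Then
\begin{equation}
 |\bra\phi Z\ket\phi|\le2|a||b|+\norm r^2\le\tfrac12\bigl(|a|^2+4|b|^2+4\norm r^2\bigr)=\tfrac12\norm{w\phi}^2,
\end{equation}
using AM--GM for the cross term. The norm of the tensor product is then at most $2^{-n}$, so $\beta\le2^{-n}\norm{W\Psi_t}^2$. This is where the factor $2^{-n}$ originates.

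Third, I bound the growth of the weighted norm. Input-independent operations commute with $W$. In the Hadamard basis of the answer qubit, each Boolean query is a direct sum of identities and reflections $D_j=I-2\ket j\bra j$ acting on one hidden coordinate, including superpositions over which string is queried. I then need $\norm{wD_jw^{-1}}\le1+2/\sqrt m$. Using $w^{-1}=(I+P)/2$ and $\langle u|j\rangle=m^{-1/2}$, the conjugated reflection is $D_j$ plus rank-two corrections whose sizes are of order $1/\sqrt m$. The explicit estimate will mirror the $e,f$ computation in Lemma~\ref{hsl:lem:search}, now with $a=2$.

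Combining the three steps over $t$ queries gives $\norm{W\Psi_t}^2\le(1+2/\sqrt m)^{2t}\le e^{4t/\sqrt m}$, which yields both displayed forms of \eqref{qlsl:eq:xor} after taking logarithms. I expect the main obstacle to be the constant in the per-query bound: obtaining exactly $1+2/\sqrt m$, rather than a cruder $1+c/\sqrt m$, requires computing $\norm{wD_jw^{-1}}$ carefully, probably through the exact $2\times2$ structure on $\mathrm{span}\{u,\ket j\}$. Any looser constant changes only the $4$ in the exponent.
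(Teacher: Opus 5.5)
Your outline is the paper's proof: the same weight $w=2I-P$, uniformly superposed hidden registers so that $\norm{W\Psi_0}=1$, the bound $\norm{w^{-1}Zw^{-1}}\le1/2$ giving $2^{-n}$, and per-query growth of the weighted norm. Your AM--GM argument for the parity step is a valid substitute for the paper's explicit block computation. It uses $u\perp v$, which holds because $m$ is even; you should state this.

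\textbf{The gap is the per-query estimate.} This is exactly where the constant $4$ in the statement comes from, and you leave it unproved. The route you name does not give it. For $w=2I-P$ at scale $2$ one has $e=(w/2-I)\ket j=-(2\sqrt m)^{-1}\ket u$ and $f=(2w^{-1}-I)\ket j=m^{-1/2}\ket u$. The decomposition $wD_jw^{-1}=D_j-2(e\bra j+\ket j f^\dagger+ef^\dagger)$ with the triangle inequality then gives only $1+3/\sqrt m+1/m$. That yields $\beta\le2^{-n}e^{6t/\sqrt m+2t/m}$, not $2^{-n}e^{4t/\sqrt m}$. Putting $a=2$ into the final estimate of Lemma~\ref{hsl:lem:search} is worse still, and that lemma's weight is not $2I-P$ anyway. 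The lemma's constants are part of the claim and feed the explicit $1/3072$ in Theorem~\ref{qlsl:thm:main}, so this matters.

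The missing idea is the paper's one-sided identity $wD_jw^{-1}=D_j+[w,D_j]w^{-1}$.
\begin{itemize}
\item Since $w$ has eigenvalues $1$ and $2$, $\norm{w^{-1}}=1$.
\item $[w,D_j]=2[P,\ket j\bra j]=2c(\ket u\bra j-\ket j\bra u)$ with $c=\langle u|j\rangle=m^{-1/2}$.
\item This operator has norm $2c\sqrt{1-c^2}=2\sqrt{m-1}/m\le2/\sqrt m$, so $\norm{wD_jw^{-1}}\le1+2/\sqrt m$ immediately.
\end{itemize}

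Your fallback, an exact computation on $\mathrm{span}\{u,\ket j\}$ (outside which $wD_jw^{-1}=I$), also works. The $2\times2$ block is similar to $D_j$, so it has determinant $-1$. Its squared Frobenius norm is $2+9p$ with $\sqrt p=\sqrt{m-1}/m$. Hence its norm is $(\sqrt{4+9p}+3\sqrt p)/2$, which is at most $1+2\sqrt p$ because $\sqrt p\le1/2$.

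Either completion, combined with the block-diagonal form of the query in the response qubit's Hadamard basis, gives $\norm{W\Psi_t}^2\le(1+2/\sqrt m)^{2t}\le e^{4t/\sqrt m}$ and hence the lemma.
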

\begin{proof}
Introduce $n$ auxiliary registers holding the hidden marked positions,
uniformly superposed, and purify all randomness and measurements of the
algorithm. Its joint state after $t$ queries is
\begin{equation}
 \ket{\Psi_t}
 =m^{-n/2}\sum_{\boldsymbol j}
   \ket{\boldsymbol j}\ket{\psi_{\boldsymbol j,t}}.
\end{equation}
On one hidden-position register define
\begin{equation}
 \ket u=m^{-1/2}\sum_{j=0}^{m-1}\ket j,\quad
 P=\ket u\bra u,\quad w=2I-P,\quad
 Z=\sum_{j=0}^{m-1}(-1)^j\ket j\bra j.
\end{equation}
Let $W=w^{\otimes n}$, acting trivially on the algorithm's working space.
The initial working state is input-independent, so
$\norm{W\Psi_0}=1$. All query-free algorithm operations commute with $W$.

In the $X$ eigenbasis of the Boolean response qubit, a query is a
working-register-controlled operation whose nontrivial blocks are
\begin{equation}
 D_q=I-2\ket q\bra q
\end{equation}
on one hidden-position register. The other response component is the
identity. A direct two-dimensional calculation gives
\begin{equation}
 \norm{[P,D_q]}=\frac{2\sqrt{m-1}}m\le\frac2{\sqrt m}.
\end{equation}
Using $\norm{w^{-1}}=1$ and
$wD_qw^{-1}=D_q+[w,D_q]w^{-1}$, we obtain
\begin{equation}
 \norm{wD_qw^{-1}}\le1+\frac2{\sqrt m}.
\end{equation}
Because the full query is block diagonal in its working-register
controls, it has the same bound after conjugation by $W$. Hence
\begin{equation}\label{qlsl:eq:weightgrowth}
 \norm{W\Psi_t}^2
 \le(1+2/\sqrt m)^{2t}\le e^{4t/\sqrt m}.
\end{equation}
This argument also covers controlled queries and adaptive algorithms
after coherent purification and padding to their worst-case query count.

Since $m$ is even, $\ket v=Z\ket u$ is orthogonal to $\ket u$.
The subspace $\operatorname{span}\{u,v\}$ is invariant under $Z$.
On it, $w^{-1}Zw^{-1}$ has matrix
$\left(\begin{smallmatrix}0&1/2\\1/2&0\end{smallmatrix}\right)$.
On its invariant orthogonal complement it equals $Z/4$. Thus
\begin{equation}\label{qlsl:eq:paritynorm}
 \norm{w^{-1}Zw^{-1}}=\frac12,\qquad
 \norm{W^{-1}Z^{\otimes n}W^{-1}}=2^{-n}.
\end{equation}
Let $M$ be the final output observable, with values $+1$ for output zero
and $-1$ for output one. Its norm is at most one. The average correctness
correlation is
\begin{equation}
 B=\bra{\Psi_t}(Z^{\otimes n}\otimes M)\ket{\Psi_t}.
\end{equation}
Worst-case success $(1+\beta)/2$ implies $B\ge\beta$. Setting
$y=(W\otimes I)\Psi_t$ and applying~\eqref{qlsl:eq:paritynorm} gives
\begin{equation}
 |B|\le2^{-n}\norm y^2\le2^{-n}e^{4t/\sqrt m}.
\end{equation}
This proves both claims.
\end{proof}

The exponentially small bias is essential. A constant-error lower bound
for $F$ would not alone justify the reduction below. Our particular
estimate is vacuous when $\beta\le2^{-n}$; unlike the more general
Lee--Roland bound, it does not address every bias $2^{-\mathcal{O}(n)}$.
The reduction below uses $\beta=\eps=e^{-L}$ and $n\ge2L$, so
$\beta\ge e^{-n/2}>2^{-n}$ and
$n\ln2+\ln\beta\ge n(\ln2-1/2)>0$. This is the range needed here.

\subsection{A unitary with fixed support and one-bit entries}

Let $m\ge4$ be a power of two, and let $S\ket a=\ket{a+1\bmod m}$.
On an $m$-dimensional computational register, define
\begin{equation}
 \ket{0_L}=\sqrt{\frac2m}\sum_{a\ {\rm even}}\ket a,\qquad
 \ket{1_L}=\sqrt{\frac2m}\sum_{a\ {\rm odd}}\ket a,\qquad
 G=2\ket u\bra u-I.
\end{equation}
For a one-hot string $x=e_j$, set
\begin{equation}\label{qlsl:eq:gadget}
 U_x=S^{j+1}G.
\end{equation}
Both $S$ and $G$ act as logical $X$ on
$\operatorname{span}\{\ket{0_L},\ket{1_L}\}$. Consequently
\begin{equation}\label{qlsl:eq:logical}
 U_x\ket{z_L}=\ket{(z\oplus(j\bmod2))_L}.
\end{equation}
There is no accumulated global phase to track.

The matrix entries have the important explicit form
\begin{equation}\label{qlsl:eq:entries}
 (U_x)_{a,b}=\frac2m-x_{a-b-1\bmod m}.
\end{equation}
They are all nonzero for $m\ge4$, so the support is the entire
$m\times m$ block independently of the marked position.
Each entry depends on exactly one queried Boolean bit.
Equation~\eqref{qlsl:eq:gadget} specifies the matrix mathematically; the
reduction implements its entry oracle via~\eqref{qlsl:eq:entries} and does
not compute the hidden index $j$ or apply $U_x$ as a circuit.

\subsection{An open computation clock}

Use clock states $\ket0,\ldots,\ket{2n-1}$, and define
\begin{equation}
 L_t=
 \begin{cases}
  U_{x^{(t+1)}},&0\le t<n,\\
  I,&n\le t\le2n-2.
 \end{cases}
 \qquad
 T_X=\sum_{t=0}^{2n-2}L_t\otimes\ket{t+1}\bra t.
\end{equation}
The operator $T_X$ is a partial isometry with norm one and
$T_X^{2n}=0$. Define the normalised linear-system matrix
\begin{equation}\label{qlsl:eq:C}
 C_X=\frac{I-rT_X}{2},\qquad
 r=1-\frac1R,\qquad
 b_0=\ket{0_L}\ket0,
\end{equation}
where $R\ge16$ is a power of two. Then
\begin{equation}\label{qlsl:eq:Cpromise}
 \norm{C_X}\le\frac{1+r}{2}<1,\qquad
 \sigma_{\min}(C_X)\ge\frac{1-r}{2}=\frac1{2R}.
\end{equation}
The singular-value lower bound follows from
$\norm{(I-rT_X)v}\ge(1-r)\norm v$.

The finite Neumann series gives
\begin{equation}\label{qlsl:eq:history}
 C_X^{-1}b_0
 =2\sum_{t=0}^{2n-1}r^t T_X^t b_0
 =2\sum_{t=0}^{2n-1}r^t\ket{\psi_t}\ket t.
\end{equation}
All $\psi_t$ are normalised; on the final $n$ clock values
$t=n,\ldots,2n-1$ they equal $\ket{F(\boldsymbol j)_L}$.
In the normalised solution, the probability of this answer region is
\begin{equation}\label{qlsl:eq:p}
 p=\frac{\sum_{t=n}^{2n-1}r^{2t}}
         {\sum_{t=0}^{2n-1}r^{2t}}
   =\frac{r^{2n}}{1+r^{2n}}.
\end{equation}
No cyclic wraparound, reverse computation, or eigenphase assumption is
needed for this open-clock inverse.

For Hermitian input, use
\begin{equation}\label{qlsl:eq:hermitian}
 A_X=\begin{pmatrix}0&C_X\\C_X^\top&0\end{pmatrix},
 \qquad b=\binom{b_0}{0}.
\end{equation}
This real symmetric matrix has the same singular values as $C_X$, each
repeated twice, and
$A_X^{-1}b=(0,C_X^{-1}b_0)^\top$.
Thus the same history-state decoder applies. Its dimension is $4mn$.

\subsection{Sparse queries, including inverse locators}

Every row and column of $T_X$ contains at most $m$ nonzero entries.
The identity in~\eqref{qlsl:eq:C} adds one entry in a different clock block,
so $C_X$ and $A_X$ have maximum row and column sparsity $m+1$.
The bound is attained on the dense-transition blocks.

All locations are input-independent: a transition block is either the
fixed full support of $U_x$ or a fixed identity block. The absent boundary
transition is also known. One can therefore choose an injective padded
list for each row and column without making Boolean queries.
Complete each known injection to a permutation of all matrix indices.
These permutations implement in-place location access; their inverses
and controlled versions are equally input-independent.
Additional zero padding up to any supplied bound $s\ge m+1$ is harmless.

For an entry-value query, the row and column indices determine whether
the value is known or depends on one bit in~\eqref{qlsl:eq:entries}. In the
latter case, query that bit into an ancilla, XOR the corresponding exact
matrix word into the target, and unquery the bit. This costs two Boolean
queries and works coherently for arbitrary superpositions of indices.
On known entries, use a fixed valid Boolean address and ignore the
queried bit in the word computation; the second query still erases that
bit. For a controlled entry query, condition only the word XOR on the
control and perform the same two Boolean queries unconditionally. Thus
all branches have the same query schedule and clean ancillas, without
assuming controlled Boolean queries. The inverse entry oracle is the
same operation because the value is supplied by XOR.

Because $m$ and $R$ are powers of two, every entry is an exact dyadic
rational with $\mathcal{O}(\log m+\log R)$ bits. The entry precision does not
increase with $\eps$.
The right-hand side and its preparation circuit are input-independent.
Therefore a solver using $Q$ matrix queries yields a Boolean algorithm
using at most $2Q$ queries.

\subsection{Parameters, heralded decoding, and the final bound}

For the parameters of Theorem~\ref{qlsl:thm:main}, choose
\begin{equation}\label{qlsl:eq:parameters}
 R=2^{\lfloor\log_2(\kappa/2)\rfloor},\quad
 m=2^{\lfloor\log_2(s-1)\rfloor},\quad
 L=\ln(1/\eps),\quad
 n=\left\lfloor\frac{R-1}{2}\ln\frac1{16\eps}\right\rfloor.
\end{equation}
Then $R\ge16$, $m\ge4$, $\kappa/4<R\le\kappa/2$, and $m>s/4$.
Equations~\eqref{qlsl:eq:Cpromise} and~\eqref{qlsl:eq:hermitian} give precisely
$\norm{A_X}\le1$ and $\norm{A_X^{-1}}\le2R\le\kappa$.
Also $m+1\le s$.

Since $L\ge\ln64$, we have
$\ln(1/(16\eps))\ge L/3$. Consequently
\begin{equation}\label{qlsl:eq:nrange}
 n\ge\frac{(R-1)L}{6}-1
   \ge\frac{RL}{8}
   >\frac{\kappa L}{32},
 \qquad
 n\le\frac{RL}{2}.
\end{equation}
For the middle inequality, subtract $RL/8$ and use
$(R-4)L\ge12\ln64>24$.
In particular $n\ge2L$, $n\ge R/2$, and
$N=4mn=\Theta(s\kappa L)$.

The instances also have actual condition number $\Theta(\kappa)$,
not just an inverse-norm upper bound. To see this, put
$\ket{h_t}=\ket{\psi_t}\ket t$ and
$\ket v=(2n)^{-1/2}\sum_{t=0}^{2n-1}\ket{h_t}$.
The vectors $h_t$ are orthonormal and $T_Xh_t=h_{t+1}$ before the
terminal clock value. Thus
\begin{equation}\label{qlsl:eq:actualconditioning}
 C_X^{-1}\ket v
 =\frac{2R}{\sqrt{2n}}\sum_{t=0}^{2n-1}
       (1-r^{t+1})\ket{h_t},\qquad
 \norm{C_X^{-1}}
 \ge\sqrt2 R(1-r^{n+1})
 \ge\sqrt2 R(1-e^{-1/2})>\frac R2.
\end{equation}
The penultimate inequality uses $r\le e^{-1/R}$ and $n\ge R/2$.
Also $C_Xh_{2n-1}=h_{2n-1}/2$, so $\norm{C_X}\ge1/2$.
Consequently the Hermitian dilation satisfies
\begin{equation}
 \frac{\kappa}{16}<\frac R4
 <\norm{A_X}\,\norm{A_X^{-1}}\le\kappa.
\end{equation}
No equality in the Neumann-series upper bound is asserted for the
finite open clock.

Using $-\ln(1-1/R)\le1/(R-1)$ and the definition of $n$,
\begin{equation}
 r^{2n}\ge e^{-2n/(R-1)}\ge16\eps.
\end{equation}
Equation~\eqref{qlsl:eq:p} therefore implies
\begin{equation}\label{qlsl:eq:answerprob}
 p\ge\frac{16\eps}{1+16\eps}
   \ge\frac{64}{5}\eps.
\end{equation}

The decoder measures the Hermitian-dilation sector and clock.
In the solution sector and answer region, it measures the logical bit;
otherwise it outputs a uniformly random bit.
On the exact solution its success probability is $1/2+p/2$.
By trace-distance contractivity, its success on a conditional
$\eps$-accurate solver output is at least $1/2+p/2-\eps$.
If the solver flags failure, the decoder also outputs a random bit.
For every input, solver success at least $2/3$ gives unconditional
correctness correlation
\begin{equation}\label{qlsl:eq:bias}
 \beta_X:=2\Pr(\text{correct}\mid X)-1
 \ge\frac23(p-2\eps)
 \ge\frac{36}{5}\eps
 \ge\eps.
\end{equation}
This explicitly accommodates input-dependent success probabilities and
mixed conditional output states.

Apply Lemma~\ref{qlsl:lem:xor} to the Boolean algorithm simulating the solver.
It uses at most $2Q$ queries. Taking $\beta=\eps$ and using $n\ge2L$,
\begin{align}
 2Q
 &\ge\frac{\sqrt m}{4}(n\ln2-L)\\
 &\ge\frac{n\sqrt m}{4}(\ln2-\tfrac12)
  >\frac{n\sqrt m}{24}.
\end{align}
Here $\ln2>2/3$. Finally~\eqref{qlsl:eq:nrange} and $\sqrt m>\sqrt s/2$ give
\begin{equation}
 Q>\frac{n\sqrt m}{48}
  >\frac{\kappa\sqrt s}{3072}\ln\frac1\eps.
\end{equation}
This proves Theorem~\ref{qlsl:thm:main}.

\subsection{Consequences and comparison}

For comparison, the inequalities in Patel's proof
\cite[Eq.~(68) and Sections~7--9]{Patel} imply
$Q>\kappa\sqrt{s}\ln(1/\eps)/(1024\sqrt2\,e)$:
they give $2Q>n\sqrt{m}/(8\sqrt2\,e)$,
$n>\kappa\ln(1/\eps)/32$, and $\sqrt m>\sqrt s/2$.
This is a constant of about $1/3936$, versus $1/3072$ above.
Both ranges require $s\ge5$ and $\eps\le1/64$; Patel permits
$\kappa\ge31$, whereas we use $\kappa\ge32$.
The conventions differ: Patel uses a condition-number promise and
pure-state Euclidean error; our theorem uses the normalised inverse-norm
promise and explicitly allows mixed outputs and heralded trace-distance
error. These constants are conservative proof constants, not optimised
performance estimates.

The XOR estimate makes all three factors occur jointly in one hard
family. Combined with Theorem~\ref{qls:thm:main}, it gives
the worst-case characterisation
\begin{equation}
 \Theta\!\left(\sqrt{s}\,\kappa\log\frac1\eps\right)
\end{equation}
with the same in-place sparse access, normalised spectral promise,
bounded query count, and heralded trace-distance guarantee, in the
parameter regime of Theorem~\ref{qlsl:thm:main}.

\section{Conclusion}
\label{sec:conclusion}

We have developed algorithms for Hamiltonian simulation, differential
equations, and linear systems with square-root dependence on sparsity.
The key ingredient is an exact entrywise Cayley representation with a
bounded second moment on the subspace coupled to the system. For
Hamiltonian simulation, combining this representation with transducer
methods gives $\mathcal{O}(\sqrt{s}\tau+\sqrt{s\tau\ell}+\ell)$ queries,
where $\ell=\log(1/\eps)$. This bound holds for both time-independent
Hamiltonians and time-dependent Hamiltonians with coherent sampled-time
access and certified discretisation. The general formulation recovers
additive time and logarithmic precision dependence for supplied block
encodings, while extending the earlier time-dependent transducer
construction~\cite{CGWZ} to achieve square-root sparsity dependence
from native sparse queries. The lower bound nearly matches this
complexity and shows that a uniformly additive dependence on
$\sqrt{s}\tau$ and $\log(1/\eps)$ is not possible.

The remaining Hamiltonian gap can be stated explicitly. With
$L=\log(1/\eps)$ and $h=\log(e+L/\tau)$, the upper bound contains
$\sqrt{s\tau L}$, whereas the lower bound contains
$\sqrt{s\tau L/h}$, differing by a factor $\sqrt h$.
Closing this factor would require either a sharper use of the
boundary-moment structure in the algorithm or a stronger
precision-hard family.
Corollary~\ref{hs:cor:hybrid} also leaves a difference in the
precision-only term: the upper bound contains
$L/\log(e+L/(s\tau))$, whereas the lower bound contains
$L/\log(e+L/\tau)$. These have the same asymptotic order at fixed
sparsity and positive time, but differ in their joint dependence on
the parameters.

Combining the simulation algorithm with LCHS gives the differential-equation bounds. Separating the Hamiltonian and dissipative components
is important here: when dissipation vanishes, the Hamiltonian result is
recovered, while for general dissipation the integration range itself
grows with the precision logarithm. The lower bound establishes the
joint term $\Omega(\bar g\sqrt{s}\tau)$ for time-dependent
inhomogeneous equations and gives separate precision-dependent hardness
for constant positive semidefinite generators. It does not, however,
show that $\bar g$, $\sqrt{s}\tau$, and the full dissipative precision
logarithm must occur multiplicatively in one instance. Consequently the
largest unresolved query-complexity gap is in the dissipative ODE
problem, rather than in the Hamiltonian limit.

For linear systems the situation is sharper. The complexity
$\mathcal{O}(\sqrt{s}\kappa\log(1/\eps))$ matches the joint lower bound
$\Omega(\sqrt{s}\kappa\log(1/\eps))$~\cite{Patel} in the stated
parameter range, including the simultaneous dependence on sparsity,
condition number, and precision. Our open-clock proof gives this lower
bound in the same sparse-access and heralded-output model as the
algorithm. The explicit moment-based phase gap and
normalisation-one Chebyshev filter reduce the filtering constants while
preserving this scaling and the bounded-query success guarantee.
The independently obtained construction of Bravo-Prieto,
Harrow, and Kothari reaches the same scaling by a different route,
through an enlarged linear system and an exact block encoding~\cite{BPHK26}.

The gate-efficient implementation of Theorem~\ref{hs:thm:gates}
removes the polynomial inverse-precision overhead of the direct
time-independent Hamiltonian circuit. Under efficient entry decoding and public
arithmetic, the additional gate count is the query bound times a
polynomial in the data bit lengths, the logarithm of the dimension,
and the logarithms of the query bound and inverse precision. This implementation uses
the ordered-rotation factorisation of Ref.~\cite{CGJLWZ}, with explicit
costs for the filter and coherent reuse. For dissipative ODEs, Theorem~\ref{ode:thm:gates} also accounts for
coherent source durations, frequency selection, coefficient preparation
and bounded success amplification. It covers Lipschitz time-dependent
generators with certified uniform meshes, while retaining explicit
source-preparation costs. The constant-coefficient query theorems have
Lean formalisations; the additional gate analyses are proved separately
here.

The remaining directions include closing the residual logarithmic
factors for Hamiltonian simulation and determining the necessary
precision dependence of dissipative evolution. It would also be useful
to develop gate bounds for the general non-dissipative ODE and
linear-system algorithms, and to reduce the costs of source preparation
in applications.

\begin{acknowledgments}
DWB worked on this project under a sponsored research agreement with Google Quantum AI.
This project is supported by Australian Research Council Discovery Project DP260102543.
Derivations in this work were developed with GPT-6 Astra and further checked with Fable 5.1.
\end{acknowledgments}

\phantomsection

\bibliography{sparse_quantum_algorithms_refs}
\end{document}